\documentclass[11pt]{article}

\usepackage[letterpaper,margin=1in]{geometry}

\usepackage{amsmath}
\usepackage{amsthm}

\usepackage{lineno}
\usepackage{graphicx}
\graphicspath{{./art/}}

\usepackage[plain,noend]{algorithm2e}
\makeatletter
\renewcommand{\algocf@captiontext}[2]{#1\algocf@typo. \AlCapFnt{}#2} % text of caption
\def\@algocf@capt@plain{top}
\renewcommand{\algocf@makecaption}[2]{%
  \addtolength{\hsize}{\algomargin}%
  \sbox\@tempboxa{\algocf@captiontext{#1}{#2}}%
  \ifdim\wd\@tempboxa >\hsize%     % if caption is longer than a line
    \hskip .5\algomargin%
    \parbox[t]{\hsize}{\algocf@captiontext{#1}{#2}}% then caption is not centered
  \else%
    \global\@minipagefalse%
    \hbox to\hsize{\box\@tempboxa}% else caption is centered
  \fi%
  \addtolength{\hsize}{-\algomargin}%
}
\makeatother

\newtheorem{theorem}{Theorem}
\newtheorem{lemma}[theorem]{Lemma}
\newtheorem{proposition}[theorem]{Proposition}
\newtheorem{corollary}[theorem]{Corollary}
\newtheorem{assumption}[theorem]{Assumption}
\newtheorem{definition}[theorem]{Definition}
\newtheorem{example}[theorem]{Example}

\makeatletter
\renewenvironment{proof}[1][]{\par
  \pushQED{\qed}%
  \normalfont \topsep6\p@\@plus6\p@\relax
  \def\@tempproofopt{#1}%
  \trivlist
  \item[\hskip\labelsep\itshape
    \proofname\ifx\@tempproofopt\@empty\else\space#1\fi\@addpunct{.}]\ignorespaces
}{\popQED\endtrivlist\@endpefalse}
\makeatother

\usepackage{booktabs}
\usepackage{placeins}
\usepackage{multirow, dsfont, amsfonts, amssymb, color}
\usepackage{mathtools}
\usepackage{pifont}   % for \ding
\usepackage{enumitem} % for custom enumerations

\usepackage{authblk}

\providecommand{\email}[1]{\texttt{#1}}

\usepackage{fancyhdr}
\usepackage[round]{natbib}

\usepackage{xr-hyper}
\usepackage[hidelinks]{hyperref}

\newcommand{\cmark}{\ding{51}}
\newcommand{\xmark}{\ding{55}}
\newcommand{\cmarks}{\;\,\cmark\textsuperscript{$\dagger$}}
\newcommand{\cmarku}{\;\,\cmark\textsuperscript{$\uparrow$}}

\newcommand{\I}{\mathds{1}}
\newenvironment{keywords}%
  {\par\vspace{0.5em}\noindent\small\textit{Keywords}: }%
  {\par\vspace{0.5em}}

\begin{document}

\title{Conformal Prediction for Regression with Clipped Outcomes}

\author[1]{M. Sesia\thanks{\email{sesia@marshall.usc.edu}}}
\author[2]{V. Svetnik\thanks{\email{vladimir\_svetnik@merck.com}}}
\affil[1]{Departments of Data Sciences and Operations, and of Computer Science,
University of Southern California, Los Angeles, California, U.S.A.}
\affil[2]{Biostatistics and Research Decision Sciences, Merck \& Co., Inc.,
Rahway, New Jersey, U.S.A.}

\date{}
\maketitle

\begin{abstract}
We study conformal prediction for regression using calibration data with outcomes that are doubly censored (clipped) at known fixed thresholds. We show that existing methods are unsatisfactory in this setting, as they yield intervals that may have higher marginal coverage than desired and yet lose conditional coverage precisely for the easier-to-predict cases whose outcomes are typically fully observed. This reveals that marginal coverage, the usual target of conformal prediction, may not be the ideal goal under clipping. We address this challenge by introducing a new nonconformity score and calibration methods at both ends of this trade-off: one for tight marginal coverage, and a two-step method that prioritizes conditional coverage. We characterize their finite-sample coverage and oracle-like asymptotic behavior under suitable consistency of the underlying model, and we compare them to more direct adaptations of existing approaches.

%%% Local Variables:
%%% mode: latex
%%% TeX-master: "main_biometrika"
%%% End:

\end{abstract}

\begin{keywords}
Censored data, machine learning, predictive inference, regression, uncertainty quantification.
\end{keywords}

\section{Introduction}

\subsection{Motivation and Preview of Contributions}
In early-stage drug discovery, machine-learning models are used to predict the biological activity of molecules from their chemical structure \citep{vamathevan2019applications}, and the relevant outcome data are often doubly censored, or \emph{clipped}, due to the design limitations of chemical assays \citep{svensson2025enhancing}: a molecule whose activity exceeds the upper limit of the assay, or falls below its lower limit, is recorded merely as lying beyond that limit. Drug discovery is just one of many domains where such censoring arises \citep{helsel2005nondetects}. This paper focuses on censoring at fixed thresholds because this is the common single-assay case and already makes the problem interesting; we discuss possible future extension in Section~\ref{sec:discussion}.

There are many models that can learn to predict latent outcomes from doubly censored data, including
parametric Tobit regression \citep{tobin1958estimation}, gradient-boosted trees \citep{sigrist2019grabit},
and neural networks \citep{danaila2024deep}. This paper studies how conformal prediction
\citep{vovk2005algorithmic}, specifically in its inductive form,
should be combined with any such model to provide distribution-free predictive uncertainty estimation, using calibration data
clipped at known, fixed thresholds. While conformal prediction cannot
resolve outcomes outside the range of the observable data, it can
produce useful prediction intervals that either lie within the observed range or indicate that
a new outcome may fall beyond it, and in that case on which side or sides.

Conformal prediction is well understood in the standard uncensored regression setting \citep{lei2018distribution}, where existing methods target exact marginal coverage \citep{romano2019cqr} and also provide asymptotic oracle-like conditional coverage under suitable model-consistency assumptions \citep{sesia2020comparison}. Under clipping, however, these two goals come into conflict. As we show, the most direct adaptations of existing methods may be marginally conservative and yet unsatisfactory, producing prediction intervals that under-cover precisely in the easier-to-predict region of the feature space where the outcome is typically observed. We address this challenge by introducing a new nonconformity score and calibration methods at both ends of the trade-off: one for tight marginal coverage, and one that prioritizes conditional coverage. We study the finite-sample and asymptotic behavior of these methods, relate them to the more direct adaptations of existing approaches, and validate them empirically.

\subsection{Related Works}

While conformal prediction has mostly focused on fully observable outcomes, there are already
several extensions to partially observed outcomes; see \citet{sesia2026elements} for a recent review.
 For example, in survival analysis a line of research originating from \citet{candes2023conformalized} studies how 
to obtain approximate marginal coverage for time-to-event outcomes under non-informative
right-censoring.
We study a different problem where censoring depends on $Y$ rather than $X$. Our setting is closer
to that of \citet{liu2025prediction}, who construct prediction sets under general double censoring; 
their approach, however, is unsatisfactory here, as
explained in Appendix~\ref{sec:app-relation-double}.
More distantly related works study conformal prediction under
label noise \citep{einbinder2024label,sesia2025adaptive}, and other forms of imperfect
supervision \citep{stutz2023conformal,cauchois2024predictive}.% To our knowledge, this is the first
%work to study the regression with clipping problem in depth.

\section{Methodology}\label{sec:method}

\subsection{Notation and Problem Statement}\label{sec:setup}

Let $X \in \mathcal{X}$ denote the feature vector and $Y \in [Y^{\min}, Y^{\max}] \subseteq \mathbb{R}$ the
latent outcome of interest, whose support has finite range $R := Y^{\max} - Y^{\min} > 0$.
Given two fixed and known censoring thresholds $Y^{\min} \leq C^{\mathrm{L}} < C^{\mathrm{R}} \leq Y^{\max}$, instead of $Y$ we observe
the clipped outcome
\begin{align}\label{eq:clip}
    & \tilde{Y} = \Pi(Y),
    & \Pi(y) := \max\{C^{\mathrm{L}}, \min\{y, C^{\mathrm{R}}\} \}
\end{align}
Therefore, seeing $\tilde{Y} = C^{\mathrm{L}}$ only tells us that $Y \in [Y^{\min}, C^{\mathrm{L}}]$,
and analogously for $\tilde{Y} = C^{\mathrm{R}}$.

We assume throughout that $\{(X_i, Y_i, \tilde{Y}_i) \}_{i=1}^{n+1}$ are exchangeable random samples from some distribution $\mathcal{P} = \mathcal{P}_{X,Y} \times \mathcal{P}_{\tilde{Y} \mid Y}$ with unknown and arbitrary $\mathcal{P}_{X,Y}$.
The goal is to predict $Y_{n+1}$, given $X_{n+1}$ and the censored calibration data $\mathcal{D}_{\mathrm{cal}} := \{(X_i, \tilde{Y}_i )\}_{i=1}^n$ indexed by $[n] := \{1,\ldots,n\}$.

For any $x \in \mathcal{X}$, let $\hat{f}(x) = [\hat{f}^{\mathrm{lo}}(x), \hat{f}^{\mathrm{up}}(x)]$ denote
a prediction interval produced by any fixed model, pre-trained on data independent of $\mathcal{D}_{\mathrm{cal}}$, with $\hat{f}^{\mathrm{lo}}(X) \leq \hat{f}^{\mathrm{up}}(X)$ almost surely (a.s.). We make no other assumptions on this model nor on the accuracy of its predictions.
A reasonable choice is to use $\hat{f}^{\mathrm{lo}}(x) = \hat{q}_{\alpha/2}(x)$ and
$\hat{f}^{\mathrm{up}}(x) = \hat{q}_{1-\alpha/2}(x)$,
where $\hat{q}_{\alpha}$ denotes a conditional quantile estimate at level $\alpha$ for the distribution of $Y \mid X$, although this is not required for now.

Given a miscoverage level $\alpha \in (0,1)$, the typical goal is to construct an informative
prediction interval $\widehat{\mathcal{C}}(X_{n+1})$ with finite-sample marginal coverage, 
$\mathbb{P}[Y_{n+1} \in \widehat{\mathcal{C}}(X_{n+1})] \geq 1-\alpha$,
taking the probability over both $(X_{n+1},Y_{n+1})$ and $\mathcal{D}_{\mathrm{cal}}$. Marginal coverage is not as strong and 
informative as \emph{conditional} coverage,
$\mathbb{P}[Y_{n+1} \in \widehat{\mathcal{C}}(X_{n+1}) \mid X_{n+1}=x] \geq 1-\alpha$ for almost all
$x$, but the latter cannot be guaranteed in finite samples \citep{foygel2021limits}. Marginal coverage is
therefore the main target in conformal prediction, with conditional coverage pursued only approximately: well-designed
marginally calibrated methods such as conformalized quantile regression (CQR; \citealp{romano2019cqr}) often achieve high
conditional coverage in practice and can be shown to attain it asymptotically, under suitable assumptions including model consistency. 
This paper studies how to preserve the desirable properties of CQR---finite-sample marginal coverage and asymptotic conditional coverage---in the more
challenging clipped setting.

\subsection{Expanded Conformalized Quantile Regression} \label{sec:eCQR}
Existing conformal methods such as CQR (reviewed in Appendix~\ref{sec:cqr-review}) 
construct prediction intervals for fully observable outcomes, like our clipped $\tilde{Y}_{n+1}$. We show
that the natural extension of CQR to our setting over-covers $Y_{n+1}$ marginally, even though (as explained later) it
tends to under-cover it conditionally for \emph{easier} cases. Recall that the CQR
interval for $\tilde{Y}_{n+1}$ takes the form
\begin{align}\label{eq:cqr-set}
  \tilde{\mathcal{C}}(X_{n+1}) = \left[ \hat{f}^{\mathrm{lo}}(X_{n+1}) - \tilde\tau(\alpha),
  \hat{f}^{\mathrm{up}}(X_{n+1}) + \tilde\tau(\alpha) \right],
\end{align}
with the calibrated threshold $\tilde\tau(\alpha)$ equal to the
$\lceil (n+1)(1-\alpha)\rceil$-th smallest value among the non-conformity scores
$\{\tilde{S}^{\mathrm{CQR}}_i\}_{i=1}^{n} \cup \{R\}$ given by 
\begin{align}\label{eq:cqr-score-closed}
    & \tilde{S}_i^{\mathrm{CQR}} = s^{\mathrm{CQR}}(X_i, \tilde{Y}_i)
    & s^{\mathrm{CQR}}(x,y) := \max\bigl\{\hat{f}^{\mathrm{lo}}(x) - y, \; y - \hat{f}^{\mathrm{up}}(x)\bigr\},
  && \forall i \in [n].
\end{align}
In~\eqref{eq:cqr-set}, we adopt the convention that an interval $[a, b]$ with $a > b$ corresponds to the empty set.

\begin{theorem}[\cite{romano2019cqr}]\label{thm:cqr-validity}
If $(X_i, \tilde{Y}_i)_{i=1}^{n+1}$ are exchangeable,  
  $\mathbb{P}[\tilde{Y}_{n+1} \in \tilde{\mathcal{C}}(X_{n+1})] \geq 1-\alpha$ and, if $\{\tilde{S}^{\mathrm{CQR}}\}_{i=1}^{n+1}$ are a.s.\ distinct, $\mathbb{P}[\tilde{Y}_{n+1} \in \tilde{\mathcal{C}}(X_{n+1})] \leq 1-\alpha + 1/(n+1)$.
\end{theorem}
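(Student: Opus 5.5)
The plan is the standard exchangeability argument for split conformal prediction, specialized to the CQR score and to the fact that $R$ is appended to the calibration scores. First I will verify the equivalence
\[
\tilde{Y}_{n+1} \in \tilde{\mathcal{C}}(X_{n+1}) \iff \tilde{S}^{\mathrm{CQR}}_{n+1} \le \tilde\tau(\alpha),
\]
where $\tilde{S}^{\mathrm{CQR}}_{n+1} := s^{\mathrm{CQR}}(X_{n+1},\tilde{Y}_{n+1})$. The event $\hat{f}^{\mathrm{lo}}(X_{n+1}) - \tilde\tau \le \tilde{Y}_{n+1} \le \hat{f}^{\mathrm{up}}(X_{n+1}) + \tilde\tau$ says exactly that both terms in the max defining $s^{\mathrm{CQR}}$ are at most $\tilde\tau$. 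The empty-interval convention is consistent with this equivalence, since then no $y$ satisfies both inequalities.

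Next I will note that every score satisfies $\tilde{S}^{\mathrm{CQR}}_i \le R$. This holds only if $\hat{f}^{\mathrm{lo}},\hat{f}^{\mathrm{up}}$ take values in $[Y^{\min},Y^{\max}]$, which can be assumed after truncating the model. Then $\tilde\tau(\alpha)$, the $\lceil (n+1)(1-\alpha)\rceil$-th smallest value of $\{\tilde{S}_i\}_{i\le n}\cup\{R\}$, equals the same order statistic of the full multiset $\{\tilde{S}_i\}_{i\le n+1}$ whenever $\tilde{S}_{n+1}\le \tilde\tau$. The cleaner way to say this is
\[
\tilde{S}_{n+1}\le \tilde\tau(\alpha) \iff \tilde{S}_{n+1}\le \tilde{S}_{(k)},
\]
where $k=\lceil (n+1)(1-\alpha)\rceil$ and $\tilde{S}_{(k)}$ is the $k$-th order statistic of all $n+1$ scores. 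Replacing $R$ by the true score $\tilde{S}_{n+1}$ leaves the truth of the comparison unchanged. If $\tilde{S}_{n+1}=R$, this holds trivially. If $k>n$, then $\tilde\tau = R$ and coverage is certain, and the lower bound is trivial.

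Because the model is fit independently of the data, the scores $\tilde{S}_1,\dots,\tilde{S}_{n+1}$ are exchangeable. So $\mathbb{P}[\tilde{S}_{n+1}\le \tilde{S}_{(k)}]$ equals the average over $j$ of $\mathbb{P}[\tilde{S}_j\le \tilde{S}_{(k)}]$. At least $k$ of the scores lie at or below their $k$-th order statistic, which gives a probability of at least $k/(n+1)\ge 1-\alpha$. Under a.s.\ distinctness exactly $k$ scores do, which gives probability $k/(n+1) \le 1-\alpha+1/(n+1)$.

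The only delicate step is the reduction from the augmented set with $R$ to the full-score order statistic. It relies on the scores being bounded by $R$. I would also need to handle the case $k>n$ separately, where $\tilde\tau=R$ and the upper bound $k/(n+1)\le 1-\alpha+1/(n+1)$ still holds since $k\le n+1$. If boundedness by $R$ fails, I would instead use the standard reasoning with $+\infty$ in place of $R$, noting that replacing $+\infty$ by $R$ only affects the case $k=n+1$.
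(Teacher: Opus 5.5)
Under the bound $\tilde S_i\le R$ that you correctly isolate, your argument is correct and is the standard one the paper relies on. The paper cites \citet{romano2019cqr} for this result, and in the proof of Theorem~\ref{thm:pecqr-validity} it uses the same reduction $\{\tilde S_{n+1}\le\tilde\tau(\alpha)\}=\{\tilde S_{n+1}\le\tilde S_{(k)}\}$, followed by exchangeability counting over the $n+1$ scores.

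Your closing fallback, however, is wrong and should be dropped. If scores can exceed $R$, replacing $+\infty$ by $R$ changes the threshold for every $k$, not only for $k=n+1$, and the lower bound can then fail. Here is a counterexample.
\begin{itemize}
\item Take $C^{\mathrm L}=Y^{\min}$, $C^{\mathrm R}=Y^{\max}$, $Y$ continuous, and $\hat f^{\mathrm{lo}}=\hat f^{\mathrm{up}}\equiv Y^{\max}+2R$.
\item Every score equals $Y^{\max}+2R-\tilde Y\ge 2R>R$, and the scores are a.s.\ distinct.
\item So $R$ is the smallest element of the augmented multiset, and $\tilde\tau(\alpha)$ is the $(k-1)$-th smallest calibration score.
\item The coverage is therefore exactly $(k-1)/(n+1)$. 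For $n=99$ and $\alpha=0.1$ this gives $k=90$ and coverage $0.89<0.9$.
\end{itemize}
Hence $\tilde S_i\le R$ is a hypothesis the statement implicitly needs, not a convenience, and you should state it as one. It is ensured, for example, when the base endpoints lie in $[Y^{\min},Y^{\max}]$.

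Two minor slips:
\begin{itemize}
\item When $\tilde S_{n+1}\le\tilde\tau(\alpha)$, $\tilde\tau(\alpha)$ need not equal the full-sample $k$-th order statistic. That statistic is the larger of $\tilde S_{n+1}$ and the $(k-1)$-th smallest calibration score, which can be strictly smaller. The equivalence you state right after is correct, though, and it is all you use.
\item For $k=n+1$, the upper bound follows from $k<(n+1)(1-\alpha)+1$, not from $k\le n+1$.
\end{itemize}
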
 

This $\mathcal{O}(1/n)$-tight coverage upper bound hinges on the score distinctness assumption, which
in practice is usually ensured by jittering the base predictions $\hat{f}$; see Appendix~\ref{sec:cqr-review}.

The interval $\tilde{\mathcal{C}}(X_{n+1})$ for $\tilde{Y}_{n+1}$ can be naturally extended to cover
the latent outcome $Y_{n+1}$. Since observing $\tilde Y = C^{\mathrm{L}}$ reveals only $Y \in [Y^{\min},
C^{\mathrm{L}}]$, and $\tilde Y = C^{\mathrm{R}}$ only $Y \in [C^{\mathrm{R}}, Y^{\max}]$, with no
information on the position of $Y$ outside $[C^{\mathrm{L}}, C^{\mathrm{R}}]$, an endpoint should be sent all the way to the support boundary once it crosses a threshold. Formally, define the
\emph{snapping function} $\psi : \mathbb{R}^2 \to 2^{\mathbb{R}}$ element-wise such that, $\psi(\ell,u) = [\underline{\psi}(\ell), \overline{\psi}(u)]$ for any $\ell \leq u$, with
\begin{equation}\label{eq:psi-def}
    \underline{\psi}(\ell) = \begin{cases} Y^{\min}, & \ell \leq C^{\mathrm{L}}, \\
        \min\{\ell, C^{\mathrm{R}}\}, & \ell > C^{\mathrm{L}}, \end{cases}
    \qquad
    \overline{\psi}(u) = \begin{cases} \max\{u, C^{\mathrm{L}}\}, & u < C^{\mathrm{R}}, \\
        Y^{\max}, & u \geq C^{\mathrm{R}}. \end{cases}
    \end{equation}
  %   \psi(\ell, u)
%     = \begin{cases}
%         [Y^{\min}, Y^{\max}], & \text{if } \ell \leq C^{\mathrm{L}},\ u \geq C^{\mathrm{R}}, \\
%         [Y^{\min}, \max\{C^{\mathrm{L}}, u\}], & \text{if } \ell \leq C^{\mathrm{L}},\ \ell \leq u < C^{\mathrm{R}}, \\
%         [\ell, u], & \text{if } C^{\mathrm{L}} < \ell \leq u < C^{\mathrm{R}}, \\
%         [\min\{C^{\mathrm{R}}, \ell\}, Y^{\max}], & \text{if } C^{\mathrm{L}} < \ell \leq u,\ u \geq C^{\mathrm{R}}, \\
%         \emptyset, & \text{if } \ell > u.
%     \end{cases}
Snapping the CQR interval with $\psi$ yields what we call \emph{expanded CQR} (eCQR):
$\widehat{\mathcal{C}}^{\mathrm{eCQR}}(X_{n+1}) = \psi \bigl( \tilde{\mathcal{C}}(X_{n+1}) \bigr)$,
summarized with optional base jittering by Algorithm~\ref{alg:ecqr} in Appendix~\ref{sec:appendix-algorithms}.

While it is unsurprising that eCQR provides valid marginal coverage for $Y_{n+1}$, it is interesting to
quantify the margin by which it does so conservatively. All proofs are in Appendix~\ref{sec:appendix-proofs}.
\begin{theorem}\label{thm:ecqr-validity}
If $(X_i, Y_i)_{i=1}^{n+1}$ are exchangeable,  
$\mathbb{P}[Y_{n+1} \in \widehat{\mathcal{C}}^{\mathrm{eCQR}}(X_{n+1})] = p_{\mathrm{slack}} + \mathbb{P}[\tilde{Y}_{n+1} \in \tilde{\mathcal{C}}(X_{n+1}) ] \geq 1-\alpha + p_{\mathrm{slack}}$, with
$ p_{\mathrm{slack}}  =
  \mathbb{P}[ \tilde{\mathcal{C}}(X) < C^{\mathrm{L}},\; Y \leq C^{\mathrm{L}}] + \mathbb{P}[ \tilde{\mathcal{C}}(X) > C^{\mathrm{R}},\; Y \geq C^{\mathrm{R}}]$.
\end{theorem}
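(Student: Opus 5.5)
The plan is a pointwise, event-level decomposition: for every realization of the data, compare the event $\{Y_{n+1} \in \psi(\tilde{\mathcal{C}}(X_{n+1}))\}$ with $\{\tilde{Y}_{n+1} \in \tilde{\mathcal{C}}(X_{n+1})\}$. The aim is to show that the first event is the disjoint union of the second and of the two slack events
\[
\{\tilde{\mathcal{C}}(X_{n+1}) < C^{\mathrm{L}},\, Y_{n+1} \le C^{\mathrm{L}}\}, \qquad \{\tilde{\mathcal{C}}(X_{n+1}) > C^{\mathrm{R}},\, Y_{n+1} \ge C^{\mathrm{R}}\}.
\]
Taking probabilities, and using exchangeability of $(X_i,Y_i)$ together with the product form of $\mathcal{P}$ (so that the $(X_i,\tilde{Y}_i)$ are exchangeable as well), the identity follows. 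The inequality is then immediate from Theorem~\ref{thm:cqr-validity}. Throughout, ``$\tilde{\mathcal{C}} < C^{\mathrm{L}}$'' means the whole (nonempty) interval lies strictly below $C^{\mathrm{L}}$, and analogously for the right side.

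Write $\tilde{\mathcal{C}} = [\ell,u]$ with $\ell \le u$; if $\ell > u$, both sets are empty and the slack events are empty too. I would split into cases according to the position of $Y$.

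\emph{Interior case}, $C^{\mathrm{L}} < Y < C^{\mathrm{R}}$. Here $\tilde{Y} = Y$, and I need $Y \in [\underline{\psi}(\ell),\overline{\psi}(u)] \iff Y \in [\ell,u]$. This holds because snapping only moves an endpoint when it crosses a threshold. For example, if $\ell \le C^{\mathrm{L}}$ then $\underline{\psi}(\ell) = Y^{\min}$, and both $Y \ge \ell$ and $Y \ge Y^{\min}$ hold automatically. If $\ell > C^{\mathrm{L}}$, then $Y \ge \min\{\ell, C^{\mathrm{R}}\} \iff Y \ge \ell$, since $Y < C^{\mathrm{R}}$. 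The upper endpoint is symmetric. Neither slack event can occur in this case.

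\emph{Lower case}, $Y \le C^{\mathrm{L}}$. Here $\tilde{Y} = C^{\mathrm{L}}$. I would show $Y \in \psi(\ell,u) \iff u \ge C^{\mathrm{L}}$.
\begin{itemize}
\item If $\ell \le C^{\mathrm{L}}$, the lower snapped endpoint is $Y^{\min} \le Y$. The upper endpoint is at least $\max\{u, C^{\mathrm{L}}\} \ge C^{\mathrm{L}} \ge Y$ when $u < C^{\mathrm{R}}$, or equals $Y^{\max}$ otherwise. So $Y$ is covered, even when $u < C^{\mathrm{L}}$, thanks to the $\max$ with $C^{\mathrm{L}}$.
\item If $\ell > C^{\mathrm{L}}$, the lower endpoint exceeds $C^{\mathrm{L}} \ge Y$, so $Y$ is not covered.
\end{itemize}
Hence $Y \in \psi(\ell,u) \iff \ell \le C^{\mathrm{L}}$. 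On the other hand, $\tilde{Y} = C^{\mathrm{L}} \in [\ell,u] \iff \ell \le C^{\mathrm{L}} \le u$. The difference between the two events is exactly $\{u < C^{\mathrm{L}}\}$, i.e.\ $\{\tilde{\mathcal{C}} < C^{\mathrm{L}}\}$, which is the first slack event. The upper case, $Y \ge C^{\mathrm{R}}$, is symmetric and produces the second slack event.

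The main obstacle is this boundary bookkeeping. The points to check carefully are:
\begin{itemize}
\item the non-strict inequalities at $C^{\mathrm{L}}$ and $C^{\mathrm{R}}$ in \eqref{eq:psi-def};
\item the inclusive conditions $Y \le C^{\mathrm{L}}$ and $Y \ge C^{\mathrm{R}}$ in $p_{\mathrm{slack}}$;
\item the empty-interval convention;
\item the edge subcase $Y = C^{\mathrm{L}}$, which is handled consistently by the lower case.
\end{itemize}
Once the three cases give the disjoint decomposition, summing probabilities yields the claimed equality, and applying Theorem~\ref{thm:cqr-validity} yields $\geq 1-\alpha+p_{\mathrm{slack}}$.
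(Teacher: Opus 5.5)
Your proof is correct and takes essentially the same route as the paper: you split on whether $Y_{n+1}$ lies in $(C^{\mathrm{L}},C^{\mathrm{R}})$, below $C^{\mathrm{L}}$, or above $C^{\mathrm{R}}$, show that the snapped-coverage event exceeds the clipped-coverage event by exactly the two slack events, and then invoke Theorem~\ref{thm:cqr-validity}. The paper organizes this as first proving $\tilde{E}\subseteq E$ and then characterizing $E\setminus\tilde{E}$ case by case, whereas you characterize both events in each case at once. Note also that the identity holds pointwise and needs no exchangeability; exchangeability is used only for the final inequality.
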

Thus, by Theorems~\ref{thm:cqr-validity} and~\ref{thm:ecqr-validity}, the coverage slack of eCQR is controlled by the probability that the CQR interval is non-empty and lies entirely beyond the observable range $[C^{\mathrm{L}}, C^{\mathrm{R}}]$ on the side where the outcome is censored. 
This suggests that excess marginal coverage may be reduced by avoiding extrapolation, projecting the base predictor $\hat{f}$ onto $[C^{\mathrm{L}}, C^{\mathrm{R}}]$ before applying eCQR. Algorithm~\ref{alg:pecqr} summarizes this approach, which we call \emph{projected-expanded CQR} (peCQR). Intuitive as it is, however, even peCQR is marginally over-conservative: while the projection removes one source of slack, it introduces another by creating ties in the calibration scores which cannot be broken by jittering without undoing the projection itself. We refer to Appendix~\ref{sec:app-ecqr} for further details on peCQR and its finite-sample marginal coverage.

\subsection{Tightening Marginal Coverage: ClipCQR} \label{sec:method-clipcqr}

We introduce a non-conformity score function designed for clipped outcomes.
This enables tight marginal coverage for the latent outcome, clarifies why eCQR is marginally conservative, and serves as the stepping
stone for the conditional calibration method of the next section.

In general, conformal prediction calibrates a nested family of intervals indexed by a scalar parameter \citep{gupta2022nested}.
For CQR, this family is $\phi^{\mathrm{CQR}}(x; \tau) := [\,\hat{f}^{\mathrm{lo}}(x) - \tau,\;\; \hat{f}^{\mathrm{up}}(x) + \tau]$, with $\tau \in \mathbb{R}$ (see Appendix~\ref{sec:cqr-review}).
The natural analogue of the CQR family for the clipped setting is
\begin{equation}\label{eq:phi-def}
    \phi(x; \tau) := \psi\bigl( \phi^{\mathrm{CQR}}(x; \tau) \bigr),
    \qquad \tau \in \mathbb{R}.
\end{equation}
with $\psi$ from~\eqref{eq:psi-def}.
This is nested by construction, $\phi(x; \tau_1) \subseteq \phi(x; \tau_2)$ whenever $\tau_1 \leq \tau_2$,
and can cover the entire support,
$\phi(x; \tau) \to [Y^{\min}, Y^{\max}]$ as $\tau \to R$ and $\phi(x; \tau) \to \emptyset$ as $\tau \to -\infty$.
The eCQR prediction interval belongs to~\eqref{eq:phi-def} since $\widehat{\mathcal{C}}^{\mathrm{eCQR}}(X_{n+1}) = \phi(X_{n+1}; \hat\tau^{\mathrm{eCQR}}(\alpha))$.

It is now clear that eCQR is marginally conservative because it takes the clipped outcomes at face value while calibrating $\hat\tau^{\mathrm{eCQR}}(\alpha)$ as if $\widehat{\mathcal{C}}^{\mathrm{eCQR}}(X_{n+1})$ belonged to the $\phi^{\mathrm{CQR}}$ family.
To achieve tight marginal coverage, we can calibrate~\eqref{eq:phi-def} directly, using the following score function.

\begin{lemma} \label{lemma:score-formula}
The score function $s(x, y) := \inf\bigl\{\tau \in \mathbb{R} : y \in \phi(x; \tau)\bigr\}$ corresponding to~\eqref{eq:phi-def} is:
\begin{equation}\label{eq:score-formula}
    s(x, y) = \begin{cases}
        \max\bigl\{\hat{f}^{\mathrm{lo}}(x) - C^{\mathrm{L}},\, \tfrac{1}{2}(\hat{f}^{\mathrm{lo}}(x) - \hat{f}^{\mathrm{up}}(x))\bigr\}, & \text{if } y \leq C^{\mathrm{L}}, \\
        \max\bigl\{\hat{f}^{\mathrm{lo}}(x) - y,\, y - \hat{f}^{\mathrm{up}}(x)\bigr\}, & \text{if } C^{\mathrm{L}} < y < C^{\mathrm{R}}, \\
        \max\bigl\{C^{\mathrm{R}} - \hat{f}^{\mathrm{up}}(x),\, \tfrac{1}{2}(\hat{f}^{\mathrm{lo}}(x) - \hat{f}^{\mathrm{up}}(x))\bigr\}, & \text{if } y \geq C^{\mathrm{R}}.
    \end{cases}
\end{equation}
\end{lemma}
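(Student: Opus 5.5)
We prove the formula by describing the set $\{\tau : y \in \phi(x;\tau)\}$ explicitly in each of the three cases for $y$, and then reading off its infimum. Write $a := \hat{f}^{\mathrm{lo}}(x)$ and $b := \hat{f}^{\mathrm{up}}(x)$, so that $\phi^{\mathrm{CQR}}(x;\tau) = [a-\tau, b+\tau]$. By the convention after~\eqref{eq:cqr-set}, this is nonempty if and only if $\tau \geq \tfrac12(a-b)$. When it is empty, $\phi(x;\tau)$ is also empty, since $\psi$ is defined through the pair $\ell\le u$. So every admissible $\tau$ must satisfy $\tau \geq \tfrac12(a-b)$. This explains the recurring term $\tfrac12(\hat f^{\mathrm{lo}}-\hat f^{\mathrm{up}})$. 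For such $\tau$, we have $\phi(x;\tau) = [\underline\psi(a-\tau), \overline\psi(b+\tau)]$ by~\eqref{eq:psi-def}.

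\emph{Case $y \le C^{\mathrm{L}}$.} Since $\overline\psi(u) \ge C^{\mathrm{L}} \ge y$ always, the upper endpoint never excludes $y$. For the lower endpoint, $\underline\psi(\ell) \le y$ holds if and only if $\ell \le C^{\mathrm{L}}$, or $\ell > C^{\mathrm{L}}$ and $\min\{\ell, C^{\mathrm{R}}\} \le y$. The second alternative is impossible, because both $\ell$ and $C^{\mathrm{R}}$ exceed $C^{\mathrm{L}} \ge y$. Hence the condition is $a - \tau \le C^{\mathrm{L}}$, i.e.\ $\tau \ge a - C^{\mathrm{L}}$. The admissible set is therefore $[\max\{a - C^{\mathrm{L}}, \tfrac12(a-b)\}, \infty)$, whose infimum is the first line of~\eqref{eq:score-formula}.

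\emph{Case $y \ge C^{\mathrm{R}}$.} This is symmetric. Here $\underline\psi(\ell) \le C^{\mathrm{R}} \le y$ always, and $\overline\psi(u) \ge y$ forces $u \ge C^{\mathrm{R}}$, i.e.\ $\tau \ge C^{\mathrm{R}} - b$. This gives the third line.

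\emph{Case $C^{\mathrm{L}} < y < C^{\mathrm{R}}$.} This is the one step needing care. We claim that $\underline\psi(\ell) \le y$ if and only if $\ell \le y$. If $\ell \le C^{\mathrm{L}}$, the left side is $Y^{\min} \le y$, and $\ell < y$ holds as well. If $\ell > C^{\mathrm{L}}$, the condition $\min\{\ell, C^{\mathrm{R}}\} \le y$ is equivalent to $\ell \le y$, because $C^{\mathrm{R}} > y$. The same reasoning shows that $\overline\psi(u) \ge y$ if and only if $u \ge y$. So $y \in \phi(x;\tau)$ if and only if $\tau \ge \max\{a-y, y-b\}$, together with nonemptiness. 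The nonemptiness constraint is redundant here: $\max\{a-y, y-b\} \ge \tfrac12[(a-y)+(y-b)] = \tfrac12(a-b)$. This yields the middle line.

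In each case the admissible set is a closed half-line, so the infimum is attained and equals the stated expression. This also confirms that the score is consistent with the nestedness of $\phi$ noted after~\eqref{eq:phi-def}.
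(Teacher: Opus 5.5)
Your proof is correct and follows the paper's own argument essentially step for step. Like the paper, you fix $a,b$, note that $\phi(x;\tau)$ is empty unless $\tau\ge\tfrac12(a-b)$, split into the three ranges of $y$, and observe that in the interior case the nonemptiness constraint is redundant because the maximum dominates the average. You are just slightly more explicit in verifying the endpoint conditions (e.g.\ $\underline\psi(\ell)\le y \iff \ell\le y$ for interior $y$), which the paper asserts by inspection.
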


This is closely related to the CQR score~\eqref{eq:cqr-score-closed}: for uncensored observations the two coincide, as they do for censored observations unless the base interval overlaps with the unobservable region on the censored side. In that case~\eqref{eq:score-formula} is the more negative of the two; the contrast is sharpest when the base interval lies entirely inside the unobservable region, where~\eqref{eq:score-formula} turns negative while the CQR score, taking the clipped outcome at face value, stays positive.

After evaluating this score~\eqref{eq:score-formula} on the clipped data, computing
$S_i = s(X_i, \tilde{Y}_i)$ for $i \in [n]$, the threshold
$\hat{\tau}(\alpha)$ is calibrated as the $\lceil(n+1)(1-\alpha)\rceil$-th smallest of
$\{S_i\}_{i=1}^n \cup \{R\}$, and the prediction interval is $\widehat{\mathcal{C}}(X_{n+1}) = \phi(X_{n+1}; \hat{\tau}(\alpha))$;
we call this method \emph{ClipCQR}.
Unlike peCQR, it accommodates tie-breaking among nonconformity scores
with the usual base jittering strategy (Appendix~\ref{sec:cqr-review}).
The ClipCQR method including jittering is summarized by Algorithm~\ref{alg:clipcqr}.

ClipCQR always tightens eCQR; see Proposition~\ref{prop:ecqr-ccqr} in Appendix~\ref{app:theory}.
Moreover, the next result shows the marginal coverage of ClipCQR is tight, within $\mathcal{O}(1/n)$ of the nominal level.

\begin{theorem}\label{thm:ccqr-validity}
If $(X_i, Y_i)_{i=1}^{n+1}$ are exchangeable, the ClipCQR prediction interval satisfies
\begin{align*}
    1 - \alpha \leq \mathbb{P}\bigl[Y_{n+1} \in \widehat{\mathcal{C}}(X_{n+1})\bigr] \leq 1 - \alpha + \tfrac{1}{n+1},
\end{align*}
with the second inequality holding if ClipCQR is applied with jittering.
\end{theorem}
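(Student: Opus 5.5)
The plan is to reduce the statement to the standard split-conformal quantile argument. The key fact is that the ClipCQR score evaluated on the clipped outcome equals the score evaluated on the latent outcome, that is, $s(X_i,\tilde{Y}_i) = s(X_i, Y_i)$ for every $i$. This can be read off Lemma~\ref{lemma:score-formula}.

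\begin{itemize}[leftmargin=*]
\item If $Y_i \le C^{\mathrm{L}}$, then $\tilde{Y}_i = C^{\mathrm{L}}$. Both values fall in the first case of~\eqref{eq:score-formula}, which depends on $y$ only through the event $y\le C^{\mathrm{L}}$.
\item If $Y_i \ge C^{\mathrm{R}}$, the same reasoning applies with the third case.
\item If $C^{\mathrm{L}}<Y_i<C^{\mathrm{R}}$, then $\tilde{Y}_i=Y_i$.
\end{itemize}

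Hence the calibration scores $S_i$ computed from $\mathcal{D}_{\mathrm{cal}}$ coincide with latent scores $S_i^\ast := s(X_i,Y_i)$. Define the test score $S_{n+1} := s(X_{n+1},Y_{n+1})$. The scores $(S_1^\ast,\dots,S_{n+1})$ are then exchangeable, being a fixed function (given the independently trained $\hat f$ and the jitter, applied i.i.d.\ or symmetrically) of exchangeable pairs.

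Next I relate coverage to the score. By the definition $s(x,y)=\inf\{\tau: y\in\phi(x;\tau)\}$ and nestedness of $\phi$, we have $Y_{n+1}\in\phi(X_{n+1};\tau)$ if and only if $S_{n+1}\le\tau$. This requires checking that the infimum is attained, i.e.\ that $\{\tau : y\in\phi(x;\tau)\}$ is closed on the right. I would verify this from~\eqref{eq:psi-def}: the endpoints $\underline\psi(\hat f^{\mathrm{lo}}(x)-\tau)$ and $\overline\psi(\hat f^{\mathrm{up}}(x)+\tau)$ are monotone in $\tau$, and the snapping conditions use the non-strict inequalities $\ell\le C^{\mathrm{L}}$ and $u\ge C^{\mathrm{R}}$. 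So each set of $\tau$ on which $y$ is covered is a closed half-line $[\tau_0,\infty)$, with the emptiness convention handled by the term $\tfrac12(\hat f^{\mathrm{lo}}-\hat f^{\mathrm{up}})$. This step is the main obstacle. The piecewise structure of $\psi$ must be checked carefully, in particular at $y=C^{\mathrm{L}}$ and $y=C^{\mathrm{R}}$, and also in the case where the base interval is inverted for negative $\tau$. Checking the three cases of Lemma~\ref{lemma:score-formula} directly, showing that each branch's inequality is non-strict, seems the cleanest route. One more point is needed: every score satisfies $S_i\le R$, because $\phi(x;\tau)=[Y^{\min},Y^{\max}]$ once $\tau$ is large enough. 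Moreover $\tau=R$ suffices, since $\hat f$ can be projected or assumed to lie in the support, or at least $\hat\tau$ computed with $R$ appended behaves as $+\infty$. So appending $R$ acts exactly like appending $+\infty$ in the usual argument.

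With these facts, the event of coverage is $\{S_{n+1}\le \hat\tau(\alpha)\}$. Here $\hat\tau(\alpha)$ is the $\lceil (n+1)(1-\alpha)\rceil$-th smallest of $S_1,\dots,S_n,R$. The standard lemma applies: for exchangeable scores, $S_{n+1}$ is at most this quantile if and only if $S_{n+1}$ is among the $\lceil (n+1)(1-\alpha)\rceil$ smallest of $S_1,\dots,S_{n+1}$, with ties allowed. This gives probability at least $1-\alpha$. Under a.s.\ distinctness, the rank of $S_{n+1}$ is uniform on $\{1,\ldots,n+1\}$, which gives the upper bound $\lceil (n+1)(1-\alpha)\rceil/(n+1)\le 1-\alpha+1/(n+1)$. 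This is the same reasoning as Theorem~\ref{thm:cqr-validity}, which I would invoke directly on the latent scores.

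For the upper bound, I would argue that jittering makes the scores a.s.\ distinct. Jittering adds continuous independent noise to $\hat f^{\mathrm{lo}}$ and $\hat f^{\mathrm{up}}$. In each branch of~\eqref{eq:score-formula}, the score is a maximum of affine functions of the jittered values with nonzero coefficients. Therefore two scores from different indices coincide only on a null set.
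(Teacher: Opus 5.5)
Your proposal is correct and follows essentially the same route as the paper: the identity $s(x,y)=s(x,\Pi(y))$ makes the calibration scores coincide with the exchangeable latent scores. Coverage is then equivalent to $\{S_{n+1}\le\hat\tau(\alpha)\}$ because the infimum is attained, and the standard split-conformal bounds finish the argument. The paper gets a.s.\ distinctness more directly, by noting that jittering shifts every score exactly to $S_i-\eta_i$; this can replace your affine-pieces argument.
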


Marginal coverage is usually the primary goal in conformal prediction, largely because it ordinarily does not conflict with approximate conditional coverage \citep{sesia2020comparison,romano2020classification}. Under clipping, however, it does. As discussed in Section~\ref{sec:theory}, both ClipCQR and eCQR tend to over-contract the prediction intervals in the \emph{interior} region of the feature space, where the model does not extrapolate. This motivates the alternative method proposed next.

\subsection{Prioritising Conditional Coverage: ClipCQR+}

We propose ClipCQR+, which prioritizes conditional coverage by calibrating two thresholds: one for points whose base interval
stays within the observable range, and one for those whose base interval extrapolates beyond it.
Let $\widehat{\mathcal I} \subseteq [n]$ denote the (random) subset of \emph{internal} calibration points, for which the (jittered) base prediction interval lies strictly inside the observable range:
\begin{align} \label{eq:I-hat}
  & \widehat{\mathcal I} := \bigl\{i \in [n] : X_i \in \mathrm{int}(\hat{f}) \bigr\},
  & \mathrm{int}(\hat{f}) := \left\{ x \in \mathcal{X} : C^{\mathrm L} < \hat f^{\mathrm{lo}}(x) \le \hat f^{\mathrm{up}}(x) < C^{\mathrm R} \right\}.
\end{align}
Let $\hat\tau(\alpha; \widehat{\mathcal I})$ be the threshold obtained by applying ClipCQR (Algorithm~\ref{alg:clipcqr}) using only the
internal calibration points indexed by $\widehat{\mathcal I}$, with $\hat\tau(\alpha; \widehat{\mathcal I}) = R$ if $\widehat{\mathcal I} = \emptyset$.
Similarly, let $\hat\tau(\alpha; \widehat{\mathcal I}^{\mathrm{c}})$ be the corresponding threshold calibrated using only the
\textit{extrapolation} points indexed by $\widehat{\mathcal I}^{\mathrm{c}} = [n] \setminus \widehat{\mathcal I}$.
ClipCQR+ replaces the $\hat\tau(\alpha)$ threshold of ClipCQR with
\begin{align} \label{eq:-threshold-clipcqrt+}
  \hat\tau^+(\alpha, X_{n+1}) := \hat\tau(\alpha;\widehat{\mathcal I})\,\I\{X_{n+1}\in\mathrm{int}(\hat f)\} + \max\{0,\hat\tau(\alpha;\widehat{\mathcal I}^{\mathrm c})\}\,\I\{X_{n+1}\notin\mathrm{int}(\hat f)\},
\end{align}
and then outputs $\widehat{\mathcal{C}}^+(X_{n+1}) = \phi\bigl(X_{n+1};\, \hat{\tau}^+(\alpha, X_{n+1})\bigr)$.
See Algorithm~\ref{alg:clipcqr_plus} for a summary.

The ClipCQR+ prediction intervals guarantee valid (though possibly conservative) marginal coverage and, in addition, coverage conditional on $X_{n+1} \in \mathrm{int}(\hat{f})$. It also enjoys desirable asymptotic conditional properties (Section~\ref{sec:theory}) and performs relatively well in practice (Section~\ref{sec:experiments}).

\begin{theorem}\label{thm:ccqr_plus-validity}
If $(X_i, Y_i)_{i=1}^{n+1}$ are exchangeable,  the ClipCQR+ prediction interval satisfies
\begin{align*}
    & \mathbb{P}\bigl[Y_{n+1} \in \widehat{\mathcal{C}}^+(X_{n+1})\bigr] \geq 1 - \alpha,
  & \mathbb{P}\bigl[Y_{n+1} \in \widehat{\mathcal{C}}^+(X_{n+1}) \mid X_{n+1} \in \mathrm{int}(\hat{f}) \bigr] \geq 1 - \alpha.
\end{align*}
\end{theorem}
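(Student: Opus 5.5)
We prove the two claims through a group-conditional (Mondrian) exchangeability argument, using the partition of the $n+1$ points into internal points ($X_i \in \mathrm{int}(\hat f)$) and extrapolation points. Membership in $\mathrm{int}(\hat f)$ depends only on $X_i$ and on the fixed, pre-trained (and jittered) model. For jittering we take the jitter noise to be i.i.d. across points and independent of the data, so that exchangeability is preserved. The key fact is then that, conditional on the unordered multiset of group labels and on which points fall in each group, the points within each group remain exchangeable.

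\textbf{Step 1 (validity of ClipCQR within a group).} First we record that ClipCQR coverage rests on a single equivalence. By Lemma~\ref{lemma:score-formula}, the infimum defining $s$ is attained; this holds because $\phi(x;\tau)$ is closed and right-continuous in $\tau$ through the snapping $\psi$. Hence
\[
Y \in \phi(x;\tau) \iff s(x,Y) \le \tau .
\]
Moreover, $s(X_i,\tilde Y_i)=s(X_i,Y_i)$, because the score is constant on $Y\le C^{\mathrm L}$ and on $Y\ge C^{\mathrm R}$. The calibration scores are therefore exchangeable with the test score $S_{n+1}=s(X_{n+1},Y_{n+1})$. This is exactly the argument behind the lower bound in Theorem~\ref{thm:ccqr-validity}.

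\textbf{Step 2 (internal points).} Let $E=\{X_{n+1}\in\mathrm{int}(\hat f)\}$ and $\mathcal I^+=\widehat{\mathcal I}\cup\{n+1\}$ on $E$. Condition on $E$ and on the set $\mathcal I^+$, whose size we write $m+1$. Conditionally, the scores $\{S_i\}_{i\in\mathcal I^+}$ are exchangeable. The threshold $\hat\tau(\alpha;\widehat{\mathcal I})$ is the $\lceil (m+1)(1-\alpha)\rceil$-th smallest value of $\{S_i\}_{i\in\widehat{\mathcal I}}\cup\{R\}$. The standard quantile lemma then gives
\[
\mathbb P\bigl[S_{n+1}\le \hat\tau(\alpha;\widehat{\mathcal I}) \mid E,\mathcal I^+\bigr]\ge 1-\alpha .
\]
The degenerate case $m=0$ is covered as well: the threshold is then $R$, and $\phi(x;R)$ is the full support. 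Averaging over $\mathcal I^+$ yields the conditional claim.

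\textbf{Step 3 (extrapolation points).} The same argument applies on $E^{\mathrm c}$ to the scores indexed by $\widehat{\mathcal I}^{\mathrm c}\cup\{n+1\}$. We obtain coverage at least $1-\alpha$ conditional on $E^{\mathrm c}$ for threshold $\hat\tau(\alpha;\widehat{\mathcal I}^{\mathrm c})$. Replacing this threshold by $\max\{0,\cdot\}$ only enlarges the interval, by the nestedness of $\phi$, so coverage can only increase. Combining the two conditional bounds by the law of total probability gives the marginal claim. If $\mathbb P(E)=0$, the conditional statement is vacuous.

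\textbf{Main obstacle.} The delicate point is justifying within-group exchangeability rigorously. We must check that conditioning on the group memberships of all $n+1$ points leaves the test point uniformly positioned among its group. This holds because the group map $x\mapsto \I\{x\in\mathrm{int}(\hat f)\}$ is a fixed function applied coordinatewise, so any permutation within a group preserves the conditioning event. With jittering, the jitter must be treated as part of the features for this to go through. A second point to verify is the attainment of the infimum, namely that $Y\in\phi(x;s(x,Y))$ holds, including at the snapping boundaries $\ell=C^{\mathrm L}$ and $u=C^{\mathrm R}$.
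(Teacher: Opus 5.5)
Your proposal is correct and follows the same route as the paper's proof. Both arguments get coverage within each group (internal versus extrapolation points) from exchangeability of the test point with the calibration points in its group, use nestedness of $\phi$ to absorb the $\max\{0,\cdot\}$ truncation, and apply the law of total probability for the marginal bound. Your version makes explicit what the paper calls ``the usual argument'': conditioning on the group memberships of all $n+1$ points, treating the jitter $\eta_{n+1}$ as part of the features so that membership of $X_{n+1}$ is decided exchangeably, and handling the degenerate case $\widehat{\mathcal I}=\emptyset$.
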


All four methods---eCQR, peCQR, ClipCQR, and ClipCQR+---coincide if the base prediction intervals never extrapolate (Corollary~\ref{cor:ecqr-pecqr-ccqr} in Appendix~\ref{app:theory}). In general, ClipCQR+ and eCQR belong to the same family~\eqref{eq:phi-def}, but the ordering of their thresholds is data-dependent, so neither prediction interval need contain the other. The relation between ClipCQR+ and peCQR is likewise data-dependent, though these two are more closely connected (see Section~\ref{sec:theory} and Appendix~\ref{app:theory}).

\section{Theoretical Analysis} \label{sec:theory}

We analyse the large-sample behaviour of ClipCQR+, showing that it approximates an ideal oracle
under a suitable consistency assumption for the base predictor. The analyses of ClipCQR, eCQR and peCQR are
summarised at the end of this section, with details deferred to Appendix~\ref{app:theory}.

Let $F_{Y \mid X}(y \mid x) = \mathbb{P}[Y \leq y \mid X=x]$ and
$q_\beta(x) := \inf\{y : F_{Y \mid X}(y \mid x) \ge \beta\}$ respectively denote the conditional 
cumulative distribution and quantile functions of the population. The
\emph{oracle} conditional quantiles $q^{\mathrm{lo}}(x) := q_{\alpha/2}(x)$ and
$q^{\mathrm{up}}(x) := q_{1-\alpha/2}(x)$ give the ideal $\alpha$-level equal-tailed prediction oracle
$\mathcal{O}_{\mathrm{uncens}}(x) := [q^{\mathrm{lo}}(x), q^{\mathrm{up}}(x)]$ for the uncensored
setting \citep{sesia2020comparison}. Under censoring, predictions outside the observable range
can never be resolved in practice, and so it is intuitive to snap this oracle to $[C^{\mathrm{L}},C^{\mathrm{R}}]$:
 writing $\phi^\star(x;\tau) := \psi(q^{\mathrm{lo}}(x) - \tau,\, q^{\mathrm{up}}(x) + \tau)$
for the resulting \emph{oracle family}, the \emph{snapped conditional oracle} is its member with $\tau = 0$,
\[
  \mathcal{O}_0(x) := \phi^\star(x; 0) = \psi\bigl(q^{\mathrm{lo}}(x), q^{\mathrm{up}}(x)\bigr).
\]
Intuitively, $\mathcal{O}_0$ is the ideal target for a conditionally valid censoring-aware method: it returns the boundaries of
$\mathcal{O}_{\mathrm{uncens}}$ where observable, and collapses them onto the
relevant threshold where they are not.
The following result characterizes its marginal and conditional coverage.

\begin{assumption}[Regularity]\label{ass:regularity}
(i) $q^{\mathrm{lo}}(X)$ and $q^{\mathrm{up}}(X)$ have continuous distributions without point masses;
(ii) $Y \mid X = x$ has a density that is continuous, strictly positive on $[Y^{\min}, Y^{\max}]$, and uniformly bounded by some $M < \infty$, for almost every $x$.
\end{assumption}

\begin{proposition}\label{prop:O0-coverage}
Let $\pi^{\mathrm L} := \mathbb{P}(q^{\mathrm{lo}}(X) \le C^{\mathrm L})$ and
$\pi^{\mathrm R} := \mathbb{P}(q^{\mathrm{up}}(X) \ge C^{\mathrm R})$. Under Assumption~\ref{ass:regularity},
$\mathcal{O}_0$ is conservative, both conditionally $\mathbb{P}(Y \in \mathcal{O}_0(X) \mid X)\ge 1-\alpha$, and marginally 
$\mathbb{P}(Y \in \mathcal{O}_0(X)) \ge 1-\alpha + \tfrac{\alpha}{2}(\pi^{\mathrm L} + \pi^{\mathrm R})$, with equality 
if and only if $[q^{\mathrm{lo}}(X),q^{\mathrm{up}}(X)]$ a.s.\ overlaps with $[C^{\mathrm L}, C^{\mathrm R}]$.
% \[
%   \mathbb{P}(Y \in \mathcal{O}_0(X) \mid X)
%   = \begin{cases}
%     F_{Y\mid X}(C^{\mathrm L}\mid X) \geq 1-\alpha, & \text{if }q^{\mathrm{up}}(X)<C^{\mathrm L},\\
%     1-F_{Y\mid X}(C^{\mathrm R}\mid X) \geq 1-\alpha, & \text{if } q^{\mathrm{lo}}(X)>C^{\mathrm R}\\
%     1-\alpha + \tfrac{\alpha}{2}\bigl(\I\{q^{\mathrm{lo}}(X) \le C^{\mathrm L}\} + \I\{q^{\mathrm{up}}(X) \ge C^{\mathrm R}\}\bigr),
%       & \text{otherwise}.
%   \end{cases}
% \]
If $C^{\mathrm L} < q^{\mathrm{lo}}(X) \le q^{\mathrm{up}}(X) < C^{\mathrm R}$ a.s., then $\mathcal{O}_0 = \mathcal{O}_{\mathrm{uncens}}$.
\end{proposition}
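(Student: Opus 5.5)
The plan is to compute the conditional coverage $\mathbb{P}(Y \in \mathcal{O}_0(X) \mid X=x)$ in closed form by a case split on where $q^{\mathrm{lo}}(x)$ and $q^{\mathrm{up}}(x)$ sit relative to $C^{\mathrm L}$ and $C^{\mathrm R}$, and then integrate over $X$. Assumption~\ref{ass:regularity}(ii) makes $F_{Y\mid X}(\cdot\mid x)$ continuous and strictly increasing on $[Y^{\min},Y^{\max}]$. Hence $F_{Y\mid X}(q^{\mathrm{lo}}(x)\mid x)=\alpha/2$ and $F_{Y\mid X}(q^{\mathrm{up}}(x)\mid x)=1-\alpha/2$ exactly, and endpoints carry no mass, so open and closed endpoints can be ignored.

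Write $\ell=q^{\mathrm{lo}}(x)\le u=q^{\mathrm{up}}(x)$ and read off $\psi(\ell,u)$ from~\eqref{eq:psi-def}. There are five cases.
\begin{enumerate}[label=(\alph*)]
\item If $C^{\mathrm L}<\ell\le u<C^{\mathrm R}$, then $\mathcal{O}_0(x)=[\ell,u]$ and the coverage is $1-\alpha$. This case also gives the last claim, $\mathcal{O}_0=\mathcal{O}_{\mathrm{uncens}}$.
\item If $\ell\le C^{\mathrm L}$ and $C^{\mathrm L}\le u<C^{\mathrm R}$, then $\mathcal{O}_0(x)=[Y^{\min},u]$ and the coverage is $1-\alpha/2$.
\item The mirror case $C^{\mathrm L}<\ell\le C^{\mathrm R}\le u$ gives $[\ell,Y^{\max}]$ and coverage $1-\alpha/2$.
\item If $\ell\le C^{\mathrm L}$ and $u\ge C^{\mathrm R}$, the interval is the whole support and the coverage is $1$.
\item In the extrapolation cases $u<C^{\mathrm L}$, the interval is $[Y^{\min},C^{\mathrm L}]$ and the coverage is $F(C^{\mathrm L}\mid x)\ge F(u\mid x)=1-\alpha/2$. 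Symmetrically, $\ell>C^{\mathrm R}$ gives $[C^{\mathrm R},Y^{\max}]$ with coverage at least $1-\alpha/2$.
\end{enumerate}
Every case is at least $1-\alpha$, which proves the conditional claim.

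For the marginal claim, I will show pointwise that
\[
\mathbb{P}(Y\in\mathcal{O}_0(X)\mid X)\ge 1-\alpha+\tfrac{\alpha}{2}\bigl(\I\{\ell\le C^{\mathrm L}\}+\I\{u\ge C^{\mathrm R}\}\bigr).
\]
This holds with equality in cases (a)--(d). For the case $u<C^{\mathrm L}$, the right side equals $1-\alpha/2$, since $\ell\le u<C^{\mathrm L}<C^{\mathrm R}$ makes the second indicator vanish, and the bound from (e) suffices. The case $\ell>C^{\mathrm R}$ is symmetric. Taking expectations gives $1-\alpha+\tfrac{\alpha}{2}(\pi^{\mathrm L}+\pi^{\mathrm R})$. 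Assumption~\ref{ass:regularity}(i) ensures that boundary events such as $\ell=C^{\mathrm L}$ have probability zero, so strict versus weak inequalities in the case definitions do not affect $\pi^{\mathrm L},\pi^{\mathrm R}$.

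The main delicate step is the equality characterization. Equality holds iff the pointwise slack vanishes a.s. By the computation above, this slack is zero in cases (a)--(d). In the case $u<C^{\mathrm L}$ it equals $F(C^{\mathrm L}\mid x)-F(u\mid x)$, which is strictly positive because the density is strictly positive and $u<C^{\mathrm L}$; the case $\ell>C^{\mathrm R}$ is analogous. Hence equality holds iff $\mathbb{P}(u<C^{\mathrm L})+\mathbb{P}(\ell>C^{\mathrm R})=0$, that is, iff $[\ell,u]$ overlaps $[C^{\mathrm L},C^{\mathrm R}]$ a.s. One point needs checking here: a strictly positive slack on a set of positive probability forces strict inequality of the expectations. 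This is standard, since a nonnegative variable with zero mean vanishes a.s.
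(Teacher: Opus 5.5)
Your proposal is correct and follows essentially the same route as the paper, which proves this via Proposition~\ref{prop:oracle-coverage}(ii). That proof uses the same case split on where $q^{\mathrm{lo}}(x),q^{\mathrm{up}}(x)$ sit relative to the thresholds. It obtains the same closed form $1-\alpha+\tfrac{\alpha}{2}\bigl(\I\{q^{\mathrm{lo}}(x)\le C^{\mathrm L}\}+\I\{q^{\mathrm{up}}(x)\ge C^{\mathrm R}\}\bigr)$ when the band meets the observable window, and strict over-coverage in the two extrapolation cases. Your explicit slack argument (strictly positive density there, and a nonnegative slack with zero mean vanishing a.s.) simply spells out the equality characterization that the paper states in one line.
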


We now show ClipCQR+ asymptotically recovers $\mathcal{O}_0$, if the base quantiles are
consistently estimated within the observable region.
To state this result rigorously, let $\{\hat f_n\}_{n \geq 1}$ denote a fixed sequence of base predictors,
indexed by the calibration sample size $n$ but trained independently of the calibration data.
All limits below are for $n \to \infty$.
For simplicity, we assume no jittering ($\delta=0$).
All probabilities are over the calibration data and an independent point $(X,Y)$. We write $\lambda$ for
Lebesgue measure and $\triangle$ for symmetric set difference.

\begin{definition}[Convergence of prediction bands]\label{def:band-convergence}
Let $\{\widehat{\mathcal{C}}_n\}_{n \geq 1}$ be data-dependent prediction bands and
$\mathcal{C}_\infty$ a fixed band, each mapping $\mathcal{X}$ to a subinterval of $[Y^{\min}, Y^{\max}]$. We write
$\widehat{\mathcal{C}}_n \to \mathcal{C}_\infty$ if
$\lambda(\widehat{\mathcal{C}}_n(X) \triangle \mathcal{C}_\infty(X)) \xrightarrow{p} 0$, 
with probability over the data and an independent $X \sim P_X$.
\end{definition}

\begin{definition}\label{def:cond-cov}
$\{\widehat{\mathcal{C}}_n\}_{n \geq 1}$ has asymptotic conditional coverage matching $\mathcal{C}_\infty$ if there
exist $\Lambda_n \subseteq \mathcal{X}$ with $\mathbb{P}(X \in \Lambda_n) \to 1$ and
$\sup_{x \in \Lambda_n} | \mathbb{P}(Y \in \widehat{\mathcal{C}}_n(x) \mid X = x)
  - \mathbb{P}(Y \in \mathcal{C}_\infty(x) \mid X = x) | \xrightarrow{p} 0$.
\end{definition}

\begin{assumption}[Consistency]\label{ass:consistency}
$\mathbb E \bigl( \bigl|\Pi\hat f^{\mathrm{lo}}_n(X)-\Pi q^{\mathrm{lo}}(X)\bigr| \bigr) \to 0$, $\mathbb E \bigl( \bigl|\Pi\hat f^{\mathrm{up}}_n(X)-\Pi q^{\mathrm{up}}(X)\bigr| \bigr) \to0$.
\end{assumption}

\begin{assumption}[Base regularity]\label{ass:base-regularity}
The base endpoints do not concentrate at the thresholds:
\[
  \lim_{\eta\downarrow0}\ \limsup_{n\to\infty}\ \mathbb P\bigl(|\hat f^{\mathrm{lo}}_n(X) - C^{\mathrm L}| \leq \eta\bigr)=0,
  \qquad
  \lim_{\eta\downarrow0}\ \limsup_{n\to\infty}\ \mathbb P\bigl(|\hat f^{\mathrm{up}}_n(X)-C^{\mathrm R}| \leq \eta \bigr)=0.
\]
\end{assumption}

Assumption~\ref{ass:consistency} asks only that the base predictor estimate the
conditional quantiles accurately within the observable range.
Assumption~\ref{ass:base-regularity} holds whenever the distributions of $\hat f^{\mathrm{lo}}_n(X)$ and $\hat f^{\mathrm{up}}_n(X)$ have, on neighborhoods of $C^{\mathrm L}$ and $C^{\mathrm R}$, a density bounded uniformly in $n$. This rules out models trained to predict the clipped outcomes, which may have  point masses at $C^{\mathrm L}$ and $C^{\mathrm R}$, and is more naturally satisfied when the base predictor targets the \emph{latent} conditional quantiles \citep{tobin1958estimation}.

\begin{assumption}[Non-extreme censoring]\label{ass:non-extreme-cens}
$\mathbb P(C^{\mathrm{L}} < q^{\mathrm{lo}}(X) \le q^{\mathrm{up}}(X) < C^{\mathrm{R}}) > 0$.
\end{assumption}

\begin{theorem}\label{thm:ccqr-plus-oracle}
Assume $(X_i, Y_i)_{i=1}^{n+1}$ are i.i.d. Under Assumptions~\ref{ass:regularity}--\ref{ass:non-extreme-cens}, letting $\widehat{\mathcal{C}}^{+}_n$ denote the ClipCQR+ prediction interval:
(i) $\widehat{\mathcal{C}}^{+}_n \to \mathcal{O}_0$; (ii) $\mathbb{P}(Y \in \widehat{\mathcal{C}}^{+}_n(X)) \to
\mathbb{P}(Y \in \mathcal{O}_0)$; and (iii) $\widehat{\mathcal{C}}^{+}_n$ has asymptotic conditional coverage matching
$\mathcal{O}_0$ (Definition~\ref{def:cond-cov}).
\end{theorem}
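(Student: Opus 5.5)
The plan is to show that the two calibrated thresholds behave asymptotically as $\hat\tau(\alpha;\widehat{\mathcal I}) \xrightarrow{p} 0$ and $\max\{0,\hat\tau(\alpha;\widehat{\mathcal I}^{\mathrm c})\} \xrightarrow{p} 0$. Once both limits are in hand, the band $\phi(x;\hat\tau^+)$ is compared with $\phi^\star(x;0)=\mathcal{O}_0(x)$ pointwise through the Lipschitz-type behaviour of the snapping map $\psi$.

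\textbf{Step 1: score distributions.} Consider the population score $s_n(X,\tilde Y)$, where $s_n$ is \eqref{eq:score-formula} built from $\hat f_n$, and compare it with the oracle score $s^\star$ built from $(q^{\mathrm{lo}},q^{\mathrm{up}})$. The key observation is that $\phi(x;\tau)$ depends on $\hat f_n$ only through $\Pi\hat f_n^{\mathrm{lo}}$ and $\Pi\hat f_n^{\mathrm{up}}$, except when an endpoint sits near a threshold.
\begin{itemize}
\item Conditional on the event $\mathrm{int}(q)$, the oracle interval is $[q^{\mathrm{lo}}-\tau,\,q^{\mathrm{up}}+\tau]$ for small $|\tau|$. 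Assumption~\ref{ass:regularity}(ii) then gives $\mathbb P(Y\in\phi^\star(X;\tau)\mid X)=1-\alpha+\Theta(\tau)$, with the density bounded below by some $m_x>0$.
\item On the extrapolating region, Proposition~\ref{prop:O0-coverage} gives conditional coverage $\ge 1-\alpha$ at $\tau=0$, with a strict excess of $\alpha/2$ on each snapped side.
\end{itemize}
Assumptions~\ref{ass:consistency} and \ref{ass:base-regularity} give $\mathbb P(\mathbf 1\{X\in \mathrm{int}(\hat f_n)\}\neq \mathbf 1\{X\in\mathrm{int}(q)\})\to0$. The argument: with probability $\to1$, $\hat f_n$ is not within $\eta$ of a threshold, and by Markov's inequality $|\Pi\hat f_n-\Pi q|<\eta$ with probability $\to1$. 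Consequently $\mathbb P(Y\in\phi_n(X;\tau)\mid X\in \mathrm{int}(\hat f_n))\to G(\tau):=\mathbb P(Y\in\phi^\star(X;\tau)\mid \mathrm{int}(q))$ uniformly in $\tau$, since the symmetric-difference length is bounded by $|\hat f_n-q|$ on that event and the density is bounded by $M$. By Assumption~\ref{ass:non-extreme-cens} the conditioning event has positive probability.

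\textbf{Step 2: threshold limits.} Since $|\widehat{\mathcal I}|/n\to \mathbb P(\mathrm{int}(q))>0$, a Glivenko--Cantelli/DKW argument applied to the empirical CDF of the internal scores, combined with the strict monotonicity of $G$ at $0$ (where $G(0)=1-\alpha$ and the slope is positive), gives $\hat\tau(\alpha;\widehat{\mathcal I})\xrightarrow{p}0$.

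For the extrapolation subset there are two cases.
\begin{itemize}
\item If $\mathbb P(\mathrm{int}(q)^c)=0$, this set is asymptotically negligible in the band metric, so nothing needs to be shown.
\item Otherwise, the limiting conditional coverage at $\tau=0$ is $\ge1-\alpha$, so for any $\epsilon>0$ the empirical coverage at $\tau=\epsilon$ exceeds $1-\alpha$ with probability $\to 1$. Hence $\hat\tau(\alpha;\widehat{\mathcal I}^c)\le\epsilon$ w.p.\ $\to1$, and therefore $\max\{0,\hat\tau\}\xrightarrow{p}0$.
\end{itemize}
The positive-part truncation is exactly what avoids needing a lower bound here. The main obstacle is making the extrapolating-region argument rigorous when the oracle coverage is flat in $\tau$ near $0$. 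On snapped sides, changing $\tau$ does not alter the interval, so only an upper bound on the threshold is available. This is why I rely on the $\max\{0,\cdot\}$ together with monotonicity of $\phi$, rather than on identifying the limit.

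\textbf{Step 3: band convergence and coverage.} For (i), use $\lambda(\phi(x;\tau)\triangle\phi^\star(x;0))\le 2|\tau|+|\Pi\hat f^{\mathrm{lo}}_n-\Pi q^{\mathrm{lo}}|+|\Pi\hat f^{\mathrm{up}}_n-\Pi q^{\mathrm{up}}|$. This bound holds off the events where $\hat f_n$ or $q$ lies within $\eta+|\tau|$ of a threshold, which is where snapping can jump by up to $R$. Those events have vanishing probability: for $\hat f_n$ by Assumption~\ref{ass:base-regularity}, and for $q$ by Assumption~\ref{ass:regularity}(i) as $\eta\downarrow0$. Convergence in probability then follows from Steps 1--2.

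For (ii), the bounded density gives $|\mathbb P(Y\in A\mid X)-\mathbb P(Y\in B\mid X)|\le M\lambda(A\triangle B)$. Combined with bounded convergence, this yields the coverage limit.

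For (iii), define $\Lambda_n$ as the set of $x$ where the endpoint discrepancies are at most $\epsilon_n\downarrow0$ and the endpoints are at least $\eta_n$ away from the thresholds. Choose these rates slowly enough, using Markov's inequality and Assumption~\ref{ass:base-regularity}, that $\mathbb P(X\in\Lambda_n)\to1$. On $\Lambda_n$ the bound $M(2|\hat\tau^+|+2\epsilon_n)\xrightarrow{p}0$ holds uniformly in $x$.
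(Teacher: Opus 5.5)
Your architecture matches the paper's proof. You drive $\hat\tau(\alpha;\widehat{\mathcal I})\xrightarrow{p}0$. You bound the external threshold only from above, using that $\mathcal O_0$ strictly over-covers off $\mathcal U:=\{x: C^{\mathrm L}<q^{\mathrm{lo}}(x)\le q^{\mathrm{up}}(x)<C^{\mathrm R}\}$ (your $\mathrm{int}(q)$), and let $\max\{0,\cdot\}$ absorb the missing lower bound. You then pass to the band via the ``$1$-Lipschitz up to one jump'' structure of $\underline\psi,\overline\psi$, with Assumptions~\ref{ass:regularity}(i) and~\ref{ass:base-regularity} controlling the jumps, and get (ii) from the density bound.

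Two sub-arguments genuinely differ. For the internal threshold, the paper invokes Theorem~\ref{thm:ccqr-oracle}(i) for the law of $(X,Y)$ given $X\in\mathcal U$, converting snapped into $L^1$ consistency through Lemma~\ref{lem:uncensored}. This is short, but it treats the calibration subset as $\{i:X_i\in\mathcal U\}$ rather than $\widehat{\mathcal I}$, and that lemma additionally assumes the base estimates lie in $[Y^{\min},Y^{\max}]$. Your route uses conditionally i.i.d.\ scores from the law given $X\in\mathrm{int}(\hat f_n)$, DKW, region mismatch, and $G(-\epsilon)<1-\alpha<G(\epsilon)$. It handles the data-dependent subset explicitly and needs only Assumption~\ref{ass:consistency}, since $\Pi$ is the identity on $\mathrm{int}(\hat f_n)\cap\mathcal U$. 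For (iii), the paper builds a data-dependent set through $V_n=\mathbb E[\lambda(\widehat{\mathcal C}^+_n(X)\triangle\mathcal O_0(X))\mid\mathcal D_n]$ and conditional Markov, which is generic and uses nothing beyond (i-b). Your deterministic $\Lambda_n$ instead requires rate bookkeeping and a uniformly small applied threshold.

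Several steps need repair as written, though none requires a new idea.

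(a) Region mismatch also needs Assumption~\ref{ass:regularity}(i). If $\hat f^{\mathrm{lo}}_n<C^{\mathrm L}-\eta$ and $q^{\mathrm{lo}}\in(C^{\mathrm L},C^{\mathrm L}+\eta)$, then $|\Pi\hat f^{\mathrm{lo}}_n-\Pi q^{\mathrm{lo}}|<\eta$, yet the indicators $\mathbf{1}\{\cdot>C^{\mathrm L}\}$ disagree.

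(b) Your Step~3 bound fails when $\hat f_n$ and $q$ lie on opposite sides of a threshold while both are far from it. The lower gap is then $\Pi q^{\mathrm{lo}}-Y^{\min}$, not at most $|\tau|+\Pi q^{\mathrm{lo}}-C^{\mathrm L}$. Add $\{|\Pi\hat f_n-\Pi q|>\eta\}$ to the exceptional set; this is exactly Lemma~\ref{lem:snapped-from-projected}.

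(c) In the external step, ``$\ge1-\alpha$ at $\tau=0$'' does not give $\hat\tau(\alpha;\widehat{\mathcal I}^{\mathrm c})\le\epsilon$ when coverage is flat in $\tau$. Use the strict bound you recorded in Step~1, $\mathbb P(Y\in\mathcal O_0(X)\mid X\notin\mathcal U)\ge1-\alpha/2$. Compare the empirical coverage with $p_{n,\mathrm{ext}}=\lceil(n_{\mathrm{ext}}+1)(1-\alpha)\rceil/n_{\mathrm{ext}}$ and note $n_{\mathrm{ext}}\to\infty$. This even yields $\hat\tau(\alpha;\widehat{\mathcal I}^{\mathrm c})\le0$ w.p.\ $\to1$, as in the paper, so the ``flat coverage'' obstacle disappears.

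(d) In (iii) with $\pi_{\mathrm{ext}}=0$, the external threshold need not vanish; it equals $R$ when there are too few external calibration points. So $\sup_x|\hat\tau^+(x)|$ need not tend to $0$. Intersect $\Lambda_n$ with $\mathrm{int}(\hat f_n)$, whose probability still tends to one.

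Finally, only pointwise convergence of the internal coverage at $\tau=\pm\epsilon$ is needed. Your ``uniformly in $\tau$'' justification ignores the snapping jumps at $\tau\ne0$.
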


The other methods are studied in Appendix~\ref{app:theory}. Under similar assumptions, ClipCQR and eCQR approach distinct oracles (resp.~$\mathcal{O}_{\mathrm{m}}$ and $\mathcal{O}_{\mathrm{e}}$) that lack conditional coverage (Theorems~\ref{thm:ccqr-oracle} and~\ref{thm:ecqr-oracle}).
By contrast, peCQR converges to $\mathcal{O}_0$ similar to ClipCQR+ (Theorem~\ref{thm:pecqr-oracle}), although it does not enjoy a similar finite-sample conditional guarantee.
Table~\ref{tab:method-summary} summarizes these results.

\begin{table}[!htb]
\centering
\small
\setlength{\tabcolsep}{5pt}
\renewcommand{\arraystretch}{1.25}
\caption{Theoretical comparison of conformal methods to predict a latent regression outcome $Y$ using clipped calibration data.
\cmark\ valid, \cmarks\ valid in a weaker sense, \cmarku\ conservative.}
\label{tab:method-summary}
\begin{tabular}{@{}l ccc l ccc@{}}
\toprule
& \multicolumn{3}{c}{Finite-sample} & \multicolumn{4}{c}{Asymptotic limit (under base consistency)} \\
\cmidrule(lr){2-4}\cmidrule(l){5-8}
Method & Marg.\ cov. & Cond.\ cov. & Ref. & Oracle & Marg.\ cov. & Cond.\ cov. & Ref. \\
\midrule
%CQR     & \xmark      & \xmark  & Theorem~\ref{thm:cqr-validity}       & unsnapped                                                    & $<1{-}\alpha$    & \xmark  & Theorem~\ref{thm:cqr-oracle} \\
eCQR     & \cmarku & \xmark  & Theorem~\ref{thm:ecqr-validity}      & $\mathcal{O}_{\mathrm{e}} \supseteq \mathcal{O}_{\mathrm{m}}$ & \cmarku  & \xmark  & Theorem~\ref{thm:ecqr-oracle} \\
peCQR    & \cmarku & \xmark  & Theorem~\ref{thm:pecqr-validity}    & $\mathcal{O}_0$                                              & \cmarku  & \cmark  & Theorem~\ref{thm:pecqr-oracle} \\
ClipCQR  & \cmark & \xmark & Theorem~\ref{thm:ccqr-validity} & $\mathcal{O}_{\mathrm{m}}\subseteq \mathcal{O}_0$            & \cmark     & \xmark  & Theorem~\ref{thm:ccqr-oracle} \\
ClipCQR+ & \cmarku & \cmarks & Theorem~\ref{thm:ccqr_plus-validity} & $\mathcal{O}_0$                                              & \cmarku  & \cmark  & Theorem~\ref{thm:ccqr-plus-oracle} \\
\bottomrule
\end{tabular}
\end{table}

\section{Numerical Experiments with Synthetic Data} \label{sec:experiments}

We simulate heteroskedastic synthetic data $X \sim \mathrm{Unif}[0,3]^p$ and
 $Y = m(X) + \sigma(X)\varepsilon$, with
$m(x) = 0.5\,x_1\sin(2\pi x_1)$, $\sigma(x) = 0.1\,x_1^2\mathbf{1}\{x_1>0\}$, and $\varepsilon \sim N(0,1)$.
The training and calibration data contain clipped outcomes $\tilde{Y} = \Pi(Y)$, with known clipping thresholds $[C^{\mathrm{L}},C^{\mathrm{R}}]$. The base predictor is a gradient-boosting heteroscedastic regression model, implemented by the \texttt{R} package \texttt{xgboost}, and trained using a Tobit-like loss function that allows it to estimate the data generating distribution well. 
Results with different base models are in Appendix~\ref{app:empirical}.

We apply ClipCQR+ at level $\alpha = 0.1$ and compare it to ClipCQR and eCQR, as well as to the base prediction intervals extracted
from the conditional quantiles estimated by the model, both with and without snapping. 
We also compare to the \emph{LdPT} conformal method of \citet{liu2025prediction}, discussed in Appendix~\ref{sec:app-relation-double}. Results with peCQR are reported in Appendix~\ref{app:empirical}, to avoid cluttering.
All methods are applied assuming $[Y^{\min}, Y^{\max}] = [-4, 4]$.
As an ideal reference we
include the snapped oracle $\mathcal{O}_0$, computed from the true
data-generating distribution. We report four metrics for each approach:
marginal coverage of $Y$, estimated \emph{worst-slab} conditional coverage of $Y$ (ref.~Appendix~\ref{app:worstcase}), the
proportion of closed intervals (not crossing a censoring endpoint), and the
average \textit{width} of the prediction intervals projected on the observable range $[C^{\mathrm{L}},C^{\mathrm{R}}]$,
so that the maximum width is one.
Results are averaged over $100$ independent experiments.

Figure~\ref{fig:fig1_synthetic_ncal} varies the training and calibration sample size
$n$ (kept equal), fixing $C^{\mathrm{R}} = 0.5 = -C^{\mathrm{L}}$ and the
number of features $p = 1$. Qualitatively similar results for different censoring thresholds, feature dimensions, and base models
are presented in Appendix~\ref{app:empirical}.
Overall, the results match the theory.
All conformal methods attain marginal coverage at or
above the nominal level. ClipCQR+ usually achieves the highest conditional coverage,
and as $n$ grows it behaves like the ideal oracle $\mathcal{O}_0$. 
By contrast, ClipCQR and eCQR tend to have lower conditional coverage, 
not only when $n$ is small but also in very large samples where the base model estimates the population distribution well.
LdPT is very conservative and produces uninformative prediction intervals for all $n$.

\begin{figure}[!htb]
  \centering
  \includegraphics[width=\linewidth]{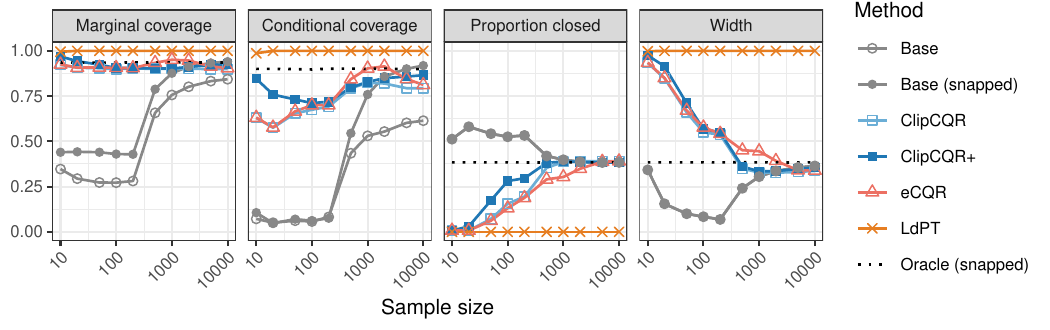} \vspace{-2em}
  \caption{Performance of conformal prediction intervals on doubly censored synthetic data with univariate features, as a function of the training and calibration sample size.
The censoring thresholds are $C^{\mathrm{R}}=0.5=-C^{\mathrm{L}}$.
All methods use the same gradient-boosting base model.
The ClipCQR+ intervals approach the oracle ones as the sample size grows.}
  \label{fig:fig1_synthetic_ncal}
\end{figure}

\section{Discussion} \label{sec:discussion}
A key takeaway is that marginal coverage is not always the best target in conformal prediction: under clipping it conflicts with conditional coverage. We study this trade-off and offer methods at both ends: ClipCQR prioritizes marginal coverage and ClipCQR+ conditional coverage.
A natural extension of this work would study how to handle heterogeneous censoring thresholds,
potentially drawing inspiration from \citet{liu2025prediction}.
A simple but likely suboptimal option is to clip all data to the most restrictive thresholds; alternatively, one
could split the sample non-randomly, similar to \citet{xie2025}, assigning observations with atypical thresholds to the training set to keep the calibration set homogeneous. If the thresholds depend on the features, however, this induces a distribution shift that must be corrected via weighting \citep{tibshirani2019conformal}.

\section*{Supplementary Material}

The Supplementary Material contains proofs, additional methodological details,
further simulation results, and an application to quantitative structure--activity relationship (QSAR) data.

%%% Local Variables:
%%% mode: latex
%%% TeX-master: "main_biometrika"
%%% End:

\section*{Declaration of the use of generative AI and AI-assisted technologies}
Claude Opus 4.8 edited language and drafted some proofs. The authors take full responsibility.

\bibliographystyle{plainnat}
\bibliography{bibliography}

\clearpage
\appendix

% Special numbering for appendix (Use "S" instead of "A" in Supplement)
\renewcommand{\thesection}{A\arabic{section}}
\renewcommand{\theequation}{A\arabic{equation}}
\renewcommand{\thetheorem}{A\arabic{theorem}}
\renewcommand{\theassumption}{A\arabic{assumption}}
\renewcommand{\thecorollary}{A\arabic{corollary}}
\renewcommand{\theproposition}{A\arabic{proposition}}
\renewcommand{\thelemma}{A\arabic{lemma}}
\renewcommand{\thetable}{A\arabic{table}}
\renewcommand{\thefigure}{A\arabic{figure}}
\renewcommand{\thealgocf}{A\arabic{algocf}}

\def\theHalgorithm{A\arabic{algorithm}}

\section{Review of Conformalized Quantile Regression (CQR)}\label{sec:cqr-review}

We briefly review split conformalized quantile regression (CQR;
\citealp{romano2019cqr}) in the standard setting where the goal is to predict 
a fully observed outcome $\tilde Y$. We present CQR
in the \emph{nested-set} formulation \citep{gupta2022nested}.

Starting from a pre-trained base prediction interval
$\hat{f}(x) = [\hat{f}^{\mathrm{lo}}(x), \hat{f}^{\mathrm{up}}(x)]$, CQR adjusts its
width by a single scalar $\tau \in \mathbb{R}$, inflating it when $\tau > 0$ and
contracting it when $\tau < 0$. At any feature value $x \in \mathcal{X}$, the CQR
family of prediction intervals indexed by $\tau \in \mathbb{R}$ is
\begin{equation}\label{eq:cqr-family}
    \phi^{\mathrm{CQR}}(x; \tau)
    \;:=\; \bigl[\,\hat{f}^{\mathrm{lo}}(x) - \tau,\;\; \hat{f}^{\mathrm{up}}(x) + \tau\,\bigr],
\end{equation}
with the convention that an interval $[a, b]$ with $a > b$
corresponds to the empty set.

These intervals are nested:
$\phi^{\mathrm{CQR}}(x; \tau_1) \subseteq \phi^{\mathrm{CQR}}(x; \tau_2)$ whenever
$\tau_1 \leq \tau_2$. At $\tau = 0$ the family returns the base interval,
$\phi^{\mathrm{CQR}}(x; 0) = [\hat{f}^{\mathrm{lo}}(x), \hat{f}^{\mathrm{up}}(x)]$;
as $\tau \to R$ it grows to cover all of $\mathbb{R}$; and as $\tau$
decreases it narrows, collapsing to the singleton midpoint
$\hat{m}(x) := (\hat{f}^{\mathrm{lo}}(x) + \hat{f}^{\mathrm{up}}(x))/2$ at
$\tau = \tfrac12(\hat{f}^{\mathrm{lo}}(x) - \hat{f}^{\mathrm{up}}(x))$ and becoming
empty for smaller $\tau$.

The nonconformity score of a pair $(x, \tilde y)$ is the smallest $\tau$ at which the
family first captures $y$:
\begin{equation}\label{eq:cqr-score}
    s^{\mathrm{CQR}}(x, \tilde y) \;:=\; \inf\bigl\{\tau \in \mathbb{R} :
                              \tilde y \in \phi^{\mathrm{CQR}}(x; \tau)\bigr\}.
\end{equation}
Solving $\tilde y \in [\hat{f}^{\mathrm{lo}}(x) - \tau,\, \hat{f}^{\mathrm{up}}(x) + \tau]$
for the smallest admissible $\tau$ gives the familiar closed form~\eqref{eq:cqr-score-closed},
which is the signed distance of $\tilde y$ from the base interval, negative for points inside.
This identity holds at every $\tilde y$, including the midpoint $\hat{m}(x)$, where it
returns $\tfrac12(\hat{f}^{\mathrm{lo}}(x) - \hat{f}^{\mathrm{up}}(x))$.

Using a calibration data set $\{(X_i, \tilde Y_i)\}_{i=1}^n$ independent of the training
data, CQR evaluates the scores $s^{\mathrm{CQR}}(X_i, \tilde Y_i)$ for all $i \in [n]$ and
computes $\tilde\tau(\alpha)$ as the
$\lceil (n+1)(1-\alpha)\rceil$-th smallest value in the augmented multi-set
$\{s^{\mathrm{CQR}}(X_i, \tilde Y_i)\}_{i=1}^{n} \cup \{R\}$. The CQR prediction interval
at $X_{n+1}$ is
\begin{equation*}
    \tilde{\mathcal{C}}(X_{n+1})
    \;=\; \phi^{\mathrm{CQR}}\bigl(X_{n+1};\, \tilde\tau(\alpha)\bigr)
    \;=\; \bigl[\,\hat{f}^{\mathrm{lo}}(X_{n+1}) - \tilde\tau(\alpha),\;
                  \hat{f}^{\mathrm{up}}(X_{n+1}) + \tilde\tau(\alpha)\,\bigr].
\end{equation*}
See Algorithm~\ref{alg:cqr} in Appendix~\ref{sec:appendix-algorithms} for a
summary of this procedure. By nestedness of $\phi^{\mathrm{CQR}}(x; \cdot)$ and the
definition of $s^{\mathrm{CQR}}$ as an infimum,
$\tilde Y_{n+1} \in \phi^{\mathrm{CQR}}(X_{n+1}; \tau)$ if and only if
$s^{\mathrm{CQR}}(X_{n+1}, \tilde Y_{n+1}) \leq \tau$. This leads to the elegant marginal
coverage guarantee of Theorem~\ref{thm:cqr-validity}.

The assumption that the CQR scores $\{s^{\mathrm{CQR}}(X_i, \tilde Y_i)\}_{i=1}^{n+1}$ are almost-surely
distinct typically holds if $\tilde Y$ has a continuous distribution without
point masses.
In general, it can be ensured by adding a small amount of independent noise to the base
predictions. Draw $\eta_1, \ldots, \eta_{n+1}$ i.i.d.~from $\mathrm{Uniform}(0, \delta)$ for a small
$\delta > 0$, independent of the data, and perturb the base endpoints as:
\begin{equation*}%\label{eq:cqr-tiebreak}
    \hat{f}^{\mathrm{lo}}(X_i) \;\mapsto\; \hat{f}^{\mathrm{lo}}(X_i) - \eta_i,
    \qquad
    \hat{f}^{\mathrm{up}}(X_i) \;\mapsto\; \hat{f}^{\mathrm{up}}(X_i) + \eta_i,
    \qquad i \in [n+1].
\end{equation*}
This perturbation induces scores $S'_i
    \;=\; \max\{\hat{f}^{\mathrm{lo}}(X_i) - \tilde Y_i,\;
               \tilde Y_i - \hat{f}^{\mathrm{up}}(X_i)\} - \eta_i
    \;=\; S_i - \eta_i$,
which are deterministic functions of the augmented data tuples $(X_i, \tilde Y_i, \eta_i)$, still exchangeable, and almost-surely
distinct. Thus, applying CQR with $\tilde\tau(\alpha)$ computed from $\{S'_i\}_{i=1}^n$ instead of $\{S_i\}_{i=1}^n$
leads to a tight marginal coverage guarantee within $[1-\alpha, 1-\alpha+1/(n+1)]$.

While Theorem~\ref{thm:cqr-validity} guarantees marginal coverage, CQR additionally
enjoys asymptotic conditional coverage and optimality properties under base consistency \citep{sesia2020comparison}. 
Writing $q_\beta(\cdot)$ for the population conditional $\beta$-quantile of
$\tilde Y \mid X$, suppose \emph{both} base quantile estimators are consistent, i.e.,
$\hat{f}^{\mathrm{lo}}$ and $\hat{f}^{\mathrm{up}}$ converge to $q_{\alpha/2}$ and
$q_{1-\alpha/2}$ in an appropriate sense, respectively. Then, as $n \to \infty$, the conformal adjustment vanishes
($\tilde\tau(\alpha) \to 0$) and $\tilde{\mathcal{C}}(X_{n+1})$ converges to the
oracle prediction interval $[q_{\alpha/2}(X_{n+1}),\, q_{1-\alpha/2}(X_{n+1})]$,
which is the shortest possible interval achieving $(1-\alpha)$ coverage conditional on $X_{n+1}$,
with equal miscoverage in each tail.

These nice finite-sample and asymptotic properties of CQR, combined with its intuitive nature,
explain its popularity in practice. This makes it a natural baseline approach to also consider under
censoring, when $\tilde Y$ may not be observed outside the visible range $[C^{\mathrm{L}},C^{\mathrm{R}}]$.

\section{Algorithms}\label{sec:appendix-algorithms}

\begin{algorithm}[!htb]
  \caption{Conformalized Quantile Regression (CQR)}
  \label{alg:cqr}
  \SetKwInOut{Input}{Input}
  \SetKwInOut{Output}{Output}
  \Input{%
    Calibration data $\{(X_i, \tilde Y_i)\}_{i=1}^{n}$, fully observed and independent of training data; \\
    Pre-trained base interval predictor $\hat{f} = [\hat{f}^{\mathrm{lo}}, \hat{f}^{\mathrm{up}}]$; \\
    Unlabeled test point with features $X_{n+1}$; \\
    Desired miscoverage probability $\alpha \in (0,1)$; \\
    Jitter scale $\delta \geq 0$ (set $\delta = 0$ to disable).
  }
  Draw $\eta_1, \dots, \eta_{n+1} \stackrel{\mathrm{i.i.d.}}{\sim} \mathrm{Uniform}(0, \delta)$, independent of the data, and widen the base predictor into $\hat{f}_{\eta_i} := [\hat{f}^{\mathrm{lo}} - \eta_i,\; \hat{f}^{\mathrm{up}} + \eta_i] =: [\hat{f}^{\mathrm{lo}}_{\eta_i}, \hat{f}^{\mathrm{up}}_{\eta_i}]$ for each $i \in [n+1]$\;
  Compute scores $S_i \gets \max\bigl\{\hat{f}^{\mathrm{lo}}_{\eta_i}(X_i) - \tilde Y_i,\; \tilde Y_i - \hat{f}^{\mathrm{up}}_{\eta_i}(X_i)\bigr\}$ for each $i \in [n]$\;
  Compute threshold $\tilde\tau(\alpha)$ as the $\lceil (n+1)(1-\alpha) \rceil$-th smallest value in $\{S_i\}_{i=1}^{n} \cup \{R\}$\;
  Evaluate the interval $\tilde{\mathcal{C}}(X_{n+1}) \gets \phi^{\mathrm{CQR}}\bigl(X_{n+1};\, \tilde\tau(\alpha)\bigr)$ from the jittered base $\hat{f}_{\eta_{n+1}}$, via~\eqref{eq:cqr-family}\;
  \Return{$\tilde{\mathcal{C}}(X_{n+1}) \subseteq \mathbb{R}$}.
\end{algorithm}

\begin{algorithm}[!htb]
  \caption{Expanded Conformalized Quantile Regression (eCQR)}
  \label{alg:ecqr}
  \SetKwInOut{Input}{Input}
  \SetKwInOut{Output}{Output}
  \Input{%
    Censored calibration data $\{(X_i, \tilde{Y}_i)\}_{i=1}^{n}$, independent of the training data; \\
    Pre-trained base interval predictor $\hat{f} = [\hat{f}^{\mathrm{lo}}, \hat{f}^{\mathrm{up}}]$; \\
    Censoring thresholds $C^{\mathrm{L}} < C^{\mathrm{R}}$ and support $[Y^{\min}, Y^{\max}]$; \\
    Unlabeled test point with features $X_{n+1}$; \\
    Desired miscoverage probability $\alpha \in (0,1)$; \\
    Jitter scale $\delta \geq 0$ (set $\delta = 0$ to disable).
  }
  Draw $\eta_1, \dots, \eta_{n+1} \stackrel{\mathrm{i.i.d.}}{\sim} \mathrm{Uniform}(0, \delta)$, independent of the data, and widen the base predictor into $\hat{f}_{\eta_i} := [\hat{f}^{\mathrm{lo}} - \eta_i,\; \hat{f}^{\mathrm{up}} + \eta_i] =: [\hat{f}^{\mathrm{lo}}_{\eta_i}, \hat{f}^{\mathrm{up}}_{\eta_i}]$ for each $i \in [n+1]$\;
  Compute scores $\tilde{S}^{\mathrm{CQR}}_i \gets \max\bigl\{\hat{f}^{\mathrm{lo}}(X_i) - \tilde{Y}_i,\; \tilde{Y}_i - \hat{f}^{\mathrm{up}}(X_i)\bigr\}$ for each $i \in [n]$\;
  Compute $\hat\tau^{\mathrm{eCQR}}(\alpha)$ as the $\lceil (n+1)(1-\alpha) \rceil$-th smallest value in $\{\tilde{S}^{\mathrm{CQR}}_i\}_{i=1}^{n} \cup \{R\}$\;
  Form $[\tilde{L}(X_{n+1}),\, \tilde{U}(X_{n+1})] \gets \phi^{\mathrm{CQR}}\bigl(X_{n+1};\, \hat\tau^{\mathrm{eCQR}}(\alpha)\bigr)$ via~\eqref{eq:cqr-family}\;
  Evaluate the interval $\widehat{\mathcal{C}}^{\mathrm{eCQR}}(X_{n+1}) \gets \psi\bigl(\tilde{L}(X_{n+1}),\; \tilde{U}(X_{n+1})\bigr)$ via~\eqref{eq:psi-def}\;
  \Return{$\widehat{\mathcal{C}}^{\mathrm{eCQR}}(X_{n+1}) \subseteq [Y^{\min}, Y^{\max}]$}.
\end{algorithm}

\begin{algorithm}[!htb]
  \caption{Projected-Expanded Conformalized Quantile Regression (peCQR)}
  \label{alg:pecqr}
  \SetKwInOut{Input}{Input}
  \SetKwInOut{Output}{Output}
  \Input{%
    Censored calibration data $\{(X_i, \tilde{Y}_i)\}_{i=1}^{n}$, independent of the training data; \\
    Pre-trained base interval predictor $\hat{f} = [\hat{f}^{\mathrm{lo}}, \hat{f}^{\mathrm{up}}]$; \\
    Censoring thresholds $C^{\mathrm{L}} < C^{\mathrm{R}}$ and support $[Y^{\min}, Y^{\max}]$; \\
    Unlabeled test point with features $X_{n+1}$; \\
    Desired miscoverage probability $\alpha \in (0,1)$; \\
    Jitter scale $\delta \geq 0$ (set $\delta = 0$ to disable).
  }
  Draw $\eta_1, \dots, \eta_{n+1} \stackrel{\mathrm{i.i.d.}}{\sim} \mathrm{Uniform}(0, \delta)$, independent of the data, and widen the base predictor into $\hat{f}_{\eta_i} := [\hat{f}^{\mathrm{lo}} - \eta_i,\; \hat{f}^{\mathrm{up}} + \eta_i] =: [\hat{f}^{\mathrm{lo}}_{\eta_i}, \hat{f}^{\mathrm{up}}_{\eta_i}]$ for each $i \in [n+1]$\;
  Project the jittered base onto $[C^{\mathrm{L}}, C^{\mathrm{R}}]$: set $\Pi\hat{f}^{\mathrm{lo}}(X_i) \gets \Pi\hat{f}^{\mathrm{lo}}_{\eta_i}(X_i)$ and $\Pi\hat{f}^{\mathrm{up}}(X_i) \gets \Pi\hat{f}^{\mathrm{up}}_{\eta_i}(X_i)$ for each $i \in [n+1]$\;
  Evaluate prediction interval $\widehat{\mathcal{C}}^{\mathrm{peCQR}}(X_{n+1}) \gets$ output of Algorithm~\ref{alg:ecqr} run with base predictor $\Pi\hat{f} = [\Pi\hat{f}^{\mathrm{lo}}, \Pi\hat{f}^{\mathrm{up}}]$, the same $C^{\mathrm{L}}, C^{\mathrm{R}}, [Y^{\min}, Y^{\max}], \alpha, X_{n+1}$, and jitter scale $0$\;
  \Return{$\widehat{\mathcal{C}}^{\mathrm{peCQR}}(X_{n+1}) \subseteq [Y^{\min}, Y^{\max}]$}.
\end{algorithm}

\begin{algorithm}[!htb]
  \caption{ClipCQR}
  \label{alg:clipcqr}
  \SetKwInOut{Input}{Input}
  \SetKwInOut{Output}{Output}
  \Input{%
    Censored calibration data  $\{(X_i, \tilde{Y}_i)\}_{i=1}^{n}$, independent of the training data; \\
    Pre-trained base interval predictor $\hat{f} = [\hat{f}^{\mathrm{lo}}, \hat{f}^{\mathrm{up}}]$; \\
    Censoring thresholds $C^{\mathrm{L}} < C^{\mathrm{R}}$ and support $[Y^{\min}, Y^{\max}]$; \\
    Unlabeled test point with features $X_{n+1}$; \\
    Desired miscoverage probability $\alpha \in (0,1)$; \\
    Jitter scale $\delta \geq 0$ (set $\delta = 0$ to disable).
  }
  Draw $\eta_1, \dots, \eta_{n+1} \stackrel{\mathrm{i.i.d.}}{\sim} \mathrm{Uniform}(0, \delta)$, independent of the data, and widen the base predictor into $\hat{f}_{\eta_i} := [\hat{f}^{\mathrm{lo}} - \eta_i,\; \hat{f}^{\mathrm{up}} + \eta_i]$ for each $i \in [n+1]$\;
  Compute scores $S_i \gets s(X_i, \tilde{Y}_i)$ from the jittered base $\hat{f}_{\eta_i}$, for each $i \in [n]$, via~\eqref{eq:score-formula}\;
  Compute threshold $\hat{\tau}(\alpha)$ as the $\lceil (n+1)(1-\alpha) \rceil$-th smallest value in $\{S_i\}_{i=1}^{n} \cup \{R\}$\;
  Evaluate interval $\widehat{\mathcal{C}}(X_{n+1}) \gets \phi(X_{n+1};\, \hat{\tau}(\alpha))$ from the jittered base $\hat{f}_{\eta_{n+1}}$, via~\eqref{eq:phi-def} and~\eqref{eq:psi-def}\;
  \Return{$\widehat{\mathcal{C}}(X_{n+1}) \subseteq [Y^{\min}, Y^{\max}]$}.
\end{algorithm}

\begin{algorithm}[!htb]
  \caption{ClipCQR+}
  \label{alg:clipcqr_plus}
  \SetKwInOut{Input}{Input}
  \SetKwInOut{Output}{Output}
  \Input{Same inputs as Algorithm~\ref{alg:clipcqr}.}
  Run Steps 1--2 of Algorithm~\ref{alg:clipcqr} to obtain the jittered base $\hat{f}_{\eta}$ and the scores\;
  Form $\widehat{\mathcal I} \subseteq [n]$ via~\eqref{eq:I-hat} and compute $\hat{\tau}(\alpha;\widehat{\mathcal I})$, the threshold of Step~3 of Algorithm~\ref{alg:clipcqr} applied on the calibration points $i \in \widehat{\mathcal I}$, with the same pre-jittered base $\hat{f}_{\eta}$ and $\delta = 0$\;
  Form $\widehat{\mathcal I}^{\mathrm{c}} = [n] \setminus \widehat{\mathcal I}$ and compute $\hat{\tau}(\alpha;\widehat{\mathcal I}^{\mathrm{c}})$ similarly\;
  Define $\hat\tau^+(\alpha, X_{n+1}) := \hat\tau(\alpha;\widehat{\mathcal I})\,\I\{X_{n+1}\in\mathrm{int}(\hat f)\} + \max\{0,\hat\tau(\alpha;\widehat{\mathcal I}^{\mathrm c})\}\,\I\{X_{n+1}\notin\mathrm{int}(\hat f)\}$\;
  Evaluate interval $\widehat{\mathcal{C}}^+(X_{n+1}) \gets \phi\bigl(X_{n+1};\, \hat{\tau}(\alpha, X_{n+1})\bigr)$ from $\hat{f}_{\eta_{n+1}}$, via~\eqref{eq:phi-def} and~\eqref{eq:psi-def}\;
  \Return{$\widehat{\mathcal{C}}^+(X_{n+1}) \subseteq [Y^{\min}, Y^{\max}]$}.
\end{algorithm}

\clearpage

\section{From Expanded CQR to Projected-Expanded CQR} \label{sec:app-ecqr}

Write the CQR prediction interval~\eqref{eq:cqr-set} as $\tilde{\mathcal{C}}(\cdot) = [\tilde{L}(\cdot), \tilde{U}(\cdot)]$, with
\begin{align*}
  & \tilde{L}(X_{n+1}) = \hat{f}^{\mathrm{lo}}(X_{n+1}) - \tilde\tau(\alpha),
  & \tilde{U}(X_{n+1}) = \hat{f}^{\mathrm{up}}(X_{n+1}) + \tilde\tau(\alpha).
\end{align*}

Recall that Theorem~\ref{thm:ecqr-validity} shows eCQR over-covers on average, with the slack
\begin{align*}
  p_{\mathrm{slack}}
  & :=
    \mathbb{P}\bigl(\tilde{L} \leq \tilde{U} < C^{\mathrm{L}},\; Y_{n+1} < C^{\mathrm{L}}\bigr) + \mathbb{P}\bigl(\tilde{U} \geq \tilde{L} > C^{\mathrm{R}},\; Y_{n+1} > C^{\mathrm{R}}\bigr)
\end{align*}
tending to be larger if censoring is more prevalent and the raw CQR intervals more often lie entirely outside the observable range.
The following corollary upper bounds $p_{\mathrm{slack}}$ by two more easily interpretable quantities.
\begin{corollary}\label{cor:pexp-loose-bound}
Let $p_{\mathrm{cens}} := \mathbb{P}(Y_{n+1} \notin [C^{\mathrm{L}}, C^{\mathrm{R}}])$ denote the
probability that the test point is censored, and let
$p_{\mathrm{ext}} := \mathbb{P}(\tilde{L} \leq \tilde{U} < C^{\mathrm{L}})
+ \mathbb{P}(\tilde{U} \geq \tilde{L} > C^{\mathrm{R}})$ denote the probability
that the raw CQR interval is non-degenerate and lies entirely outside $[C^{\mathrm{L}}, C^{\mathrm{R}}]$. Then $p_{\mathrm{slack}} \leq \min\bigl\{p_{\mathrm{cens}},\; p_{\mathrm{ext}}\bigr\}$,
and consequently, under the assumptions of Theorem~\ref{thm:ecqr-validity},
\[
    1-\alpha + p_{\mathrm{slack}} \leq \mathbb{P}\bigl[Y_{n+1} \in \widehat{\mathcal{C}}^{\mathrm{eCQR}}(X_{n+1})\bigr]
    \leq 1 - \alpha + \frac{1}{n+1} + \min\{p_{\mathrm{cens}},\; p_{\mathrm{ext}}\}.
\]
\end{corollary}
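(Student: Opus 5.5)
The bound $p_{\mathrm{slack}} \leq \min\{p_{\mathrm{cens}}, p_{\mathrm{ext}}\}$ follows by monotonicity of probability. The coverage sandwich then combines Theorem~\ref{thm:ecqr-validity} with the upper bound of Theorem~\ref{thm:cqr-validity}. I write $p_{\mathrm{slack}}$ as the sum of two terms, $A^{\mathrm L} := \mathbb{P}(\tilde L \le \tilde U < C^{\mathrm L},\, Y_{n+1} < C^{\mathrm L})$ and the analogous right-side term $A^{\mathrm R}$. In the statement of Theorem~\ref{thm:ecqr-validity} the events carry weak inequalities on $Y$. The discrepancy with the strict version used here is harmless: under the relevant event the events differ only on $\{Y_{n+1} = C^{\mathrm L}\}$, and I would use whichever version Theorem~\ref{thm:ecqr-validity} actually delivers.

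For the bound by $p_{\mathrm{cens}}$, drop the interval constraints. This gives $A^{\mathrm L} \le \mathbb{P}(Y_{n+1} < C^{\mathrm L})$ and $A^{\mathrm R} \le \mathbb{P}(Y_{n+1} > C^{\mathrm R})$. The events $\{Y_{n+1}<C^{\mathrm L}\}$ and $\{Y_{n+1}>C^{\mathrm R}\}$ are disjoint because $C^{\mathrm L}<C^{\mathrm R}$, so their probabilities add to at most $\mathbb{P}(Y_{n+1}\notin[C^{\mathrm L},C^{\mathrm R}]) = p_{\mathrm{cens}}$. For the bound by $p_{\mathrm{ext}}$, drop the outcome constraints instead: $A^{\mathrm L} \le \mathbb{P}(\tilde L \le \tilde U < C^{\mathrm L})$ and similarly on the right. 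Summing gives exactly $p_{\mathrm{ext}}$. Since $p_{\mathrm{slack}}$ is below both quantities, it is below their minimum.

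For the coverage display, the lower bound $1-\alpha+p_{\mathrm{slack}}$ is the inequality in Theorem~\ref{thm:ecqr-validity}. For the upper bound, use the identity from that theorem,
\[
\mathbb{P}[Y_{n+1}\in\widehat{\mathcal C}^{\mathrm{eCQR}}(X_{n+1})] = p_{\mathrm{slack}} + \mathbb{P}[\tilde Y_{n+1}\in\tilde{\mathcal C}(X_{n+1})],
\]
then bound the second term by $1-\alpha+1/(n+1)$ via Theorem~\ref{thm:cqr-validity} and the first term by $\min\{p_{\mathrm{cens}},p_{\mathrm{ext}}\}$.

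The only real obstacle is that Theorem~\ref{thm:cqr-validity}'s upper bound requires a.s.\ distinct CQR scores. Clipping creates point masses of $\tilde Y$ at the thresholds, so distinctness must come from jittering (Appendix~\ref{sec:cqr-review}). I would state the upper bound under that proviso, that is, either with jittering or when the scores are a.s.\ distinct, and read "the assumptions of Theorem~\ref{thm:ecqr-validity}" as including it. The argument must also check that Theorem~\ref{thm:ecqr-validity}'s identity still holds with jittered base intervals. It does, because the decomposition is pointwise in the realized interval and does not depend on how that interval was produced.
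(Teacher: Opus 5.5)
\textbf{The approach matches the paper, but one step fails.} Your route is the paper's. You drop the outcome constraints to bound $p_{\mathrm{slack}}$ by $p_{\mathrm{ext}}$, drop the interval constraints to bound it by $p_{\mathrm{cens}}$, and then combine the identity of Theorem~\ref{thm:ecqr-validity} with the CQR upper bound of Theorem~\ref{thm:cqr-validity}. Your remark that this upper bound needs a.s.\ distinct scores (hence jittering), and that the decomposition is pointwise in the realized interval, correctly makes explicit what the paper's ``substituting into the upper bound'' leaves implicit.

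The step that fails is your claim that the strict/weak discrepancy is harmless. The identity of Theorem~\ref{thm:ecqr-validity} genuinely needs the weak events. On $\{\tilde L\le\tilde U<C^{\mathrm L},\,Y_{n+1}=C^{\mathrm L}\}$ the CQR interval misses $\tilde Y_{n+1}=C^{\mathrm L}$, yet the snapped interval $[Y^{\min},C^{\mathrm L}]$ covers $Y_{n+1}$. So the $p_{\mathrm{slack}}$ that enters the upper bound is the weak one. Dropping its interval constraints only gives $\mathbb{P}(Y_{n+1}\le C^{\mathrm L})+\mathbb{P}(Y_{n+1}\ge C^{\mathrm R})=\mathbb{P}(\tilde Y_{n+1}\in\{C^{\mathrm L},C^{\mathrm R}\})$. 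This exceeds $p_{\mathrm{cens}}$ by the mass of $Y_{n+1}$ at the thresholds. With the strict version your $p_{\mathrm{cens}}$ bound is fine, but then the identity undercounts coverage by exactly that mass. So ``use whichever version the theorem delivers'' breaks at the $p_{\mathrm{cens}}$ step.

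\textbf{The upper bound is false without a no-atom condition.} No better argument can repair this, as a simple example shows. Take $Y\equiv C^{\mathrm L}$, so $p_{\mathrm{cens}}=0$, and the constant base predictor $\hat f^{\mathrm{lo}}=\hat f^{\mathrm{up}}\equiv C^{\mathrm L}-1$ with jitter $\delta<1$.
\begin{itemize}
\item Every score equals $1-\eta_i$, so $\tilde\tau(\alpha)>0$, $\tilde L<C^{\mathrm L}$ and $\tilde L<\tilde U$.
\item If $\tilde U\ge C^{\mathrm L}$, the CQR interval contains $\tilde Y_{n+1}=C^{\mathrm L}$, and hence so does eCQR.
\item Otherwise the snapped interval is $[Y^{\min},C^{\mathrm L}]\ni Y_{n+1}$.
\end{itemize}
Thus eCQR has coverage exactly $1$. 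This exceeds $1-\alpha+1/(n+1)+\min\{p_{\mathrm{cens}},p_{\mathrm{ext}}\}=1-\alpha+1/(n+1)$ whenever $\alpha>1/(n+1)$.

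The paper's own proof makes the same slip. It writes the weak events and then asserts that dropping constraints yields $\mathbb{P}(Y_{n+1}<C^{\mathrm L})$; its appendix also restates $p_{\mathrm{slack}}$ with strict inequalities, inconsistently with Theorem~\ref{thm:ecqr-validity}. The fix is one of the following:
\begin{itemize}
\item assume $\mathbb{P}(Y_{n+1}\in\{C^{\mathrm L},C^{\mathrm R}\})=0$ (e.g.\ continuous $Y$); or
\item redefine $p_{\mathrm{cens}}:=\mathbb{P}(\tilde Y_{n+1}\in\{C^{\mathrm L},C^{\mathrm R}\})$.
\end{itemize}
With either change, the $p_{\mathrm{ext}}$ bound and the rest of your argument go through unchanged.
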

\noindent The proof is deferred to Appendix~\ref{sec:appendix-proofs}.

Of the two terms that bound the over-coverage in
Corollary~\ref{cor:pexp-loose-bound}, the censoring probability
$p_{\mathrm{cens}}$ is a property of the population,
whereas the extrapolation probability $p_{\mathrm{ext}}$
is a property of the procedure which can be controlled.
A natural attempt to remove the coverage slack is therefore to drive
$p_{\mathrm{ext}}$ to zero, by projecting the base quantiles onto $[C^{\mathrm{L}}, C^{\mathrm{R}}]$
before calibration using $\Pi$,
and then run eCQR with $\Pi \hat{f}^{\mathrm{lo}}$ and $\Pi \hat{f}^{\mathrm{up}}$ instead of $\hat{f}^{\mathrm{lo}}$ and $\hat{f}^{\mathrm{up}}$. This modified procedure, which we call Projected-Expanded CQR (peCQR), is summarized by
Algorithm~\ref{alg:pecqr} in Appendix~\ref{sec:appendix-algorithms}.
It is a special case of eCQR with $p_{\mathrm{slack}} = 0$.

\begin{lemma}\label{lem:pecqr-no-slack}
After projecting $\hat{f}$ onto $[C^{\mathrm{L}}, C^{\mathrm{R}}]$ via $\Pi$,  $p_{\mathrm{ext}} = 0$ and hence $p_{\mathrm{slack}} = 0$.
\end{lemma}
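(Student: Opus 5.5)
The plan is to show directly that, once the base endpoints are projected, the raw CQR interval $[\tilde L,\tilde U]$ can never be non-empty and lie strictly outside $[C^{\mathrm L},C^{\mathrm R}]$. Write $\ell := \Pi\hat f^{\mathrm{lo}}(X_{n+1})$ and $u := \Pi\hat f^{\mathrm{up}}(X_{n+1})$, computed from the jittered base as in Algorithm~\ref{alg:pecqr}. By construction of $\Pi$, we have $C^{\mathrm L}\le \ell$ and $u\le C^{\mathrm R}$. Moreover, $\ell\le u$, because $\Pi$ is monotone and the jittered endpoints satisfy $\hat f^{\mathrm{lo}}_{\eta}\le\hat f^{\mathrm{up}}_{\eta}$. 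The peCQR raw interval is $\tilde L=\ell-\tilde\tau$ and $\tilde U=u+\tilde\tau$, where $\tilde\tau=\hat\tau^{\mathrm{eCQR}}(\alpha)$ is computed from the projected scores.

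Next I would rule out the event $\{\tilde L\le\tilde U<C^{\mathrm L}\}$. On this event, $\tilde U<C^{\mathrm L}\le u$ forces $\tilde\tau<0$. Then $\tilde L=\ell-\tilde\tau>\ell\ge C^{\mathrm L}>\tilde U$, which contradicts $\tilde L\le\tilde U$. So the event is empty. The symmetric argument handles the other side: if $\tilde L>C^{\mathrm R}\ge\ell$, then $\tilde\tau<0$, so $\tilde U=u+\tilde\tau<u\le C^{\mathrm R}<\tilde L$, which again contradicts non-emptiness. Both events are therefore empty pointwise, for every realization of the data and jitter, and hence $p_{\mathrm{ext}}=0$.

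Finally, the first display in Appendix~\ref{sec:app-ecqr} defines $p_{\mathrm{slack}}$ as the sum of the probabilities of the intersections of these same two events with $\{Y_{n+1}<C^{\mathrm L}\}$ and $\{Y_{n+1}>C^{\mathrm R}\}$. Each term is therefore bounded by the corresponding term of $p_{\mathrm{ext}}$, which is zero, giving $p_{\mathrm{slack}}=0$. Equivalently, this follows from Corollary~\ref{cor:pexp-loose-bound}, which gives $p_{\mathrm{slack}}\le p_{\mathrm{ext}}=0$.

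The argument has no real obstacle. The only care needed is to confirm that the projection preserves the ordering $\ell\le u$ and uses the same jittered endpoints at the test point. Monotonicity of $\Pi$ guarantees both, so $\tilde L\le\tilde U$ holds exactly when $\tilde\tau\ge(\ell-u)/2$.
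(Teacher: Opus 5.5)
Your proof is correct and is essentially the paper's argument. Both show, for every realization, that a non-empty raw interval cannot lie entirely outside $[C^{\mathrm L},C^{\mathrm R}]$ because the projected endpoints lie in that window, and both then conclude $p_{\mathrm{slack}}\le p_{\mathrm{ext}}=0$ via Corollary~\ref{cor:pexp-loose-bound}. The only cosmetic difference is that the paper phrases the key step as any non-empty raw interval retaining the projected midpoint $\tfrac12(\Pi\hat f^{\mathrm{lo}}+\Pi\hat f^{\mathrm{up}})\in[C^{\mathrm L},C^{\mathrm R}]$, whereas you argue by contradiction through the sign of $\tilde\tau$.
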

\noindent The proof is deferred to Appendix~\ref{sec:appendix-proofs}.

While removing one source of slack, however, peCQR introduces another
because the projection creates ties in the calibration scores. Whenever a left-censored data point has its
lower base quantile projected to the boundary
($\Pi\hat{f}^{\mathrm{lo}}(X_i) = C^{\mathrm{L}} = \tilde{Y}_i$), its score collapses to zero:
\[
    \max\{\Pi\hat{f}^{\mathrm{lo}}(X_i) - \tilde{Y}_i,\;
          \tilde{Y}_i - \Pi\hat{f}^{\mathrm{up}}(X_i)\}
    \;=\; \max\{0,\; C^{\mathrm{L}} - \Pi\hat{f}^{\mathrm{up}}(X_i)\} \;=\; 0,
\]
since $\Pi\hat{f}^{\mathrm{up}}(X_i) \geq C^{\mathrm{L}}$ after projection; symmetrically,
every right-censored point whose upper quantile is projected to $C^{\mathrm{R}}$ also has
score $0$. These ties cannot be broken at random by jittering the projected quantiles,
because that would risk pushing them back outside $[C^{\mathrm{L}}, C^{\mathrm{R}}]$,
undoing the cancellation of the $p_{\mathrm{slack}}$ term. We can nonetheless bound the residual over-coverage of peCQR.

\begin{theorem}\label{thm:pecqr-validity}
Suppose the only ties among the peCQR calibration scores are those at 0 induced by the projection $\Pi$. Then,
\[
    1 - \alpha \leq
    \mathbb{P}\bigl[Y_{n+1} \in \widehat{\mathcal{C}}^{\mathrm{peCQR}}(X_{n+1})\bigr]
    \leq 1 - \alpha + \frac{1}{n+1} + p_{\mathrm{zero}}(\hat{f}),
\]
where $p_{\mathrm{zero}}(\hat{f}) :=
    \mathbb{P}\bigl(Y \leq C^{\mathrm{L}},\; \Pi \hat{f}^{\mathrm{lo}}(X) = C^{\mathrm{L}}\bigr)
    +
    \mathbb{P}\bigl(Y \geq C^{\mathrm{R}},\; \Pi \hat{f}^{\mathrm{up}}(X) \geq C^{\mathrm{R}}\bigr)$
represents the probability of having a score equal to zero due to censoring.
\end{theorem}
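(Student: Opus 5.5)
The plan is to reduce the peCQR coverage to the coverage of CQR run on the clipped outcomes with base predictor $\Pi\hat f$, and then to count how much of the upper CQR bound is lost to the ties at zero.

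\emph{Lower bound.} peCQR is eCQR with base $\Pi\hat f$. Theorem~\ref{thm:ecqr-validity} applies to any fixed base predictor, so it gives the identity
\[
\mathbb{P}[Y_{n+1}\in\widehat{\mathcal C}^{\mathrm{peCQR}}]=p_{\mathrm{slack}}+\mathbb{P}[\tilde Y_{n+1}\in\tilde{\mathcal C}^{\Pi}(X_{n+1})].
\]
Here $\tilde{\mathcal C}^{\Pi}$ is the CQR interval built from $\Pi\hat f$. Lemma~\ref{lem:pecqr-no-slack} gives $p_{\mathrm{slack}}=0$. The first part of Theorem~\ref{thm:cqr-validity} needs only exchangeability of $(X_i,\tilde Y_i)$, which follows from exchangeability of $(X_i,Y_i)$ because $\tilde Y=\Pi(Y)$. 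Hence the coverage is at least $1-\alpha$.

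\emph{Upper bound.} By the same identity, it suffices to bound $\mathbb{P}[\tilde S_{n+1}\le\tilde\tau]$, where $\tilde S_i$ are the CQR scores computed with $\Pi\hat f$ and $\tilde\tau$ is the $k$-th smallest of $\{\tilde S_i\}_{i\le n}\cup\{R\}$, with $k=\lceil(n+1)(1-\alpha)\rceil$.
\begin{itemize}
\item Let $Z_{n+1}$ be the event $\tilde S_{n+1}=0$ arising from censoring, namely $Y_{n+1}\le C^{\mathrm L}$ with $\Pi\hat f^{\mathrm{lo}}(X_{n+1})=C^{\mathrm L}$, or the symmetric right-side event.
\item Split the probability as $\mathbb{P}[\tilde S_{n+1}\le\tilde\tau,\,Z_{n+1}^c]+\mathbb{P}[Z_{n+1}]$. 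The second term is at most $p_{\mathrm{zero}}(\hat f)$.
\item For the first term, use the standard rank argument on the exchangeable scores $\tilde S_1,\dots,\tilde S_{n+1}$. The event $\tilde S_{n+1}\le\tilde\tau$ is equivalent to $\tilde S_{n+1}$ being among the $k$ smallest of the $n+1$ scores, with ties counted in its favour.
\item Break ties by a uniformly random permutation independent of the data. On $Z_{n+1}^c$, the hypothesis says $\tilde S_{n+1}$ is tied with no other score. So the event equals "the randomized rank of $\tilde S_{n+1}$ is at most $k$".
\item Exchangeability makes the randomized rank uniform on $\{1,\dots,n+1\}$. Therefore the first term is at most $\mathbb{P}[\text{rank}\le k]=k/(n+1)\le 1-\alpha+1/(n+1)$.
\item Adding the two terms gives the claimed upper bound.
\end{itemize}

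\emph{Main obstacle.} The delicate step is making the "no ties outside $Z_{n+1}$" reduction rigorous. The hypothesis only excludes ties other than those at $0$ induced by $\Pi$. An uncensored test score equal to $0$ can occur only on a null event under that hypothesis, so it is harmless. One must still check that when $\tilde S_{n+1}\ne0$, or $\tilde S_{n+1}=0$ without censoring, the favourable-tie count exactly matches the randomized rank event. I would handle this by writing the event as $\{\#\{i\le n+1:\tilde S_i<\tilde S_{n+1}\}<k\}$ and noting that on $Z_{n+1}^c$ this count plus one is a.s.\ the exact rank.
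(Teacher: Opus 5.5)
Your proposal is correct and follows essentially the paper's proof: both reduce, via Theorem~\ref{thm:ecqr-validity} with $p_{\mathrm{slack}}=0$ (Lemma~\ref{lem:pecqr-no-slack}), to the CQR coverage of $\tilde Y_{n+1}$ under the projected base $\Pi\hat f$, and then apply the exchangeable-rank argument while charging the censoring-induced ties at zero to $p_{\mathrm{zero}}(\hat f)$. The difference is only in how ties are handled: the paper splits $\{\tilde S_{n+1}<\tilde S'_{(k)}\}$ from $\{\tilde S_{n+1}=\tilde S'_{(k)}\}$ and bounds the latter by $\tfrac{1}{n+1}+\mathbb{P}(\tilde S_{n+1}=0)$, whereas you set aside the event $Z_{n+1}$ first and use randomized ranks on its complement, which ties the extra mass directly to the event defining $p_{\mathrm{zero}}(\hat f)$; both arguments give the bound $k/(n+1)+p_{\mathrm{zero}}(\hat f)$.
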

\noindent The proof is deferred to Appendix~\ref{sec:appendix-proofs}.

\clearpage

\section{Extended Theoretical Analysis} \label{app:theory}

\subsection{Relation between eCQR, peCQR, ClipCQR, and ClipCQR+}

ClipCQR always tightens eCQR, with equality whenever every calibration point keeps its
base midpoint prediction $\hat{m}(X_i) := (\hat{f}^{\mathrm{lo}}(X_i) + \hat{f}^{\mathrm{up}}(X_i))/2$ inside $[C^{\mathrm{L}}, C^{\mathrm{R}}]$.

\begin{proposition} \label{prop:ecqr-ccqr}
If eCQR (Algorithm~\ref{alg:ecqr}) and ClipCQR (Algorithm~\ref{alg:clipcqr}) are applied with the same data, predictor
$\hat{f}$, and level $\alpha$, and no jittering ($\delta = 0$), then 
$\widehat{\mathcal{C}}(X_{n+1}) \subseteq \widehat{\mathcal{C}}^{\mathrm{eCQR}}(X_{n+1})$ almost surely, 
with equality on the event $\mathcal{E} := \bigl\{\hat{m}(X_i) \in [C^{\mathrm{L}}, C^{\mathrm{R}}] \; \forall i \in [n]\bigr\}$.
\end{proposition}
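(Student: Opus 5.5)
The plan is to reduce the comparison of intervals to a comparison of calibrated thresholds, and then to a pointwise comparison of nonconformity scores. By construction, $\widehat{\mathcal{C}}^{\mathrm{eCQR}}(X_{n+1}) = \psi(\phi^{\mathrm{CQR}}(X_{n+1};\hat\tau^{\mathrm{eCQR}}(\alpha))) = \phi(X_{n+1};\hat\tau^{\mathrm{eCQR}}(\alpha))$ and $\widehat{\mathcal{C}}(X_{n+1}) = \phi(X_{n+1};\hat\tau(\alpha))$. Both use the same base predictor, since $\delta=0$. The family $\phi(x;\cdot)$ in~\eqref{eq:phi-def} is nested in $\tau$, so it suffices to show $\hat\tau(\alpha) \le \hat\tau^{\mathrm{eCQR}}(\alpha)$ almost surely, with equality on $\mathcal{E}$.

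Both thresholds are the $\lceil (n+1)(1-\alpha)\rceil$-th smallest element of a multiset augmented by the same value $R$. Order statistics are monotone under coordinatewise domination. Hence it is enough to prove $S_i \le \tilde S_i^{\mathrm{CQR}}$ for every $i\in[n]$, with equality when $\hat m(X_i)\in[C^{\mathrm{L}},C^{\mathrm{R}}]$. (If the index exceeds $n+1$, both thresholds follow the same convention and the claim is trivial.)

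The pointwise comparison uses Lemma~\ref{lemma:score-formula} together with the fact that $\tilde Y_i\in[C^{\mathrm{L}},C^{\mathrm{R}}]$. I would split into three cases.

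\emph{Uncensored case.} If $C^{\mathrm{L}}<\tilde Y_i<C^{\mathrm{R}}$, the two scores coincide by~\eqref{eq:score-formula} and~\eqref{eq:cqr-score-closed}.

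\emph{Left-censored case.} If $\tilde Y_i = C^{\mathrm{L}}$, write $a := \hat f^{\mathrm{lo}}(X_i)-C^{\mathrm{L}}$ and $b := C^{\mathrm{L}}-\hat f^{\mathrm{up}}(X_i)$. Then $\tilde S_i^{\mathrm{CQR}} = \max\{a,b\}$ and $S_i = \max\{a,(a+b)/2\}$. Since $(a+b)/2\le\max\{a,b\}$, we get $S_i\le \tilde S_i^{\mathrm{CQR}}$. Moreover, $a\ge b$ is equivalent to $(a+b)/2 \le a$, which is in turn equivalent to $\hat m(X_i)\ge C^{\mathrm{L}}$. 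In that case both scores equal $a$, which gives equality whenever $\hat m(X_i)\ge C^{\mathrm{L}}$.

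\emph{Right-censored case.} If $\tilde Y_i=C^{\mathrm{R}}$, the argument is symmetric, with equality whenever $\hat m(X_i)\le C^{\mathrm{R}}$.

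On $\mathcal{E}$ every midpoint lies in $[C^{\mathrm{L}},C^{\mathrm{R}}]$, so all scores coincide, the thresholds coincide, and so do the intervals.

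I do not expect a serious obstacle. The main care needed is bookkeeping in the censored cases: applying the correct branch of~\eqref{eq:score-formula} at the boundary values $\tilde Y_i = C^{\mathrm{L}}$ or $C^{\mathrm{R}}$ (the branch is decided by $y\le C^{\mathrm{L}}$ or $y\ge C^{\mathrm{R}}$), and verifying that the ``average versus max'' inequality yields exactly the midpoint condition defining $\mathcal{E}$.
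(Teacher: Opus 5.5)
Your proposal is correct and follows essentially the same route as the paper's proof. Both use nestedness of $\phi(x;\cdot)$ to reduce to $\hat\tau(\alpha)\le\hat\tau^{\mathrm{eCQR}}(\alpha)$, then compare scores pointwise in the uncensored, left-censored and right-censored cases, using $\max\{a,(a+b)/2\}\le\max\{a,b\}$ with equality iff the midpoint lies on the correct side of the threshold, which gives exactly the condition defining $\mathcal{E}$.
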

\noindent The proof is deferred to Appendix~\ref{sec:appendix-proofs}.

If the base prediction intervals always lie within $(C^{\mathrm{L}}, C^{\mathrm{R}})$, all four methods coincide.

\begin{corollary} \label{cor:ecqr-pecqr-ccqr}
Suppose eCQR (Algorithm~\ref{alg:ecqr}), peCQR (Algorithm~\ref{alg:pecqr}),
ClipCQR (Algorithm~\ref{alg:clipcqr}), and ClipCQR+ (Algorithm~\ref{alg:clipcqr_plus}) are applied with the
same data, base predictor $\hat{f}$, and level $\alpha$, and no jittering ($\delta = 0$). Then, on the event
$\mathcal{E}' := \bigl\{C^{\mathrm{L}} < \hat{f}^{\mathrm{lo}}(X_i) \le \hat{f}^{\mathrm{up}}(X_i) < C^{\mathrm{R}}
  \;\; \forall i \in [n+1] \bigr\}$,
\[
    \widehat{\mathcal{C}}(X_{n+1})
    = \widehat{\mathcal{C}}^+(X_{n+1})
    = \widehat{\mathcal{C}}^{\mathrm{peCQR}}(X_{n+1})
    = \widehat{\mathcal{C}}^{\mathrm{eCQR}}(X_{n+1}).
\]
\end{corollary}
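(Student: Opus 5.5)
On the event $\mathcal{E}'$ every calibration point and the test point is internal, and the plan is to show that all four thresholds coincide and all four families agree at $X_{n+1}$.

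\textbf{Step 1 (ClipCQR scores equal CQR scores).} For each $i\in[n]$ with $C^{\mathrm L}<\hat f^{\mathrm{lo}}(X_i)\le\hat f^{\mathrm{up}}(X_i)<C^{\mathrm R}$, compare the formula of Lemma~\ref{lemma:score-formula} with \eqref{eq:cqr-score-closed} evaluated at $\tilde Y_i$.
\begin{itemize}
\item Uncensored case, $C^{\mathrm L}<\tilde Y_i<C^{\mathrm R}$: the two formulas are identical.
\item Left-censored case, $\tilde Y_i=C^{\mathrm L}$: Lemma~\ref{lemma:score-formula} gives $\max\{\hat f^{\mathrm{lo}}-C^{\mathrm L},\tfrac12(\hat f^{\mathrm{lo}}-\hat f^{\mathrm{up}})\}$. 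The first term is positive and the second is $\le 0$, so the score equals $\hat f^{\mathrm{lo}}-C^{\mathrm L}$. The CQR score is $\max\{\hat f^{\mathrm{lo}}-C^{\mathrm L},\,C^{\mathrm L}-\hat f^{\mathrm{up}}\}$, whose second term is negative, so it also equals $\hat f^{\mathrm{lo}}-C^{\mathrm L}$.
\item Right-censored case: symmetric to the left-censored case.
\end{itemize}
Hence $S_i=\tilde S_i^{\mathrm{CQR}}$ for all $i$, and therefore $\hat\tau(\alpha)=\hat\tau^{\mathrm{eCQR}}(\alpha)$. Since $\widehat{\mathcal C}^{\mathrm{eCQR}}(X_{n+1})=\phi(X_{n+1};\hat\tau^{\mathrm{eCQR}})$ and $\widehat{\mathcal C}(X_{n+1})=\phi(X_{n+1};\hat\tau)$, ClipCQR equals eCQR. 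Alternatively, $\mathcal E'\subseteq\mathcal E$ because the midpoints lie inside, so Proposition~\ref{prop:ecqr-ccqr} gives this directly.

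\textbf{Step 2 (peCQR).} On $\mathcal E'$ the projection $\Pi$ acts as the identity on $\hat f^{\mathrm{lo}}(X_i)$ and $\hat f^{\mathrm{up}}(X_i)$ for every $i\in[n+1]$, since $\delta=0$ means there is no jitter. Algorithm~\ref{alg:pecqr} therefore runs Algorithm~\ref{alg:ecqr} with the same inputs, so peCQR equals eCQR.

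\textbf{Step 3 (ClipCQR+).} On $\mathcal E'$ we have $\widehat{\mathcal I}=[n]$ and $X_{n+1}\in\mathrm{int}(\hat f)$. The threshold \eqref{eq:-threshold-clipcqrt+} then reduces to $\hat\tau(\alpha;[n])=\hat\tau(\alpha)$, the ClipCQR threshold computed from all points. The $\widehat{\mathcal I}^{\mathrm c}=\emptyset$ branch is multiplied by the zero indicator, so its convention is irrelevant. Hence $\widehat{\mathcal C}^+(X_{n+1})=\phi(X_{n+1};\hat\tau(\alpha))=\widehat{\mathcal C}(X_{n+1})$.

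The main point needing care is the sign check in Step 1, namely that the midpoint term is dominated. It relies on the strict inequalities in $\mathcal E'$; with equality at a threshold the two score terms could tie, which would still be harmless but must be noted. The remaining steps are bookkeeping.
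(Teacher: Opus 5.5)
Your proposal is correct and follows the paper's proof essentially step for step. peCQR coincides with eCQR because $\Pi$ acts as the identity on $\mathcal{E}'$. ClipCQR coincides with eCQR because $\mathcal{E}'\subseteq\mathcal{E}$, so Proposition~\ref{prop:ecqr-ccqr} applies; your direct score comparison just re-derives that proposition's case analysis. ClipCQR+ reduces to ClipCQR because $\widehat{\mathcal I}=[n]$ and $X_{n+1}\in\mathrm{int}(\hat f)$.

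One small correction to your closing remark: the score identity in Step~1 does not need the strict inequalities, since midpoints lying in $[C^{\mathrm L},C^{\mathrm R}]$ already suffice. Strictness is what you actually need in Step~3, to guarantee that every point lies in $\mathrm{int}(\hat f)$.
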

\noindent The proof is deferred to Appendix~\ref{sec:appendix-proofs}.

\subsection{Analysis of ClipCQR} \label{app:theory-clipcqr}

Inheriting the definitions and notation from Section~\ref{sec:theory}, 
recall explicitly the oracle family
\begin{align} \label{eq:oracle-family}
\phi^\star(x;\tau) := \psi(q^{\mathrm{lo}}(x) - \tau,\, q^{\mathrm{up}}(x) + \tau).
\end{align}
Consider a different member of this family, the \emph{marginal oracle}:
\begin{align} \label{eq:marginal-oracle}
  & \mathcal{O}_{\mathrm{m}}(X) := \phi^\star(x; \tau^{\mathrm{m}}),
  & \tau^{\mathrm{m}} := \inf\{ t : \mathbb{P}(s^\star(X, Y) \leq t) \geq 1-\alpha \},
\end{align}
where $s^\star(x, y) := \inf\{\tau \in \mathbb{R} : y \in \phi^\star(x; \tau)\}$
is the oracle score function, obtained from~\eqref{eq:score-formula} by replacing
$\hat{f}^{\mathrm{lo}}, \hat{f}^{\mathrm{up}}$ with $q^{\mathrm{lo}}, q^{\mathrm{up}}$:
\begin{align} \label{eq:oracle-score}
s^\star(x, y) =
\begin{cases}
\max\bigl\{\, q^{\mathrm{lo}}(x) - C^{\mathrm{L}},\ \tfrac12(q^{\mathrm{lo}}(x) - q^{\mathrm{up}}(x)) \,\bigr\}
  =: c^{\mathrm L}(x), & y \leq C^{\mathrm{L}}, \\[4pt]
\max\bigl\{\, q^{\mathrm{lo}}(x) - y,\ y - q^{\mathrm{up}}(x) \,\bigr\}, & C^{\mathrm{L}} < y < C^{\mathrm{R}}, \\[4pt]
\max\bigl\{\, C^{\mathrm{R}} - q^{\mathrm{up}}(x),\ \tfrac12(q^{\mathrm{lo}}(x) - q^{\mathrm{up}}(x)) \,\bigr\}
  =: c^{\mathrm R}(x), & y \geq C^{\mathrm{R}}.
\end{cases}
\end{align}

The following result shows that the three oracles --- $\mathcal{O}_{\mathrm{uncens}},\mathcal{O}_0,\mathcal{O}_{\mathrm{m}}$ ---  all coincide under a strict identifiability condition,
and that without it they trade marginal coverage against conditional
coverage: the snapped oracle is conditionally conservative but marginally
over-covers, while the marginal oracle is marginally tightest but conditionally
under-covers.

% \begin{assumption}[Weak regularity]\label{ass:continuity}
% For almost every $x$, $F_{Y \mid X}(\cdot \mid x)$ has a positive
% finite density in a neighbourhood of $q^{\mathrm{lo}}(x)$ and $q^{\mathrm{up}}(x)$,
% and the quantiles place no mass on the censoring thresholds,
% $\mathbb{P}(q^{\mathrm{lo}}(X) = C^{\mathrm{L}}) = \mathbb{P}(q^{\mathrm{up}}(X) = C^{\mathrm{R}}) = 0$.
% (This is weaker than Assumption~\ref{ass:regularity}.)
% \end{assumption}

\begin{assumption}[Strict identifiability]\label{ass:observability}
$C^{\mathrm{L}} < q^{\mathrm{lo}}(X) \leq q^{\mathrm{up}}(X) < C^{\mathrm{R}}$ almost surely.
\end{assumption}

\begin{proposition}\label{prop:oracle-coverage}
With no distributional assumptions:
\begin{enumerate}[label=(\roman*)]
\item $\mathcal{O}_{\mathrm{m}}(X) \subseteq \mathcal{O}_0(X)$ almost surely, and $\mathcal{O}_{\mathrm{m}}$ is the tightest
predictor in the oracle family with $\mathbb{P}(Y \in \mathcal{O}_{\mathrm{m}}(X)) \geq 1-\alpha$.
\end{enumerate}
Define $\pi^{\mathrm L} := \mathbb{P}(q^{\mathrm{lo}}(X) \leq C^{\mathrm{L}})$ and
$\pi^{\mathrm R} := \mathbb{P}(q^{\mathrm{up}}(X) \geq C^{\mathrm{R}})$. Under Assumption~\ref{ass:regularity}:
\begin{enumerate}[label=(\roman*)]\setcounter{enumi}{1}

\item the snapped oracle is conditionally conservative, $\mathbb{P}(Y \in \mathcal{O}_0(X) \mid X)\ge 1-\alpha$, with
\[
  \mathbb{P}(Y \in \mathcal{O}_0(X) \mid X)
  = \begin{cases}
    F_{Y\mid X}(C^{\mathrm L}\mid X), & \text{if }q^{\mathrm{up}}(X)<C^{\mathrm L},\\
    1-F_{Y\mid X}(C^{\mathrm R}\mid X), & \text{if } q^{\mathrm{lo}}(X)>C^{\mathrm R}\\
    1-\alpha + \tfrac{\alpha}{2}\bigl(\I\{q^{\mathrm{lo}}(X) \le C^{\mathrm L}\} + \I\{q^{\mathrm{up}}(X) \ge C^{\mathrm R}\}\bigr),
      & \text{otherwise};
  \end{cases}
\]
consequently $\mathbb{P}(Y \in \mathcal{O}_0(X)) \ge 1-\alpha + \tfrac{\alpha}{2}(\pi^{\mathrm L} + \pi^{\mathrm R})$, with equality iff the first two cases occur with probability zero;

\item the marginal oracle attains exact marginal coverage,
$\mathbb{P}(Y \in \mathcal{O}_{\mathrm{m}}(X)) = 1-\alpha$, but is not conditionally conservative:
$\mathbb{P}(Y \in \mathcal{O}_{\mathrm{m}}(X) \mid X = x) \leq 1-\alpha$ for any $x \in \mathcal{X}$ with
$C^{\mathrm{L}} < q^{\mathrm{lo}}(x) \leq q^{\mathrm{up}}(x) < C^{\mathrm{R}}$, with equality if and only if
Assumption~\ref{ass:observability} holds;
\item $\mathcal{O}_{\mathrm{m}} = \mathcal{O}_0 = \mathcal{O}_{\mathrm{uncens}}$ if and only if
Assumption~\ref{ass:observability} holds.
\end{enumerate}
\end{proposition}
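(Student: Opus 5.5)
We prove the four parts in order. Parts (ii)--(iv) reduce to computing the conditional coverage of $\phi^\star(x;\tau)$ in closed form. The key device, used throughout, is the oracle analogue of Lemma~\ref{lemma:score-formula}: $y \in \phi^\star(x;\tau)$ if and only if $s^\star(x,y) \le \tau$. This follows because $\phi^\star(x;\cdot)$ is nested and right-continuous in $\tau$ (closed intervals), so the infimum defining $s^\star$ is attained.

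\emph{Part (i).} We first show $\tau^{\mathrm m} \le 0$. Every $y$ has $s^\star(x,y) \le \max\{q^{\mathrm{lo}}(x)-y,\, y-q^{\mathrm{up}}(x)\}$. This holds with equality in the middle case. In the censored cases, $c^{\mathrm L}(x)$ is the smallest $\tau$ with $q^{\mathrm{lo}}(x)-\tau \le C^{\mathrm L}$ and a nonempty interval, and this is at most the CQR-type score at $y \le C^{\mathrm L}$. Hence $\{Y \in [q^{\mathrm{lo}}(X), q^{\mathrm{up}}(X)]\} \subseteq \{s^\star(X,Y) \le 0\}$. By the quantile definition, $\mathbb{P}(Y \in [q_{\alpha/2}(X), q_{1-\alpha/2}(X)] \mid X) \ge 1-\alpha$ holds with no continuity assumptions, since $F(q_{1-\alpha/2}) \ge 1-\alpha/2$ and $\mathbb{P}(Y<q_{\alpha/2}) \le \alpha/2$. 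Therefore $\mathbb{P}(s^\star \le 0) \ge 1-\alpha$, and so $\tau^{\mathrm m} \le 0$. Nestedness then gives $\mathcal{O}_{\mathrm m} = \phi^\star(\cdot;\tau^{\mathrm m}) \subseteq \phi^\star(\cdot;0) = \mathcal{O}_0$.

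For tightness, the equivalence above gives $\mathbb{P}(Y \in \phi^\star(X;\tau)) = \mathbb{P}(s^\star(X,Y) \le \tau)$. This is at least $1-\alpha$ exactly when $\tau \ge \tau^{\mathrm m}$, using right-continuity of the CDF so that the infimum is attained. Nestedness then makes $\tau^{\mathrm m}$ the smallest valid member of the family.

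\emph{Part (ii).} Under Assumption~\ref{ass:regularity} the conditional CDF is continuous, so $F(q^{\mathrm{lo}} \mid x) = \alpha/2$ and $F(q^{\mathrm{up}} \mid x) = 1-\alpha/2$. We compute $\mathcal{O}_0(x)$ case by case from~\eqref{eq:psi-def}.
\begin{itemize}
\item If $q^{\mathrm{up}} < C^{\mathrm L}$, then $\mathcal{O}_0 = [Y^{\min}, C^{\mathrm L}]$, with coverage $F(C^{\mathrm L}) \ge F(q^{\mathrm{up}}) = 1-\alpha/2$.
\item If $q^{\mathrm{lo}} > C^{\mathrm R}$, the situation is symmetric.
\item Otherwise the interval meets $[C^{\mathrm L}, C^{\mathrm R}]$. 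Each endpoint is either kept, costing a tail of $\alpha/2$, or snapped to the support boundary, costing nothing. This gives $1-\alpha + \tfrac{\alpha}{2}(\I^{\mathrm L} + \I^{\mathrm R})$.
\end{itemize}
Conditional conservativeness follows in every case. Taking expectations gives the marginal bound. The inequality is strict exactly when the first two cases have positive probability, because there the coverage $\ge 1-\alpha/2$ exceeds $1-\alpha + \tfrac{\alpha}{2}\cdot 1$ strictly. Strictness uses the positive density, which gives $F(C^{\mathrm L}) > F(q^{\mathrm{up}})$.

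\emph{Part (iii).} The marginal identity $\mathbb{P}(s^\star \le \tau^{\mathrm m}) = 1-\alpha$ requires $s^\star(X,Y)$ to have no atom at $\tau^{\mathrm m}$. This is the step I expect to be the main obstacle. The censored branches give $s^\star = c^{\mathrm L}(X)$ or $c^{\mathrm R}(X)$, and these could have atoms. Assumption~\ref{ass:regularity}(i) rules out atoms of $q^{\mathrm{lo}}(X) - C^{\mathrm L}$. The midpoint branch $\tfrac12(q^{\mathrm{lo}}-q^{\mathrm{up}})$ is harder, and I would handle it by arguing that it is at most $\tau^{\mathrm m}$ only on a null set near the boundary, or by invoking the standing continuity assumption. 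The uncensored branch is continuous because of the density.

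For the conditional statement, fix an interior $x$. Since $\tau^{\mathrm m} \le 0$, the interval $\phi^\star(x;\tau^{\mathrm m})$ is contained in $[q^{\mathrm{lo}}(x)-\tau^{\mathrm m},\, q^{\mathrm{up}}(x)+\tau^{\mathrm m}]$, as snapping cannot occur when the endpoints lie inside $(C^{\mathrm L}, C^{\mathrm R})$. Its coverage is therefore at most $1-\alpha$, with equality iff $\tau^{\mathrm m} = 0$, using the strictly positive density. It remains to show $\tau^{\mathrm m} = 0$ iff Assumption~\ref{ass:observability} holds.
\begin{itemize}
\item If the assumption holds, part (ii) gives $\mathbb{P}(s^\star \le 0) = 1-\alpha$. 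Strict monotonicity of $\tau \mapsto \mathbb{P}(s^\star \le \tau)$ near $0$, from the density, gives $\tau^{\mathrm m} = 0$.
\item If it fails, then $\mathbb{P}(s^\star \le 0) > 1-\alpha$ by part (ii). Continuity of the score distribution gives $\mathbb{P}(s^\star \le -\epsilon) > 1-\alpha$ for some small $\epsilon > 0$, so $\tau^{\mathrm m} < 0$.
\end{itemize}

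\emph{Part (iv).} This follows directly. Under Assumption~\ref{ass:observability}, $\tau^{\mathrm m} = 0$ and there is no snapping, so $\mathcal{O}_{\mathrm m} = \mathcal{O}_0 = \mathcal{O}_{\mathrm{uncens}}$. Conversely, if the assumption fails, then $\mathcal{O}_0 \ne \mathcal{O}_{\mathrm{uncens}}$ with positive probability, because snapping changes the endpoints. Here I would need the continuity of $q^{\mathrm{lo}}(X)$ to exclude the degenerate case $q^{\mathrm{lo}} = C^{\mathrm L}$, in which snapping would still change $[C^{\mathrm L}, \cdot]$ to $[Y^{\min}, \cdot]$.
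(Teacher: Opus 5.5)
Apart from one step, your argument follows the paper's proof. The shared ingredients are:
\begin{itemize}
\item the score-inversion identity;
\item for (i), the inclusion $[q^{\mathrm{lo}}(x),q^{\mathrm{up}}(x)]\subseteq\mathcal O_0(x)$ (your score comparison is the same fact) together with $F_{Y\mid X}(q_\beta(x)^-\mid x)\le\beta$;
\item for (ii), the three-case computation;
\item for the conditional half of (iii), the contraction of interior bands when $\tau^{\mathrm m}\le 0$, combined with the fact that $\tau^{\mathrm m}=0$ if and only if Assumption~\ref{ass:observability} holds.
\end{itemize}
Your converse in (iv) (snapping moves an endpoint on a set of positive probability) is more direct than the paper's detour through $\tau^{\mathrm m}<0$. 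The case $q^{\mathrm{lo}}(x)=C^{\mathrm L}$ needs no special care, because $q^{\mathrm{lo}}(x)>Y^{\min}$ under a positive density. Where you invoke ``continuity of the score distribution'' to get $\tau^{\mathrm m}<0$, you only need that $s^\star(X,Y)$ has no atom at $0$, and this does hold. The middle branch is atomless by the conditional density. Since $q^{\mathrm{lo}}(x)<q^{\mathrm{up}}(x)$, the midpoint term in $c^{\mathrm L}$ is strictly negative, so $c^{\mathrm L}(X)=0$ forces $q^{\mathrm{lo}}(X)=C^{\mathrm L}$, a null event by Assumption~\ref{ass:regularity}(i). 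The argument for $c^{\mathrm R}$ is symmetric.

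The genuine gap is the one you flagged. The identity $\mathbb P(Y\in\mathcal O_{\mathrm m}(X))=1-\alpha$ requires $\mathbb P(s^\star(X,Y)=\tau^{\mathrm m})=0$, and neither of your patches supplies it. No stated assumption controls the law of $q^{\mathrm{up}}(X)-q^{\mathrm{lo}}(X)$, and the midpoint branch need not be active only on a null set. The paper closes this step by asserting that $c^{\mathrm L}(X)$ and $c^{\mathrm R}(X)$ carry no mass because they are continuous functions of the atomless quantiles. That inference fails for exactly the reason you suspected. Here is a counterexample:
\begin{itemize}
\item Take conditional laws satisfying Assumption~\ref{ass:regularity}, with $q^{\mathrm{lo}}(X)$ continuously distributed, constant width $q^{\mathrm{up}}(x)-q^{\mathrm{lo}}(x)\equiv w$, and $q^{\mathrm{up}}(X)<C^{\mathrm L}$ almost surely.
\item Then $s^\star(X,Y)=c^{\mathrm L}(X)=-w/2$ on $\{Y\le C^{\mathrm L}\}$, and $s^\star(X,Y)>0$ otherwise.
\item Hence $\tau^{\mathrm m}=-w/2$ and $\mathcal O_{\mathrm m}(X)=[Y^{\min},C^{\mathrm L}]$, so $\mathbb P(Y\in\mathcal O_{\mathrm m}(X))=\mathbb P(Y\le C^{\mathrm L})>1-\alpha/2$.
\end{itemize}
So the exact-coverage claim in (iii) does not follow from Assumption~\ref{ass:regularity} as stated, and the step cannot be closed without an extra hypothesis. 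Two options would work:
\begin{itemize}
\item Assume $q^{\mathrm{up}}(X)-q^{\mathrm{lo}}(X)$ is atomless (true if $(q^{\mathrm{lo}}(X),q^{\mathrm{up}}(X))$ has a joint density). Then each censored branch is a maximum of two atomless variables, the law of $s^\star(X,Y)$ is continuous, and $\mathbb P(s^\star(X,Y)\le\tau^{\mathrm m})=1-\alpha$.
\item Assume directly that $\tau^{\mathrm m}$ is a continuity point of the law of $s^\star(X,Y)$, as the paper does for $\tilde\tau^{\mathrm m}$ in its eCQR analysis.
\end{itemize}
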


The next theorem establishes that ClipCQR can approach the $\mathcal{O}_{\mathrm{m}}$ oracle.
This requires a consistency assumption (Assumption~\ref{ass:full-consistency}) that is stronger than the base consistency assumption (Assumption~\ref{ass:consistency}) required for 
ClipCQR+ to approach $\mathcal{O}_0$; see Proposition~\ref{prop:consistency-weaker}.

\begin{assumption}[Full consistency]\label{ass:full-consistency}
The base quantile estimates are $L^1$-consistent:
\[
    \mathbb{E} \bigl( |\hat{f}_n^{\mathrm{lo}}(X) - q^{\mathrm{lo}}(X)| \bigr) \to 0
    \quad\text{and}\quad
    \mathbb{E} \bigl( |\hat{f}_n^{\mathrm{up}}(X) - q^{\mathrm{up}}(X)| \bigr) \to 0,
    \qquad \text{ as } n \to \infty.
\]
\end{assumption}

\begin{theorem}\label{thm:ccqr-oracle}
Assume $(X_i, Y_i)_{i=1}^{n+1}$ are i.i.d.
Let $\widehat{\tau}_n(\alpha)$ denote the ClipCQR threshold at sample size $n$, 
with base predictors $\hat{f}_n^{\mathrm{lo}}, \hat{f}_n^{\mathrm{up}}$. 
Under Assumptions~\ref{ass:regularity} and~\ref{ass:full-consistency},
\begin{enumerate}
    \item[(i)] $\widehat{\tau}_n(\alpha) \xrightarrow{p} \tau^{\mathrm{m}} \leq 0$, and
    $\widehat{\mathcal{C}}_n \to \mathcal{O}_{\mathrm{m}}$, where $\mathcal{O}_{\mathrm{m}}$ is the marginal oracle
    \eqref{eq:marginal-oracle}.
    \item[(ii)] $\mathbb{P}(Y \in \widehat{\mathcal{C}}_n(X)) \to \mathbb{P}(Y \in \mathcal{O}_{\mathrm{m}})$.
    \item[(iii)] $\widehat{\mathcal{C}}_n$ has asymptotic conditional
    coverage matching $\mathcal{O}_{\mathrm{m}}$ (Definition~\ref{def:cond-cov}).
\end{enumerate}
\end{theorem}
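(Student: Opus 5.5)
The plan is to compare the empirical score distribution with the oracle score distribution and transfer convergence of the threshold into convergence of the bands. Write $S^{(n)} := s_n(X,\Pi(Y))$ for the ClipCQR score computed from $\hat f_n$, and $S^\star := s^\star(X,Y)$ for the oracle score~\eqref{eq:oracle-score}. Note that $s^\star(x,\Pi(y)) = s^\star(x,y)$, since the score depends on $y$ only through its censoring status.

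\textbf{Step 1: scores are Lipschitz in the base endpoints.} Each branch of~\eqref{eq:score-formula} is a maximum of affine functions of $(\hat f^{\mathrm{lo}}, \hat f^{\mathrm{up}})$ with coefficients at most one in absolute value. Hence, for the same branch,
\[
|s_n(x,y)-s^\star(x,y)| \le |\hat f^{\mathrm{lo}}_n(x)-q^{\mathrm{lo}}(x)|+|\hat f^{\mathrm{up}}_n(x)-q^{\mathrm{up}}(x)|.
\]
The branch depends only on $y$ and not on the predictor. By Assumption~\ref{ass:full-consistency} and Markov's inequality, $S^{(n)}-S^\star \to 0$ in probability, so $S^{(n)} \Rightarrow S^\star$ in distribution.

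\textbf{Step 2: convergence of the threshold.} $\hat\tau_n(\alpha)$ is an empirical $(1-\alpha)(1+1/n)$-quantile of i.i.d.\ copies of $S^{(n)}$ (conditionally on $\hat f_n$). By a DKW bound applied conditionally, $\sup_t |\hat G_n(t)-G_n(t)|\to 0$, where $G_n$ is the cdf of $S^{(n)}$. Combined with $G_n(t)\to G^\star(t)$ at continuity points, this gives $\hat\tau_n \xrightarrow{p} \tau^{\mathrm m}$, provided $G^\star$ is strictly increasing at $\tau^{\mathrm m}$ and continuous there.

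Continuity is the main obstacle. $S^\star$ has continuous pieces: on uncensored $y$ its conditional density is bounded by Assumption~\ref{ass:regularity}(ii). The censored pieces, $c^{\mathrm L}(X)$ and $c^{\mathrm R}(X)$, however, are functions of $q^{\mathrm{lo}}(X)$ and $q^{\mathrm{up}}(X)$ alone. I would handle them with Assumption~\ref{ass:regularity}(i): $q^{\mathrm{lo}}(X)-C^{\mathrm L}$ has no atoms. The half-width branch $\tfrac12(q^{\mathrm{lo}}-q^{\mathrm{up}})$ could in principle carry an atom, but it is negative and lies below the other branch except in extreme configurations, so I would argue separately that no atom sits at $\tau^{\mathrm m}$.

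Strict increase comes from the positive density of $Y\mid X$ on the uncensored region. This region has positive probability because $C^{\mathrm L}<C^{\mathrm R}$ and the density is strictly positive. As a result, $G^\star$ increases strictly on the range $\{\max(q^{\mathrm{lo}}-y,\,y-q^{\mathrm{up}}) : y\in(C^{\mathrm L},C^{\mathrm R})\}$, which contains a neighbourhood of $\tau^{\mathrm m}$.

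For $\tau^{\mathrm m}\le 0$, I would invoke Proposition~\ref{prop:oracle-coverage}(ii)--(iii): $\phi^\star(\cdot;0)=\mathcal O_0$ already has coverage $\ge 1-\alpha$, so $\tau^{\mathrm m}\le 0$ by the definition of the infimum.

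\textbf{Step 3: band convergence and coverage.} The map $(\ell,u)\mapsto\psi(\ell,u)$ moves its endpoints continuously, except at the snapping points $\ell=C^{\mathrm L}$ and $u=C^{\mathrm R}$, where an endpoint jumps to $Y^{\min}$ or $Y^{\max}$. Hence $\lambda(\phi_n(X;\hat\tau_n)\triangle\phi^\star(X;\tau^{\mathrm m}))\to 0$ in probability, except on the event that $q^{\mathrm{lo}}(X)-\tau^{\mathrm m}$ lies within $\epsilon$ of $C^{\mathrm L}$ (or the analogue for $C^{\mathrm R}$). By Assumption~\ref{ass:regularity}(i), the probability of this event vanishes as $\epsilon\to0$. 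Off this event, Step 1 and Step 2 put the estimated endpoints within $\epsilon$ of the oracle ones with probability tending to one. This proves (i).

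For (ii) and (iii), I would use the bounded density $M$. For any $x$,
\[
|\mathbb P(Y\in \widehat{\mathcal C}_n(x)\mid X=x)-\mathbb P(Y\in\mathcal O_{\mathrm m}(x)\mid X=x)|\le M\,\lambda(\widehat{\mathcal C}_n(x)\triangle \mathcal O_{\mathrm m}(x)).
\]
Taking expectations gives (ii). For (iii), I would take $\Lambda_n=\{x:\lambda(\cdot\triangle\cdot)\le \epsilon_n\}$ for a slowly vanishing $\epsilon_n$, chosen through a diagonal argument so that the sets $\Lambda_n$ are deterministic. The catch is that the symmetric difference depends on $\hat\tau_n$. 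I would handle this by bounding it uniformly over $|\tau-\tau^{\mathrm m}|\le\epsilon_n$, using monotonicity of the family in $\tau$.
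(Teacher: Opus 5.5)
Your outline has the same skeleton as the paper's proof: the $1$-Lipschitz comparison of scores, $\tau^{\mathrm m}\le 0$ from Proposition~\ref{prop:oracle-coverage}, a straddle argument at the snapping jumps, and the density bound $M$ for (ii)--(iii). Two ingredients differ.

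For threshold consistency, the paper never uses the law $G_n$ of the estimated score. It couples estimated and oracle calibration scores, treats the $|\mathcal B_n|$ points with $\Delta_n(X_i)>\sqrt{a_n}$ as a rank shift, absorbs the rest as a value shift of $\sqrt{a_n}$, and then applies Glivenko--Cantelli to the oracle scores. Your route (conditional DKW plus $S^{(n)}-S^\star\to 0$ in probability) is shorter and equally valid.

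For (iii), the paper takes a data-dependent $\Lambda_n$ built from $V_n=\mathbb E[\lambda(\widehat{\mathcal C}_n(X)\triangle\mathcal O_{\mathrm m}(X))\mid\mathcal D_n]$ and a conditional Markov inequality. That only gives $\mathbb P(X\in\Lambda_n\mid\mathcal D_n)=1-o_P(1)$. Your deterministic $\Lambda_n$ matches Definition~\ref{def:cond-cov} more literally, and the monotonicity trick works. For nested bands and $|\tau-\tau^{\mathrm m}|\le\eta_n$,
$\lambda(\phi_n(x;\tau)\triangle D)\le\lambda(\phi_n(x;\tau^{\mathrm m}-\eta_n)\triangle D)+\lambda(\phi_n(x;\tau^{\mathrm m}+\eta_n)\triangle D)$,
where $\eta_n\downarrow 0$ is deterministic and chosen so that $\mathbb P(|\widehat\tau_n-\tau^{\mathrm m}|>\eta_n)\to 0$.

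The obstacle you flag is real but misplaced, and it cannot be argued away.

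Step 2 does not need continuity of $G^\star$ at $\tau^{\mathrm m}$. By definition of the infimum, $G^\star(t_1)<1-\alpha$ for every $t_1<\tau^{\mathrm m}$. So uniqueness of the quantile ($G^\star(t_2)>1-\alpha$ for $t_2>\tau^{\mathrm m}$), checked at continuity points $t_1<\tau^{\mathrm m}<t_2$, which are dense, already pins down $\widehat\tau_n$.

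The atom matters in Step 3. Besides the snapping jumps, $\psi(\ell,u)$ is discontinuous where $\ell=u$ lies outside $[C^{\mathrm L},C^{\mathrm R}]$. For example, $\phi^\star(x;\tau)$ jumps from $\emptyset$ to $[Y^{\min},C^{\mathrm L}]$ at $\tau=\tfrac12(q^{\mathrm{lo}}(x)-q^{\mathrm{up}}(x))$ whenever the midpoint is below $C^{\mathrm L}$. This is exactly where the half-width branch of $c^{\mathrm L}(x)$ is active. Your claim that $\psi$ is continuous away from the thresholds misses this jump, and so does the paper's endpoint-gap bound (it ignores the emptiness convention). One therefore needs the half-width to have no atom at $\tau^{\mathrm m}$ on midpoints outside $[C^{\mathrm L},C^{\mathrm R}]$.

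Assumption~\ref{ass:regularity} does not give this: atomless endpoints do not make the width atomless. Nor is the situation extreme. Take a deeply left-censored group ($q^{\mathrm{up}}<C^{\mathrm L}$) carrying, say, half the mass, with a common width $w$ that is small relative to the interior widths. Then $G^\star$ can jump across $1-\alpha$ exactly at $-w/2$. In that case $\tau^{\mathrm m}=-w/2$, $\mathcal O_{\mathrm m}$ over-covers, and ClipCQR's coverage stays within $1/(n+1)$ of $1-\alpha$. With consistent but noisy width estimates, a non-vanishing fraction of that group also gets empty bands. So both your argument and the paper's need an extra hypothesis, namely that $s^\star(X,Y)$ has no atom at $\tau^{\mathrm m}$.
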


\subsection{Analysis of eCQR}

We now show that eCQR is consistent with neither $\mathcal{O}_{\mathrm{m}}$ nor $\mathcal{O}_0$. 
Instead, under assumptions similar to those of Theorem~\ref{thm:ccqr-oracle},
eCQR converges to a distinct oracle $\mathcal{O}_{\mathrm{e}}$ that always contains
$\mathcal{O}_{\mathrm{m}}$ but bears no fixed relationship to $\mathcal{O}_0$. 
Moreover, we show that $\mathcal{O}_{\mathrm{e}}$ is less than ideal: it is relatively less interpretable, 
lacks conditional coverage in general, and is sometimes unnecessarily more conservative than
$\mathcal{O}_0$.

Define the oracle CQR score
\begin{equation}\label{eq:cqr-oracle-score}
    s^{\mathrm{CQR},\star}(x,y):=\max\bigl\{q^{\mathrm{lo}}(x)-y,\ y-q^{\mathrm{up}}(x)\bigr\},
    \qquad \tilde{S}^{\mathrm{CQR},\star}:=s^{\mathrm{CQR},\star}(X,\tilde Y),
\end{equation}
so that $s^{\mathrm{CQR},\star}(x,y)\le t\iff y\in[q^{\mathrm{lo}}(x)-t,\,q^{\mathrm{up}}(x)+t]$.
Define also its
$(1-\alpha)$-quantile as:
\[
\tilde\tau^{\mathrm{m}}(\alpha) := \inf\{t:\mathbb P(\tilde{S}^{\mathrm{CQR},\star}\le t)\ge1-\alpha\}.
\]
With this notation, the \emph{expanded oracle} is defined as:
\begin{equation}\label{eq:expanded-oracle}
    \mathcal{O}_{\mathrm{e}}(X):=\phi^\star(X;\tilde\tau^{\mathrm{m}})
    =\psi\bigl(q^{\mathrm{lo}}(X)-\tilde\tau^{\mathrm{m}}(\alpha),\ q^{\mathrm{up}}(X)+\tilde\tau^{\mathrm{m}}(\alpha) \bigr).
\end{equation}
In words, this is the member of the oracle family~\eqref{eq:oracle-family} whose raw pre-snap interval
$[q^{\mathrm{lo}}-\tau,\,q^{\mathrm{up}}+\tau]$ covers the censored $\tilde Y$ at level $1-\alpha$.
The relative difficulty in interpreting $\mathcal{O}_{\mathrm{e}}(X):=\phi^\star(X;\tilde\tau^{\mathrm{m}})$ is that the sign 
of $\tilde\tau^{\mathrm{m}}(\alpha)$ depends on the data distribution and censoring thresholds.
Therefore, $\mathcal O_{\mathrm{e}}$ may either contract or expand $\mathcal O_0$, as illustrated by the following examples.

\begin{example}[Shallow censoring: $\tilde\tau^{\mathrm{m}}<0$ and $\mathcal O_{\mathrm{e}} \subset \mathcal O_0$]\label{ex:ecqr-neg}
Let $Y\sim\mathrm{Unif}[0,10]$ independently of $X$, so $q^{\mathrm{lo}}\equiv1$ and $q^{\mathrm{up}}\equiv9$ at $\alpha = 0.2$, and take
$C^{\mathrm L}=3$, $C^{\mathrm R}=10$ (only shallow left censoring, with $q^{\mathrm{lo}}<C^{\mathrm L}<q^{\mathrm{up}}<C^{\mathrm R}$). Then $\tilde Y=\max(Y,3)$ and the pre-snap oracle band over-covers $\tilde{Y}$:
\[
  \mathbb P\bigl(\tilde Y\in[q^{\mathrm{lo}}(X),\,q^{\mathrm{up}}(X)]\bigr)
  =\mathbb P\bigl(\max(Y,3)\in[1,9]\bigr)=\mathbb P(Y\le9)=0.9>1-\alpha = 0.8.
\]
Contracting to $[q^{\mathrm{lo}}(X)-t,\,q^{\mathrm{up}}(X)+t]$ with $t=-1$ gives exact $1-\alpha$ marginal coverage:
\[
  \mathbb P\bigl(\tilde Y\in[q^{\mathrm{lo}}(X)+1,\,q^{\mathrm{up}}(X)-1]\bigr)
  =\mathbb P\bigl(\max(Y,3)\in[2,8]\bigr)=\mathbb P(Y\le 8)=0.8 = 1-\alpha.
\]
Therefore, $\tilde\tau^{\mathrm{m}}=-1<0$ and $\mathcal O_{\mathrm{e}}=\psi(2,8)=[0,8]\subset [0,9]=\mathcal O_0$.
\end{example}

\begin{example}[Deep censoring: $\tilde\tau^{\mathrm{m}}>0$ and $\mathcal{O}_0\subset\mathcal{O}_{\mathrm{e}}$]\label{ex:ecqr-pos}
Let $C^{\mathrm L}=0$, $C^{\mathrm R}=10$, $\alpha=0.2$, and $X=(X^{(1)},X^{(2)})$ with
$X^{(1)}\sim\mathrm{Bernoulli}(1/2)$ independent of $X^{(2)}\sim\mathrm{Unif}[1,2]$:
\begin{itemize}
\item if $X^{(1)}=0$: $Y\sim\mathrm{Unif}[0,10]$, so $q^{\mathrm{lo}}(X)=1$,
$q^{\mathrm{up}}(X)=9$ are interior and $\tilde Y=Y$;
\item if $X^{(1)}=1$: $Y\sim\mathrm{Unif}[-9-X^{(2)},\,1-X^{(2)}]$, so
$q^{\mathrm{lo}}(X)=-8-X^{(2)}$ and $q^{\mathrm{up}}(X)=-X^{(2)}<C^{\mathrm L}$, and $\tilde Y=0$.
\end{itemize}
The pre-snap oracle band under-covers:
\begin{align*}
  \mathbb P\bigl(\tilde Y\in[q^{\mathrm{lo}}(X),\,q^{\mathrm{up}}(X)]\bigr)
  &= \tfrac12\,\mathbb P\bigl(Y\in[1,9]\mid X^{(1)}=0\bigr)
   + \tfrac12\,\mathbb P\bigl(0\in[q^{\mathrm{lo}}(X),\,q^{\mathrm{up}}(X)]\mid X^{(1)}=1\bigr)\\
  &= \tfrac12\cdot 0.8 + \tfrac12\cdot 0 = 0.4 < 1-\alpha = 0.8 .
\end{align*}
Expanding to $[q^{\mathrm{lo}}(X)-t,\,q^{\mathrm{up}}(X)+t]$: for $t\in[1,2]$ leads to coverage 1 conditional on $X^{(1)}=0$
and, conditional on $X^{(1)}=1$, $\mathbb P\bigl(q^{\mathrm{up}}(X)+t\ge0 \mid X^{(1)}=1\bigr)=\mathbb P(X^{(2)}\le t)=t-1$. Hence
\begin{align*}
  \mathbb P\bigl(\tilde Y\in[q^{\mathrm{lo}}(X) -t ,\,q^{\mathrm{up}}(X) + t]\bigr)
  =\tfrac12\cdot 1 + \tfrac12\,(t-1)=\tfrac{t}{2},
  \qquad t\in[1,2],
\end{align*}
which implies $\tilde\tau^{\mathrm{m}}=1.6>0$. 
Conditional on $X^{(1)}=0$ this
inflates the prediction interval to the full support,
$\mathcal{O}_{\mathrm{e}}=\psi(1-1.6,\,9+1.6)=\psi(-0.6,\,10.6)=[Y^{\min},Y^{\max}]\;\supset\;[1,9]=\mathcal{O}_0$.
On the sub-population with $X^{(1)}=1$ this likewise widens the interval, but more mildly. The snapped oracle is
\[
  \mathcal{O}_0=\psi\bigl(-8-X^{(2)},\,-X^{(2)}\bigr)=[Y^{\min},\,C^{\mathrm L}]=[Y^{\min},0],
\]
which simply reports that the outcome lies below the detection limit. Expanding by $\tilde\tau^{\mathrm{m}}=1.6$,
\[
  \mathcal{O}_{\mathrm{e}}=\psi\bigl(-9.6-X^{(2)},\,1.6-X^{(2)}\bigr),
\]
whose upper endpoint $1.6-X^{(2)}$ rises above $C^{\mathrm L}=0$ precisely for the shallower censored points with
$X^{(2)}<1.6$. There $\mathcal{O}_{\mathrm{e}}=[Y^{\min},\,1.6-X^{(2)}]\supset [Y^{\min},0]=\mathcal{O}_0$, spuriously
extending the interval past the detection limit into the observable range; for the deeper points $X^{(2)}\ge1.6$ the
expansion is too small to reach $C^{\mathrm L}$ and $\mathcal{O}_{\mathrm{e}}=\mathcal{O}_0=[Y^{\min},0]$.
Combining the two cases, we see that $\mathcal{O}_0\subseteq\mathcal{O}_{\mathrm{e}}$ almost surely, with strict inclusion on a set of
positive probability.
\end{example}

Taken together, these examples underscore how difficult $\mathcal{O}_{\mathrm{e}}$ is to interpret. Unlike
$\mathcal{O}_0$, which is the intuitive snapping $\psi$ of the equal-tailed latent oracle
$[q^{\mathrm{lo}},q^{\mathrm{up}}]$, the expanded oracle $\mathcal{O}_{\mathrm{e}}$ applies $\psi$ to a globally shifted band
$[q^{\mathrm{lo}}-\tilde\tau^{\mathrm{m}},\,q^{\mathrm{up}}+\tilde\tau^{\mathrm{m}}]$ whose shift is calibrated against the
wrong target, the censored $\tilde{Y}$. As the examples show, $\tilde\tau^{\mathrm{m}}$ has no fixed
sign and $\mathcal{O}_{\mathrm{e}}$ bears no fixed relationship to $\mathcal{O}_0$, lying strictly inside it in
Example~\ref{ex:ecqr-neg} and strictly outside it in Example~\ref{ex:ecqr-pos}. Example~\ref{ex:ecqr-pos} makes the resulting
inefficiency of $\mathcal{O}_{\mathrm{e}}$ clear: $\mathcal{O}_0$ already attains at least $1-\alpha$ latent coverage on both
sub-populations, yet $\mathcal{O}_{\mathrm{e}}$ inflates the prediction intervals in \emph{both}. The
expanded oracle is therefore uniformly less informative than $\mathcal{O}_0$.

The following proposition characterizes precisely the marginal and conditional coverage of the expanded oracle $\mathcal{O}_{\mathrm{e}}$, as well as its relation to  the marginal oracle $\mathcal{O}_{\mathrm{m}}$.
The result shows that $\mathcal{O}_{\mathrm{e}}(X)$ is never tighter than $\mathcal{O}_{\mathrm{m}}(X)$, and marginally over-covers with a slack term that can be interpreted as
the population analogue of $p_{\mathrm{slack}}$ in Theorem~\ref{thm:ecqr-validity}.
For $t \in \mathbb{R}$, define
\begin{align}\label{eq:oracle-slack}
\begin{split}
    p_{\mathrm{slack}}^\star(t)
    & := \mathbb P\bigl(q^{\mathrm{lo}}(X)-t\le q^{\mathrm{up}}(X)+t<C^{\mathrm L},\,Y \leq C^{\mathrm L}\bigr) \\
    & \qquad + \mathbb P\bigl(q^{\mathrm{up}}(X)+t\ge q^{\mathrm{lo}}(X)-t>C^{\mathrm R},\,Y \geq C^{\mathrm R}\bigr).
  \end{split}
\end{align}

\begin{proposition}\label{prop:ecqr-oracle}
With no distributional assumptions:
\begin{enumerate}[label=(\roman*)]
\item $\tau^{\mathrm{m}}\le\tilde\tau^{\mathrm{m}}$, hence $\mathcal O_{\mathrm{m}}(X)\subseteq\mathcal O_{\mathrm{e}}(X)$, almost surely.
\end{enumerate}
Under Assumption~\ref{ass:regularity}:
\begin{enumerate}[label=(\roman*)]\setcounter{enumi}{1}
\item $\mathbb P(Y\in\mathcal O_{\mathrm{e}}(X)) = 1-\alpha+p_{\mathrm{slack}}^\star(\tilde\tau^{\mathrm{m}})\ge1-\alpha$;
\item the sign of $\tilde\tau^{\mathrm{m}}$ is determined by the censoring probabilities: writing
$\rho^{\mathrm L}:=\mathbb P(q^{\mathrm{up}}(X)<C^{\mathrm L})$ and $\rho^{\mathrm R}:=\mathbb P(q^{\mathrm{lo}}(X)>C^{\mathrm R})$,
then 
\[
  \tilde\tau^{\mathrm{m}}<0 \iff \tfrac\alpha2(\pi^{\mathrm L}+\pi^{\mathrm R}) > (1-\tfrac\alpha2)(\rho^{\mathrm L}+\rho^{\mathrm R}),
\]
with equality giving $\tilde\tau^{\mathrm{m}}=0$; in particular $\tilde\tau^{\mathrm{m}}<0$ whenever $\rho^{\mathrm L}=\rho^{\mathrm R}=0$
and $\pi^{\mathrm L}+\pi^{\mathrm R}>0$. Accordingly, $\mathcal O_{\mathrm{e}}(X)\subseteq\mathcal O_0(X)$ if $\tilde\tau^{\mathrm{m}}\le0$
and $\mathcal O_{\mathrm{e}}(X)\supseteq\mathcal O_0(X)$ if $\tilde\tau^{\mathrm{m}}\ge0$; on
$\{C^{\mathrm L}<q^{\mathrm{lo}}(X)\le q^{\mathrm{up}}(X)<C^{\mathrm R}\}$ the conditional coverage
\[
  \mathbb P(Y\in\mathcal O_{\mathrm{e}}(X)\mid X)\le1-\alpha \iff \tilde\tau^{\mathrm{m}}\le0 \qquad (\text{strictly if } \tilde\tau^{\mathrm{m}}<0).
\]
\item $\mathcal O_{\mathrm{e}}=\mathcal O_{\mathrm{m}}$ iff $p_{\mathrm{slack}}^\star(\tau^{\mathrm{m}})=0$; and
$\mathcal O_{\mathrm{e}}=\mathcal O_0=\mathcal O_{\mathrm{m}}=\mathcal O_{\mathrm{uncens}}$ iff
Assumption~\ref{ass:observability} holds.
\end{enumerate}
\end{proposition}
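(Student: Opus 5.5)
The plan is to reduce every part to one pointwise identity linking the clipped oracle score $s^\star$ in~\eqref{eq:oracle-score}, the oracle CQR score $\tilde S^{\mathrm{CQR},\star}$ in~\eqref{eq:cqr-oracle-score}, and the slack $p^\star_{\mathrm{slack}}(t)$ in~\eqref{eq:oracle-slack}. This is the population analogue of Theorem~\ref{thm:ecqr-validity}.

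\emph{Step 1 (identity).} Fix $t$. By nestedness of $\phi^\star(x;\cdot)$ and the definition of $s^\star$ as an infimum, $s^\star(X,Y)\le t$ if and only if $Y\in\phi^\star(X;t)$. I will check pointwise that $\tilde Y\in[q^{\mathrm{lo}}(X)-t,\,q^{\mathrm{up}}(X)+t]$ implies $Y\in\psi(q^{\mathrm{lo}}(X)-t,\,q^{\mathrm{up}}(X)+t)$. I will also check that the reverse implication fails exactly on the two events in~\eqref{eq:oracle-slack}: there the raw interval is non-empty and lies strictly beyond a threshold on the side where $Y$ is censored. These are the same cases as in the proof of Theorem~\ref{thm:ecqr-validity}, now with $q^{\mathrm{lo}},q^{\mathrm{up}}$ fixed. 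Taking expectations gives
\[
\mathbb P\bigl(s^\star(X,Y)\le t\bigr)=\mathbb P\bigl(\tilde S^{\mathrm{CQR},\star}\le t\bigr)+p^\star_{\mathrm{slack}}(t)\ \ge\ \mathbb P\bigl(\tilde S^{\mathrm{CQR},\star}\le t\bigr)\quad\text{for all } t .
\]
Part (i) then follows with no assumptions. The $(1-\alpha)$-quantile of the larger distribution function is no larger, so $\tau^{\mathrm m}\le\tilde\tau^{\mathrm m}$. Nestedness of $\phi^\star$ then gives $\mathcal O_{\mathrm m}\subseteq\mathcal O_{\mathrm e}$.

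\emph{Step 2 (continuity; part (ii)).} Under Assumption~\ref{ass:regularity} I will show that the distribution function of $\tilde S^{\mathrm{CQR},\star}$ is continuous, and strictly increasing near its $(1-\alpha)$-quantile. The only possible atoms come from the point masses of $\tilde Y$ at $C^{\mathrm L}$ and $C^{\mathrm R}$. At those atoms the score equals $\max\{q^{\mathrm{lo}}(X)-C^{\mathrm L},\,C^{\mathrm L}-q^{\mathrm{up}}(X)\}$ (and symmetrically on the right). This has no atoms, because $q^{\mathrm{lo}}(X)$ and $q^{\mathrm{up}}(X)$ are continuous. On the uncensored part, the positive bounded conditional density gives continuity and strict increase. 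Hence $\mathbb P(\tilde S^{\mathrm{CQR},\star}\le\tilde\tau^{\mathrm m})=1-\alpha$, and Step~1 at $t=\tilde\tau^{\mathrm m}$ yields (ii). The same reasoning applies to $s^\star$, so $\mathbb P(s^\star\le\tau^{\mathrm m})=1-\alpha$.

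\emph{Step 3 (sign; part (iii)).} Compute $\mathbb P(\tilde S^{\mathrm{CQR},\star}\le0\mid X)=\mathbb P(\tilde Y\in[q^{\mathrm{lo}},q^{\mathrm{up}}]\mid X)$ case by case:
\begin{itemize}
\item it equals $1-\alpha$ when the oracle interval lies strictly inside $(C^{\mathrm L},C^{\mathrm R})$;
\item it gains $\alpha/2$ for each side where the interval crosses a threshold while still overlapping $[C^{\mathrm L},C^{\mathrm R}]$;
\item it equals $0$ when $q^{\mathrm{up}}<C^{\mathrm L}$ or $q^{\mathrm{lo}}>C^{\mathrm R}$, since $\tilde Y$ then cannot fall in the interval.
\end{itemize}
In the last case the generic formula would have given $1-\alpha+\alpha/2$, so the deficit there is $1-\alpha/2$. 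Averaging over $X$ gives
\[
\mathbb P(\tilde S^{\mathrm{CQR},\star}\le 0)=1-\alpha+\tfrac\alpha2(\pi^{\mathrm L}+\pi^{\mathrm R})-(1-\tfrac\alpha2)(\rho^{\mathrm L}+\rho^{\mathrm R}).
\]
By the continuity and strict monotonicity of Step~2, $\tilde\tau^{\mathrm m}<0$, $=0$, or $>0$ exactly as this quantity exceeds, equals, or falls below $1-\alpha$. The inclusions relative to $\mathcal O_0=\phi^\star(\cdot;0)$ then follow from nestedness. On the interior event, $\phi^\star(x;t)$ is $[q^{\mathrm{lo}}-t,\,q^{\mathrm{up}}+t]$ possibly snapped outward. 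Its conditional coverage is strictly increasing in $t$ by the positive density and equals $1-\alpha$ at $t=0$, which gives the stated equivalence.

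\emph{Step 4 (part (iv)).} Evaluate the Step~1 identity at $t=\tau^{\mathrm m}$, using $\mathbb P(s^\star\le\tau^{\mathrm m})=1-\alpha$ from Step~2:
\[
\mathbb P(\tilde S^{\mathrm{CQR},\star}\le\tau^{\mathrm m})=1-\alpha-p^\star_{\mathrm{slack}}(\tau^{\mathrm m}).
\]
By strict monotonicity, $\tilde\tau^{\mathrm m}=\tau^{\mathrm m}$ if and only if the slack vanishes. The band equality $\mathcal O_{\mathrm e}=\mathcal O_{\mathrm m}$ then follows from strict nestedness of $\phi^\star$ in $t$ on a positive-probability interior set, which Assumption~\ref{ass:regularity} provides.

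For the final claim: under Assumption~\ref{ass:observability}, $\pi=\rho=0$, so $\tilde\tau^{\mathrm m}=0$ and everything coincides, using Proposition~\ref{prop:oracle-coverage}(iv). Conversely, $\mathcal O_0=\mathcal O_{\mathrm m}$ forces Assumption~\ref{ass:observability} by that same proposition.

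The main obstacle is Step~2 and its use in Step~4. I need to justify carefully that the score distributions have no atoms and are strictly increasing at the relevant quantiles despite the censoring atoms. I would handle this by conditioning on $X$ and exploiting the continuity of $q^{\mathrm{lo}}(X)$ and $q^{\mathrm{up}}(X)$ from Assumption~\ref{ass:regularity}.
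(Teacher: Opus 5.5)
Your route is the paper's own. It rests on the pointwise identity $\I\{Y\in\phi^\star(X;t)\}=\I\{\tilde Y\in[q^{\mathrm{lo}}(X)-t,\,q^{\mathrm{up}}(X)+t]\}+\I\{E_t\}$, with $E_t$ the slack event of~\eqref{eq:oracle-slack}. Equivalently $F^\star=G^*+p^\star_{\mathrm{slack}}$, where $F^\star(t):=\mathbb P(s^\star(X,Y)\le t)$ and $G^*(t):=\mathbb P(\tilde S^{\mathrm{CQR},\star}\le t)$. From there you use quantile comparison for (i), $G^*(\tilde\tau^{\mathrm m})=1-\alpha$ for (ii), the case computation of $G^*(0)$ for (iii), and evaluation at $\tau^{\mathrm m}$ for (iv). Your no-atom argument for $G^*$ is correct and more explicit than the paper's. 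Continuity alone already gives (ii) and the implications $G^*(0)>1-\alpha\Rightarrow\tilde\tau^{\mathrm m}<0$ and $G^*(0)<1-\alpha\Rightarrow\tilde\tau^{\mathrm m}>0$.

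\textbf{The gap.} What fails is the claimed strict increase of $G^*$ near its $(1-\alpha)$-quantile. You need it for ``equality gives $\tilde\tau^{\mathrm m}=0$'', and hence for the forward direction of the displayed equivalence. A positive conditional density makes $G^*$ strictly increasing only on the range of values the uncensored scores actually take, and Assumption~\ref{ass:regularity} does not force that range to reach $0$.

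Concretely, assume $Y^{\min}<C^{\mathrm L}$, $C^{\mathrm R}<Y^{\max}$ and $\epsilon>0$ small. With probability $1-\alpha$ let $q^{\mathrm{lo}}(X)\le C^{\mathrm L}-\epsilon$ and $q^{\mathrm{up}}(X)\ge C^{\mathrm R}+\epsilon$; otherwise let $q^{\mathrm{up}}(X)\le C^{\mathrm L}-\epsilon$. Let the quantiles vary atomlessly within each group.
\begin{itemize}
\item Every score in the first group is $\le-\epsilon$, and every score in the second is $\ge\epsilon$.
\item Hence $G^*\equiv1-\alpha$ on $[-\epsilon,\epsilon)$, so $\tilde\tau^{\mathrm m}\le-\epsilon<0$.
\item Yet $\pi^{\mathrm L}=1$, $\pi^{\mathrm R}=1-\alpha$, $\rho^{\mathrm L}=\alpha$, $\rho^{\mathrm R}=0$ make both sides of the displayed inequality equal to $\alpha-\alpha^2/2$.
\end{itemize}
The paper's proof asserts the same strict monotonicity without argument, so this gap is inherited rather than introduced by you. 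It is closed by an extra condition such as Assumption~\ref{ass:non-extreme-cens}: uncensored $Y$ just above an interior $q^{\mathrm{lo}}(x)$ then yields scores in $(-h,0]$ with positive probability for every $h>0$. That condition would also supply the positive-probability interior set you invoke in Step~4, which Assumption~\ref{ass:regularity} alone does not guarantee.

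\textbf{A related caution.} Your remark that ``the same reasoning applies to $s^\star$'' does not hold as stated. The censored value $\max\{q^{\mathrm{lo}}(X)-C^{\mathrm L},\tfrac12(q^{\mathrm{lo}}(X)-q^{\mathrm{up}}(X))\}$ involves the half-width. The half-width can have atoms even when each quantile is atomless, for example with constant-width oracle intervals. So $\mathbb P(s^\star\le\tau^{\mathrm m})=1-\alpha$ is not automatic under Assumption~\ref{ass:regularity}. In Step~4 you need it only for the direction $\tilde\tau^{\mathrm m}=\tau^{\mathrm m}\Rightarrow p^\star_{\mathrm{slack}}(\tau^{\mathrm m})=0$; the converse uses just $F^\star(\tau^{\mathrm m})\ge1-\alpha$ and is fine. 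The paper imports this fact from Proposition~\ref{prop:oracle-coverage}(iii), whose proof has the same weakness.
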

\noindent The proof is deferred to Appendix~\ref{sec:appendix-proofs}.

Finally, we establish the asymptotic consistency of eCQR with $\mathcal O_{\mathrm{e}}$, under
assumptions close to those of Theorem~\ref{thm:ccqr-oracle}. The only addition is a continuity
requirement at the oracle calibration level. In Theorem~\ref{thm:ccqr-oracle} the analogous property is
supplied by Assumption~\ref{ass:regularity}, but that assumption no longer suffices here: the oracle
score $\tilde S^{\mathrm{CQR},\star}$ in~\eqref{eq:cqr-oracle-score} is a function of the censored
outcome $\tilde Y$, and therefore inherits the point masses that $\tilde Y$ may carry at $C^{\mathrm L}$
and $C^{\mathrm R}$. We thus assume directly that $\tilde\tau^{\mathrm{m}}(\alpha)$ is the unique
$(1-\alpha)$-quantile of $\tilde S^{\mathrm{CQR},\star}$ and that its law places no point mass at
$\tilde\tau^{\mathrm{m}}(\alpha)$. The condition is mild; for example, it holds both in Example~\ref{ex:ecqr-neg} 
(where the score distribution has a point mass at $-2$ but $\tilde\tau^{\mathrm{m}}=-1$) 
and in Example~\ref{ex:ecqr-pos} (where the score distribution has no point masses).

\begin{theorem}\label{thm:ecqr-oracle}
Assume $(X_i, Y_i)_{i=1}^{n+1}$ are i.i.d.
Suppose Assumptions~\ref{ass:regularity} and~\ref{ass:full-consistency} hold, and that
$\tilde\tau^{\mathrm{m}}(\alpha)$ is the unique $(1-\alpha)$-quantile of $\tilde S^{\mathrm{CQR},\star}$
and a continuity point of its law. 
Let $\tilde\tau_n(\alpha)$ denote the CQR threshold at sample size $n$, 
with base predictors $\hat{f}_n^{\mathrm{lo}}, \hat{f}_n^{\mathrm{up}}$. 
Then:
\begin{enumerate}
\item[(i)] $\tilde\tau_n(\alpha)\xrightarrow{p}\tilde\tau^{\mathrm{m}}(\alpha)$ and
$\widehat{\mathcal C}^{\,\mathrm{eCQR}}_n\to\mathcal O_{\mathrm{e}}$;
\item[(ii)] $\mathbb P\bigl(Y\in\widehat{\mathcal C}^{\,\mathrm{eCQR}}_n(X)\bigr)
\to\mathbb P\bigl(Y\in\mathcal O_{\mathrm{e}}(X)\bigr)=1-\alpha+p_{\mathrm{slack}}^\star(\tilde\tau^{\mathrm{m}})$;
\item[(iii)] $\widehat{\mathcal C}^{\,\mathrm{eCQR}}_n$ has asymptotic conditional coverage matching
$\mathcal O_{\mathrm{e}}$ (Definition~\ref{def:cond-cov}).
\end{enumerate}
\end{theorem}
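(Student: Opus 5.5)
The plan mirrors the argument for Theorem~\ref{thm:ccqr-oracle}, with the oracle score $s^\star$ replaced by the oracle CQR score $\tilde S^{\mathrm{CQR},\star}$ evaluated at the censored outcome. For (i), write $\tilde S_{n,i} := s^{\mathrm{CQR}}_n(X_i,\tilde Y_i)$ for the scores computed with $\hat f_n$, and note that
\[
\bigl|\tilde S_{n,i} - s^{\mathrm{CQR},\star}(X_i,\tilde Y_i)\bigr| \le \max\bigl\{|\hat f^{\mathrm{lo}}_n(X_i)-q^{\mathrm{lo}}(X_i)|,\ |\hat f^{\mathrm{up}}_n(X_i)-q^{\mathrm{up}}(X_i)|\bigr\} =: \Delta_{n,i},
\]
since $\max$ is $1$-Lipschitz. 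By Assumption~\ref{ass:full-consistency} and Markov's inequality, $\mathbb P(\Delta_{n}>\epsilon)\to0$ for every $\epsilon>0$. Let $G$ be the cdf of $\tilde S^{\mathrm{CQR},\star}$ and $\hat G_n$ the empirical cdf of the $\tilde S_{n,i}$. Then
\[
\hat G_n(t) \le G^\star_n(t+\epsilon) + \tfrac1n\#\{i:\Delta_{n,i}>\epsilon\},
\]
where $G^\star_n$ is the empirical cdf of the oracle scores, and there is a symmetric lower bound. The second term is $o_p(1)$ by the law of large numbers, because the $\Delta_{n,i}$ are i.i.d.\ given the independently trained $\hat f_n$. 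By Glivenko--Cantelli, $G^\star_n\to G$ uniformly.

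Uniqueness of the $(1-\alpha)$-quantile gives $G(\tilde\tau^{\mathrm m}-\epsilon)<1-\alpha$. Continuity at $\tilde\tau^{\mathrm m}$ then gives $G(\tilde\tau^{\mathrm m}+\epsilon)>1-\alpha$: uniqueness means $G(\tilde\tau^{\mathrm m}+\epsilon) > G(\tilde\tau^{\mathrm m}) \ge 1-\alpha$. Since $\lceil (n+1)(1-\alpha)\rceil/n\to1-\alpha$ and the appended value $R$ is irrelevant asymptotically, these two inequalities sandwich $\tilde\tau_n(\alpha)$ in $[\tilde\tau^{\mathrm m}-2\epsilon,\tilde\tau^{\mathrm m}+2\epsilon]$ with probability tending to one. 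This is the main obstacle. The point masses of $\tilde Y$ at $C^{\mathrm L}, C^{\mathrm R}$ prevent us from importing continuity from Assumption~\ref{ass:regularity}, so we must rely precisely on the extra hypotheses, and care is needed to use strict increase of $G$ on both sides.

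For the band convergence, set $\tilde L_n = \hat f_n^{\mathrm{lo}}(X)-\tilde\tau_n$ and $\tilde U_n=\hat f^{\mathrm{up}}_n(X)+\tilde\tau_n$. The previous step gives $|\tilde L_n-(q^{\mathrm{lo}}(X)-\tilde\tau^{\mathrm m})|\xrightarrow{p}0$, and likewise for the upper endpoint. Now use continuity of the snapping map. The map $\underline\psi$ is continuous except at $\ell = C^{\mathrm L}$, where it jumps to $Y^{\min}$; similarly $\overline\psi$ jumps at $u=C^{\mathrm R}$, and emptiness toggles at $\ell=u$. Hence $\lambda(\psi(\tilde L_n,\tilde U_n)\triangle\psi(\ell^\star,u^\star))\xrightarrow{p}0$ provided $\ell^\star=q^{\mathrm{lo}}(X)-\tilde\tau^{\mathrm m}$ and $u^\star=q^{\mathrm{up}}(X)+\tilde\tau^{\mathrm m}$ a.s.\ avoid the discontinuity points. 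This holds because $q^{\mathrm{lo}}(X)$ and $q^{\mathrm{up}}(X)$ have no atoms (Assumption~\ref{ass:regularity}(i)). The event $\ell^\star=u^\star$ happens only if $q^{\mathrm{up}}-q^{\mathrm{lo}} = -2\tilde\tau^{\mathrm m}$; I would handle it by noting that the resulting interval is then a single point or empty, so the symmetric difference has Lebesgue measure zero anyway. I would argue the convergence on the event $\{|\ell^\star - C^{\mathrm L}|>\eta,\ |u^\star-C^{\mathrm R}|>\eta\}$, whose probability tends to one as $\eta\downarrow0$, and there the snapped intervals converge in Lebesgue measure by the bounded support.

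For (ii), write $|\mathbb P(Y\in\widehat{\mathcal C}_n)-\mathbb P(Y\in\mathcal O_{\mathrm e})|\le \mathbb E\bigl[\int_{\widehat{\mathcal C}_n(X)\triangle\mathcal O_{\mathrm e}(X)} p_{Y\mid X}\bigr]\le M\,\mathbb E\,\lambda(\cdots)$. This vanishes by bounded convergence, since $\lambda\le R$. The identity $\mathbb{P}(Y\in\mathcal{O}_{\mathrm e}(X))=1-\alpha+p^\star_{\mathrm{slack}}(\tilde\tau^{\mathrm m})$ is Proposition~\ref{prop:ecqr-oracle}(ii). For (iii), Markov's inequality applied to $D_n(X):=\lambda(\widehat{\mathcal C}_n(X)\triangle\mathcal O_{\mathrm e}(X))$ gives $\epsilon_n\downarrow0$ with $\mathbb P(D_n(X)>\epsilon_n)\to0$, in probability over the calibration data. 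Take $\Lambda_n=\{x: D_n(x)\le\epsilon_n\}$. The density bound $M$ then yields $\sup_{x\in\Lambda_n}|\cdot|\le M\epsilon_n\to0$, exactly as in Theorem~\ref{thm:ccqr-oracle}(iii).
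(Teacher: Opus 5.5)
Your route is the same as the paper's: a Lipschitz bound on the CQR scores, then a rank/value sandwich of the empirical quantile using Glivenko--Cantelli and uniqueness/continuity at $\tilde\tau^{\mathrm m}$, then continuity of $\psi$ away from its jumps for the band, and finally the density bound and a Markov argument for (ii)--(iii). Those steps are sound. Your fixed-$\epsilon$ sandwich is just a reparametrisation of the paper's $\epsilon_n=\sqrt{a_n}$ version.

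The step that fails is your treatment of the diagonal $\{\ell^\star=u^\star\}$. You claim $\psi(\ell^\star,u^\star)$ is then a point or empty, but this holds only when the common value $p$ lies in $(C^{\mathrm L},C^{\mathrm R})$.

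For $p<C^{\mathrm L}$, \eqref{eq:psi-def} gives $\psi(p,p)=[Y^{\min},C^{\mathrm L}]$, and symmetrically $[C^{\mathrm R},Y^{\max}]$ for $p>C^{\mathrm R}$. Meanwhile $\psi(\ell,u)=\emptyset$ for $\ell>u$. So in the censored region $\psi$ jumps by $C^{\mathrm L}-Y^{\min}$ across the diagonal.

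Assumption~\ref{ass:regularity}(i) rules out atoms of $q^{\mathrm{lo}}(X)$ and $q^{\mathrm{up}}(X)$ separately, but not of $q^{\mathrm{up}}(X)-q^{\mathrm{lo}}(X)$. Hence the event $\{q^{\mathrm{up}}(X)-q^{\mathrm{lo}}(X)=-2\tilde\tau^{\mathrm m},\ q^{\mathrm{up}}(X)+\tilde\tau^{\mathrm m}<C^{\mathrm L}\}$ can have positive probability while every hypothesis of the theorem holds. An example is a small, deeply left-censored subpopulation with exactly that quantile width, where $\tilde\tau^{\mathrm m}<0$ is produced by shallow censoring elsewhere.

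On that event $\mathcal O_{\mathrm e}(X)=[Y^{\min},C^{\mathrm L}]$. Yet even with $\hat f_n=q$, the eCQR band is empty whenever $\tilde\tau_n<\tilde\tau^{\mathrm m}$. Because $\tilde\tau^{\mathrm m}$ is a continuity point of the score law, this has probability tending to $1/2$. So no argument can close this case.

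What is missing is an extra condition such as $\mathbb P(q^{\mathrm{up}}(X)-q^{\mathrm{lo}}(X)=-2\tilde\tau^{\mathrm m})=0$. With it, intersecting your good event with $\{|\ell^\star-u^\star|>\eta\}$ finishes the proof. The condition holds automatically when $\tilde\tau^{\mathrm m}>-(1-\alpha)/(2M)$, in particular when $\tilde\tau^{\mathrm m}\ge0$, because the density bound forces $q^{\mathrm{up}}-q^{\mathrm{lo}}\ge(1-\alpha)/M$.

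The paper's proof has the same blind spot. The endpoint-gap bound it imports from Theorem~\ref{thm:ccqr-oracle}(i-b) treats $\psi(\ell,u)$ as $[\underline\psi(\ell),\overline\psi(u)]$. It never accounts for $\psi(\ell,u)=\emptyset$ when $\ell>u$, so it implicitly relies on the same condition.
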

\noindent The proof is deferred to Appendix~\ref{sec:appendix-proofs}.

% \subsection{Summary}

% Table~\ref{tab:method-summary} summarizes these results and provides an overall comparison of the four methods.

\subsection{Analysis of peCQR} 
We show that peCQR is equivalent to ClipCQR applied with the \emph{projected} base predictor
(Lemma~\ref{lem:pecqr-is-projected-clipcqr}). Interestingly, Theorem~\ref{thm:pecqr-oracle} shows this projection suffices to make peCQR asymptotically target the snapped conditional
oracle $\mathcal{O}_0$, similar to ClipCQR+, rather than a contracted marginal oracle like ClipCQR and eCQR.

\begin{lemma}\label{lem:pecqr-is-projected-clipcqr}
With the same data, level $\alpha$, and no jittering, peCQR (Algorithm~\ref{alg:pecqr}) produces exactly
the calibration scores, threshold, and prediction band of ClipCQR (Algorithm~\ref{alg:clipcqr}) applied to
the projected base predictor $\Pi\hat f$.
\end{lemma}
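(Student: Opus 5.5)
The plan is to compare the two procedures score by score: ClipCQR with score~\eqref{eq:score-formula} evaluated at the projected base $\Pi\hat f$, against peCQR (Algorithm~\ref{alg:pecqr}), which runs eCQR on $\Pi\hat f$ using the CQR score~\eqref{eq:cqr-score-closed}. Both procedures compute the threshold as the same order statistic of $\{S_i\}\cup\{R\}$, and both output a member of the same family $\phi(x;\tau)=\psi(\phi^{\mathrm{CQR}}(x;\tau))$ built from $\Pi\hat f$. Equality of thresholds and bands therefore follows once the calibration scores agree pointwise. It thus suffices to prove
\[
  s_{\Pi\hat f}(x,\tilde y) = s^{\mathrm{CQR}}_{\Pi\hat f}(x,\tilde y)
  \quad \text{for all } x \text{ and all } \tilde y\in[C^{\mathrm L},C^{\mathrm R}].
\]
Here the subscript indicates that $\hat f$ is replaced by $\Pi\hat f$, and $\tilde y$ ranges over the clipped observations.

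I will verify this by splitting on the three cases of Lemma~\ref{lemma:score-formula}. Write $\ell=\Pi\hat f^{\mathrm{lo}}(x)$ and $u=\Pi\hat f^{\mathrm{up}}(x)$. Since $\Pi$ is monotone, $C^{\mathrm L}\le \ell\le u\le C^{\mathrm R}$.

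\emph{Interior case} $C^{\mathrm L}<\tilde y<C^{\mathrm R}$. The two formulas coincide by definition.

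\emph{Left-censored case} $\tilde y=C^{\mathrm L}$. The CQR score is $\max\{\ell-C^{\mathrm L},\,C^{\mathrm L}-u\}=\ell-C^{\mathrm L}$, because $C^{\mathrm L}-u\le 0\le \ell-C^{\mathrm L}$. The ClipCQR score is $\max\{\ell-C^{\mathrm L},\,\tfrac12(\ell-u)\}$, and this also equals $\ell-C^{\mathrm L}$, since $\tfrac12(\ell-u)\le 0\le \ell-C^{\mathrm L}$.

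\emph{Right-censored case} $\tilde y=C^{\mathrm R}$. This is symmetric: both scores equal $C^{\mathrm R}-u$.

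The cases are indexed by $\tilde y$. Because the score in~\eqref{eq:score-formula} is only ever evaluated at observed clipped outcomes, the boundary values $\tilde y\in\{C^{\mathrm L},C^{\mathrm R}\}$ fall into the censored branches, as in Algorithm~\ref{alg:clipcqr}.

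With the scores identical, the multisets $\{S_i\}_{i=1}^n\cup\{R\}$ are identical, so the $\lceil(n+1)(1-\alpha)\rceil$-th smallest values agree, say $\hat\tau$. The peCQR output is $\psi(\phi^{\mathrm{CQR}}_{\Pi\hat f}(X_{n+1};\hat\tau))$ by Algorithm~\ref{alg:ecqr}. By~\eqref{eq:phi-def}, this is exactly the ClipCQR band $\phi_{\Pi\hat f}(X_{n+1};\hat\tau)$.

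The main obstacle is bookkeeping rather than mathematics, and it lies in two places. First, the no-jitter convention must match on both sides: peCQR projects after jittering and then runs eCQR with jitter scale $0$, and with $\delta=0$ this is simply the projected base. Second, the degenerate-interval convention in $\psi$ must be consistent, including the case $\hat\tau<\tfrac12(\ell-u)$, where both bands are empty. I expect both checks to be immediate from the definitions~\eqref{eq:psi-def} and~\eqref{eq:phi-def}.
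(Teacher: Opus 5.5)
Your proposal is correct and follows essentially the same route as the paper's proof. Both reduce the claim to pointwise agreement of the ClipCQR and CQR scores under the projected base $\Pi\hat f$, and both check this case by case using $C^{\mathrm L}\le \Pi\hat f^{\mathrm{lo}}\le \Pi\hat f^{\mathrm{up}}\le C^{\mathrm R}$: in the censored branches the half-width term and the opposite-side term are nonpositive, while the distance to the threshold is nonnegative. Restricting the identity to clipped values $\tilde y\in[C^{\mathrm L},C^{\mathrm R}]$ is slightly more careful than the paper's ``for any $(x,\tilde y)$'', and it is all that is needed.
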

\noindent The proof is deferred to Appendix~\ref{sec:appendix-proofs}.

\begin{theorem}\label{thm:pecqr-oracle}
Assume $(X_i, Y_i)_{i=1}^{n+1}$ are i.i.d. 
Let $\tilde\tau^{\mathrm{pe}}_n(\alpha)$ denote the peCQR threshold at sample size $n$, 
with base predictors $\hat{f}_n^{\mathrm{lo}}, \hat{f}_n^{\mathrm{up}}$. 
Under Assumptions~\ref{ass:regularity},~\ref{ass:consistency},
and~\ref{ass:base-regularity}:
\begin{enumerate}
\item[(i)] $\tilde\tau^{\mathrm{pe}}_n(\alpha)\xrightarrow{p}0$ and $\widehat{\mathcal C}^{\,\mathrm{peCQR}}_n\to\mathcal{O}_0$;
\item[(ii)] $\mathbb{P}(Y \in \widehat{\mathcal{C}}^{\,\mathrm{peCQR}}_n(X)) \to \mathbb{P}(Y \in \mathcal{O}_0)$;
\item[(iii)] $\widehat{\mathcal{C}}^{\,\mathrm{peCQR}}_n$ has asymptotic conditional coverage matching $\mathcal{O}_0$
(Definition~\ref{def:cond-cov}).
\end{enumerate}
\end{theorem}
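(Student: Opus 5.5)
The plan is to reduce everything to the ClipCQR analysis, via Lemma~\ref{lem:pecqr-is-projected-clipcqr}. By that lemma, peCQR is ClipCQR run with the projected base $\Pi\hat f_n$. So $\tilde\tau^{\mathrm{pe}}_n(\alpha)$ is the $\lceil (n+1)(1-\alpha)\rceil$-th smallest of the scores $S^{\Pi}_i := s_{\Pi\hat f_n}(X_i,\tilde Y_i)$, augmented by $R$, where $s_{\Pi\hat f_n}$ denotes the score~\eqref{eq:score-formula} computed from $\Pi\hat f_n$.

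The population analogue is the oracle score $s^\star_\Pi$, obtained from $\Pi q^{\mathrm{lo}}$ and $\Pi q^{\mathrm{up}}$. I would first compute its law directly. When $C^{\mathrm L}<y<C^{\mathrm R}$, the score is $\max\{\Pi q^{\mathrm{lo}}(x)-y,\,y-\Pi q^{\mathrm{up}}(x)\}$. For $y\le C^{\mathrm L}$ it equals $\max\{\Pi q^{\mathrm{lo}}(x)-C^{\mathrm L},\tfrac12(\Pi q^{\mathrm{lo}}-\Pi q^{\mathrm{up}})\}$. This is $\le 0$ always, and $=0$ exactly when $q^{\mathrm{lo}}(x)\le C^{\mathrm L}$ and $q^{\mathrm{up}}(x)>C^{\mathrm L}$.

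From this one checks that $\mathbb P(s^\star_\Pi(X,Y)\le 0)\ge 1-\alpha$. Moreover, for $t<0$ one has $\mathbb P(s^\star_\Pi\le t\mid X)<1-\alpha$ strictly for a.e.\ $X$. The reason is that the family $\psi(\Pi q^{\mathrm{lo}}-t,\Pi q^{\mathrm{up}}+t)$ with $t<0$ is strictly contained in $\mathcal O_0(X)$ in a way that loses positive conditional mass. The positive density in Assumption~\ref{ass:regularity}(ii) ensures this loss is positive. Hence $0$ is the unique $(1-\alpha)$-level threshold in the relevant one-sided sense: $F^\star(t)<1-\alpha$ for $t<0$, and $F^\star(t)\ge 1-\alpha$ for $t\ge 0$. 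Also $F^\star$ is continuous and strictly increasing on $(0,\epsilon)$, by the bounded positive density.

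The second step is to transfer this to the empirical scores. Assumption~\ref{ass:consistency} gives $\mathbb E|\Pi\hat f_n - \Pi q|\to 0$. The score map is 1-Lipschitz in the base endpoints *except* at the case boundaries of~\eqref{eq:score-formula}. Those boundaries are in $y$, not in the base, so the map is actually Lipschitz in $(\Pi\hat f^{\mathrm{lo}},\Pi\hat f^{\mathrm{up}})$ for fixed $y$. Hence $|S^\Pi - S^{\star}_\Pi|\to 0$ in $L^1$.

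This alone is not enough near $0$, because the oracle score has an atom at $0$. The upper tail is handled by the atom: the empirical mass at or below a small $\epsilon>0$ exceeds $1-\alpha$ w.h.p. The lower tail needs the empirical CDF at $-\epsilon$ to stay below $1-\alpha$. That follows from $L^1$ closeness of the scores plus the strict gap $F^\star(-\epsilon/2)<1-\alpha$. A standard sandwich argument for sample quantiles of i.i.d.\ scores then applies: the scores are i.i.d.\ given $\hat f_n$, and Hoeffding/DKW plus Markov on $|S^\Pi-S^\star_\Pi|$ complete it. The conclusion is $\tilde\tau^{\mathrm{pe}}_n\xrightarrow{p}0$.

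For band convergence, note $\widehat{\mathcal C}^{\mathrm{peCQR}}_n(X)=\psi(\Pi\hat f_n^{\mathrm{lo}}-\tilde\tau_n,\Pi\hat f_n^{\mathrm{up}}+\tilde\tau_n)$, whereas $\mathcal O_0(X)=\psi(\Pi q^{\mathrm{lo}},\Pi q^{\mathrm{up}})$; indeed $\psi\circ\Pi$ has the same snapping as $\psi$. The map $\psi$ is Lipschitz except at the jump points $C^{\mathrm L},C^{\mathrm R}$, where an endpoint leaps to $Y^{\min}$ or $Y^{\max}$.

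\textbf{This is the main obstacle.} Projection puts mass exactly at $C^{\mathrm L}$, and $\underline\psi(C^{\mathrm L})=Y^{\min}$. A small positive $\tilde\tau_n$ then snaps it correctly, but a negative one does not. I would split on the event $\{\hat f^{\mathrm{lo}}_n(X)\le C^{\mathrm L}-\eta\}$. There $\Pi\hat f^{\mathrm{lo}}_n=C^{\mathrm L}$, and the question is whether $q^{\mathrm{lo}}\le C^{\mathrm L}$. Assumption~\ref{ass:regularity}(i) makes $q^{\mathrm{lo}}(X)$ atomless, so disagreement has vanishing probability by consistency.

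The remaining bad events are of two kinds. The first is $|\hat f_n^{\mathrm{lo}}-C^{\mathrm L}|\le\eta$, controlled by Assumption~\ref{ass:base-regularity}. The second is $C^{\mathrm L}<\Pi\hat f^{\mathrm{lo}}_n<C^{\mathrm L}+|\tilde\tau_n|$, which has small probability for the same reason. Letting $n\to\infty$ and then $\eta\downarrow 0$ gives (i).

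Parts (ii) and (iii) then follow as in Theorem~\ref{thm:ccqr-oracle}. Conditional coverage differs by at most $M\lambda(\widehat{\mathcal C}_n(x)\triangle\mathcal O_0(x))$, using the density bound $M$. Markov's inequality on the expected symmetric difference yields sets $\Lambda_n$ with $\mathbb P(X\in\Lambda_n)\to 1$ on which the supremum vanishes. Integrating gives the marginal statement.
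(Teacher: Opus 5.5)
There is a genuine gap in the band convergence step of (i). Your reduction to ClipCQR with the projected base and your sandwich argument for $\tilde\tau^{\mathrm{pe}}_n\xrightarrow{p}0$ are fine. The problem is that when $\pi^{\mathrm L}>0$ (symmetrically $\pi^{\mathrm R}>0$) you need $\mathbb P(\tilde\tau^{\mathrm{pe}}_n<0)\to0$, and convergence in probability to $0$ does not give this.

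Consider the event $\{\hat f^{\mathrm{lo}}_n(X)\le C^{\mathrm L},\ q^{\mathrm{lo}}(X)\le C^{\mathrm L}\}$, whose probability tends to $\pi^{\mathrm L}$. On it, $\Pi\hat f^{\mathrm{lo}}_n(X)=C^{\mathrm L}$ exactly. So any $\tilde\tau^{\mathrm{pe}}_n<0$, however small, gives the lower endpoint $\min\{C^{\mathrm L}+|\tilde\tau^{\mathrm{pe}}_n|,\,C^{\mathrm R}\}$ (or an empty band) instead of $\underline\psi(\Pi q^{\mathrm{lo}}(X))=Y^{\min}$. The symmetric difference is then at least $C^{\mathrm L}-Y^{\min}$. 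Since the test point is independent of the calibration data,
$\mathbb E[\lambda(\widehat{\mathcal C}^{\,\mathrm{peCQR}}_n(X)\triangle\mathcal O_0(X))]\ge(C^{\mathrm L}-Y^{\min})\,\mathbb P(\tilde\tau^{\mathrm{pe}}_n<0)\,\mathbb P(\hat f^{\mathrm{lo}}_n(X)\le C^{\mathrm L},\,q^{\mathrm{lo}}(X)\le C^{\mathrm L})$,
and nothing in your argument makes the middle factor vanish.

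You note yourself that a negative threshold ``does not snap correctly,'' but your fixes do not reach this case. Your split on $\{\hat f^{\mathrm{lo}}_n\le C^{\mathrm L}-\eta\}$ only checks that $\{q^{\mathrm{lo}}\le C^{\mathrm L}\}$ agrees there. Your second bad event, $C^{\mathrm L}<\Pi\hat f^{\mathrm{lo}}_n<C^{\mathrm L}+|\tilde\tau^{\mathrm{pe}}_n|$, excludes through its strict left inequality exactly the atom $\{\Pi\hat f^{\mathrm{lo}}_n=C^{\mathrm L}\}$. Assumption~\ref{ass:base-regularity} cannot control that atom, since its mass is close to $\pi^{\mathrm L}$.

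The missing step is the paper's Lemma~\ref{lem:tau-nonneg}, built on Lemma~\ref{lem:score-deficit}. It hinges on a sign you have backwards. For $y\le C^{\mathrm L}$ the projected oracle score is $\max\{\Pi q^{\mathrm{lo}}(x)-C^{\mathrm L},\tfrac12(\Pi q^{\mathrm{lo}}(x)-\Pi q^{\mathrm{up}}(x))\}=\Pi q^{\mathrm{lo}}(x)-C^{\mathrm L}\ge0$, not $\le0$. Likewise the estimated score at a left-censored point is $\Pi\hat f^{\mathrm{lo}}_n(x)-C^{\mathrm L}\ge0$, and symmetrically at $C^{\mathrm R}$.

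Two consequences follow, and together they close the gap.
\begin{enumerate}
\item Negative oracle scores arise only from $Y\in(\Pi q^{\mathrm{lo}}(X),\Pi q^{\mathrm{up}}(X))$. This interval has conditional mass at most $1-\alpha$, and strictly less whenever $q^{\mathrm{lo}}(X)<C^{\mathrm L}$ or $q^{\mathrm{up}}(X)>C^{\mathrm R}$. Hence $\mathbb P(s^\star_\Pi(X,Y)<0)<1-\alpha$ when $\pi^{\mathrm L}+\pi^{\mathrm R}>0$. This strict gap at $0^-$ is something your statement ``$F^\star(t)<1-\alpha$ for each $t<0$'' does not provide.
\item Both scores are nonnegative at censored points, so a sign disagreement between estimated and oracle scores can occur only at interior points. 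There it forces $0<|s^\star_\Pi(X_i,Y_i)|\le|\Pi\hat f^{\mathrm{lo}}_n(X_i)-\Pi q^{\mathrm{lo}}(X_i)|+|\Pi\hat f^{\mathrm{up}}_n(X_i)-\Pi q^{\mathrm{up}}(X_i)|$. This is rare by Assumption~\ref{ass:consistency}, because interior oracle scores have no atom at $0$. $L^1$ closeness alone would not suffice here, precisely because the oracle score does have an atom at $0$ coming from the censored points.
\end{enumerate}
Combining these gives $\mathbb P(\tilde\tau^{\mathrm{pe}}_n<0)\to0$, which disposes of the atom term. When $\pi^{\mathrm L}=0$ that term is null anyway, since $\{\Pi q^{\mathrm{lo}}=C^{\mathrm L}\}$ has probability zero. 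With this added, your remaining bad-event bounds and your derivations of (ii) and (iii) go through.
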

\noindent The proof is deferred to Appendix~\ref{sec:appendix-proofs}.

\section{Relation to \cite{liu2025prediction}} \label{sec:app-relation-double}

\citet{liu2025prediction} study the more general problem of constructing conformal prediction
intervals for continuous outcomes under arbitrary double censoring. In their setting, one observes
data points $(X_i, Y_i^{\mathrm{L}}, Y_i^{\mathrm{U}})$ with $Y_i \in [Y_i^{\mathrm{L}}, Y_i^{\mathrm{U}}]
\subseteq \mathbb{R}$, assuming nothing about the censoring mechanism, and aims to predict $Y_{n+1}$
given $X_{n+1}$, assuming exchangeability of $\{(X_i, Y_i)\}_{i=1}^{n+1}$. Our clipped-outcome problem
is the special case in which the brackets come from clipping: $[Y_i^{\mathrm{L}},Y_i^{\mathrm{U}}] =
[Y_i,Y_i]$ when uncensored, $[Y_i^{\mathrm{L}},Y_i^{\mathrm{U}}] = [Y^{\min}, C^{\mathrm{L}}]$ when
left-censored, and $[Y_i^{\mathrm{L}},Y_i^{\mathrm{U}}] = [C^{\mathrm{R}}, Y^{\max}]$ when right-censored.

The \emph{LdPT} method proposed by \citet{liu2025prediction} applies CQR using the score function
\begin{align*}
    s_{\mathrm{LdPT}}(x, y^{\mathrm{L}}, y^{\mathrm{U}})
  & = \max\{\hat{f}^{\mathrm{lo}}(x) - y^{\mathrm{L}},\ y^{\mathrm{U}} - \hat{f}^{\mathrm{up}}(x)\} \\
  & = \begin{cases}
        \max\bigl\{\hat{f}^{\mathrm{lo}}(x) - Y^{\min},\, C^{\mathrm{L}} - \hat{f}^{\mathrm{up}}(x)\bigr\}, & \text{if } y \leq C^{\mathrm{L}}, \\
        \max\bigl\{\hat{f}^{\mathrm{lo}}(x) - y,\, y - \hat{f}^{\mathrm{up}}(x)\bigr\}, & \text{if } C^{\mathrm{L}} < y < C^{\mathrm{R}}, \\
        \max\bigl\{\hat{f}^{\mathrm{lo}}(x) - C^{\mathrm{R}},\, Y^{\max} - \hat{f}^{\mathrm{up}}(x)\bigr\}, & \text{if } y \geq C^{\mathrm{R}},
    \end{cases}
\end{align*}
which is negative exactly when $[y^{\mathrm{L}}, y^{\mathrm{U}}] \subseteq [\hat{f}^{\mathrm{lo}}(x), \hat{f}^{\mathrm{up}}(x)]$.

Their approach tends to produce unnecessarily large prediction intervals in our clipped setting,
especially when $Y^{\min} \ll C^{\mathrm{L}}$ or $Y^{\max} \gg C^{\mathrm{R}}$. Indeed, a positive score
$s_{\mathrm{LdPT}}(x, y^{\mathrm{L}},y^{\mathrm{U}})$ requires the base interval to contain the
\emph{entire} bracket $[y^{\mathrm{L}},y^{\mathrm{U}}]$: a left-censored point requires the base lower
endpoint to reach all the way down to $Y^{\min}$, incurring $\hat{f}^{\mathrm{lo}}(x) - Y^{\min}$.
Moreover, the width of every LdPT prediction interval, including the interior ones, depends on
$Y^{\min}$ and $Y^{\max}$, inflating as these move farther from $C^{\mathrm{L}}$ and $C^{\mathrm{R}}$,
which is wasteful. By contrast, the ClipCQR score charges a left-censored point only the distance to
the clipping threshold, $\hat{f}^{\mathrm{lo}}(x) - C^{\mathrm{L}}$, floored at the negative half-width
$\tfrac{1}{2}(\hat{f}^{\mathrm{lo}}(x) - \hat{f}^{\mathrm{up}}(x))$, so the ClipCQR scores are typically
smaller, leading to more informative prediction intervals.

Moreover, LdPT does not target conditional coverage. Consider the degenerate
case $Y^{\min} = C^{\mathrm{L}}$ and $Y^{\max} = C^{\mathrm{R}}$, in which the observable range spans the
entire support and no snapping is needed. In this case the bracket collapses to the clipped point and the
score reduces to $s_{\mathrm{LdPT}}(x, y^{\mathrm{L}}, y^{\mathrm{U}}) = s^{\mathrm{CQR}}(x, \Pi(y))$;
that is, LdPT coincides exactly with eCQR, which converges to the expanded oracle $\mathcal{O}_{\mathrm{e}}$
and therefore does not guarantee conditional coverage in general (Theorem~\ref{thm:ecqr-oracle}).

\clearpage

\section{Auxiliary Theoretical Results and Mathematical Proofs}\label{sec:appendix-proofs}

% \subsection{Helpful Notation}

% Define $R := Y^{\max}-Y^{\min}$, so that all scores lie in $[-R,R]$.
% Write the snapping function~\eqref{eq:psi-def} coordinatewise: for any $a \leq b$,
% $\psi(a,b) = [\underline{\psi}(a), \overline{\psi}(b)]$, with
% \[
%     \underline{\psi}(a) = \begin{cases} Y^{\min}, & a \leq C^{\mathrm{L}}, \\
%         \min\{a, C^{\mathrm{R}}\}, & a > C^{\mathrm{L}}, \end{cases}
%     \qquad
%     \overline{\psi}(b) = \begin{cases} \max\{b, C^{\mathrm{L}}\}, & b < C^{\mathrm{R}}, \\
%         Y^{\max}, & b \geq C^{\mathrm{R}}. \end{cases}
% \]

\subsection{Auxiliary Theoretical Results} \label{sec:appendix-auxiliary}

The proofs of results presented here are deferred to Appendix~\ref{app:proofs-auxiliary}.

\begin{assumption}[Snapped consistency]\label{ass:snapped-consistency}
With $\underline\psi$ and $\overline\psi$ denoting the upper and lower coordinates of $\psi$ in~\eqref{eq:psi-def},
$\mathbb{E} \bigl( |\underline{\psi}(\hat f^{\mathrm{lo}}_n(X)) - \underline{\psi}(q^{\mathrm{lo}}(X))| \bigr)\to 0$ and
$\mathbb{E} \bigl( |\overline{\psi}(\hat f^{\mathrm{up}}_n(X)) - \overline{\psi}(q^{\mathrm{up}}(X))| \bigr) \to 0$.
\end{assumption}

\begin{lemma}\label{lem:snapped-from-projected}
Assumptions~\ref{ass:regularity}--\ref{ass:base-regularity} imply Assumption~\ref{ass:snapped-consistency}.
\end{lemma}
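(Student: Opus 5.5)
The plan is to reduce snapped consistency to projected consistency plus control of the probability that the base estimate and the oracle quantile fall on opposite sides of a threshold. I treat the lower coordinate only; the upper one is symmetric, with $C^{\mathrm R}$, $Y^{\max}$ and $q^{\mathrm{up}}$ in place of $C^{\mathrm L}$, $Y^{\min}$ and $q^{\mathrm{lo}}$.

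\textbf{Step 1 (algebraic identity).} From~\eqref{eq:psi-def}, for every $\ell\in\mathbb R$,
\[
\underline\psi(\ell)=\Pi(\ell)-(C^{\mathrm L}-Y^{\min})\,\I\{\ell\le C^{\mathrm L}\}.
\]
To check it: if $\ell\le C^{\mathrm L}$, then $\Pi(\ell)=C^{\mathrm L}$ and the right-hand side equals $Y^{\min}$. If $\ell>C^{\mathrm L}$, then $\min\{\ell,C^{\mathrm R}\}=\Pi(\ell)$.

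Applying the identity with $\ell=\hat f^{\mathrm{lo}}_n(X)$ and $\ell=q^{\mathrm{lo}}(X)$ and using the triangle inequality gives
\[
\bigl|\underline\psi(\hat f^{\mathrm{lo}}_n(X))-\underline\psi(q^{\mathrm{lo}}(X))\bigr|\le \bigl|\Pi\hat f^{\mathrm{lo}}_n(X)-\Pi q^{\mathrm{lo}}(X)\bigr| + R\,\I\{D_n\},
\]
where $D_n$ is the event that $\hat f^{\mathrm{lo}}_n(X)$ and $q^{\mathrm{lo}}(X)$ lie on opposite sides of $C^{\mathrm L}$. Here "opposite sides" means exactly one of them is $\le C^{\mathrm L}$. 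The expectation of the first term tends to $0$ by Assumption~\ref{ass:consistency}. It therefore suffices to show $\mathbb P(D_n)\to0$.

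\textbf{Step 2 (separating the sides).} Fix $0<\eta<C^{\mathrm R}-C^{\mathrm L}$. On $D_n$, one of the following three events must occur:
\begin{enumerate}[label=(\alph*)]
\item $|\hat f^{\mathrm{lo}}_n(X)-C^{\mathrm L}|\le\eta$;
\item $|q^{\mathrm{lo}}(X)-C^{\mathrm L}|\le\eta$;
\item one of the two points is $\le C^{\mathrm L}-\eta$ and the other is $> C^{\mathrm L}+\eta$.
\end{enumerate}
In case (c), the point below is projected to $C^{\mathrm L}$. The point above is projected to at least $\min\{C^{\mathrm L}+\eta,C^{\mathrm R}\}=C^{\mathrm L}+\eta$, by the choice of $\eta$. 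Hence $|\Pi\hat f^{\mathrm{lo}}_n(X)-\Pi q^{\mathrm{lo}}(X)|\ge\eta$. Markov's inequality then yields
\[
\mathbb P(D_n)\le \mathbb P\bigl(|\hat f^{\mathrm{lo}}_n(X)-C^{\mathrm L}|\le\eta\bigr)+\mathbb P\bigl(|q^{\mathrm{lo}}(X)-C^{\mathrm L}|\le\eta\bigr)+\eta^{-1}\,\mathbb E\bigl|\Pi\hat f^{\mathrm{lo}}_n(X)-\Pi q^{\mathrm{lo}}(X)\bigr|.
\]

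\textbf{Step 3 (limits).} First take $\limsup_{n\to\infty}$ with $\eta$ fixed. The third term vanishes by Assumption~\ref{ass:consistency}. Next let $\eta\downarrow0$:
\begin{itemize}
\item the first term vanishes by Assumption~\ref{ass:base-regularity};
\item the second term converges to $\mathbb P(q^{\mathrm{lo}}(X)=C^{\mathrm L})$ by continuity of measure, and this is $0$ by Assumption~\ref{ass:regularity}(i).
\end{itemize}
Since $\mathbb P(D_n)$ does not depend on $\eta$, this gives $\limsup_n\mathbb P(D_n)=0$, which completes the argument.

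The only delicate point is the "opposite sides" event. Projected consistency alone cannot see across the threshold, because $\Pi$ collapses everything below $C^{\mathrm L}$ to one point, while $\underline\psi$ jumps by $C^{\mathrm L}-Y^{\min}$ there. The $\eta$-band decomposition is exactly where both non-concentration conditions are needed: base regularity for $\hat f^{\mathrm{lo}}_n$, and the no-atom condition for $q^{\mathrm{lo}}$.
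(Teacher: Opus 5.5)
Your proof is correct and follows essentially the same route as the paper. Both arguments write $\underline\psi$ as $\Pi$ minus a jump of size $C^{\mathrm L}-Y^{\min}$ at or below $C^{\mathrm L}$, reduce the claim to showing the straddle probability vanishes, and then control that probability with an $\eta$-band argument that combines Markov's inequality on $|\Pi\hat f^{\mathrm{lo}}_n-\Pi q^{\mathrm{lo}}|$, Assumption~\ref{ass:base-regularity} for $\hat f^{\mathrm{lo}}_n$, and the no-atom condition for $q^{\mathrm{lo}}$. The only cosmetic difference is that the paper treats the two straddle directions separately with one-sided bands, whereas you use a single symmetric three-case split.
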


\begin{proposition} \label{prop:consistency-weaker}
Under Assumption~\ref{ass:regularity}(i), Assumption~\ref{ass:full-consistency} implies
Assumption~\ref{ass:consistency}. Moreover, the implication is strict whenever
$\mathbb P(q^{\mathrm{lo}}(X)<C^{\mathrm L})>0$ or $\mathbb P(q^{\mathrm{up}}(X)>C^{\mathrm R})>0$.
\end{proposition}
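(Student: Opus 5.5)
The statement to prove is Proposition~\ref{prop:consistency-weaker}. It has two parts: the implication from full to projected consistency, and the strictness claim.

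\emph{The implication.} The key observation is that the clipping map $\Pi(y) = \max\{C^{\mathrm L}, \min\{y, C^{\mathrm R}\}\}$ is $1$-Lipschitz, since it is a composition of two $1$-Lipschitz maps. Hence, pointwise,
\[
|\Pi\hat f^{\mathrm{lo}}_n(X) - \Pi q^{\mathrm{lo}}(X)| \le |\hat f^{\mathrm{lo}}_n(X) - q^{\mathrm{lo}}(X)|.
\]
Taking expectations, Assumption~\ref{ass:full-consistency} yields Assumption~\ref{ass:consistency}. The same argument applies verbatim to the upper endpoint. This step does not even use Assumption~\ref{ass:regularity}(i).

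\emph{Strictness.} Here I would construct, for a fixed distribution, a sequence of base predictors satisfying Assumption~\ref{ass:consistency} but violating Assumption~\ref{ass:full-consistency}. Suppose $\mathbb P(q^{\mathrm{lo}}(X) < C^{\mathrm L}) > 0$; the right-hand case is symmetric. Define
\[
\hat f^{\mathrm{up}}_n := q^{\mathrm{up}}, \qquad \hat f^{\mathrm{lo}}_n(x) := q^{\mathrm{lo}}(x)\,\I\{q^{\mathrm{lo}}(x) \ge C^{\mathrm L}\} + \bigl(C^{\mathrm L} - \tfrac12 D(x)\bigr)\I\{q^{\mathrm{lo}}(x) < C^{\mathrm L}\},
\]
where $D(x) := C^{\mathrm L} - q^{\mathrm{lo}}(x) > 0$ on the relevant set. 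The predictors do not depend on $n$, and they are fixed, so the requirement of independence from the calibration data is automatic.

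\emph{Checking projected consistency.} On $\{q^{\mathrm{lo}} < C^{\mathrm L}\}$, both $\hat f^{\mathrm{lo}}_n$ and $q^{\mathrm{lo}}$ lie strictly below $C^{\mathrm L}$, so both project to $C^{\mathrm L}$. Elsewhere the two functions agree. Hence $\Pi\hat f^{\mathrm{lo}}_n = \Pi q^{\mathrm{lo}}$ identically, and Assumption~\ref{ass:consistency} holds trivially.

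\emph{Checking the failure of full consistency.} We have
\[
\mathbb E|\hat f^{\mathrm{lo}}_n - q^{\mathrm{lo}}| = \tfrac12\,\mathbb E\bigl[D(X)\,\I\{q^{\mathrm{lo}}(X) < C^{\mathrm L}\}\bigr].
\]
This quantity is constant in $n$. It is strictly positive because $D > 0$ on an event of positive probability. Therefore Assumption~\ref{ass:full-consistency} fails.

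\emph{The main subtlety.} The definition must keep $\hat f^{\mathrm{lo}}_n \le \hat f^{\mathrm{up}}_n$ almost surely. This holds because on the modified set $\hat f^{\mathrm{lo}}_n$ lies strictly between $q^{\mathrm{lo}}$ and $C^{\mathrm L}$. It is no larger than $q^{\mathrm{up}}$ whenever $q^{\mathrm{up}} \ge C^{\mathrm L}$, but it could exceed $q^{\mathrm{up}}$ when $q^{\mathrm{up}} < C^{\mathrm L}$. To avoid that, I would replace the midpoint choice by $\hat f^{\mathrm{lo}}_n := q^{\mathrm{lo}} - 1$ on $\{q^{\mathrm{lo}} < C^{\mathrm L}\}$, i.e.\ move the estimate downward instead.

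This shift still projects to $C^{\mathrm L}$ and trivially preserves the ordering with $q^{\mathrm{up}}$. The $L^1$ error becomes $\mathbb P(q^{\mathrm{lo}}(X) < C^{\mathrm L}) > 0$, so full consistency still fails.

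If one also wants the example to remain inside the support $[Y^{\min}, Y^{\max}]$, so that ``scores lie in $[-R,R]$'' is preserved, take the downward shift $\tfrac12(q^{\mathrm{lo}} - Y^{\min})$ instead. This keeps $\hat f^{\mathrm{lo}}_n \ge Y^{\min}$. Positivity of the resulting error then needs $q^{\mathrm{lo}} > Y^{\min}$ with positive probability, which holds because $\alpha > 0$ and the density in Assumption~\ref{ass:regularity} is positive.
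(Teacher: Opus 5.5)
Your proof is correct, but its first half takes a different and more elementary route than the paper. You use only that $\Pi$ is $1$-Lipschitz, so $|\Pi\hat f^{\mathrm{lo}}_n-\Pi q^{\mathrm{lo}}|\le|\hat f^{\mathrm{lo}}_n-q^{\mathrm{lo}}|$ pointwise. This gives Assumption~\ref{ass:consistency} exactly as stated, and it shows that Assumption~\ref{ass:regularity}(i) is not needed for the implication.

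The paper instead works with the snapping coordinate $\underline\psi$, which is $1$-Lipschitz apart from a jump of size at most $R$ at $C^{\mathrm L}$. It bounds $|\underline\psi(\hat f^{\mathrm{lo}}_n)-\underline\psi(q^{\mathrm{lo}})|$ by $|\hat f^{\mathrm{lo}}_n-q^{\mathrm{lo}}|+R\,\I\{|q^{\mathrm{lo}}-C^{\mathrm L}|\le|\hat f^{\mathrm{lo}}_n-q^{\mathrm{lo}}|\}$. It then makes the straddle probability vanish using Markov's inequality and the absence of an atom of $q^{\mathrm{lo}}(X)$ at $C^{\mathrm L}$; this is where (i) enters. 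That argument actually delivers the snapped consistency of Assumption~\ref{ass:snapped-consistency}, even though its last line calls the result Assumption~\ref{ass:consistency}. Snapped consistency is the stronger conclusion: the snapped error equals the projected error plus $(C^{\mathrm L}-Y^{\min})$ times the straddle indicator. It is also the form used in the proof of Theorem~\ref{thm:ccqr-plus-oracle}. Your one-line bound buys brevity and the sharper hypothesis, but it would not give that stronger statement.

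For strictness, the two constructions share the same idea. Both push $\hat f^{\mathrm{lo}}_n$ further below $C^{\mathrm L}$ on $\{q^{\mathrm{lo}}<C^{\mathrm L}\}$, so the projected error is identically zero while the raw $L^1$ error is a fixed positive constant. Your choice $q^{\mathrm{lo}}-1$ is cleaner than the paper's constant $C^{\mathrm L}-1$ in two ways:
\begin{itemize}
\item it preserves $\hat f^{\mathrm{lo}}_n\le\hat f^{\mathrm{up}}_n$, whereas the constant can violate this where $q^{\mathrm{up}}<C^{\mathrm L}-1$;
\item it gives the error $\mathbb P(q^{\mathrm{lo}}(X)<C^{\mathrm L})$ directly, without appeal to atomlessness.
\end{itemize}
One small point on your support-preserving variant: $q^{\mathrm{lo}}(X)>Y^{\min}$ a.s.\ already follows from Assumption~\ref{ass:regularity}(i), since $q^{\mathrm{lo}}(X)$ has no atom at $Y^{\min}$. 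So you need not invoke the density condition (ii), which is not a hypothesis of this proposition.
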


\begin{lemma}\label{lem:uncensored}
Under Assumption~\ref{ass:regularity}, suppose that $\pi^{\mathrm L} = 0 = \pi^{\mathrm R}$, and that the base estimates take values in $[Y^{\min},Y^{\max}]$.
Then:
\begin{enumerate}
\item[(i)] $\tau^{\mathrm{m}}=0$ and $\mathcal O_{\mathrm{m}}=\mathcal O_0$;
\item[(ii)] Assumption~\ref{ass:full-consistency} and Assumption~\ref{ass:consistency} are equivalent.
\end{enumerate}
\end{lemma}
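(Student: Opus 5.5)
The plan is to prove the two parts separately, starting with (i), which is essentially a computation of the oracle score distribution.

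\textbf{Part (i).} Under $\pi^{\mathrm L}=\pi^{\mathrm R}=0$ we have $C^{\mathrm L}<q^{\mathrm{lo}}(X)\le q^{\mathrm{up}}(X)<C^{\mathrm R}$ almost surely, which is exactly Assumption~\ref{ass:observability}. Proposition~\ref{prop:oracle-coverage}(iv) then gives $\mathcal O_{\mathrm m}=\mathcal O_0$ directly, and $\tau^{\mathrm m}=0$ can also be computed by hand.

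To compute it, fix $x$ with $C^{\mathrm L}<q^{\mathrm{lo}}(x)\le q^{\mathrm{up}}(x)<C^{\mathrm R}$ and evaluate the oracle score~\eqref{eq:oracle-score}:
\begin{itemize}
\item For $y\le C^{\mathrm L}$, $c^{\mathrm L}(x)=q^{\mathrm{lo}}(x)-C^{\mathrm L}>0$.
\item For $y\ge C^{\mathrm R}$, $c^{\mathrm R}(x)=C^{\mathrm R}-q^{\mathrm{up}}(x)>0$.
\item In between, the score is the usual signed distance to $[q^{\mathrm{lo}}(x),q^{\mathrm{up}}(x)]$.
\end{itemize}
Consequently $s^\star(x,Y)\le t$ iff $Y\in[q^{\mathrm{lo}}(x)-t,q^{\mathrm{up}}(x)+t]$ for every $t$ with $-\tfrac12(q^{\mathrm{up}}-q^{\mathrm{lo}})\le t<\min\{q^{\mathrm{lo}}-C^{\mathrm L},C^{\mathrm R}-q^{\mathrm{up}}\}$, a range that contains $t=0$. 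Hence $\mathbb P(s^\star\le 0\mid X)=F(q^{\mathrm{up}})-F(q^{\mathrm{lo}}{-})=1-\alpha$, using the continuous positive density from Assumption~\ref{ass:regularity}(ii).

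For $t<0$ in a small neighbourhood, the conditional probability is strictly smaller than $1-\alpha$, because the density is strictly positive. Integrating over $X$ shows that $0$ is the infimum defining $\tau^{\mathrm m}$: for $t<0$, monotone convergence together with the a.s.\ strict inequality gives $\mathbb P(s^\star\le t)<1-\alpha$. Thus $\tau^{\mathrm m}=0$, and $\mathcal O_{\mathrm m}=\phi^\star(\cdot;0)=\mathcal O_0$.

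\textbf{Part (ii).} Proposition~\ref{prop:consistency-weaker} already gives that Assumption~\ref{ass:full-consistency} implies Assumption~\ref{ass:consistency}, since $\Pi$ is $1$-Lipschitz. So only the converse remains.

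Since $q^{\mathrm{lo}},q^{\mathrm{up}}\in(C^{\mathrm L},C^{\mathrm R})$ a.s., we have $\Pi q=q$, and we can write
\[
|\hat f-q|\le|\Pi\hat f-\Pi q|+|\hat f-\Pi\hat f|.
\]
It therefore suffices to show $\mathbb E|\hat f_n-\Pi\hat f_n|\to0$ for each endpoint.

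Because $\hat f_n$ and $\Pi\hat f_n$ both lie in $[Y^{\min},Y^{\max}]$, this difference is bounded by $R$, and it vanishes unless $\hat f_n\notin[C^{\mathrm L},C^{\mathrm R}]$. On that event, say $\hat f^{\mathrm{lo}}_n<C^{\mathrm L}$, we have $\Pi\hat f^{\mathrm{lo}}_n=C^{\mathrm L}$, so
\[
|\Pi\hat f^{\mathrm{lo}}_n-q^{\mathrm{lo}}|=q^{\mathrm{lo}}-C^{\mathrm L}.
\]
Fix $\epsilon>0$ and split on the event $\{q^{\mathrm{lo}}-C^{\mathrm L}\ge\epsilon\}$ and its complement:
\begin{itemize}
\item On $\{q^{\mathrm{lo}}-C^{\mathrm L}\ge\epsilon\}$, Markov's inequality applied to Assumption~\ref{ass:consistency} makes the probability of falling outside $[C^{\mathrm L},C^{\mathrm R}]$ tend to $0$.
\item The complement has probability $\mathbb P(q^{\mathrm{lo}}-C^{\mathrm L}<\epsilon)$, which tends to $\pi^{\mathrm L}=0$ as $\epsilon\downarrow0$, by Assumption~\ref{ass:regularity}(i) (no atoms) and continuity from above.
\end{itemize}
Multiplying by the bound $R$, we get $\limsup_n\mathbb E|\hat f_n-\Pi\hat f_n|\le R\,\mathbb P(q^{\mathrm{lo}}-C^{\mathrm L}<\epsilon)$, and letting $\epsilon\to0$ yields $0$. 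The upper endpoint is handled symmetrically.

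The main obstacle is this last step. Consistency of the projected estimates does not control the magnitude of excursions outside $[C^{\mathrm L},C^{\mathrm R}]$, and the boundedness hypothesis on the base estimates is exactly what converts vanishing probability into vanishing $L^1$ error. I would first verify that the event-splitting argument uses only $\pi^{\mathrm L}=0$ and the absence of atoms in the law of $q^{\mathrm{lo}}(X)$.
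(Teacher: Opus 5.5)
Your proof is correct. Part (i) is essentially the paper's argument: you show $F^\star(0)=1-\alpha$, and $F^\star(t)<1-\alpha$ for $t<0$ using the strictly positive density, only with the oracle score written out explicitly.

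Part (ii) uses a different decomposition. The paper partitions on the location of $\hat f^{\mathrm{lo}}_n$ and works with the snapped error $e^{\mathrm{lo}}_n=|\underline\psi(\hat f^{\mathrm{lo}}_n)-\underline\psi(q^{\mathrm{lo}})|$. On $\{\hat f^{\mathrm{lo}}_n\le C^{\mathrm L}\}$ this equals $q^{\mathrm{lo}}-Y^{\min}\ge q^{\mathrm{lo}}-\hat f^{\mathrm{lo}}_n$, so the raw error is dominated there. A Markov argument is then needed only to show $\mathbb P(\hat f^{\mathrm{lo}}_n>C^{\mathrm R})\to0$. You instead pass through $\Pi\hat f_n$ by the triangle inequality and show that the excursion probability vanishes on both sides.

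Your route has a real advantage: it uses Assumption~\ref{ass:consistency} exactly as stated, with projected errors $|\Pi\hat f_n-\Pi q|$. On $\{\hat f^{\mathrm{lo}}_n<C^{\mathrm L}\}$ the projected error is only $q^{\mathrm{lo}}-C^{\mathrm L}$, which does not dominate $q^{\mathrm{lo}}-\hat f^{\mathrm{lo}}_n$. So the paper's domination step really establishes the equivalence for the stronger snapped Assumption~\ref{ass:snapped-consistency}. Your extra Markov step on the left is what the projected version requires. Likewise, your one-line forward implication, via the $1$-Lipschitz property of $\Pi$, is the natural one for the projected assumption.

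There is one small omission. For the lower endpoint you only treat $\hat f^{\mathrm{lo}}_n<C^{\mathrm L}$, but $\hat f^{\mathrm{lo}}_n>C^{\mathrm R}$ also contributes, with $|\Pi\hat f^{\mathrm{lo}}_n-q^{\mathrm{lo}}|=C^{\mathrm R}-q^{\mathrm{lo}}$. Your final bound should therefore read $R\bigl[\mathbb P(q^{\mathrm{lo}}-C^{\mathrm L}<\epsilon)+\mathbb P(C^{\mathrm R}-q^{\mathrm{lo}}<\epsilon)\bigr]$. The second term vanishes because $q^{\mathrm{lo}}\le q^{\mathrm{up}}<C^{\mathrm R}$ a.s.; this is exactly the case the paper handles explicitly.

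Finally, the no-atom condition is not needed for the limits in $\epsilon$. Continuity from above alone gives $\mathbb P(q^{\mathrm{lo}}-C^{\mathrm L}<\epsilon)\downarrow\mathbb P(q^{\mathrm{lo}}\le C^{\mathrm L})=\pi^{\mathrm L}=0$.
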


\begin{lemma}\label{lem:score-deficit}
Under Assumption~\ref{ass:regularity}, with $s^{\mathrm{pe},{\mathrm{m}}}(x,\tilde y):=\max\{\Pi q^{\mathrm{lo}}(x)-\tilde y,\,\tilde y-\Pi q^{\mathrm{up}}(x)\}$
\[
  \mathbb P\bigl(s^{\mathrm{pe},{\mathrm{m}}}(X,\tilde Y)<0\bigr) \le 1-\alpha,
\]
with strict inequality if and only if $\mathbb{P}(q^{\mathrm{lo}}(X) \le C^{\mathrm L}) + \mathbb{P}(q^{\mathrm{up}}(X) \ge C^{\mathrm R})  > 0$.
\end{lemma}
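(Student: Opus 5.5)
The plan is to condition on $X$, compute $\mathbb P(s^{\mathrm{pe},\mathrm m}(X,\tilde Y)<0\mid X)$ exactly, and then integrate over $X$.

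\textbf{Step 1 (rewrite the event).} By definition of $s^{\mathrm{pe},\mathrm m}$, the score is negative if and only if
\[
\Pi q^{\mathrm{lo}}(X)<\tilde Y<\Pi q^{\mathrm{up}}(X).
\]
Because $\Pi q^{\mathrm{lo}},\Pi q^{\mathrm{up}}\in[C^{\mathrm L},C^{\mathrm R}]$, this open interval lies inside $(C^{\mathrm L},C^{\mathrm R})$. On that range $\tilde Y=Y$, and neither clipped value $C^{\mathrm L}$ nor $C^{\mathrm R}$ can fall in the open interval. Hence the event equals
\[
\{\,Y\in(a(X),b(X))\,\},\qquad a:=\max\{q^{\mathrm{lo}},C^{\mathrm L}\},\quad b:=\min\{q^{\mathrm{up}},C^{\mathrm R}\},
\]
which is empty when $a\ge b$. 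In particular the event is always contained in $\{Y\in(q^{\mathrm{lo}}(X),q^{\mathrm{up}}(X))\}$.

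\textbf{Step 2 (conditional probability).} By Assumption~\ref{ass:regularity}(ii), $Y\mid X$ has a continuous, strictly positive density on the support. Therefore $F_{Y\mid X}(q_\beta(x)\mid x)=\beta$ exactly, and
\[
\mathbb P\bigl(Y\in(q^{\mathrm{lo}},q^{\mathrm{up}})\mid X\bigr)=1-\alpha .
\]
Together with the containment from Step 1, this gives a conditional probability of at most $1-\alpha$ for every $X$. Equality holds when $C^{\mathrm L}<q^{\mathrm{lo}}(X)\le q^{\mathrm{up}}(X)<C^{\mathrm R}$, since then $(a,b)=(q^{\mathrm{lo}},q^{\mathrm{up}})$.

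\textbf{Step 3 (strictness on the censored set).} Suppose $q^{\mathrm{lo}}(X)<C^{\mathrm L}$ strictly. The lost mass is at least
\[
\min\bigl\{F(C^{\mathrm L}\mid X),\,F(q^{\mathrm{up}}\mid X)\bigr\}-\alpha/2 .
\]
This quantity is strictly positive by positivity of the density. If in addition $q^{\mathrm{up}}<C^{\mathrm L}$, the interval $(a,b)$ is empty and the probability is $0<1-\alpha$. The right side, $q^{\mathrm{up}}(X)>C^{\mathrm R}$, is symmetric.

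The boundary cases $q^{\mathrm{lo}}(X)=C^{\mathrm L}$ and $q^{\mathrm{up}}(X)=C^{\mathrm R}$ are the only points where the inequality could fail to be strict. They are null events by Assumption~\ref{ass:regularity}(i), which rules out point masses for $q^{\mathrm{lo}}(X)$ and $q^{\mathrm{up}}(X)$. So, almost surely, the conditional probability is strictly below $1-\alpha$ on
\[
\{q^{\mathrm{lo}}(X)\le C^{\mathrm L}\}\cup\{q^{\mathrm{up}}(X)\ge C^{\mathrm R}\},
\]
and equal to $1-\alpha$ on its complement.

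\textbf{Step 4 (integrate).} Taking expectations gives $\mathbb P(s^{\mathrm{pe},\mathrm m}<0)\le 1-\alpha$. The inequality is strict if and only if the union above has positive probability, i.e.\ if and only if $\pi^{\mathrm L}+\pi^{\mathrm R}>0$. For the "only if" direction, note that an integrand which is $\le 1-\alpha$ everywhere and strictly less on a set of positive measure has integral strictly less than $1-\alpha$.

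The main point of care is Step 3. The argument there must handle the null boundary events, where $q^{\mathrm{lo}}(X)=C^{\mathrm L}$ would give no loss, using Assumption~\ref{ass:regularity}(i). It must also treat the empty-interval subcases consistently.
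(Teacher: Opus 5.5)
Your proof is correct and follows essentially the same route as the paper: condition on $X$, identify the event with $\{\Pi q^{\mathrm{lo}}(X) < Y < \Pi q^{\mathrm{up}}(X)\}$, bound the conditional mass by $F_{Y\mid X}(q^{\mathrm{up}}) - F_{Y\mid X}(q^{\mathrm{lo}}) = 1-\alpha$, get strictness from the positive density on $(q^{\mathrm{lo}}(X), C^{\mathrm L})$ (or the symmetric right-hand piece), dispose of the boundary cases via Assumption~\ref{ass:regularity}(i), and integrate. One small slip: the integral argument you label as the ``only if'' direction in Step 4 is actually the ``if'' direction; ``only if'' follows from the almost-sure equality to $1-\alpha$ off the union, which you established in Step 3.
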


\begin{lemma}\label{lem:tau-nonneg} Assume $(X_i, Y_i)_{i=1}^{n+1}$ are i.i.d. Let $s^{\mathrm{pe},{\mathrm{m}}}(x,\tilde y):=\max\{\Pi q^{\mathrm{lo}}(x)-\tilde y,\,\tilde y-\Pi q^{\mathrm{up}}(x)\}$, $s^{\mathrm{pe}}_n(x, \tilde y) = \max\{\Pi \hat{f}^{\mathrm{lo}}_{n}(x) - \tilde y,\; \tilde y - \Pi \hat{f}^{\mathrm{up}}(x)\}$, and let $\tilde\tau^{\mathrm{pe}}_n$ denote the $\lceil (n+1)(1-\alpha)\rceil$-th smallest value in $\{s^{\mathrm{pe}}_n(X_i,\tilde Y_i)\}_{i=1}^{n} \cup \{R\}$. 
With $\pi^{\mathrm L} := \mathbb{P}(q^{\mathrm{lo}}(X) \le C^{\mathrm L})$ and
$\pi^{\mathrm R} := \mathbb{P}(q^{\mathrm{up}}(X) \ge C^{\mathrm R})$, assume $\pi^{\mathrm L} + \pi^{\mathrm R} > 0$.
Then, under Assumptions~\ref{ass:regularity} and~\ref{ass:consistency},  $\mathbb P(\tilde\tau^{\mathrm{pe}}_n(\alpha)<0)\to0$ as $n \to \infty$.
\end{lemma}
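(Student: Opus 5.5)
The statement to prove is Lemma~\ref{lem:tau-nonneg}. I plan a standard empirical-quantile argument. Everything reduces to showing that, with probability tending to one, the fraction of calibration scores strictly below zero is less than $(1-\alpha)$ by a fixed margin.

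\textbf{Step 1: a population gap.} Let $p^\star := \mathbb P(s^{\mathrm{pe},\mathrm{m}}(X,\tilde Y)<0)$. Lemma~\ref{lem:score-deficit} gives $p^\star \le 1-\alpha$, with strict inequality because $\pi^{\mathrm L}+\pi^{\mathrm R}>0$. Set $\gamma := (1-\alpha) - p^\star > 0$.

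\textbf{Step 2: approximation by the oracle score.} Let $p_n := \mathbb P(s^{\mathrm{pe}}_n(X,\tilde Y)<0)$, where the probability is over a fresh $(X,\tilde Y)$ with $\hat f_n$ fixed. I want $\limsup_n p_n \le p^\star$. The map $(a,b)\mapsto\max\{a-\tilde y,\tilde y-b\}$ is $1$-Lipschitz in sup-norm. Hence
\[
  |s^{\mathrm{pe}}_n - s^{\mathrm{pe},\mathrm{m}}| \le D_n := |\Pi\hat f^{\mathrm{lo}}_n(X)-\Pi q^{\mathrm{lo}}(X)| + |\Pi\hat f^{\mathrm{up}}_n(X)-\Pi q^{\mathrm{up}}(X)|,
\]
and Assumption~\ref{ass:consistency} with Markov's inequality gives $D_n\xrightarrow{p}0$. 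For any $\epsilon>0$,
\[
  p_n \le \mathbb P\bigl(s^{\mathrm{pe},\mathrm{m}}<\epsilon\bigr)+\mathbb P(D_n>\epsilon)
  = p^\star + \mathbb P\bigl(0\le s^{\mathrm{pe},\mathrm{m}}<\epsilon\bigr) + o(1).
\]
The main obstacle is the middle term. The oracle score inherits point masses from $\tilde Y$ at the thresholds, and the relevant mass is exactly at $0$: left-censored points with $\Pi q^{\mathrm{lo}} = C^{\mathrm L}$ have score $0$. That mass, however, lies in $\{s^{\mathrm{pe},\mathrm{m}}=0\}$, not in $\{s^{\mathrm{pe},\mathrm{m}}<0\}$. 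So the right-continuity argument works, since $\mathbb P(0\le s^{\mathrm{pe},\mathrm{m}}<\epsilon)\downarrow \mathbb P(s^{\mathrm{pe},\mathrm{m}}=0)$ as $\epsilon\downarrow0$. The real difficulty is that this limit need not vanish.

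\textbf{Handling the mass at zero.} I would argue directly on the event $\{s^{\mathrm{pe}}_n<0\}$, splitting by $\tilde Y$:
\begin{itemize}
\item If $\tilde Y=C^{\mathrm L}$, negativity requires $\Pi\hat f^{\mathrm{lo}}_n(X) < C^{\mathrm L}$, which is impossible. So censored points never contribute to $p_n$.
\item For uncensored $\tilde Y=Y\in(C^{\mathrm L},C^{\mathrm R})$, which has a bounded conditional density by Assumption~\ref{ass:regularity}(ii), the event $\{0\le s^{\mathrm{pe},\mathrm{m}}<\epsilon\}$ forces $Y$ within $\epsilon$ of $\Pi q^{\mathrm{lo}}(X)$ or $\Pi q^{\mathrm{up}}(X)$. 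This has probability at most $4M\epsilon$.
\end{itemize}
Combining the two cases gives $p_n\le p^\star + 4M\epsilon + \mathbb P(D_n>\epsilon)$, and letting $n\to\infty$ and then $\epsilon\to0$ yields $\limsup_n p_n\le p^\star$.

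\textbf{Step 3: concentration and conclusion.} Conditional on $\hat f_n$, which is independent of the calibration data, the indicators $\I\{s^{\mathrm{pe}}_n(X_i,\tilde Y_i)<0\}$ are i.i.d.\ Bernoulli$(p_n)$. The event $\tilde\tau^{\mathrm{pe}}_n<0$ requires at least $\lceil(n+1)(1-\alpha)\rceil$ of them to equal one, which implies their empirical mean is at least $1-\alpha$ (note $R>0$). For $n$ large, $p_n\le 1-\alpha-\gamma/2$, so Hoeffding's inequality bounds the probability by $\exp(-n\gamma^2/2)\to0$, which completes the proof.
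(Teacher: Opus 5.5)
Your proof is correct. It follows a somewhat different route from the paper's, although the two rest on the same structural facts.

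\textbf{The paper's route.} The paper works at the empirical level. It writes $\{\tilde\tau^{\mathrm{pe}}_n<0\}$ as the event that the fraction of negative estimated scores is at least $p_n=k/n$. It then covers this by two events. The first, $E_n$, is that the fraction of negative \emph{oracle} scores exceeds $p_n-\eta$; this is handled by the law of large numbers together with the gap from Lemma~\ref{lem:score-deficit}. The second, $G_n$, is that the fraction of sign disagreements $\I\{S_i<0\}\ne\I\{S_i^{\mathrm m}<0\}$ is at least $\eta$. Disagreements are ruled out at censored points, because both projected scores are nonnegative there. At interior points they force $0<|S_i^{\mathrm m}|\le\Delta^{\mathrm{pe}}_n(X_i)$, which is controlled by Markov's inequality and letting $\zeta\downarrow0$.

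\textbf{Your route.} You first make the comparison at the population level, showing $\limsup_n\mathbb P(s^{\mathrm{pe}}_n(X,\tilde Y)<0)\le p^\star$. You then use a single Hoeffding bound conditional on $\hat f_n$.

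\textbf{What is shared.} The key observation is the same in both. The atom of the oracle score at $0$ comes only from censored points, and there no projected predictor can yield a negative score, so that atom is never crossed.

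\textbf{What each buys.} Your organization is slightly more economical. It gives an explicit exponential tail $\exp(-n\gamma^2/2)$ once $p_n\le1-\alpha-\gamma/2$. It also avoids the law of large numbers over a sample that changes with $n$. You control the near-zero mass at interior points via the density bound $M$, getting an $O(M\epsilon)$ term. The paper needs only that interior oracle scores are a.s.\ nonzero, plus continuity of probability.

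\textbf{Minor gap.} You should state the symmetric case $\tilde Y=C^{\mathrm R}$, where $s^{\mathrm{pe}}_n\ge C^{\mathrm R}-\Pi\hat f^{\mathrm{up}}_n\ge0$.
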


\subsection{Proofs for Section~\ref{sec:eCQR}}

% \begin{proof}[of Theorem~\ref{thm:cqr-validity}]
% By the definition of the widening operator \eqref{eq:widen-op} and the
% score function \eqref{eq:cqr-score},
% \[
%     Y_{n+1} \in \widehat{\mathcal{C}}(X_{n+1})
%     \quad \iff \quad
%     s^{\mathrm{CQR}}(X_{n+1}, Y_{n+1}) \leq \tilde\tau(\alpha).
% \]
% The bounds \eqref{eq:cqr-coverage-lb}–\eqref{eq:cqr-coverage-ub} then follow
% from the standard split-conformal argument applied to the exchangeable scores
% $\{s^{\mathrm{CQR}}(X_i, Y_i)\}_{i=1}^{n+1}$ \citep{romano2019cqr}: with
% $k = \lceil(n+1)(1-\alpha)\rceil$, then $\mathbb{P}(s^{\mathrm{CQR}}(X_{n+1}, Y_{n+1}) \leq \tilde\tau(\alpha))
% \geq k/(n+1) \geq 1 - \alpha$, with the matching upper bound $k/(n+1) \leq
% 1 - \alpha + 1/(n+1)$ when the scores are almost-surely distinct.
% \end{proof}

\begin{proof}[of Theorem~\ref{thm:ecqr-validity}]
Define $\tilde{L}(X_{n+1})$ and $\tilde{U}(X_{n+1})$ such that
\begin{align*}
\tilde{\mathcal{C}}(X_{n+1}) & = [\tilde{L}(X_{n+1}),\; \tilde{U}(X_{n+1})], \\
 \widehat{\mathcal{C}}^{\mathrm{eCQR}}(X_{n+1}) & = \psi\bigl(\tilde{L}(X_{n+1}),\; \tilde{U}(X_{n+1})\bigr),
\end{align*}
and the events
\begin{align*}
    \tilde{E} \;&:=\; \bigl\{\tilde{Y}_{n+1} \in [\tilde{L}(X_{n+1}),\; \tilde{U}(X_{n+1})]\bigr\}, \\
    E         \;&:=\; \bigl\{Y_{n+1}  \in \psi\bigl(\tilde{L}(X_{n+1}),\; \tilde{U}(X_{n+1})\bigr) \bigr\},
\end{align*}
and abbreviate $\tilde{L} = \tilde{L}(X_{n+1})$,
$\tilde{U} = \tilde{U}(X_{n+1})$.

\textit{Step 1: $\tilde{E} \subseteq E$.}
Assume $\tilde{E}$ holds, so $\tilde{L} \leq \tilde{U}$.
\begin{enumerate}[label=(\roman*)]
\item If $Y_{n+1} \in [C^{\mathrm{L}}, C^{\mathrm{R}}]$, then $\tilde{Y}_{n+1} = Y_{n+1} \in
[\tilde{L}, \tilde{U}] \cap [C^{\mathrm{L}}, C^{\mathrm{R}}]$. Inspection of \eqref{eq:psi-def} shows
$\psi(\tilde{L}, \tilde{U}) \supseteq [\tilde{L}, \tilde{U}] \cap [C^{\mathrm{L}}, C^{\mathrm{R}}]$ in every
case, so $Y_{n+1} \in \psi(\tilde{L}, \tilde{U})$ and $E$ holds.
 
\item If $Y_{n+1} < C^{\mathrm{L}}$, then $\tilde{Y}_{n+1} = C^{\mathrm{L}}$, so $\tilde{L} \leq C^{\mathrm{L}}
\leq \tilde{U}$. By \eqref{eq:psi-def}, $\tilde{L} \leq C^{\mathrm{L}}$ sends the lower endpoint
of $\psi(\tilde{L}, \tilde{U})$ to $Y^{\min}$, while its upper endpoint is at least
$C^{\mathrm{L}} > Y_{n+1}$; hence $Y_{n+1} \in \psi(\tilde{L}, \tilde{U})$ and $E$ holds.
 
\item If $Y_{n+1} > C^{\mathrm{R}}$, the argument is symmetric.
\end{enumerate}

\textit{Step 2: decomposition.}
Since $\tilde{E} \subseteq E$, $\mathbb{P}(E, \tilde{E}) = \mathbb{P}(\tilde{E})$ and
\begin{equation}\label{eq:proof-decomp}
    \mathbb{P}(E) \;=\; \mathbb{P}(\tilde{E}) + \mathbb{P}(E, \tilde{E}^c).
\end{equation}

\textit{Step 3: computing $\mathbb{P}(E, \tilde{E}^c)$.}
Since both $[\tilde{L}, \tilde{U}]$ and
$\psi(\tilde{L}, \tilde{U})$ are empty when $\tilde{L} > \tilde{U}$, we have
$E \subseteq \{\tilde{L} \leq \tilde{U}\}$ and $\tilde{E} \subseteq \{\tilde{L} \leq \tilde{U}\}$;
in particular $\{E, \tilde{E}^c\} \subseteq \{\tilde{L} \leq \tilde{U}\}$, so we may
restrict attention to nonempty intervals throughout.
Consider the three possible cases depending on the position of $Y_{n+1}$, partitioning
the support into $(C^{\mathrm{L}}, C^{\mathrm{R}})$, $\{Y_{n+1} \leq C^{\mathrm{L}}\}$, and $\{Y_{n+1} \geq C^{\mathrm{R}}\}$.
\begin{enumerate}[label=(\roman*)]
\item Suppose $Y_{n+1} \in (C^{\mathrm{L}}, C^{\mathrm{R}})$, so $\tilde{Y}_{n+1} = Y_{n+1}$ and, on
$\tilde{E}^c$, $Y_{n+1} \notin [\tilde{L}, \tilde{U}]$. If $Y_{n+1} < \tilde{L}$ then
$C^{\mathrm{L}} < Y_{n+1} < \tilde{L}$, so $\tilde{L} > C^{\mathrm{L}}$ and, by \eqref{eq:psi-def}, the
lower endpoint of $\psi(\tilde{L}, \tilde{U})$ is $\min\{C^{\mathrm{R}}, \tilde{L}\}$, which
exceeds $Y_{n+1}$ since $Y_{n+1} < \tilde{L}$ and $Y_{n+1} < C^{\mathrm{R}}$; thus
$Y_{n+1} \notin \psi(\tilde{L}, \tilde{U})$ and $E$ fails. The case
$Y_{n+1} > \tilde{U}$ is symmetric. Hence
$\{E, \tilde{E}^c, Y_{n+1} \in (C^{\mathrm{L}}, C^{\mathrm{R}})\} = \emptyset$.

\item Suppose $Y_{n+1} \leq C^{\mathrm{L}}$, so $\tilde{Y}_{n+1} = C^{\mathrm{L}}$. On $\tilde{E}^c$,
$C^{\mathrm{L}} \notin [\tilde{L}, \tilde{U}]$, so $\tilde{U} < C^{\mathrm{L}}$ or $\tilde{L} > C^{\mathrm{L}}$. If
$\tilde{L} > C^{\mathrm{L}}$, the lower endpoint of $\psi(\tilde{L}, \tilde{U})$ is at least
$\min\{C^{\mathrm{R}}, \tilde{L}\} > C^{\mathrm{L}} \geq Y_{n+1}$, so $E$ fails. Hence $\tilde{U} < C^{\mathrm{L}}$, in
which case $\tilde{L} \leq \tilde{U} < C^{\mathrm{L}} < C^{\mathrm{R}}$ and \eqref{eq:psi-def} gives
$\psi(\tilde{L}, \tilde{U}) = [Y^{\min}, C^{\mathrm{L}}]$, so $E$ holds for every
$Y_{n+1} \leq C^{\mathrm{L}}$. Conversely, any point with $\tilde{L} \leq \tilde{U} < C^{\mathrm{L}}$ and
$Y_{n+1} \leq C^{\mathrm{L}}$ has $\tilde{Y}_{n+1} = C^{\mathrm{L}} \notin [\tilde{L}, \tilde{U}]$ (so
$\tilde{E}^c$) and lies in $[Y^{\min}, C^{\mathrm{L}}]$ (so $E$). Therefore
\[
    \{E, \tilde{E}^c, Y_{n+1} \leq C^{\mathrm{L}}\}
    \;=\; \{\tilde{L} \leq \tilde{U} < C^{\mathrm{L}},\; Y_{n+1} \leq C^{\mathrm{L}}\}.
\]

\item Suppose $Y_{n+1} \geq C^{\mathrm{R}}$. Symmetric to (ii):
\[
    \{E, \tilde{E}^c, Y_{n+1} \geq C^{\mathrm{R}}\}
    \;=\; \{C^{\mathrm{R}} < \tilde{L} \leq \tilde{U},\; Y_{n+1} \geq C^{\mathrm{R}}\}.
\]
\end{enumerate}
The three cases are disjoint and exhaust the support, so
\[
    \mathbb{P}(E, \tilde{E}^c)
    \;=\; \underbrace{\mathbb{P}\bigl(\tilde{L} \leq \tilde{U} < C^{\mathrm{L}},\; Y_{n+1} \leq C^{\mathrm{L}}\bigr)
          + \mathbb{P}\bigl(C^{\mathrm{R}} < \tilde{L} \leq \tilde{U},\; Y_{n+1} \geq C^{\mathrm{R}}\bigr)}_{=:\, p_{\mathrm{slack}}}.
\]
and $\mathbb{P}(E) = \mathbb{P}(\tilde{E}) + p_{\mathrm{slack}}$. Finally, the inequality follows from Theorem~\ref{thm:cqr-validity}.
\end{proof}

\subsection{Proofs for Section~\ref{sec:method-clipcqr}}

\begin{proof}[of Lemma~\ref{lemma:score-formula}]
Fix $x$ and abbreviate $a := \hat{f}^{\mathrm{lo}}(x)$ and $b := \hat{f}^{\mathrm{up}}(x)$,
so $a \leq b$. By~\eqref{eq:phi-def}, $\phi(x;\tau) = \psi(a - \tau,\, b + \tau)$,
whose raw endpoints $\ell(\tau) = a - \tau$ and $u(\tau) = b + \tau$ are
respectively decreasing and increasing in $\tau$; the family is therefore nested,
and $s(x,y)$ is the smallest $\tau$ with $y \in \phi(x;\tau)$. Note that
$\ell(\tau) \leq u(\tau)$ iff $\tau \geq - \tfrac{1}{2}(b - a)$; for
$\tau < - \tfrac{1}{2}(b - a)$ the last case of~\eqref{eq:psi-def} gives
$\phi(x;\tau) = \emptyset$, so no $y$ is captured.

We consider three separate cases based on the position of $y$ relative to $[C^{\mathrm{L}},C^{\mathrm{R}}]$.
\begin{enumerate}[label=(\roman*)]
\item \emph{Uncensored range, $C^{\mathrm{L}} < y < C^{\mathrm{R}}$.} By~\eqref{eq:psi-def}, $y \in \psi(\ell, u) \iff \ell \leq y
\leq u$, for any pair $\ell \leq u$. Hence
$y \in \phi(x;\tau) \iff \tau \geq \max\{a - y,\, y - b\}$. Since
$\max\{a - y,\, y - b\} \geq \tfrac{1}{2}\bigl((a - y) + (y - b)\bigr) =
\tfrac{1}{2}(a - b)$, the non-emptiness constraint is automatically met, and so in this case
\begin{align*}
  s(x,y)
  & = \inf\bigl\{\tau \in \mathbb{R} : \tau \geq \max\{a - y,\, y - b\}  \geq \tfrac{1}{2}(a - b) \bigr\} = \max\{a - y,\, y - b\}.
\end{align*}
 
\item \emph{Left-censored range, $y \leq C^{\mathrm{L}}$.} Such a $y$ belongs to $\psi(\ell, u)$
(with $\ell \leq u$) if and only if its left endpoint is pushed to $Y^{\min}$ (i.e.\ if $\ell \leq C^{\mathrm{L}}$) and $u \geq \ell$.
Therefore, $y \in \phi(x;\tau) \iff a - \tau \leq C^{\mathrm{L}}$ and $\tau \geq \tfrac{1}{2}(a - b)$,
i.e.\ $\tau \geq \max\{a - C^{\mathrm{L}},\, \tfrac{1}{2}(a - b)\}$, giving
\begin{align*}
  s(x,y)
  & = \inf\bigl\{\tau \in \mathbb{R} : \tau \geq \max\{a - C^{\mathrm{L}},\, \tfrac{1}{2}(a - b)\} \bigr\} = \max\{a - C^{\mathrm{L}},\, \tfrac{1}{2}(a - b)\}.
\end{align*}
 
\item \emph{Right-censored range, $y \geq C^{\mathrm{R}}$.} Symmetrically, 
\begin{align*}
  s(x,y)
  & = \inf\bigl\{\tau \in \mathbb{R} : \tau \geq \max\{C^{\mathrm{R}} - b, \, \tfrac{1}{2}(a - b)\} \bigr\} = \max\{C^{\mathrm{R}}-b,\, \tfrac{1}{2}(a - b)\}.
\end{align*}
\end{enumerate}

Substituting $a = \hat{f}^{\mathrm{lo}}(x)$ and $b = \hat{f}^{\mathrm{up}}(x)$
yields~\eqref{eq:score-formula}.
\end{proof}

\begin{proof}[of Theorem~\ref{thm:ccqr-validity}]
To simplify notation, assume the random jitters $\eta_1, \dots, \eta_{n+1}$
have been applied to the base predictor and are thus implicitly absorbed into
the scores $S_i = s(X_i, \tilde{Y}_i)$ for $i \in [n]$ and
$S_{n+1} = s(X_{n+1}, Y_{n+1})$; since jittering sends $S_{i} \to S_{i} - \eta_i$,
the scores are almost surely distinct when $\delta > 0$.

Since $\phi(x; \cdot)$ is nested and closed-valued, the infimum defining
$s(x, \cdot)$ is attained, so $Y_{n+1} \in \phi(X_{n+1}; \tau)$ if and only if
$S_{n+1} = s(X_{n+1}, Y_{n+1}) \leq \tau$. Hence
\[
    \mathbb{P}\bigl[Y_{n+1} \in \widehat{\mathcal{C}}(X_{n+1})\bigr]
    = \mathbb{P}\bigl(s(X_{n+1}, Y_{n+1}) \leq \hat{\tau}(\alpha)\bigr).
\]
Next, we show $S_1, \ldots, S_{n+1}$ are exchangeable. This is not immediate, because the
calibration scores are evaluated at the observed responses,
$S_i = s(X_i, \tilde{Y}_i)$, whereas the test score is evaluated at the latent
response, $S_{n+1} = s(X_{n+1}, Y_{n+1})$, and in general $\tilde{Y} \neq Y$.

However, the key property of the ClipCQR score is that it depends on the response only
through its clipped (observed) value:
\begin{equation}\label{eq:score-agreement}
    s(x, y) \;=\; s\bigl(x,\, \Pi(y)\bigr),
    \qquad \forall\,(x, y) \in \mathcal{X} \times [Y^{\min}, Y^{\max}].
\end{equation}
This is easy to prove because $s(x, \cdot)$ is constant on each
censored region $\{y \leq C^{\mathrm{L}}\}$ and $\{y \geq C^{\mathrm{R}}\}$, while
$\Pi(y)$ sends every $y < C^{\mathrm{L}}$ to $C^{\mathrm{L}}$ and every
$y > C^{\mathrm{R}}$ to $C^{\mathrm{R}}$; the equality holds trivially for every $y \in [C^{\mathrm{L}}, C^{\mathrm{R}}]$.
Consequently $S_i = s(X_i, \tilde{Y}_i) = s(X_i, Y_i)$ almost surely for all
$i \in [n]$, so all $n+1$ scores coincide almost surely with the latent scores
$s(X_j, Y_j)$, $j \in [n+1]$. As these are a fixed function of the exchangeable
tuples $(X_j, Y_j, \eta_j)$, the scores $S_1, \ldots, S_{n+1}$ are exchangeable,
and the standard lower and upper coverage bounds for split-conformal prediction
follow, as in the proof of Theorem~\ref{thm:cqr-validity}; e.g., see \citet{romano2019cqr}.
\end{proof}

\begin{proof}[of Theorem~\ref{thm:ccqr_plus-validity}]
On the event $\{X_{n+1} \in\mathrm{int}(\hat f)\}$, we have $\hat\tau^+(\alpha, X_{n+1}) = \hat\tau(\alpha;\widehat{\mathcal I})$, 
and $(X_{n+1},Y_{n+1})$ is exchangeable with the calibration data points in $\widehat{\mathcal I}$. Therefore, $\mathbb{P}\bigl[Y_{n+1} \in \widehat{\mathcal{C}}^+(X_{n+1}) \mid X_{n+1} \in \mathrm{int}(\hat{f}) \bigr] \geq 1 - \alpha$ by the usual argument (e.g., Theorem~\ref{thm:cqr-validity}).
On the event $\{X_{n+1} \notin\mathrm{int}(\hat f)\}$, we have $\hat\tau^+(\alpha, X_{n+1}) \geq \hat\tau(\alpha;\widehat{\mathcal I}^{\mathrm{c}})$, and therefore $\mathbb{P}\bigl[Y_{n+1} \in \widehat{\mathcal{C}}^+(X_{n+1}) \mid X_{n+1} \notin \mathrm{int}(\hat{f}) \bigr] \geq 1 - \alpha$ by the same argument. Marginalizing leads to $\mathbb{P}\bigl[Y_{n+1} \in \widehat{\mathcal{C}}^+(X_{n+1}) \bigr] \geq 1-\alpha$.
\end{proof}

\subsection{Proofs for Section~\ref{sec:theory}}

\begin{proof}[of Proposition~\ref{prop:O0-coverage}]
This is implied by Proposition~\ref{prop:oracle-coverage}, stated in Appendix~\ref{app:theory-clipcqr}.
\end{proof}

\begin{proof}[of Theorem~\ref{thm:ccqr-plus-oracle}]
By Lemma~\ref{lem:snapped-from-projected} we can use Assumption~\ref{ass:snapped-consistency}, which follows from Assumptions~\ref{ass:regularity}, \ref{ass:consistency}, and~\ref{ass:base-regularity}.
Note that $\underline\psi(\cdot)$ and $\overline\psi(\cdot)$ are both nondecreasing and $1$-Lipschitz, apart from a single jump ($\underline\psi$ at $C^{\mathrm L}$ and $\overline\psi$ at $C^{\mathrm R}$). 
This representation of $\psi(\ell,u)$ is useful to bound the symmetric difference:
\[
  \lambda \bigl( \phi_n(X;0)\triangle\mathcal O_0(X) \bigr)
  \le|\underline\psi(\hat f^{\mathrm{lo}}_n)-\underline\psi(q^{\mathrm{lo}})| +|\overline\psi(\hat f^{\mathrm{up}}_n)-\overline\psi(q^{\mathrm{up}})|,
\]
which by Assumption~\ref{ass:snapped-consistency} implies $\mathbb{E} [ \lambda ( \phi_n(X;0)\triangle\mathcal O_0(X) ) ] \to 0$.

Write $\phi_n(\cdot;t) := \psi(\hat f^{\mathrm{lo}}_n(\cdot)-t,\hat f^{\mathrm{up}}_n(\cdot)+t)$
for the estimated prediction band at level $t$, so that
$\widehat{\mathcal C}^+_n(X_{n+1}) =\phi_n(X_{n+1};\hat\tau^+(\alpha,X_{n+1}))$ with
$\hat\tau^+(\alpha,X_{n+1})$ given by~\eqref{eq:-threshold-clipcqrt+},
and $Y\in\phi_n(X;t)\iff s_n(X,Y)\le t$, where $s_n$ is the non-conformity score function~\eqref{eq:score-formula}.

\begin{enumerate}
\item[(i-a)] We begin by proving $\hat\tau^+(\alpha, X_{n+1})\xrightarrow{p}0$.

Let $\mathcal U:=\{x \in \mathcal{X}: C^{\mathrm L} < q^{\mathrm{lo}}(x) \le q^{\mathrm{up}}(x) < C^{\mathrm R}\}$ be the population-level internal region,
and let $\mathcal U^{\mathrm{c}} := \mathcal{X} \setminus \mathcal U$ denote its complement.
By assumption, $\mathcal U$ carries positive mass $\pi_{\mathrm{int}}:=\mathbb P(X\in\mathcal U)>0$.
Let $\pi_{\mathrm{ext}} := 1 - \pi_{\mathrm{int}}$ denote the probability mass of $\mathcal U^{\mathrm{c}}$.

\emph{1) Convergence of $\hat\tau(\alpha; \widehat{\mathcal I}) \xrightarrow{p} 0$.}
Let $\hat\tau(\alpha; \widehat{\mathcal I})$ be the ClipCQR threshold calibrated on the $n_{\mathrm{int}}:=|\{i:X_i\in\mathcal U\}|$ points in $\mathcal U$. Since $n_{\mathrm{int}}/n\to\pi_{\mathrm{int}}>0$ a.s., $n_{\mathrm{int}}\to\infty$ and $\hat\tau(\alpha; \widehat{\mathcal I})$ is a conditional conformal quantile for the distribution of $(X,Y)$ given $X\in\mathcal U$. Define the conditional oracle threshold
\[
  \tau^{\mathrm{m}}_{\mathrm{int}}:=\inf\bigl\{t\in\mathbb R:\ \mathbb P\bigl(Y\in\phi^{\mathrm{m}}(X;t)\mid X\in\mathcal U\bigr)\ge1-\alpha\bigr\},
\]
the analogue of $\tau^{\mathrm{m}}$ for the distribution of $(X,Y)$ given $X\in\mathcal U$. For $x\in\mathcal U$ the oracle band is unsnapped, $\phi^\star(x;0)=\mathcal O_0(x)=[q^{\mathrm{lo}}(x),q^{\mathrm{up}}(x)]$, and $\mathbb P(Y\in\mathcal O_0(X)\mid X\in\mathcal U)=1-\alpha$ (Proposition~\ref{prop:oracle-coverage}); since $t\mapsto\mathbb P(Y\in\phi^\star(X;t)\mid X\in\mathcal U)$ increases strictly through $1-\alpha$ at $t=0$ (Assumption~\ref{ass:regularity}(ii)), it follows that $\tau^{\mathrm{m}}_{\mathrm{int}}=0$. As there is no censoring within $\mathcal U$, Assumption~\ref{ass:snapped-consistency} restricted to $\mathcal U$ coincides with 
$L^1$-consistency on $\mathcal U$ (Assumption~\ref{ass:full-consistency}) by Lemma~\ref{lem:uncensored}; applying Theorem~\ref{thm:ccqr-oracle}(i) to the distribution of $(X,Y)$ given $X\in\mathcal U$ gives $\hat\tau(\alpha; \widehat{\mathcal I})\xrightarrow{p}\tau^{\mathrm{m}}_{\mathrm{int}} = 0$.

\emph{2) Convergence of $\max\{0,\hat\tau(\alpha;\widehat{\mathcal I}^{\mathrm c})\}\xrightarrow{p}0$ under $\pi_{\mathrm{ext}}>0$.}
Unlike the internal case, conditioning on $\mathcal U^{\mathrm c}$ violates Assumption~\ref{ass:non-extreme-cens}
($\mathbb P(X\in\mathcal U\mid X\in\mathcal U^{\mathrm c})=0$), so Theorem~\ref{thm:ccqr-oracle}(i) cannot be invoked.
The truncation $\max\{0,\cdot\}$ means it suffices to bound the threshold from above, which we obtain from the snapped
oracle strictly over-covering on $\mathcal U^{\mathrm c}$: every $x\in\mathcal U^{\mathrm c}$ satisfies
$q^{\mathrm{lo}}(x)\le C^{\mathrm L}$ or $q^{\mathrm{up}}(x)\ge C^{\mathrm R}$, so Proposition~\ref{prop:oracle-coverage}
gives $\mathbb P(Y\in\mathcal O_0(X)\mid X=x)\ge1-\tfrac\alpha2$ in every case, and
\begin{equation}\label{eq:ext-overcover}
  G^{\mathrm{c}}:=\mathbb P\bigl(Y\in\mathcal O_0(X)\mid X\in\mathcal U^{\mathrm c}\bigr)\ \ge\ 1-\tfrac\alpha2\ >\ 1-\alpha.
\end{equation}

Consider the external coverage of the estimated band at level $0$, written as a ratio:
\[
  G_n^{\mathrm c}:=\mathbb P\bigl(Y\in\phi_n(X;0)\mid\hat f_n,\,X\in\mathrm{int}(\hat f_n)^{\mathrm c}\bigr)
  =\frac{N_n}{D_n},
\]
with $N_n:=\mathbb P\bigl(Y\in\phi_n(X;0),\,X\in\mathrm{int}(\hat f_n)^{\mathrm c}\mid\hat f_n\bigr)$ and
$D_n:=\mathbb P\bigl(X\in\mathrm{int}(\hat f_n)^{\mathrm c}\mid\hat f_n\bigr)$.
We show $N_n\xrightarrow{p}\mathbb P(Y\in\mathcal O_0(X),\,X\in\mathcal U^{\mathrm c})$ and $D_n\xrightarrow{p}\pi_{\mathrm{ext}}$
via two mismatch bounds, hence $\xrightarrow{p}0$:
\begin{itemize}
\item \emph{Band mismatch.}
$\mathbb P\bigl(Y\in\phi_n(X;0)\triangle\mathcal O_0(X)\mid\hat f_n\bigr)\le M\,\mathbb E[\lambda(\phi_n(X;0)\triangle\mathcal O_0(X))\mid\hat f_n]$
by the density bound (Assumption~\ref{ass:regularity}(ii)), and $\mathbb E[\lambda(\phi_n(X;0)\triangle\mathcal O_0(X))]\to0$ was shown above.
\item \emph{Region mismatch.}
$\mathbb P\bigl(\mathrm{int}(\hat f_n)\triangle\mathcal U\mid\hat f_n\bigr)\to0$. Since
$\I\{x\in\mathcal U\}=\I\{q^{\mathrm{lo}}(x)>C^{\mathrm L}\}\,\I\{q^{\mathrm{up}}(x)<C^{\mathrm R}\}$ and likewise for
$\mathrm{int}(\hat f_n)$, we have $\mathrm{int}(\hat f_n)\triangle\mathcal U\subseteq A_n^{\mathrm L}\cup A_n^{\mathrm R}$ with
$A_n^{\mathrm L}:=\{\I\{\hat f^{\mathrm{lo}}_n(X)>C^{\mathrm L}\}\ne\I\{q^{\mathrm{lo}}(X)>C^{\mathrm L}\}\}$ and
$A_n^{\mathrm R}$ defined symmetrically. On $A_n^{\mathrm L}$ the values $\hat f^{\mathrm{lo}}_n(X)$ and $q^{\mathrm{lo}}(X)$
straddle $C^{\mathrm L}$, so for any $\eta\in(0,C^{\mathrm R}-C^{\mathrm L})$,
\[
  A_n^{\mathrm L}\subseteq\{|q^{\mathrm{lo}}(X)-C^{\mathrm L}|\le\eta\}
   \cup\{|\hat f^{\mathrm{lo}}_n(X)-C^{\mathrm L}|\le\eta\}
   \cup\{|\Pi\hat f^{\mathrm{lo}}_n(X)-\Pi q^{\mathrm{lo}}(X)|>\eta\}.
\]
The last set has probability $\le\eta^{-1}\mathbb E|\Pi\hat f^{\mathrm{lo}}_n(X)-\Pi q^{\mathrm{lo}}(X)|\to0$
(Assumption~\ref{ass:consistency}); taking $\limsup_n$ then $\eta\downarrow0$, the first vanishes by
Assumption~\ref{ass:regularity}(i) and the second by Assumption~\ref{ass:base-regularity}. Thus
$\mathbb P(A_n^{\mathrm L})\to0$, and symmetrically for $A_n^{\mathrm R}$.
\end{itemize}
The region mismatch bound gives $|D_n-\pi_{\mathrm{ext}}|\xrightarrow{p}0$ and, combined with the band mismatch,
$|N_n-\mathbb P(Y\in\mathcal O_0(X),\,X\in\mathcal U^{\mathrm c})|\xrightarrow{p}0$. Since the denominator limit
$\pi_{\mathrm{ext}}>0$ is strictly positive, this implies
$G_n^{\mathrm c}\xrightarrow{p}G^{\mathrm c}\ge 1-\tfrac\alpha2$ by~\eqref{eq:ext-overcover}.

For $t\in\mathbb R$, let $\widehat{G}_n^{\mathrm c}(t):=n_{\mathrm{ext}}^{-1}\sum_{i\in\widehat{\mathcal I}^{\mathrm c}}
\mathbf 1\{s_n(X_i,Y_i)\le t\}$ with $n_{\mathrm{ext}}=|\widehat{\mathcal I}^{\mathrm c}|$, so that
$\hat\tau(\alpha;\widehat{\mathcal I}^{\mathrm c})=\inf\{t:\widehat{G}_n^{\mathrm c}(t)\ge p_{n,\mathrm{ext}}\}$ with
$p_{n,\mathrm{ext}}=\lceil(n_{\mathrm{ext}}+1)(1-\alpha)\rceil/n_{\mathrm{ext}}\to1-\alpha$. Conditional on $\hat f_n$
and the membership pattern, $\widehat{G}_n^{\mathrm c}(0)$ is an average of $n_{\mathrm{ext}}$ i.i.d.\
Bernoulli$(G_n^{\mathrm c})$ variables, so
$\mathbb E[(\widehat{G}_n^{\mathrm c}(0)-G_n^{\mathrm c})^2\mid\hat f_n]\le1/(4n_{\mathrm{ext}})$. Since
$n_{\mathrm{ext}}\to\infty$ a.s.\ (as $\pi_{\mathrm{ext}}>0$), this gives
$\widehat{G}_n^{\mathrm c}(0)-G_n^{\mathrm c}\xrightarrow{p}0$, and combined with
$G_n^{\mathrm c}\xrightarrow{p}G^{\mathrm c}$ above,
$\widehat{G}_n^{\mathrm c}(0)\xrightarrow{p}G^{\mathrm c}>1-\alpha=\lim p_{n,\mathrm{ext}}$. Hence
$\mathbb P[\widehat{G}_n^{\mathrm c}(0)\ge p_{n,\mathrm{ext}}]\to1$, and on that event
$\hat\tau(\alpha;\widehat{\mathcal I}^{\mathrm c})\le0$. As $\max\{0,\hat\tau(\alpha;\widehat{\mathcal I}^{\mathrm c})\}\ge0$, under $\pi_{\mathrm{ext}}>0$,
\[
  \max\{0,\hat\tau(\alpha;\widehat{\mathcal I}^{\mathrm c})\}\ \xrightarrow{p}\ 0.
\]

\emph{3) Combination.} 
The threshold applied to $X_{n+1}$ is
\[
  \hat\tau^+(\alpha, X_{n+1})
  =\hat\tau(\alpha;\widehat{\mathcal I})\,\I\{X_{n+1}\in\mathrm{int}(\hat f)\}
   +\max\{0,\hat\tau(\alpha;\widehat{\mathcal I}^{\mathrm c})\}\,\I\{X_{n+1}\notin\mathrm{int}(\hat f)\},
\]
If $\pi_{\mathrm{ext}}>0$, then $\hat\tau(\alpha; \widehat{\mathcal I}) \xrightarrow{p} 0$ and $\max\{0,\hat\tau(\alpha;\widehat{\mathcal I}^{\mathrm c})\} \xrightarrow{p}0$, as shown above.
Otherwise, if $\pi_{\mathrm{ext}}=0$, then $\hat\tau(\alpha; \widehat{\mathcal I}) \xrightarrow{p} 0$ and the external branch is asymptotically vacuous:
$\mathbb P(X_{n+1}\notin\mathrm{int}(\hat f))\to\pi_{\mathrm{ext}}=0$ (by the region mismatch bound), so the
indicator $\I\{X_{n+1}\notin\mathrm{int}(\hat f)\}\xrightarrow{p}0$ and the truncation
$\max\{0,\hat\tau(\alpha;\widehat{\mathcal I}^{\mathrm c})\}\le R$ is multiplied by a vanishing indicator; thus $\hat\tau^+(\alpha, X_{n+1}) \xrightarrow{p}0$ in either case.

\item[(i-b)] We now prove $\widehat{\mathcal{C}}^+_n \to \mathcal{O}_0$. Write
$\hat\tau_n^+(x):=\hat\tau^+(\alpha,x)$ for the applied threshold and set
\[
  \Delta_n:=|\hat\tau(\alpha;\widehat{\mathcal I})|
   +\max\{0,\hat\tau(\alpha;\widehat{\mathcal I}^{\mathrm c})\}\,\I\{X_{n+1}\notin\mathrm{int}(\hat f)\}.
\]
Since $\hat\tau_n^+(x)$ equals one of its two summands according to whether $x\in\mathrm{int}(\hat f)$,
$\sup_x|\hat\tau_n^+(x)|\le\Delta_n$, and $\Delta_n\le R$. Moreover $\Delta_n\xrightarrow{p}0$: the first summand
vanishes by (i-a), and the second does too — if $\pi_{\mathrm{ext}}>0$ because
$\max\{0,\hat\tau(\alpha;\widehat{\mathcal I}^{\mathrm c})\}\xrightarrow{p}0$, and if $\pi_{\mathrm{ext}}=0$ because the
bounded factor is multiplied by $\I\{X_{n+1}\notin\mathrm{int}(\hat f)\}\xrightarrow{p}0$. Define
\[
    \delta^{\mathrm{lo}}_n := \Delta_n + |\underline\psi(\hat f^{\mathrm{lo}}_n)-\underline\psi(q^{\mathrm{lo}})|,
    \qquad
    \delta^{\mathrm{up}}_n := \Delta_n + |\overline\psi(\hat f^{\mathrm{up}}_n)-\overline\psi(q^{\mathrm{up}})|.
\]
Define the open interval $\mathcal I_n^{\mathrm L}$ between $C^{\mathrm L}$ and $C^{\mathrm L}+\hat\tau_n^+(X)$
(possibly empty, as $\hat\tau_n^+(X)$ may be negative). Note that
\[
    |\underline\psi(\hat f^{\mathrm{lo}}_n-\hat\tau_n^+(X))-\underline\psi(q^{\mathrm{lo}})|
    \le \delta^{\mathrm{lo}}_n + R\,\mathbf 1\{\hat f^{\mathrm{lo}}_n\in\mathcal I_n^{\mathrm L}\},
\]
because $\underline\psi$ is $1$-Lipschitz except for its jump at $C^{\mathrm L}$, and that jump is crossed iff $C^{\mathrm L}$ lies strictly between $\hat f^{\mathrm{lo}}_n$ and $\hat f^{\mathrm{lo}}_n-\hat\tau_n^+(X)$, i.e.\ iff $\hat f^{\mathrm{lo}}_n\in\mathcal I_n^{\mathrm L}$.
Similarly, with $\mathcal I_n^{\mathrm R}$ denoting the corresponding open interval between $C^{\mathrm R}-\hat\tau_n^+(X)$ and $C^{\mathrm R}$,
\[
    |\overline\psi(\hat f^{\mathrm{up}}_n-\hat\tau_n^+(X))-\overline\psi(q^{\mathrm{up}})|
    \le \delta^{\mathrm{up}}_n + R\,\mathbf 1\{\hat f^{\mathrm{up}}_n\in\mathcal I_n^{\mathrm R}\}.
\]
Therefore,
\begin{align*}
    \mathbb E\bigl[\lambda(\widehat{\mathcal C}^+_n(X)\triangle\mathcal O_0(X))\bigr]
    \le \mathbb E[\delta^{\mathrm{lo}}_n] + \mathbb E[\delta^{\mathrm{up}}_n]
       + R\,\mathbb P(\hat f^{\mathrm{lo}}_n\in\mathcal I_n^{\mathrm L})
       + R\,\mathbb P(\hat f^{\mathrm{up}}_n\in\mathcal I_n^{\mathrm R}).
\end{align*}
The first two terms vanish by Assumption~\ref{ass:snapped-consistency} and $\Delta_n\xrightarrow{p}0$ (with $\Delta_n\le R$, so $\mathbb E\Delta_n\to0$).

For the lower straddle term, note that $\mathcal I_n^{\mathrm L}$, the open interval with endpoints
$C^{\mathrm L}$ and $C^{\mathrm L}+\hat\tau_n^+(X)$, has length $|\hat\tau_n^+(X)|\le\Delta_n$, so on $\{\Delta_n\le\xi\}$ it is
contained in $[C^{\mathrm L}-\xi,C^{\mathrm L}+\xi]$. Hence for every $\xi>0$,
\[
  \mathbb P\bigl(\hat f^{\mathrm{lo}}_n\in\mathcal I_n^{\mathrm L}\bigr)
  \le \mathbb P\bigl(\Delta_n>\xi\bigr)+\mathbb P\bigl(|\hat f^{\mathrm{lo}}_n-C^{\mathrm L}|\le\xi\bigr).
\]
Taking $\limsup_n$ (the first term vanishes since $\Delta_n\xrightarrow{p}0$, by (i-a)) and then $\xi\downarrow0$,
\[
  \limsup_n\mathbb P\bigl(\hat f^{\mathrm{lo}}_n\in\mathcal I_n^{\mathrm L}\bigr)
  \le \lim_{\xi\downarrow0}\limsup_n\mathbb P\bigl(|\hat f^{\mathrm{lo}}_n-C^{\mathrm L}|\le\xi\bigr)=0
\]
by Assumption~\ref{ass:base-regularity}; the left side does not depend on $\xi$, so
$\mathbb P(\hat f^{\mathrm{lo}}_n\in\mathcal I_n^{\mathrm L})\to0$. The upper term is symmetric, with $C^{\mathrm R}$,
$\hat f^{\mathrm{up}}_n$, and Assumption~\ref{ass:base-regularity}.

Therefore
\[
    \mathbb E\bigl[\lambda(\widehat{\mathcal C}^+_n(X)\triangle\mathcal O_0(X))\bigr] \to 0,
\]
and by non-negativity $\widehat{\mathcal{C}}^+_n\to\mathcal{O}_0$ in the sense of Definition~\ref{def:band-convergence}.

\item[(ii)] To prove the convergence of marginal coverage:
  \begin{align*}
    \bigl|\mathbb{P}(Y \in \widehat{\mathcal{C}}^+_n(X))
    - \mathbb{P}(Y \in \mathcal{O}_0(X))\bigr|
    & \leq \mathbb{P}(Y \in \widehat{\mathcal{C}}^+_n(X) \triangle \mathcal{O}_0(X)) \\
    & \leq M\,\mathbb{E} \bigl( \lambda( \widehat{\mathcal{C}}^+_n(X) \triangle \mathcal{O}_0(X) ) \bigr) \to 0,
  \end{align*}
where the second inequality uses $f_{Y\mid X} \leq M$ (Assumption~\ref{ass:regularity}(ii)) and the limit is from Step~i-b.

\item[(iii)] To establish asymptotic conditional coverage, let $\mathcal{D}_{n}$ denote the calibration data set of size $n$ and define
$V_n := \mathbb{E}[\lambda(\widehat{\mathcal{C}}^+_n(X) \triangle \mathcal{O}_0(X)) \mid \mathcal{D}_{n}]$; 
then $\mathbb{E}(V_n) \to 0$ (from Step~i-b), so $V_n \xrightarrow{p} 0$ by non-negativity.
On $\{V_n > 0\}$ define $\epsilon'_n := V_n^{1/2}$ and
$\Lambda_n := \{x : \lambda(\widehat{\mathcal{C}}^+_n(x) \triangle
\mathcal{O}_0(x)) \leq \epsilon'_n\}$ (and $\Lambda_n := \mathcal{X}$ otherwise). By the
conditional Markov inequality,
$\mathbb{P}(X \notin \Lambda_n \mid \mathcal{D}_{n}) \leq V_n/\epsilon'_n = V_n^{1/2}
\xrightarrow{p} 0$, so $\mathbb{P}(X \in \Lambda_n \mid \mathcal{D}_{n}) = 1 - o_P(1)$, and on
$\Lambda_n$ the density bound gives
\[
    \sup_{x \in \Lambda_n}
    \bigl|\mathbb{P}(Y \in \widehat{\mathcal{C}}^+_n(x) \mid X=x)
        - \mathbb{P}(Y \in \mathcal{O}_0(x) \mid X=x)\bigr|
    \leq M\epsilon'_n = M V_n^{1/2} \xrightarrow{p} 0.
\]
This is Definition~\ref{def:cond-cov} with limit $\mathcal{O}_0$, whose conditional
coverage is given by Proposition~\ref{prop:oracle-coverage}.
\end{enumerate}
\end{proof}

\subsection{Proofs for Appendix~\ref{sec:app-ecqr}}

\begin{proof}[of Corollary~\ref{cor:pexp-loose-bound}]
Abbreviate $\tilde{L} = \tilde{L}(X_{n+1})$ and $\tilde{U} = \tilde{U}(X_{n+1})$.
By definition,
\begin{align*}
    p_{\mathrm{slack}}
    \;=\; & \mathbb{P}\bigl(\tilde{L} \leq \tilde{U} < C^{\mathrm{L}},\; Y_{n+1} \leq C^{\mathrm{L}}\bigr)
          + \mathbb{P}\bigl(\tilde{U} \geq \tilde{L} > C^{\mathrm{R}},\; Y_{n+1} \geq C^{\mathrm{R}}\bigr) \geq 0.
\end{align*}
Dropping the constraint $\tilde{L} \leq \tilde{U} < C^{\mathrm{L}}$ from
the first summand yields the bound $\mathbb{P}(Y_{n+1} < C^{\mathrm{L}})$; dropping
$\tilde{U} \geq \tilde{L} > C^{\mathrm{R}}$ from the second yields
$\mathbb{P}(Y_{n+1} > C^{\mathrm{R}})$. Together,
\[
    p_{\mathrm{slack}} \leq \mathbb{P}(Y_{n+1} < C^{\mathrm{L}}) + \mathbb{P}(Y_{n+1} > C^{\mathrm{R}}) \;=\; p_{\mathrm{cens}}.
\]
Symmetrically, dropping $Y_{n+1} \leq \tilde{U}$
from the first summand yields $\mathbb{P}(\tilde{L} \leq \tilde{U} < C^{\mathrm{L}})$, and dropping $C^{\mathrm{R}} < \tilde{L}$ from the second
yields $\mathbb{P}(\tilde{U} \geq \tilde{L} > C^{\mathrm{R}})$. Thus,
$p_{\mathrm{slack}} \leq p_{\mathrm{ext}}$.

Taking the minimum of the two bounds gives $p_{\mathrm{slack}} \leq
\min\{p_{\mathrm{cens}}, p_{\mathrm{ext}}\}$, and substituting into
the upper bound of Theorem~\ref{thm:ecqr-validity} gives the second inequality.
\end{proof}

\begin{proof}[of Lemma~\ref{lem:pecqr-no-slack}]
The projected base interval satisfies
$C^{\mathrm{L}} \leq \Pi\hat{f}^{\mathrm{lo}}(X) \leq \Pi\hat{f}^{\mathrm{up}}(X) \leq C^{\mathrm{R}}$
almost surely, so its midpoint
$\hat{m}_{\mathrm{proj}}(X) := (\Pi\hat{f}^{\mathrm{lo}}(X) + \Pi\hat{f}^{\mathrm{up}}(X))/2$
lies in $[C^{\mathrm{L}}, C^{\mathrm{R}}]$. A nonempty raw interval
$[\tilde{L}(X), \tilde{U}(X)]
= [\Pi\hat{f}^{\mathrm{lo}}(X) - \hat\tau,\, \Pi\hat{f}^{\mathrm{up}}(X) + \hat\tau]$
has $\hat\tau \geq \tfrac12(\Pi\hat{f}^{\mathrm{lo}}(X) - \Pi\hat{f}^{\mathrm{up}}(X))$,
and therefore retains this midpoint:
\[
    \tilde{U}(X) = \Pi\hat{f}^{\mathrm{up}}(X) + \hat\tau
        \;\geq\; \hat{m}_{\mathrm{proj}}(X) \;\geq\; C^{\mathrm{L}},
    \qquad
    \tilde{L}(X) = \Pi\hat{f}^{\mathrm{lo}}(X) - \hat\tau
        \leq \hat{m}_{\mathrm{proj}}(X) \leq C^{\mathrm{R}}.
\]
Thus a nonempty raw interval never lies entirely
outside $[C^{\mathrm{L}}, C^{\mathrm{R}}]$, so both events defining $p_{\mathrm{ext}}$,
\[
    \{\tilde{L}(X) \leq \tilde{U}(X) < C^{\mathrm{L}}\}
    \quad\text{and}\quad
    \{C^{\mathrm{R}} < \tilde{L}(X) \leq \tilde{U}(X)\},
\]
have probability zero. Hence $p_{\mathrm{ext}} = 0$, and
Corollary~\ref{cor:pexp-loose-bound} gives
$p_{\mathrm{slack}} \leq \min\{p_{\mathrm{cens}}, p_{\mathrm{ext}}\} = 0$.
\end{proof}

\begin{proof}[of Theorem~\ref{thm:pecqr-validity}]
The lower bound is from Theorem~\ref{thm:ecqr-validity}, which in this case applies with $p_{\mathrm{slack}} = 0$ as shown in Section~\ref{sec:eCQR}.
For the upper bound, Theorem~\ref{thm:ecqr-validity} with $p_{\mathrm{slack}} = 0$ gives
\begin{align*}
    \mathbb{P}\bigl[Y_{n+1} \in \widehat{\mathcal{C}}^{\mathrm{peCQR}}(X_{n+1})\bigr]
    & = \mathbb{P}\bigl[\tilde{Y}_{n+1} \in \phi^{\mathrm{CQR}}\bigl(X_{n+1};\, \hat\tau^{\mathrm{eCQR}}(\alpha)\bigr)\bigr] \\
    & = \mathbb{P}\bigl(\tilde{S}_{n+1} \leq \hat\tau^{\mathrm{eCQR}}\bigr),
\end{align*}
where $\tilde S_i := s^{\mathrm{CQR}}(X_i,\tilde Y_i)$ is the score computed using the projected
base quantiles, for $i \in [n+1]$, and $\hat\tau^{\mathrm{eCQR}}(\alpha)$ is the $k$-th smallest value
of $\{\tilde S_i\}_{i=1}^n \cup \{R\}$, $k := \lceil(n+1)(1-\alpha)\rceil$.
The standard \emph{quantile inflation lemma}
\citep[e.g.,][Lemma 3.4]{angelopoulos2024conformal} identifies this event with a
comparison against the order statistics of the full score vector:
\[
    \mathbb{P}\bigl(\tilde S_{n+1} \leq \hat\tau^{\mathrm{eCQR}}(\alpha)\bigr)
    = \mathbb{P}\bigl(\tilde S_{n+1} \leq \tilde S'_{(k)}\bigr),
\]
where $\tilde S'_{(k)}$ is the $k$-th order statistic of $\{\tilde S_i\}_{i=1}^{n+1}$.
We decompose
\[
    \mathbb{P}\bigl(\tilde S_{n+1} \leq \tilde S'_{(k)}\bigr)
    = \mathbb{P}\bigl(\tilde S_{n+1} < \tilde S'_{(k)}\bigr)
    + \mathbb{P}\bigl(\tilde S_{n+1} = \tilde S'_{(k)}\bigr).
\]
For the inequality term, exchangeability of $\{\tilde S_i\}_{i=1}^{n+1}$ gives
\citep[Fact 2.15(i)]{angelopoulos2024conformal}
\[
    \mathbb{P}\bigl(\tilde S_{n+1} < \tilde S'_{(k)}\bigr) \leq \frac{k-1}{n+1}.
\]
For the equality term, $\tilde S_{n+1} = \tilde S'_{(k)}$ holds either when
$\tilde S_{n+1}$ is the (tie-free) $k$-th order statistic, an event of probability
at most $\tfrac{1}{n+1}$ by exchangeability, or when $\tilde S_{n+1}$ ties another
score, which by assumption can occur only at the value $0$; hence
\[
    \mathbb{P}\bigl(\tilde S_{n+1} = \tilde S'_{(k)}\bigr)
    \leq \frac{1}{n+1} + \mathbb{P}(\tilde S_{n+1} = 0)
    = \frac{1}{n+1} + p_{\mathrm{zero}}(\hat{f}),
\]
since the event of $p_{\mathrm{zero}}(\hat{f})$ is precisely the event under which scores are zero.
Combining these results, we arrive at:
\[
    \mathbb{P}\bigl(\tilde S_{n+1} \leq \tilde S'_{(k)}\bigr)
    \leq \frac{k-1}{n+1} + \frac{1}{n+1} + p_{\mathrm{zero}}(\hat{f})
    \leq 1 - \alpha + \frac{1}{n+1} + p_{\mathrm{zero}}(\hat{f}).
\]

\end{proof}

\subsection{Proofs for Appendix~\ref{app:theory}} \label{app:proofs-theory-app}

\begin{proof}[of Proposition~\ref{prop:ecqr-ccqr}]
Since both methods predict using a function $\phi(\cdot; \tau)$ in the nested family~\eqref{eq:phi-def}, it suffices to compare how they calibrate $\tau$. 
The eCQR and ClipCQR scores are $\tilde{S}^{\mathrm{CQR}}_i = s^{\mathrm{CQR}}(X_i, \tilde{Y}_i)$ from~\eqref{eq:cqr-score} and $S_i = s(X_i, \tilde{Y}_i)$ from~\eqref{eq:score-formula}, respectively. These can be directly compared, distinguishing three possible cases based on the position of each $\tilde{Y}_i$ for $i \in [n]$.
\begin{enumerate}[label=(\roman*)]
\item If $\tilde{Y}_i \in (C^{\mathrm{L}}, C^{\mathrm{R}})$, the middle case of~\eqref{eq:score-formula}
coincides with~\eqref{eq:cqr-score-closed}, so $S_i = \tilde{S}^{\mathrm{CQR}}_i$.
\item If $\tilde{Y}_i = C^{\mathrm{L}}$, write $A = \hat{f}^{\mathrm{lo}}(X_i) - C^{\mathrm{L}}$ and
$B = C^{\mathrm{L}} - \hat{f}^{\mathrm{up}}(X_i)$, so $\tilde{S}^{\mathrm{CQR}}_i = \max\{A, B\}$;
since $\tfrac12(\hat{f}^{\mathrm{lo}}(X_i) - \hat{f}^{\mathrm{up}}(X_i)) = \tfrac12(A + B)$,
\[
    S_i = \max\bigl\{A,\, \tfrac12(A + B)\bigr\} \leq \max\{A, B\} = \tilde{S}^{\mathrm{CQR}}_i,
\]
with equality iff $B \leq A$, i.e.\ $\hat{m}(X_i) \geq C^{\mathrm{L}}$.
\item the case $\tilde{Y}_i = C^{\mathrm{R}}$ is symmetric: $S_i \leq \tilde{S}^{\mathrm{CQR}}_i$, with equality
iff $\hat{m}(X_i) \leq C^{\mathrm{R}}$.
\end{enumerate}
Hence $S_i \leq \tilde{S}^{\mathrm{CQR}}_i$ for every $i \in [n]$, which implies $\hat\tau(\alpha) \leq \hat\tau^{\mathrm{eCQR}}(\alpha)$ almost surely,
and thus $\widehat{\mathcal{C}}(X_{n+1}) \subseteq \widehat{\mathcal{C}}^{\mathrm{eCQR}}(X_{n+1})$. 
 
On the event $\mathcal{E}$ every calibration midpoint satisfies $\hat{m}(X_i) \in [C^{\mathrm{L}}, C^{\mathrm{R}}]$,
so $S_i = \tilde{S}^{\mathrm{CQR}}_i$ at each $i \in [n]$ and the two prediction sets are equal.
\end{proof}

\begin{proof}[of Corollary~\ref{cor:ecqr-pecqr-ccqr}]
On $\mathcal{E}'$, both base quantiles lie in $[C^{\mathrm{L}}, C^{\mathrm{R}}]$ for every
$i \in [n] \cup \{n+1\}$, so peCQR effectively does not project $\hat{f}$, and 
$\widehat{\mathcal{C}}^{\mathrm{peCQR}}(X_{n+1}) = \widehat{\mathcal{C}}^{\mathrm{eCQR}}(X_{n+1})$ almost surely.
Moreover $\mathcal{E}' \subseteq \mathcal{E}$, since
$C^{\mathrm{L}} < \hat{f}^{\mathrm{lo}}(X_i) \le \hat{f}^{\mathrm{up}}(X_i) < C^{\mathrm{R}}$ forces
$\hat{m}(X_i) \in [C^{\mathrm{L}}, C^{\mathrm{R}}]$, so Proposition~\ref{prop:ecqr-ccqr} gives
$\widehat{\mathcal{C}}(X_{n+1}) = \widehat{\mathcal{C}}^{\mathrm{eCQR}}(X_{n+1})$.
Finally, on $\mathcal{E}'$ every calibration index $i \in [n]$ satisfies $X_i \in \widehat{\mathcal U}$, so the
internal region contains all $n$ calibration points and $ \hat\tau^+(\alpha, X_{n+1})  = \hat\tau(\alpha;\widehat{\mathcal I}) = \hat{\tau}(\alpha)$ almost surely, 
so $\widehat{\mathcal{C}}^+(X_{n+1}) = \widehat{\mathcal{C}}(X_{n+1})$.
\end{proof}

\begin{proof}[of Proposition~\ref{prop:oracle-coverage}]
$~$
\begin{enumerate}
\item[(i)] By definition of $\psi$,
$\mathcal{O}_0(x) = \psi(q^{\mathrm{lo}}(x), q^{\mathrm{up}}(x)) \supseteq
[q^{\mathrm{lo}}(x), q^{\mathrm{up}}(x)]$, so 
$\mathbb{P}(Y \in \mathcal{O}_0(X) \mid X) \geq F_{Y\mid X}(q^{\mathrm{up}}(X)) -
F_{Y\mid X}(q^{\mathrm{lo}}(X)^-) \geq (1-\tfrac\alpha2) - \tfrac\alpha2 = 1-\alpha$
a.s. Hence $\mathbb{P}(s^\star(X, Y) \leq 0) = \mathbb{P}(Y \in \mathcal{O}_0(X)) \geq 1-\alpha$
by the identity $\{Y \in \phi^\star(X; \tau)\} = \{s^\star(X, Y) \leq \tau\}$, so
$\tau^{\mathrm{m}} \leq 0$ and, by monotonicity of $\phi^\star(x; \cdot)$,
$\mathcal{O}_{\mathrm{m}}(x) \subseteq \mathcal{O}_0(x)$. The same identity gives
$\mathbb{P}(Y \in \mathcal{O}_{\mathrm{m}}(X)) = \mathbb{P}(s^\star(X, Y) \leq \tau^{\mathrm{m}}) \geq 1-\alpha$.
Since $\tau^{\mathrm{m}}$ is the smallest $\tau$ satisfying $\mathbb{P}(s^\star(X, Y) \leq \tau) \geq 1-\alpha$, 
and the family is nested, $\mathcal{O}_{\mathrm{m}}$ the tightest family member with
marginal coverage at least $1-\alpha$.

\item[(ii)] $\mathcal{O}_0(x) = [\underline\psi(q^{\mathrm{lo}}(x)),\, \overline\psi(q^{\mathrm{up}}(x))]$, and under
Assumption~\ref{ass:regularity} $F_{Y\mid X}(q^{\mathrm{lo}-} \mid x) = \tfrac\alpha2$ and
$F_{Y\mid X}(q^{\mathrm{up}} \mid x) = 1-\tfrac\alpha2$. Whenever the band meets the observable window
($q^{\mathrm{up}}(x)\ge C^{\mathrm L}$ and $q^{\mathrm{lo}}(x)\le C^{\mathrm R}$), snapping moves only the censored endpoint, so
$F_{Y\mid X}(\underline\psi(q^{\mathrm{lo}})^- \mid x)=\tfrac\alpha2\,\I\{q^{\mathrm{lo}}(x)>C^{\mathrm L}\}$ and
$F_{Y\mid X}(\overline\psi(q^{\mathrm{up}}) \mid x)=1-\tfrac\alpha2\,\I\{q^{\mathrm{up}}(x)<C^{\mathrm R}\}$, hence
\[
    \mathbb{P}(Y \in \mathcal{O}_0(x) \mid X = x)
    = 1-\alpha + \tfrac\alpha2\bigl(\I\{q^{\mathrm{lo}}(x)\le C^{\mathrm L}\}+\I\{q^{\mathrm{up}}(x)\ge C^{\mathrm R}\}\bigr).
\]
The remaining bands lie entirely outside the window: if $q^{\mathrm{up}}(x)<C^{\mathrm L}$ then
$\mathcal{O}_0(x)=[Y^{\min},C^{\mathrm L}]$ with coverage $F_{Y\mid X}(C^{\mathrm L}\mid x)>1-\tfrac\alpha2$, and if
$q^{\mathrm{lo}}(x)>C^{\mathrm R}$ then $\mathcal{O}_0(x)=[C^{\mathrm R},Y^{\max}]$ with coverage
$1-F_{Y\mid X}(C^{\mathrm R}\mid x)>1-\tfrac\alpha2$.

\item[(iii)] By the score identity, $\mathbb{P}(Y \in \mathcal{O}_{\mathrm{m}}(X)) = \mathbb{P}(s^\star(X,Y) \leq \tau^{\mathrm{m}}) = 1-\alpha$; indeed,
from~\eqref{eq:oracle-score} the only candidate atoms of $s^\star(X,Y)$ are the censored-region values
$c^{\mathrm L}(X), c^{\mathrm R}(X)$, which are continuous functions of the atomless quantiles
(Assumption~\ref{ass:regularity}(i)) and hence carry no mass. For conditional coverage, fix $x$ with
$C^{\mathrm{L}} < q^{\mathrm{lo}}(x) \leq q^{\mathrm{up}}(x) < C^{\mathrm{R}}$. Since $\tau^{\mathrm{m}} \leq 0$,
$\mathcal{O}_{\mathrm{m}}(x) = [q^{\mathrm{lo}}(x) - \tau^{\mathrm{m}}, q^{\mathrm{up}}(x) + \tau^{\mathrm{m}}] \subseteq [q^{\mathrm{lo}}(x), q^{\mathrm{up}}(x)]$,
so $\mathbb{P}(Y \in \mathcal{O}_{\mathrm{m}}(X) \mid x) \leq 1-\alpha$, with equality iff $\tau^{\mathrm{m}} = 0$, which by part~(iv) holds iff
Assumption~\ref{ass:observability} holds.

\item[(iv)] ($\Leftarrow$) Under Assumption~\ref{ass:observability}, $\pi^{\mathrm L} = \pi^{\mathrm R} = 0$, so
$\mathbb{P}(s^\star \leq 0) = 1-\alpha$ by (ii); for $t < 0$ the band $\phi^\star(X; t)$ is an interior strict
contraction of $[q^{\mathrm{lo}}(X), q^{\mathrm{up}}(X)]$ with coverage below $1-\alpha$ (positive density), so
$\tau^{\mathrm{m}} = 0$ and $\mathcal{O}_{\mathrm{m}} = \mathcal{O}_0 = [q^{\mathrm{lo}}, q^{\mathrm{up}}] = \mathcal{O}_{\mathrm{uncens}}$.
($\Rightarrow$) If Assumption~\ref{ass:observability} fails, then $\pi^{\mathrm L} + \pi^{\mathrm R} > 0$, so
$\mathbb{P}(s^\star \leq 0) = 1-\alpha + \tfrac\alpha2(\pi^{\mathrm L} + \pi^{\mathrm R}) > 1-\alpha$ by (ii). Since
$s^\star(X, Y)$ is atomless under Assumption~\ref{ass:regularity}, $\mathbb{P}(s^\star < 0) = \mathbb{P}(s^\star \leq 0) > 1-\alpha$,
so $\tau^{\mathrm{m}} < 0$ and $\mathcal{O}_{\mathrm{m}} \subset \mathcal{O}_0$.

\end{enumerate}
\end{proof}

\begin{proof}[of Theorem~\ref{thm:ccqr-oracle}]
Define also $\Delta_n(x) := |\hat{f}_n^{\mathrm{lo}}(x)-q^{\mathrm{lo}}(x)| +
|\hat{f}_n^{\mathrm{up}}(x)-q^{\mathrm{up}}(x)|$, and $a_n := \mathbb{E}[\Delta_n(X)]$, so
$a_n \to 0$ by Assumption~\ref{ass:full-consistency}. 

The map $(f^{\mathrm{lo}}, f^{\mathrm{up}}) \mapsto s(x, y; f^{\mathrm{lo}}, f^{\mathrm{up}})$
from~\eqref{eq:score-formula} is $1$-Lipschitz in the $\ell_1$ norm, uniformly in $(x,y)$:
each branch is a maximum of affine functions whose gradients have $\ell_\infty$-norm at most
$1$, and the branch is fixed by $y$ independently of the predictor.
Therefore, 
\begin{equation}\label{eq:lipschitz-scores}
    |s_n(x,y) - s^\star(x,y)| \leq \Delta_n(x) \qquad \text{for all } (x,y),
\end{equation}
where $s_n$ is the score function~\eqref{eq:score-formula} with base predictor $\hat{f}_n$.
Abbreviate $S_n = s_n(X, Y)$ and $S^\star := s^\star(X, Y)$, and define 
$F^\star(t) = \mathbb{P}(S^\star \leq t) = \mathbb{P}(Y \in \phi^\star(X;t))$.
Recall also the identity $s(x,\tilde{y})=s(x,y)$ with $\tilde{y} = \Pi(y)$ for all $(x,y)$.

\begin{enumerate}
\item[(i-a)] We begin by proving $\hat\tau_n(\alpha)\xrightarrow{p}\tau^{\mathrm{m}} \leq 0$.
Since $F^\star(0) = \mathbb{P}(Y \in \mathcal{O}_0) \geq 1-\alpha$ by
Proposition~\ref{prop:oracle-coverage}, definition~\eqref{eq:marginal-oracle} gives
$\tau^\star = \inf\{t : F^\star(t) \geq 1-\alpha\} \leq 0$. 
If $a_n = 0$ then $S_n = S^\star$ almost surely and the claim is immediate, so assume $a_n>0$ and define
$\epsilon_n := \sqrt{a_n} \to 0$. 
Define $B_n := \{x : \Delta_n(x) > \epsilon_n\}$ and $\mathcal{B}_n := \{i \leq n : X_i \in B_n\}$. 
Combining the definition of $B_n$ with \eqref{eq:lipschitz-scores} implies
$|s_n(X_i,Y_i) - s^\star(X_i,Y_i)| \leq \epsilon_n$ for every $i \notin \mathcal{B}_n$.

Let $\widehat{F}_n, \widehat{F}^\star_n$ be the empirical CDFs of
$\{s_n(X_i,Y_i)\}_{i=1}^{n}$ and $\{s^\star(X_i,Y_i)\}_{i=1}^{n}$, so that $\hat\tau_n(\alpha) =
\widehat{F}_n^{-1}(p_n)$ with $p_n = \lceil (n+1)(1-\alpha)\rceil/n \to 1-\alpha$. Perturbing
at most $|\mathcal{B}_n|$ of the $n$ scores and shifting the rest by at most $\epsilon_n$ moves
the $p_n$-quantile by at most a rank shift of $|\mathcal{B}_n|$ and a value shift of
$\epsilon_n$:
\[
    \widehat{F}_n^{\star -1}\!\Bigl(p_n - \tfrac{|\mathcal{B}_n|}{n}\Bigr) - \epsilon_n
    \ \leq\ \hat\tau_n(\alpha) \ \leq\
    \widehat{F}_n^{\star -1}\!\Bigl(p_n + \tfrac{|\mathcal{B}_n|}{n}\Bigr) + \epsilon_n.
\]
By Markov's inequality, $\beta_n := \mathbb{P}(X \in B_n) \leq a_n/\epsilon_n = \epsilon_n \to 0$. 
Similarly, since $\mathbb{E}\bigl( |\mathcal{B}_n| \bigr) = n\beta_n$ we have $|\mathcal{B}_n|/n \xrightarrow{p} 0$.
Since we know $\epsilon_n \to 0$, the sandwich yields $\hat\tau_n(\alpha) \xrightarrow{p} \tau^\star$.
By Assumption~\ref{ass:regularity}(ii) $\tau^{\mathrm{m}}$ is the unique $(1-\alpha)$-quantile of
$F^\star$, so the quantile functional is continuous at $1-\alpha$; with
$p_n \pm |\mathcal{B}_n|/n \xrightarrow{p} 1-\alpha$ and Glivenko--Cantelli, classical
sample-quantile asymptotics \citep[Ch.~21]{van2000asymptotic} give
$\widehat{F}_n^{\star -1}(p_n \pm |\mathcal{B}_n|/n) \xrightarrow{p} \tau^{\mathrm{m}}$. 

\item[(i-b)] We now prove $\widehat{\mathcal{C}}_n \to \mathcal{O}_{\mathrm{m}}$.
Write the pre-snap endpoints $\hat{a}_n(X) = \hat{f}_n^{\mathrm{lo}}(X) - \hat\tau$,
$a^{\mathrm{m}}(X) = q^{\mathrm{lo}}(X) - \tau^{\mathrm{m}}$, and $\hat{b}_n, b^{\mathrm{m}}$ analogously, so that
$\widehat{\mathcal{C}}_n(X) = \psi(\hat{a}_n,\hat{b}_n)$ and
$\mathcal{O}_{\mathrm{m}}(X) = \psi(a^{\mathrm{m}},b^{\mathrm{m}})$. Set
$\delta^{\mathrm{lo}}_n := |\hat{a}_n-a^{\mathrm{m}}| \leq
|\hat{f}_n^{\mathrm{lo}}(X)-q^{\mathrm{lo}}(X)| + |\hat\tau-\tau^{\mathrm{m}}|$ and likewise
$\delta^{\mathrm{up}}_n$. 

Write the snapping map element-wise as in~\eqref{eq:psi-def}, 
$\psi(a,b) = [\underline{\psi}(a), \overline{\psi}(b)]$.
Each coordinate is nondecreasing and $1$-Lipschitz apart from a single upward jump: the jump of $\underline{\psi}$
is at $a = C^{\mathrm{L}}$ (from $Y^{\min}$ to $C^{\mathrm{L}}$) and the jump of $\overline{\psi}$ is at
$b = C^{\mathrm{R}}$ (from $C^{\mathrm{R}}$ to $Y^{\max}$). Since the symmetric difference of
two intervals is at most the sum of the gaps between their endpoints,
\[
    \lambda\bigl(\widehat{\mathcal{C}}_n(X) \triangle \mathcal{O}_{\mathrm{m}}(X)\bigr)
    \leq |\underline{\psi}(\hat{a}_n) - \underline{\psi}(a^{\mathrm{m}})|
       + |\overline{\psi}(\hat{b}_n) - \overline{\psi}(b^{\mathrm{m}})|.
\]
We bound the lower term; the upper is symmetric. There are two cases.

If $\hat{a}_n$ and $a^{\mathrm{m}}$ lie on the same side of $C^{\mathrm{L}}$, then either both are at
most $C^{\mathrm{L}}$, where $\underline{\psi}(\hat{a}_n) = \underline{\psi}(a^{\mathrm{m}}) = Y^{\min}$
and the difference is $0$, or both exceed $C^{\mathrm{L}}$, where
$\underline{\psi} = \min\{\,\cdot\,, C^{\mathrm{R}}\}$ is non-expansive and 
$|\underline{\psi}(\hat{a}_n) - \underline{\psi}(a^{\mathrm{m}})| \leq |\hat{a}_n - a^{\mathrm{m}}| =
\delta^{\mathrm{lo}}_n$. In either case: $|\underline{\psi}(\hat{a}_n) - \underline{\psi}(a^{\mathrm{m}})| \leq \delta^{\mathrm{lo}}_n$.

If instead $\hat{a}_n$ and $a^{\mathrm{m}}$ straddle opposite sides of $C^{\mathrm{L}}$, then
$\underline{\psi}(\hat{a}_n), \underline{\psi}(a^{\mathrm{m}}) \in [Y^{\min}, Y^{\max}]$ gives the
crude bound $|\underline{\psi}(\hat{a}_n) - \underline{\psi}(a^{\mathrm{m}})| \leq R$; and since
$C^{\mathrm{L}}$ lies between $\hat{a}_n$ and $a^{\mathrm{m}}$, we have
$|a^{\mathrm{m}} - C^{\mathrm{L}}| \leq |\hat{a}_n - a^{\mathrm{m}}| = \delta^{\mathrm{lo}}_n$.

Combining the two cases,
\[
    |\underline{\psi}(\hat{a}_n) - \underline{\psi}(a^{\mathrm{m}})|
    \leq \delta^{\mathrm{lo}}_n
       + R\,\I\{|a^{\mathrm{m}} - C^{\mathrm{L}}| \leq \delta^{\mathrm{lo}}_n\},
\]
because the opposite-side case can occur only when the indicator equals $1$.

The other term is analogous:
\[
    |\overline{\psi}(\hat{b}_n) - \overline{\psi}(b^{\mathrm{m}})|
    \leq \delta^{\mathrm{up}}_n + R\,\I\{|b^{\mathrm{m}} - C^{\mathrm{R}}| \leq \delta^{\mathrm{up}}_n\}.
\]
Combining and taking expectations,
\begin{align*}
    \mathbb{E} \left[ \lambda\bigl(\widehat{\mathcal{C}}_n(X) \triangle \mathcal{O}_{\mathrm{m}}(X)\bigr) \right]
  & \leq \mathbb{E} ( \delta^{\mathrm{lo}}_n(X) )  + \mathbb{E} ( \delta^{\mathrm{up}}_n(X) ) 
      + R \; \mathbb{P}(|a^{\mathrm{m}}(X)-C^{\mathrm{L}}| \leq \delta^{\mathrm{lo}}_n(X)) \\
  & \qquad + R \; \mathbb{P}(|b^{\mathrm{m}}(X)-C^{\mathrm{R}}| \leq \delta^{\mathrm{up}}_n(X) ).
\end{align*}

Note that $\mathbb{E} ( \delta^{\mathrm{lo}}_n(X) ) \leq \mathbb{E} (|\hat{f}_n^{\mathrm{lo}}(X)-q^{\mathrm{lo}}(X)|) + \mathbb{E} (|\hat\tau-\tau^{\mathrm{m}}|) \to 0$, because the first term vanishes by Assumption~\ref{ass:full-consistency}, the second by bounded
convergence ($\hat\tau \xrightarrow{p} \tau^{\mathrm{m}}$, both in $[-R,R]$); analogously, $\mathbb{E} ( \delta^{\mathrm{up}}_n(X) ) \to 0$.

For the first straddle term, for any $\xi > 0$,
\[
    \mathbb{P}(|a^{\mathrm{m}}(X)-C^{\mathrm{L}}| \leq \delta^{\mathrm{lo}}_n(X))
    \leq \mathbb{P}(|a^{\mathrm{m}}(X)-C^{\mathrm{L}}| \leq \xi) + \mathbb{P}(\delta^{\mathrm{lo}}_n(X) > \xi).
\]
The second term on the right-hand-side above vanishes because $\delta^{\mathrm{lo}}_n(X) \xrightarrow{p} 0$.
Recall then that, by Assumption~\ref{ass:regularity}(i),
$a^{\mathrm{m}}(X) = q^{\mathrm{lo}}(X) - \tau^{\mathrm{m}}$ has no atom at $C^{\mathrm{L}}$;
therefore, the term $\mathbb{P}(|a^{\mathrm{m}}(X)-C^{\mathrm{L}}| \leq \xi)$ also vanishes 
by letting $n \to \infty$ then $\xi \downarrow 0$.
The second straddle term vanishes in the same way.

Therefore, $\mathbb{E} [ \lambda\bigl(\widehat{\mathcal{C}}_n(X) \triangle \mathcal{O}_{\mathrm{m}}(X)\bigr) ] \to 0$.
Finally, since the integrand is nonnegative, this implies convergence in probability: 
$\widehat{\mathcal{C}}_n \to \mathcal{O}_{\mathrm{m}}$.

\item[(ii)] To prove the convergence of marginal coverage:
  \begin{align*}
    \bigl|\mathbb{P}(Y \in \widehat{\mathcal{C}}_n(X))
    - \mathbb{P}(Y \in \mathcal{O}_{\mathrm{m}}(X))\bigr|
    & \leq \mathbb{P}(Y \in \widehat{\mathcal{C}}_n(X) \triangle \mathcal{O}_{\mathrm{m}}(X)) \\
    & \leq M\,\mathbb{E} \bigl( \lambda( \widehat{\mathcal{C}}_n(X) \triangle \mathcal{O}_{\mathrm{m}}(X) ) \bigr) \to 0,
  \end{align*}
where the second inequality uses $f_{Y\mid X} \leq M$ (Assumption~\ref{ass:regularity}(ii)) and the limit is from Step~i-b.

\item[(iii)] To establish asymptotic conditional coverage, let $\mathcal{D}_{n}$ denote the calibration data set of size $n$ and define
$V_n := \mathbb{E}[\lambda(\widehat{\mathcal{C}}_n(X) \triangle \mathcal{O}_{\mathrm{m}}(X)) \mid \mathcal{D}_{n}]$; 
then $\mathbb{E}(V_n) \to 0$ (from Step~i-b), so $V_n \xrightarrow{p} 0$ by non-negativity.
On $\{V_n > 0\}$ define $\epsilon'_n := V_n^{1/2}$ and
$\Lambda_n := \{x : \lambda(\widehat{\mathcal{C}}_n(x) \triangle
\mathcal{O}_{\mathrm{m}}(x)) \leq \epsilon'_n\}$ (and $\Lambda_n := \mathcal{X}$ otherwise). By the
conditional Markov inequality,
$\mathbb{P}(X \notin \Lambda_n \mid \mathcal{D}_{n}) \leq V_n/\epsilon'_n = V_n^{1/2}
\xrightarrow{p} 0$, so $\mathbb{P}(X \in \Lambda_n \mid \mathcal{D}_{n}) = 1 - o_P(1)$, and on
$\Lambda_n$ the density bound gives
\[
    \sup_{x \in \Lambda_n}
    \bigl|\mathbb{P}(Y \in \widehat{\mathcal{C}}_n(x) \mid X=x)
        - \mathbb{P}(Y \in \mathcal{O}_{\mathrm{m}}(x) \mid X=x)\bigr|
    \leq M\epsilon'_n = M V_n^{1/2} \xrightarrow{p} 0.
\]
This is Definition~\ref{def:cond-cov} with limit $\mathcal{O}_{\mathrm{m}}$, whose conditional
coverage is given by Proposition~\ref{prop:oracle-coverage}.
\end{enumerate}
\end{proof}

\begin{proof}[of Proposition~\ref{prop:ecqr-oracle}]
For any $t \in \mathbb{R}$, define the event
\[
E_t := \{q^{\mathrm{lo}}(X)-t\le q^{\mathrm{up}}(X)+t<C^{\mathrm L},\,Y \leq C^{\mathrm L}\}\cup\{q^{\mathrm{up}}(X)+t\ge q^{\mathrm{lo}}(X)-t>C^{\mathrm R},\,Y \geq C^{\mathrm R}\}.
\]
Then, for any $t$, the same argument used in the proof of Theorem~\ref{thm:ecqr-validity} gives, almost surely,
\begin{align} \label{eq:identity-Et}
  \I\{Y\in\phi^\star(X;t)\}
  = \I\{\tilde Y\in[q^{\mathrm{lo}}(X)-t,\,q^{\mathrm{up}}(X)+t]\} + \I\{E_t\}.
\end{align}
Taking expectations,
\begin{equation}\label{eq:Fstar-decomp}
    \underbrace{\mathbb P\bigl(Y\in\phi^\star(X;t)\bigr)}_{F^\star(t)}
    \;=\;
    \underbrace{\mathbb P\bigl(\tilde Y\in[q^{\mathrm{lo}}(X)-t,\,q^{\mathrm{up}}(X)+t]\bigr)}_{G^{*}(t)}
    \;+\;
    \underbrace{\mathbb P\bigl(E_t\bigr)}_{p_{\mathrm{slack}}^\star(t)}.
\end{equation}
\emph{(i)} By~\eqref{eq:Fstar-decomp}, $F^\star(t)\ge G^{*}(t)$ for every $t$, so
$\{t:G^{*}(t)\ge1-\alpha\}\subseteq\{t:F^\star(t)\ge1-\alpha\}$ and
\[
   \tilde\tau^{\mathrm{m}}:=\inf \{t:G^{*}(t)\ge1-\alpha\}\ \ge\ \tau^{\mathrm{m}}:=\inf \{t:F^\star(t)\ge1-\alpha\}.
\]
Since the oracle family $\phi^\star(x;\cdot)$ is nested,
$\mathcal O_{\mathrm{m}}(x)=\phi^\star(x;\tau^{\mathrm{m}})\subseteq\phi^\star(x;\tilde\tau^{\mathrm{m}})=\mathcal O_{\mathrm{e}}(x)$.

\emph{(ii)} $\mathbb P(Y\in\mathcal O_{\mathrm{e}}(X))=F^\star(\tilde\tau^{\mathrm{m}})$. Under Assumption~\ref{ass:regularity}
the conformal quantile is attained with equality, $G^{*}(\tilde\tau^{\mathrm{m}})=1-\alpha$, so~\eqref{eq:Fstar-decomp} gives
$F^\star(\tilde\tau^{\mathrm{m}})=1-\alpha+p_{\mathrm{slack}}^\star(\tilde\tau^{\mathrm{m}})\ge1-\alpha$.

\emph{(iii)} Under Assumption~\ref{ass:regularity}, $G^{*}$ is continuous and strictly increasing, so
$\tilde\tau^{\mathrm{m}}<0$ iff $G^{*}(0)>1-\alpha$, with $\tilde\tau^{\mathrm{m}}=0$ iff $G^{*}(0)=1-\alpha$. According to the position
of the band relative to $[C^{\mathrm L},C^{\mathrm R}]$, the conditional coverage of the clipped outcome is
\begin{align*}
  \mathbb P\bigl(\tilde Y\in[q^{\mathrm{lo}}(X),q^{\mathrm{up}}(X)]\mid X\bigr)
  & = (1-\alpha)\,\I[C^{\mathrm L}(X)<q^{\mathrm{lo}}(X)\le q^{\mathrm{up}}(X)<C^{\mathrm R}] \\
  & \quad + \bigl(1-\tfrac\alpha2\bigr)\,\I[q^{\mathrm{lo}}(X)\le C^{\mathrm L}\le q^{\mathrm{up}}(X)<C^{\mathrm R}] \\
  & \quad + \bigl(1-\tfrac\alpha2\bigr)\,\I[C^{\mathrm L}<q^{\mathrm{lo}}(X)\le C^{\mathrm R}\le q^{\mathrm{up}}(X)] \\
  & \quad + \I[q^{\mathrm{lo}}(X)\le C^{\mathrm L}\le C^{\mathrm R}\le q^{\mathrm{up}}(X)] \\
  & = 1-\alpha + \tfrac\alpha2\,\I[q^{\mathrm{lo}}(X)\le C^{\mathrm L}\le q^{\mathrm{up}}(X)] \\
  & \quad + \tfrac\alpha2\,\I[q^{\mathrm{lo}}(X)\le C^{\mathrm R}\le q^{\mathrm{up}}(X)] \\
  & \quad - (1-\alpha)\,\I[q^{\mathrm{up}}(X)<C^{\mathrm L}\ \text{or}\ q^{\mathrm{lo}}(X)>C^{\mathrm R}],
\end{align*}
with the entirely-outside bands ($q^{\mathrm{up}}(X)<C^{\mathrm L}$ or $q^{\mathrm{lo}}(X)>C^{\mathrm R}$) contributing $0$. 
Taking expectations, and using
$\mathbb P(q^{\mathrm{lo}}(X)\le C^{\mathrm L}\le q^{\mathrm{up}}(X))=\pi^{\mathrm L}-\rho^{\mathrm L}$,
$\mathbb P(q^{\mathrm{lo}}(X)\le C^{\mathrm R}\le q^{\mathrm{up}}(X))=\pi^{\mathrm R}-\rho^{\mathrm R}$, and
$\mathbb P(q^{\mathrm{up}}(X)<C^{\mathrm L}\ \text{or}\ q^{\mathrm{lo}}(X)>C^{\mathrm R})=\rho^{\mathrm L}+\rho^{\mathrm R}$,
\begin{align*}
  G^{*}(0)
  & = 1-\alpha + \tfrac\alpha2(\pi^{\mathrm L}-\rho^{\mathrm L}) + \tfrac\alpha2(\pi^{\mathrm R}-\rho^{\mathrm R}) - (1-\alpha)(\rho^{\mathrm L}+\rho^{\mathrm R}) \\
  & = 1-\alpha + \tfrac\alpha2(\pi^{\mathrm L}+\pi^{\mathrm R}) - (1-\tfrac\alpha2)(\rho^{\mathrm L}+\rho^{\mathrm R}).
\end{align*}
Hence $\tilde\tau^{\mathrm{m}}<0 \iff G^{*}(0) < 1-\alpha \iff \tfrac\alpha2(\pi^{\mathrm L}+\pi^{\mathrm R})>(1-\tfrac\alpha2)(\rho^{\mathrm L}+\rho^{\mathrm R})$,
with equality giving $\tilde\tau^{\mathrm{m}}=0$; in the special case $\rho^{\mathrm L}=\rho^{\mathrm R}=0$ this is
$\tilde\tau^{\mathrm{m}}<0\iff\pi^{\mathrm L}+\pi^{\mathrm R}>0$.
The set ordering follows from nestedness of $\phi^\star(x;\cdot)$.

\emph{(iv)} By strict nestedness of $\phi^\star(x;\cdot)$ under Assumption~\ref{ass:regularity},
$\mathcal O_{\mathrm{e}}=\mathcal O_{\mathrm{m}}$ iff $\tilde\tau^{\mathrm{m}}=\tau^{\mathrm{m}}$. If $p_{\mathrm{slack}}^\star(\tau^{\mathrm{m}})=\mathbb P(E_{\tau^{\mathrm{m}}})=0$,
then~\eqref{eq:Fstar-decomp} gives $G^{*}(\tau^{\mathrm{m}})=F^\star(\tau^{\mathrm{m}})=1-\alpha$ (the last equality by
Proposition~\ref{prop:oracle-coverage}(iii)), so $\tilde\tau^{\mathrm{m}}\le\tau^{\mathrm{m}}$, which with~(i) forces $\tilde\tau^{\mathrm{m}}=\tau^{\mathrm{m}}$.
Conversely, if $\tilde\tau^{\mathrm{m}}=\tau^{\mathrm{m}}$ then $G^{*}(\tau^{\mathrm{m}})=1-\alpha$ by~(ii), and~\eqref{eq:Fstar-decomp} gives
$p_{\mathrm{slack}}^\star(\tau^{\mathrm{m}})=F^\star(\tau^{\mathrm{m}})-G^{*}(\tau^{\mathrm{m}})=0$. Hence
$\mathcal O_{\mathrm{e}}=\mathcal O_{\mathrm{m}}$ iff $p_{\mathrm{slack}}^\star(\tau^{\mathrm{m}})=0$. For the full collapse: if
Assumption~\ref{ass:observability} holds then $\pi^{\mathrm L}=\pi^{\mathrm R}=0$, the conditional quantiles are
almost-surely interior, $p_{\mathrm{slack}}^\star(0)=0$, with $\mathcal O_{\mathrm{m}}=\mathcal O_0=\mathcal O_{\mathrm{uncens}}$ and
$\tau^{\mathrm{m}}=0$ (Proposition~\ref{prop:oracle-coverage}(iv)); then $\tilde\tau^{\mathrm{m}}=\tau^{\mathrm{m}}=0$ as just shown, so all four
oracles coincide. Conversely, if Assumption~\ref{ass:observability} fails then $\mathcal O_{\mathrm{m}}\subsetneq\mathcal O_0$
(Proposition~\ref{prop:oracle-coverage}(iv)), so the four cannot all coincide.
\end{proof}

\begin{proof}[of Theorem~\ref{thm:ecqr-oracle}]
The argument parallels the proof of Theorem~\ref{thm:ccqr-oracle}, with the ClipCQR score replaced by the
plug-in CQR score evaluated at the clipped outcome and $\mathcal O_{\mathrm{m}}$ replaced by $\mathcal O_{\mathrm{e}}$.
Let
$\Delta_n(x) := |\hat f^{\mathrm{lo}}_n(x)-q^{\mathrm{lo}}(x)| + |\hat f^{\mathrm{up}}_n(x)-q^{\mathrm{up}}(x)|$,
$a_n := \mathbb E[\Delta_n(X)]\to0$ by Assumption~\ref{ass:full-consistency}. The plug-in CQR score
$s^{\mathrm{CQR}}_n(x,y)=\max\{\hat f^{\mathrm{lo}}_n(x)-y,\,y-\hat f^{\mathrm{up}}_n(x)\}$ is $1$-Lipschitz in $(\hat f^{\mathrm{lo}}_n,\hat f^{\mathrm{up}}_n)$ in the
$\ell_1$ norm uniformly in $(x,y)$, so
\begin{equation}\label{eq:lipschitz-cqr-scores}
    |s^{\mathrm{CQR}}_n(x,y)-s^{\mathrm{CQR},\star}(x,y)|\le\Delta_n(x)\qquad\text{for all }(x,y),
\end{equation} 
and in particular at $y=\tilde y$. Recall that $\tilde S^{\mathrm{CQR},\star}=s^{\mathrm{CQR},\star}(X,\tilde Y)$
from~\eqref{eq:cqr-oracle-score}, write $\tilde S_n := s^{\mathrm{CQR}}_n(X,\tilde Y)$, and let
$G^{*}(t):=\mathbb P(\tilde S^{\mathrm{CQR},\star}\le t)=\mathbb P\bigl(\tilde Y\in[q^{\mathrm{lo}}(X)-t,\,q^{\mathrm{up}}(X)+t]\bigr)$,
so that $\tilde\tau^{\mathrm{m}}=\inf\{t:G^{*}(t)\ge1-\alpha\}$.

\begin{enumerate}
\item[(i-a)] We first show $\tilde\tau_n(\alpha)\xrightarrow{p}\tilde\tau^{\mathrm{m}}(\alpha)$.
If $a_n=0$ then $\tilde S_n=\tilde S^{\mathrm{CQR},\star}$ almost surely and the claim is immediate, so assume
$a_n>0$ and set $\epsilon_n:=\sqrt{a_n}\to0$. Define $B_n:=\{x:\Delta_n(x)>\epsilon_n\}$ and
$\mathcal B_n:=\{i\le n:X_i\in B_n\}$; by~\eqref{eq:lipschitz-cqr-scores},
$|s^{\mathrm{CQR}}_n(X_i,\tilde Y_i)-s^{\mathrm{CQR},\star}(X_i,\tilde Y_i)|\le\epsilon_n$ for every $i\notin\mathcal B_n$.

Let $\widehat G_n,\widehat G^{\mathrm{m}}_n$ be the empirical CDFs of $\{s^{\mathrm{CQR}}_n(X_i,\tilde Y_i)\}_{i=1}^n$
and $\{s^{\mathrm{CQR},\star}(X_i,\tilde Y_i)\}_{i=1}^n$, so that
$\tilde\tau_n(\alpha)=\widehat G_n^{-1}(p_n)$ with $p_n=\lceil(n+1)(1-\alpha)\rceil/n\to1-\alpha$.
Perturbing at most $|\mathcal B_n|$ of the $n$ scores and shifting the rest by at most $\epsilon_n$ moves the
$p_n$-quantile by a rank shift of $|\mathcal B_n|$ and a value shift of $\epsilon_n$:
\[
    \widehat G_n^{{\mathrm{m}}-1}\!\Bigl(p_n-\tfrac{|\mathcal B_n|}{n}\Bigr)-\epsilon_n
    \ \le\ \tilde\tau_n(\alpha)\ \le\
    \widehat G_n^{{\mathrm{m}}-1}\!\Bigl(p_n+\tfrac{|\mathcal B_n|}{n}\Bigr)+\epsilon_n.
\]
By Markov's inequality $\beta_n:=\mathbb P(X\in B_n)\le a_n/\epsilon_n=\epsilon_n\to0$, and since
$\mathbb E(|\mathcal B_n|)=n\beta_n$ we have $|\mathcal B_n|/n\xrightarrow{p}0$. By assumption $\tilde\tau^{\mathrm{m}}$
is the unique $(1-\alpha)$-quantile of $\tilde F^{\mathrm{m}}$ and a continuity point of its law, so the quantile
functional is continuous at $1-\alpha$; with $p_n\pm|\mathcal B_n|/n\xrightarrow{p}1-\alpha$, Glivenko--Cantelli
and classical sample-quantile asymptotics \citep[Ch.~21]{van2000asymptotic} give
$\widehat G_n^{{\mathrm{m}}-1}(p_n\pm|\mathcal B_n|/n)\xrightarrow{p}\tilde\tau^{\mathrm{m}}$, and the sandwich yields
$\tilde\tau_n(\alpha)\xrightarrow{p}\tilde\tau^{\mathrm{m}}$.

\item[(i-b)] The proof is identical to the proof of 
Theorem~\ref{thm:ccqr-oracle}(i-b) and thus omitted to avoid repetition.

\item[(ii)] As in the proof of Theorem~\ref{thm:ccqr-oracle}(ii), the density bound $f_{Y\mid X}\le M$
(Assumption~\ref{ass:regularity}(ii)) gives
\[
    \bigl|\mathbb P(Y\in\widehat{\mathcal C}^{\,\mathrm{eCQR}}_n(X))-\mathbb P(Y\in\mathcal O_{\mathrm{e}}(X))\bigr|
    \le M\,\mathbb E\bigl[\lambda(\widehat{\mathcal C}^{\,\mathrm{eCQR}}_n(X)\triangle\mathcal O_{\mathrm{e}}(X))\bigr]\to0,
\]
and the limit equals $\mathbb P(Y\in\mathcal O_{\mathrm{e}}(X))=1-\alpha+p_{\mathrm{slack}}^\star(\tilde\tau^{\mathrm{m}})$
by Proposition~\ref{prop:ecqr-oracle}(ii).

\item[(iii)] The proof is identical to the proof of 
Theorem~\ref{thm:ccqr-oracle}(iii) and thus omitted to avoid repetition.

\end{enumerate}
\end{proof}

\begin{proof}[of Lemma~\ref{lem:pecqr-is-projected-clipcqr}]
peCQR is eCQR applied to the projected base predictor $g = \Pi\hat f$, so it computes non-conformity scores
using the function $s^{\mathrm{CQR}}(x,\tilde y; g) = \max\{g^{\mathrm{lo}}(x) - \tilde y,\ \tilde y - g^{\mathrm{up}}(x)\}$ at the observed response $\tilde y = \Pi(y)$, takes the $\lceil(n+1)(1-\alpha)\rceil$ order
statistic as its threshold, and returns the snapped band. 
ClipCQR applied with the same base predictor $\Pi\hat f$ follows the same approach but uses the non-conformity score function
defined in~\eqref{eq:score-formula}, so it suffices to show the two score functions agree everywhere.

Write $g := \Pi\hat f$, so $C^{\mathrm L} \le g^{\mathrm{lo}}(x) \le g^{\mathrm{up}}(x) \le C^{\mathrm R}$ and $g^{\mathrm{lo}}(x)-C^{\mathrm L} \ge 0$
and $C^{\mathrm R}-g^{\mathrm{up}}(x) \ge 0$, while $\tfrac12(g^{\mathrm{lo}}(x)-g^{\mathrm{up}}(x)) \le 0$, for any $x$.
Therefore, for any $(x,\tilde{y})$, the score function defined in~\eqref{eq:score-formula} gives:
\begin{align*}
    s(x, \tilde{y}; g) 
    & = \begin{cases}
  \max\bigl\{g^{\mathrm{lo}}(x) - C^{\mathrm{L}},\, \tfrac{1}{2}(g^{\mathrm{lo}}(x) - g^{\mathrm{up}}(x))\bigr\}, & \text{if } \tilde{y} \leq C^{\mathrm{L}}, \\
  \max\bigl\{g^{\mathrm{lo}}(x) - \tilde{y},\, \tilde{y} - g^{\mathrm{up}}(x)\bigr\}, & \text{if } C^{\mathrm{L}} < \tilde{y} < C^{\mathrm{R}}, \\
        \max\bigl\{C^{\mathrm{R}} - g^{\mathrm{up}}(x),\, \tfrac{1}{2}(g^{\mathrm{lo}}(x) - g^{\mathrm{up}}(x))\bigr\}, & \text{if } \tilde{y} \geq C^{\mathrm{R}}.
    \end{cases} \\
    & = \begin{cases}
  g^{\mathrm{lo}}(x) - C^{\mathrm{L}}, & \text{if } \tilde{y} \leq C^{\mathrm{L}}, \\
  \max\bigl\{g^{\mathrm{lo}}(x) - \tilde{y},\, \tilde{y} - g^{\mathrm{up}}(x)\bigr\}, & \text{if } C^{\mathrm{L}} < \tilde{y} < C^{\mathrm{R}}, \\
        C^{\mathrm{R}} - g^{\mathrm{up}}(x), & \text{if } \tilde{y} \geq C^{\mathrm{R}}.
    \end{cases} \\
  & = \max\bigl\{g^{\mathrm{lo}}(x) - \tilde{y},\, \tilde{y} - g^{\mathrm{up}}(x)\bigr\} = s^{\mathrm{CQR}}(x,\tilde y; g).
\end{align*}
Hence the ClipCQR and eCQR scores coincide at every calibration point, giving identical thresholds and, from
the common base $\Pi\hat f$, the identical snapped band.
\end{proof}

\begin{proof}[of Theorem~\ref{thm:pecqr-oracle}]
Define the projected oracle family
\[
  \phi^{\mathrm{pe},\star}(x;\tau):=\psi(\Pi q^{\mathrm{lo}}(x)-\tau,\Pi q^{\mathrm{up}}(x)+\tau),
\]
with score function $s^{\mathrm{pe},\star}(x,\tilde y):=\max\{\Pi q^{\mathrm{lo}}(x)-\tilde y,\,\tilde y-\Pi q^{\mathrm{up}}(x)\}$, and the projected marginal
oracle
\begin{align*}
  & \mathcal O^{\mathrm{pe}}_{\mathrm{m}}:=\phi^{\mathrm{pe},\star}(\cdot;\tau^{\mathrm{pe},\star}),
  & \tau^{\mathrm{pe},\star}:=\inf\{t:\mathbb P(s^{\mathrm{pe},\star}(X,\tilde Y)\le t)\ge1-\alpha\}.
\end{align*}
Note that the score inversion identity gives
$\{\tilde Y\in\phi^{\mathrm{pe},\star}(X;\tau)\}=\{s^{\mathrm{pe},\star}(X,\tilde Y)\le\tau\}$, and (clipping identity)
$y \in \phi^{\mathrm{pe},\star}(x;\tau) \iff \tilde{y} = \Pi(y) \in \phi^{\mathrm{pe},\star}(x;\tau)$ by definition of $\psi$.

We begin by showing $\tau^{\mathrm{pe},\star} = 0$ and $\mathcal O^{\mathrm{pe}}_{\mathrm{m}}=\mathcal O_0$, in two steps.

\begin{itemize}

\item \emph{Step 1: $\tau^{\mathrm{pe},\star} \leq 0$}.
At $\tau=0$, $\phi^{\mathrm{pe},\star}(X;0)=\psi(\Pi q^{\mathrm{lo}}(X),\Pi q^{\mathrm{up}}(X))
=\psi(q^{\mathrm{lo}}(X),q^{\mathrm{up}}(X))=\mathcal O_0(X)$ since $\psi\circ\Pi=\psi$. By score inversion and clipping identity $\{Y\in\mathcal O_0\}=\{\tilde Y\in\mathcal O_0\}$,
\[
  \mathbb P(s^{\mathrm{pe},\star}(X,\tilde Y)\le0)=\mathbb P(Y\in\mathcal O_0)\ge\mathbb P(q^{\mathrm{lo}}\le Y\le q^{\mathrm{up}})\ge1-\alpha,
\]
since $\mathcal O_0\supseteq[q^{\mathrm{lo}},q^{\mathrm{up}}]$. Hence $\tau^{\mathrm{pe},\star}\le0$ with no assumptions.

\item \emph{Step 2: $\tau^{\mathrm{pe},\star} \geq 0$}.
Lemma~\ref{lem:score-deficit} in Appendix~\ref{sec:appendix-auxiliary} gives $\mathbb P(s^{\mathrm{pe},\star}(X,\tilde Y)<0 )\le1-\alpha$. 
Since only non-censored observations with $C^{\mathrm L}  < Y < C^{\mathrm R}$ and $\tilde{Y} = Y$ can produce a negative score $s^{\mathrm{pe},\star}(X,\tilde{Y}) < 0$, and 
the distribution of $Y\mid X$ is continuous (Assumption~\ref{ass:regularity}), it follows that the distribution of 
$s^{\mathrm{pe},\star}(X,\tilde Y)$ is continuous on $(-\infty,0)$.
Therefore, for every $t < 0$, $\mathbb P(s^{\mathrm{pe},\star}(X,\tilde Y)\le t)<\mathbb P(s^{\mathrm{pe},\star}(X,\tilde Y)<0)\le1-\alpha$, which implies
$\tau^{\mathrm{pe},\star} \geq 0$.

\end{itemize}
Combining the results from Steps 1 and 2 gives $\tau^{\mathrm{pe},\star} = 0$, and thus $\mathcal O^{\mathrm{pe}}_{\mathrm{m}}(\cdot) =\phi^{\mathrm{pe},\star}(\cdot;0)=\mathcal O_0(\cdot)$.

From here, we can prove (i)--(iii) by proceeding similarly to the proof of Theorem~\ref{thm:ccqr-oracle}.
Define also $\Delta_n(x) := |\hat{f}_n^{\mathrm{lo}}(x)-q^{\mathrm{lo}}(x)| +
|\hat{f}_n^{\mathrm{up}}(x)-q^{\mathrm{up}}(x)|$, and $a_n := \mathbb{E}[\Delta_n(X)]$, so
$a_n \to 0$ by Assumption~\ref{ass:consistency}. 
Since $\Pi$ is $1$-Lipschitz, $\Delta^{\mathrm{pe}}_n(x):=|\Pi\hat f^{\mathrm{lo}}_n(x)-\Pi q^{\mathrm{lo}}(x)|+|\Pi\hat f^{\mathrm{up}}_n(x)-\Pi q^{\mathrm{up}}(x)|\le\Delta_n(x)$,
so $a^{\mathrm{pe}}_n:=\mathbb E[\Delta^{\mathrm{pe}}_n(X)]\le a_n\to0$.
Moreover, the score function of peCQR is the same as that of ClipCQR (Lemma~\ref{lem:pecqr-is-projected-clipcqr}), which is 1-Lipschitz in the $\ell_1$ norm, and therefore $|s^{\mathrm{pe}}_n(x,y)-s^{\mathrm{pe},\star}(x,y)|\le\Delta^{\mathrm{pe}}_n(x)$ for all $(x,y)$, as in~\eqref{eq:lipschitz-scores}.
Abbreviate $S_n = s^{\mathrm{pe}}_n(X, \tilde Y)$ and $S^{\mathrm{m}} := s^{\mathrm{pe},\star}(X, \tilde Y)$, and write $F^{\mathrm{pe},\star}(t):=\mathbb P(S^{\mathrm{m}}\le t)$.

\begin{enumerate}

\item[(i-a)] We prove $\tilde\tau^{\mathrm{pe}}_n(\alpha)\xrightarrow{p}0$ by applying the sandwich argument used to prove Theorem~\ref{thm:ccqr-oracle}(i-a).
The only additional difficulty is showing that $\tau^{\mathrm{pe},\star}=0$ is the unique $(1-\alpha)$-quantile of $S^{\mathrm{m}}$; once that is established, the rest of the proof is the same as that of Theorem~\ref{thm:ccqr-oracle}(i-a).

\emph{Value of $F^{\mathrm{pe},\star}(t)$ at $t=0$.} Using the score identity and $\psi\circ\Pi=\psi$,
\begin{align*}
  \{S^{\mathrm{m}}\le0\}
  & = \{\tilde Y \in \psi\bigl(\Pi q^{\mathrm{lo}}(X), \Pi q^{\mathrm{up}}(X)\bigr) \}
   = \{\tilde Y \in \psi\bigl(q^{\mathrm{lo}}(X), q^{\mathrm{up}}(X)\bigr) \} \\
  & = \{\tilde Y \in \mathcal{O}_0 \} 
    = \{Y \in \mathcal{O}_0 \}.
\end{align*}
Therefore, by Proposition~\ref{prop:oracle-coverage}(ii),
$  F^{\mathrm{pe},\star}(0)=\mathbb P(Y\in\mathcal O_0)
  =1-\alpha+\tfrac\alpha2\bigl(\pi^{\mathrm L}+\pi^{\mathrm R}\bigr)
  \ \ge\ 1-\alpha$.

\emph{Strict inequalities for $F^{\mathrm{pe},\star}(t)$ at $t<0$ and $t>0$.} Step~2 above established that $F^{\mathrm{pe},\star}(t)<1-\alpha$ for every $t<0$.
For $t>0$ we claim $F^{\mathrm{pe},\star}(t)>1-\alpha$. If $\pi^{\mathrm L}+\pi^{\mathrm R}>0$,
then $F^{\mathrm{pe},\star}(0)>1-\alpha$ already.
If $\pi^{\mathrm L}+\pi^{\mathrm R}=0$, then $q^{\mathrm{lo}}(X)>C^{\mathrm L}$ and $q^{\mathrm{up}}(X)<C^{\mathrm R}$ a.s.,
so for each $t>0$ and a.e.\ $X$ the values
$Y\in\bigl(q^{\mathrm{lo}}(X)-\min\{t,\,q^{\mathrm{lo}}(X)-C^{\mathrm L}\},\,q^{\mathrm{lo}}(X)\bigr)$
are interior ($\tilde Y=Y$) and receive score $S^{\mathrm{m}}=q^{\mathrm{lo}}(X)-Y\in(0,t)$; this interval has
positive mass because $Y\mid X$ has a strictly positive density on $[Y^{\min},Y^{\max}]$
(Assumption~\ref{ass:regularity}), so $\mathbb P(0<S^{\mathrm{m}}\le t)>0$ and
$F^{\mathrm{pe},\star}(t)>F^{\mathrm{pe},\star}(0)=1-\alpha$.

Thus $F^{\mathrm{pe},\star}(t)<1-\alpha$ for $t<0$ and $F^{\mathrm{pe},\star}(t)>1-\alpha$ for $t>0$,
so $\tau^{\mathrm{pe},\star}=0$ is the unique $(1-\alpha)$-quantile of $S^{\mathrm{m}}$; the quantile
functional is therefore continuous at level $1-\alpha$.

The rest of the proof is the same as that of Theorem~\ref{thm:ccqr-oracle}(i-a).

\item[(i-b)] With $\tau^{\mathrm{pe},\star}=0$, write the pre-snap endpoints (omitting explicit dependence on $X$),
\[
  \hat a^{\mathrm{pe}}_n=\Pi\hat f^{\mathrm{lo}}_n-\tilde\tau^{\mathrm{pe}}_n,\quad a^{\mathrm{pe},\star}=\Pi q^{\mathrm{lo}},
  \qquad
  \hat b^{\mathrm{pe}}_n=\Pi\hat f^{\mathrm{up}}_n+\tilde\tau^{\mathrm{pe}}_n,\quad b^{\mathrm{pe},\star}=\Pi q^{\mathrm{up}}.
\]
Since $\hat a^{\mathrm{pe}}_n-a^{\mathrm{pe},\star}=(\Pi\hat f^{\mathrm{lo}}_n-\Pi q^{\mathrm{lo}})-\tilde\tau^{\mathrm{pe}}_n$,
\[
  \delta^{\mathrm{lo}}_n:=|\hat a^{\mathrm{pe}}_n-a^{\mathrm{pe},\star}|\le|\Pi\hat f^{\mathrm{lo}}_n-\Pi q^{\mathrm{lo}}|+|\tilde\tau^{\mathrm{pe}}_n|,
\]
and $\mathbb E[\delta^{\mathrm{lo}}_n]\to0$ by Assumption~\ref{ass:consistency} and $\tilde\tau^{\mathrm{pe}}_n\xrightarrow{p}0$ (with
$\tilde\tau^{\mathrm{pe}}_n$ bounded by $R$), as in Theorem~\ref{thm:ccqr-oracle}(i-b). As
$\widehat{\mathcal C}^{\,\mathrm{peCQR}}_n=[\underline\psi(\hat a^{\mathrm{pe}}_n),\overline\psi(\hat b^{\mathrm{pe}}_n)]$ and
$\mathcal O_0=[\underline\psi(a^{\mathrm{pe},\star}),\overline\psi(b^{\mathrm{pe},\star})]$,
\[
  \lambda(\widehat{\mathcal C}^{\,\mathrm{peCQR}}_n\triangle\mathcal O_0)
  \le|\underline\psi(\hat a^{\mathrm{pe}}_n)-\underline\psi(a^{\mathrm{pe},\star})|+|\overline\psi(\hat b^{\mathrm{pe}}_n)-\overline\psi(b^{\mathrm{pe},\star})|.
\]
We bound the lower term; the upper is symmetric.

The map $\underline\psi$ equals $Y^{\min}$ on $\{\cdot\le C^{\mathrm L}\}$ and $\min\{\cdot,C^{\mathrm R}\}$ on $\{\cdot>C^{\mathrm L}\}$;
it is $1$-Lipschitz everywhere except for a jump (of at most $R$) at $C^{\mathrm L}$, hence
\[
  |\underline\psi(x)-\underline\psi(y)|\le|x-y|+R\,\I\{x,y\text{ straddle }C^{\mathrm L}\}.
\]
Apply this with $(x,y)=(\hat a^{\mathrm{pe}}_n,a^{\mathrm{pe},\star})$. Because $a^{\mathrm{pe},\star}=\Pi q^{\mathrm{lo}}\ge C^{\mathrm L}$, a straddle is
either $\hat a^{\mathrm{pe}}_n\le C^{\mathrm L}<a^{\mathrm{pe},\star}$ (whence $a^{\mathrm{pe},\star}-C^{\mathrm L}\le\delta^{\mathrm{lo}}_n$) or
$a^{\mathrm{pe},\star}=C^{\mathrm L}<\hat a^{\mathrm{pe}}_n$, so
\[
  |\underline\psi(\hat a^{\mathrm{pe}}_n)-\underline\psi(a^{\mathrm{pe},\star})|
  \le \delta^{\mathrm{lo}}_n
   + R\,\I\{0<a^{\mathrm{pe},\star}-C^{\mathrm L}\le\delta^{\mathrm{lo}}_n\}
   + R\,\I\{a^{\mathrm{pe},\star}=C^{\mathrm L}<\hat a^{\mathrm{pe}}_n\}.
\]
Each term has vanishing expectation:
\begin{itemize}
\item $\mathbb E[\delta^{\mathrm{lo}}_n]\to0$, as above.
\item On $\{a^{\mathrm{pe},\star}>C^{\mathrm L}\}$, $a^{\mathrm{pe},\star}=\min\{q^{\mathrm{lo}},C^{\mathrm R}\}$ has no atom at $C^{\mathrm L}$
(Assumption~\ref{ass:regularity}(i)); since $\delta^{\mathrm{lo}}_n\xrightarrow{p}0$,
$\mathbb P(0<a^{\mathrm{pe},\star}-C^{\mathrm L}\le\delta^{\mathrm{lo}}_n)\to0$, as in Theorem~\ref{thm:ccqr-oracle}(i-b).
\item $a^{\mathrm{pe},\star}=C^{\mathrm L}$ iff $q^{\mathrm{lo}}\le C^{\mathrm L}$, and on $\{\tilde\tau^{\mathrm{pe}}_n\ge0\}$ the inequality
$\hat a^{\mathrm{pe}}_n>C^{\mathrm L}$ forces $\hat f^{\mathrm{lo}}_n\ge\Pi\hat f^{\mathrm{lo}}_n\ge\hat a^{\mathrm{pe}}_n>C^{\mathrm L}$, so
\[
  \{a^{\mathrm{pe},\star}=C^{\mathrm L}<\hat a^{\mathrm{pe}}_n\}\subseteq\{q^{\mathrm{lo}}\le C^{\mathrm L}<\hat f^{\mathrm{lo}}_n\}\cup\{\tilde\tau^{\mathrm{pe}}_n<0\}.
\]
On the first event $\Pi q^{\mathrm{lo}}=C^{\mathrm L}$ and $\Pi\hat f^{\mathrm{lo}}_n=\hat f^{\mathrm{lo}}_n$, so for every sufficiently small $\eta>0$,
\[
  \{q^{\mathrm{lo}}\le C^{\mathrm L}<\hat f^{\mathrm{lo}}_n\}
  \subseteq\{|\Pi\hat f^{\mathrm{lo}}_n-\Pi q^{\mathrm{lo}}|>\eta\}\cup\{C^{\mathrm L}<\hat f^{\mathrm{lo}}_n\le C^{\mathrm L}+\eta\}.
\]
By Markov and Assumption~\ref{ass:consistency}, $\mathbb P(|\Pi\hat f^{\mathrm{lo}}_n-\Pi q^{\mathrm{lo}}|>\eta) \to 0$ at fixed
$\eta$, so $\limsup_n\mathbb P(q^{\mathrm{lo}}\le C^{\mathrm L}<\hat f^{\mathrm{lo}}_n)\le\limsup_n\mathbb P(C^{\mathrm L}<\hat f^{\mathrm{lo}}_n\le C^{\mathrm L}+\eta)$,
which $\to0$ as $\eta\downarrow0$ by Assumption~\ref{ass:base-regularity}. 

It remains to handle the $\{\tilde\tau^{\mathrm{pe}}_n<0\}$ term, and here we split on $\pi^{\mathrm L}$.

If $\pi^{\mathrm L}=0$, then $\mathbb P(a^{\mathrm{pe},\star}=C^{\mathrm L})=\mathbb P(q^{\mathrm{lo}}\le C^{\mathrm L})=\pi^{\mathrm L}=0$,
so $\I\{a^{\mathrm{pe},\star}=C^{\mathrm L}<\hat a^{\mathrm{pe}}_n\}=0$ a.s.\ and the third term is identically $0$.

If $\pi^{\mathrm L}>0$, then Lemma~\ref{lem:tau-nonneg} in Appendix~\ref{sec:appendix-auxiliary} gives $P\bigl(\tau^{\mathrm{pe}}<0\bigr) \to 0$.

Therefore,
\[
  \mathbb P\bigl(a^{\mathrm{pe},\star}=C^{\mathrm L}<\hat a^{\mathrm{pe}}_n\bigr)
  \ \le\
  \mathbb P\bigl(q^{\mathrm{lo}}\le C^{\mathrm L}<\hat f^{\mathrm{lo}}_n\bigr)
  +\mathbb P\bigl(\tilde\tau^{\mathrm{pe}}_n<0\bigr)
   \to 0.
\]

\end{itemize}
Hence $\mathbb E \bigl( |\underline\psi(\hat a^{\mathrm{pe}}_n)-\underline\psi(a^{\mathrm{pe},\star})| \bigr) \to0$. 
The upper term vanishes by a symmetric argument.
Therefore $\mathbb E[\lambda(\widehat{\mathcal C}^{\,\mathrm{peCQR}}_n\triangle\mathcal O_0)]\to0$, and nonnegativity of the integrand
gives $\widehat{\mathcal C}^{\,\mathrm{peCQR}}_n\to\mathcal O_0$ in probability, proving (i).

\item[(ii)] Same as proof of Theorem~\ref{thm:ccqr-oracle}(ii), with limit $\mathcal{O}_0$ instead of $\mathcal{O}_{\mathrm{m}}$.

\item[(iii)] Same as proof of Theorem~\ref{thm:ccqr-oracle}(iii), with limit $\mathcal{O}_0$ instead of $\mathcal{O}_{\mathrm{m}}$.

\end{enumerate}

\end{proof}

\subsection{Proofs for Appendix~\ref{sec:appendix-auxiliary}} \label{app:proofs-auxiliary}

\begin{proof}[of Lemma~\ref{lem:snapped-from-projected}]
We prove the statement for the lower bounds; the statement for the upper bounds is symmetric. Abbreviate $\hat f:=\hat f^{\mathrm{lo}}_n(X)$ and $q:=q^{\mathrm{lo}}(X)$.

\emph{A pointwise identity.} On $\{\,\cdot>C^{\mathrm L}\}$ one has $\underline\psi(\cdot)=\min\{\cdot,C^{\mathrm R}\}=\Pi(\cdot)$, while on
$\{\,\cdot\le C^{\mathrm L}\}$ one has $\underline\psi(\cdot)=Y^{\min}$ and $\Pi(\cdot)=C^{\mathrm L}$. There are four possible cases:
\begin{itemize}
  \item If $\hat{f} > C^{\mathrm L}$ and $q > C^{\mathrm L}$, then $|\underline\psi(\hat f)-\underline\psi(q)|
  = \bigl|\Pi\hat f-\Pi q\bigr|$;
  \item If $\hat{f} \leq C^{\mathrm L}$ and $q \leq C^{\mathrm L}$, then $|\underline\psi(\hat f)-\underline\psi(q)| = 0 = |\Pi\hat f-\Pi q|$;
  \item If $\hat{f} \leq C^{\mathrm L}$ and $q > C^{\mathrm L}$, then $|\underline\psi(\hat f)-\underline\psi(q)| = \Pi q - Y^{\min} = |\Pi q-\Pi f| + C^{\mathrm L} - Y^{\min}$;
  \item If $\hat{f} > C^{\mathrm L}$ and $q \leq C^{\mathrm L}$, then $|\underline\psi(\hat f)-\underline\psi(q)| = |\Pi \hat f - Y^{\min}| = |\Pi q-\Pi f| + C^{\mathrm L} - Y^{\min}$.
\end{itemize}
Therefore,
\[
  \bigl|\underline\psi(\hat f)-\underline\psi(q)\bigr|
  = \bigl|\Pi\hat f-\Pi q\bigr|
  + (C^{\mathrm L}-Y^{\min})\,\I\{\hat f,q\ \text{straddle}\ C^{\mathrm L}\},
\]
where ``straddle'' means exactly one of $\hat f,q$ lies in $\{\,\cdot\le C^{\mathrm L}\}$. Taking expectations,
\[
  \mathbb E \bigl( \bigl|\underline\psi(\hat f)-\underline\psi(q)\bigr| \bigr)
  = \mathbb E\bigl( \bigl|\Pi\hat f-\Pi q\bigr| \bigr)
  + (C^{\mathrm L}-Y^{\min})\Bigl[\mathbb P(q\le C^{\mathrm L}<\hat f)+\mathbb P(\hat f\le C^{\mathrm L}<q)\Bigr].
\]
Therefore, it remains to show each straddle probability $\to0$. 

\begin{itemize}
  \item \emph{$\mathbb P(q\le C^{\mathrm L}<\hat f)$.} Fix $\eta\in(0,\,C^{\mathrm R}-C^{\mathrm L})$. On the event $\{q\le C^{\mathrm L}<\hat f\}$, we have $\Pi q=C^{\mathrm L}$, and if in addition $\hat f>C^{\mathrm L}+\eta$ then
$\Pi\hat f=\min\{\hat f,C^{\mathrm R}\}>C^{\mathrm L}+\eta$, whence $|\Pi\hat f-\Pi q|>\eta$. Therefore
\[
  \{q\le C^{\mathrm L}<\hat f\}\ \subseteq\ \{|\Pi\hat f-\Pi q|>\eta\}\ \cup\ \{C^{\mathrm L}<\hat f\le C^{\mathrm L}+\eta\}.
\]
By Markov, $\mathbb P(|\Pi\hat f-\Pi q|>\eta)\le\eta^{-1}\mathbb E|\Pi\hat f-\Pi q|\to0$ at the fixed $\eta$, so
\[
  \limsup_n\mathbb P(q\le C^{\mathrm L}<\hat f)\ \le\ \limsup_n\mathbb P\bigl(C^{\mathrm L}<\hat f^{\mathrm{lo}}_n(X)\le C^{\mathrm L}+\eta\bigr),
\]
and the right-hand side $\to0$ as $\eta\downarrow0$ by Assumption~\ref{ass:base-regularity}.

  \item \emph{$\mathbb P( \hat f\le C^{\mathrm L}<q )$.} On the event $\{\hat f\le C^{\mathrm L}<q\}$, $\Pi\hat f=C^{\mathrm L}$, and if in addition $q>C^{\mathrm L}+\eta$ then
$\Pi q=\min\{q,C^{\mathrm R}\}>C^{\mathrm L}+\eta$, whence $|\Pi\hat f-\Pi q|>\eta$. Therefore
\[
  \{\hat f\le C^{\mathrm L}<q\}\ \subseteq\ \{|\Pi\hat f-\Pi q|>\eta\}\ \cup\ \{C^{\mathrm L}<q\le C^{\mathrm L}+\eta\}.
\]
By Markov the first probability $\to0$, so
$\limsup_n\mathbb P(\hat f\le C^{\mathrm L}<q)\le\mathbb P\bigl(C^{\mathrm L}<q^{\mathrm{lo}}(X)\le C^{\mathrm L}+\eta\bigr)$, which $\to0$ as
$\eta\downarrow0$ by Assumption~\ref{ass:regularity}.
\end{itemize}

Both straddle probabilities thus vanish, giving $\mathbb E \bigl( |\underline\psi(\hat f)-\underline\psi(q)| \bigr) \to0$ if $\mathbb E\bigl( \bigl|\Pi\hat f-\Pi q\bigr| \bigr) \to 0$. The argument for the upper bounds is analogous.
\end{proof}

\begin{proof}[of Proposition~\ref{prop:consistency-weaker}]
We treat the lower coordinate; the upper is symmetric, with $\overline\psi$ and $C^{\mathrm R}$ in place of
$\underline\psi$ and $C^{\mathrm L}$. By~\eqref{eq:psi-def}, $\underline\psi$ is nondecreasing and $1$-Lipschitz
apart from a single upward jump of at most $R$ at $C^{\mathrm L}$ (from $Y^{\min}$ to $C^{\mathrm L}$): on
$(-\infty,C^{\mathrm L}]$ it is constant, and on $(C^{\mathrm L},\infty)$ it equals $\min\{\cdot,C^{\mathrm R}\}$,
which is $1$-Lipschitz. Hence for any $a,a'$,
\begin{align*}
    |\underline\psi(a)-\underline\psi(a')|
    & \le |a-a'| + R\,\mathbf 1\{a,a'\ \text{straddle}\ C^{\mathrm L}\} \\
    & \le |a-a'| + R\,\mathbf 1\{ |a' - C^{\mathrm L}| \leq |a-a'|\}.
\end{align*}
Applying this with $a=\hat f^{\mathrm{lo}}_n(X)$, $a'=q^{\mathrm{lo}}(X)$ gives:
\[
    |\underline\psi(\hat f^{\mathrm{lo}}_n(X))-\underline\psi(q^{\mathrm{lo}}(X))|
    \le |\hat f^{\mathrm{lo}}_n(X)-q^{\mathrm{lo}}(X)|
       + R\,\mathbf 1\{|q^{\mathrm{lo}}(X)-C^{\mathrm L}|\le|\hat f^{\mathrm{lo}}_n(X)-q^{\mathrm{lo}}(X)|\},
\]
and taking expectations,
\[
    \mathbb E\bigl( |\underline\psi(\hat f^{\mathrm{lo}}_n)-\underline\psi(q^{\mathrm{lo}})| \bigr)
    \le \mathbb E \bigl( |\hat f^{\mathrm{lo}}_n-q^{\mathrm{lo}} | \bigr)
       + R\,\mathbb P\bigl(|q^{\mathrm{lo}}-C^{\mathrm L}|\le|\hat f^{\mathrm{lo}}_n-q^{\mathrm{lo}}|\bigr).
\]
The first term vanishes by Assumption~\ref{ass:full-consistency}. For the second, fix $\xi>0$ and apply Markov's inequality:
\begin{align*}
  \mathbb P\bigl(|q^{\mathrm{lo}}-C^{\mathrm L}|\le|\hat f^{\mathrm{lo}}_n-q^{\mathrm{lo}}|\bigr)
    & \leq \mathbb P\bigl(|q^{\mathrm{lo}}-C^{\mathrm L}|\le\xi\bigr)
       + \mathbb P\bigl(|\hat f^{\mathrm{lo}}_n-q^{\mathrm{lo}}|>\xi\bigr) \\
    & \leq \mathbb P\bigl(|q^{\mathrm{lo}}-C^{\mathrm L}|\le\xi\bigr)
       + \frac{\mathbb{E} \bigl( |\hat f^{\mathrm{lo}}_n-q^{\mathrm{lo}}| \bigr) }{\xi},
\end{align*}
and the last term vanishes as $n \to \infty$ by Assumption~\ref{ass:full-consistency}. 
Therefore, for any $\xi > 0$,
\begin{align*}  
  \lim_{n \to \infty} \mathbb P\bigl(|q^{\mathrm{lo}}-C^{\mathrm L}|\le|\hat f^{\mathrm{lo}}_n-q^{\mathrm{lo}}|\bigr)
    & \leq P\bigl(|q^{\mathrm{lo}}-C^{\mathrm L}|\le\xi\bigr).
\end{align*}
Finally, letting $\xi \to 0$ makes the right-hand-side term above vanish by Assumption~\ref{ass:regularity}(i).
Hence
$\mathbb E\bigl( |\underline\psi(\hat f^{\mathrm{lo}}_n)-\underline\psi(q^{\mathrm{lo}})| \bigr) \to0$, and symmetrically for the
upper coordinate, which is Assumption~\ref{ass:consistency}.

For strictness, suppose $\mathbb P(q^{\mathrm{lo}}(X)<C^{\mathrm L})>0$ and take $\hat f^{\mathrm{lo}}_n\equiv
C^{\mathrm L}-1$ on $\{q^{\mathrm{lo}}<C^{\mathrm L}\}$ and $\hat f^{\mathrm{lo}}_n\equiv q^{\mathrm{lo}}$ otherwise:
the snapped error is identically $0$ while $\mathbb E(|\hat f^{\mathrm{lo}}_n-q^{\mathrm{lo}}|)$ is a fixed positive
constant, so Assumption~\ref{ass:consistency} holds but Assumption~\ref{ass:full-consistency} fails. If instead only
$\mathbb P(q^{\mathrm{up}}(X)>C^{\mathrm R})>0$, the symmetric construction on the upper coordinate gives the same
conclusion.
\end{proof}

\begin{proof}[of Lemma~\ref{lem:uncensored}]$~$
\begin{enumerate}
\item[(i)] By Proposition~\ref{prop:oracle-coverage}, $F^\star(0)=\mathbb P(Y\in\mathcal O_0(X))=1-\alpha+\gamma=1-\alpha$.
For $t<0$, $\phi^\star(X;t)=\psi(q^{\mathrm{lo}}-t,q^{\mathrm{up}}+t)$ is a strict subset of $\mathcal O_0(X)$ on a set
of positive probability (the endpoints $q^{\mathrm{lo}},q^{\mathrm{up}}$ are interior to $(C^{\mathrm L},C^{\mathrm R})$,
so for small $|t|$ no snapping occurs and the interval genuinely contracts), and the density is strictly positive
(Assumption~\ref{ass:regularity}(ii)); hence $F^\star(t)<1-\alpha$. Therefore
$\tau^{\mathrm{m}}=\inf\{t:F^\star(t)\ge1-\alpha\}=0$ and $\mathcal O_{\mathrm{m}}=\phi^\star(\cdot;0)=\mathcal O_0$.
\item[(ii)] Proposition~\ref{prop:consistency-weaker} says that Assumption~\ref{ass:full-consistency} implies Assumption~\ref{ass:consistency}.
For the converse, suppose Assumption~\ref{ass:consistency} holds; we show
$\mathbb E\bigl( |\hat f^{\mathrm{lo}}_n-q^{\mathrm{lo}}| \bigr) \to0$, the upper coordinate being symmetric. Write
$e^{\mathrm{lo}}_n:=|\underline\psi(\hat f^{\mathrm{lo}}_n)-\underline\psi(q^{\mathrm{lo}})|$, with
$\mathbb E(e^{\mathrm{lo}}_n)\to0$ by assumption. Since $q^{\mathrm{lo}}\in(C^{\mathrm L},C^{\mathrm R})$ a.s.,
$\underline\psi(q^{\mathrm{lo}})=q^{\mathrm{lo}}$, and we partition on the position of $\hat f^{\mathrm{lo}}_n$:
\begin{itemize}
\item On $\{C^{\mathrm L}<\hat f^{\mathrm{lo}}_n\le C^{\mathrm R}\}$, $\underline\psi(\hat f^{\mathrm{lo}}_n)=\hat f^{\mathrm{lo}}_n$,
so $|\hat f^{\mathrm{lo}}_n-q^{\mathrm{lo}}|=e^{\mathrm{lo}}_n$.
\item On $\{\hat f^{\mathrm{lo}}_n\le C^{\mathrm L}\}$, $\underline\psi(\hat f^{\mathrm{lo}}_n)=Y^{\min}$, so
$e^{\mathrm{lo}}_n=q^{\mathrm{lo}}-Y^{\min}$; as $\hat f^{\mathrm{lo}}_n\in[Y^{\min},C^{\mathrm L}]$,
$|\hat f^{\mathrm{lo}}_n-q^{\mathrm{lo}}|=q^{\mathrm{lo}}-\hat f^{\mathrm{lo}}_n\le q^{\mathrm{lo}}-Y^{\min}=e^{\mathrm{lo}}_n$.
\item On $\{\hat f^{\mathrm{lo}}_n>C^{\mathrm R}\}$, $|\hat f^{\mathrm{lo}}_n-q^{\mathrm{lo}}|\le R$ since estimates are in $[Y^{\min},Y^{\max}]$.
\end{itemize}
Combining the three cases,
\[
  \mathbb E\bigl( |\hat f^{\mathrm{lo}}_n-q^{\mathrm{lo}}| \bigr)
  \le \mathbb E \bigl( e^{\mathrm{lo}}_n \bigr) + R\,\mathbb P\bigl(\hat f^{\mathrm{lo}}_n>C^{\mathrm R}\bigr).
\]
It remains to show $\mathbb P(\hat f^{\mathrm{lo}}_n>C^{\mathrm R})\to0$. On $\{\hat f^{\mathrm{lo}}_n>C^{\mathrm R}\}$ we have
$\underline\psi(\hat f^{\mathrm{lo}}_n)=C^{\mathrm R}$ and $\underline\psi(q^{\mathrm{lo}})=q^{\mathrm{lo}}<C^{\mathrm R}$, so
$e^{\mathrm{lo}}_n=C^{\mathrm R}-q^{\mathrm{lo}}$. Hence for any $\eta>0$, by Markov,
\begin{align*}
  \mathbb P(\hat f^{\mathrm{lo}}_n>C^{\mathrm R})
  & = \mathbb P(\hat f^{\mathrm{lo}}_n>C^{\mathrm R}, e^{\mathrm{lo}}_n \geq \eta ) + \mathbb P(\hat f^{\mathrm{lo}}_n>C^{\mathrm R}, e^{\mathrm{lo}}_n < \eta ) \\
  & \leq \mathbb P(e^{\mathrm{lo}}_n \geq \eta ) + \mathbb P( q^{\mathrm{lo}} > C^{\mathrm R} - \eta ) \\
  & \leq \frac{ \mathbb{E} \bigl( e^{\mathrm{lo}}_n \bigr)}{\eta} + \mathbb P( q^{\mathrm{lo}} > C^{\mathrm R} - \eta ).
\end{align*}
Letting $n\to\infty$ then $\eta\downarrow0$, the first term vanishes by hypothesis and the second tends to
$\mathbb P(q^{\mathrm{lo}}\ge C^{\mathrm R})=0$ by Assumption~\ref{ass:regularity}.
Thus
$\mathbb P(\hat f^{\mathrm{lo}}_n>C^{\mathrm R})\to0$ and $\mathbb E \bigl( |\hat f^{\mathrm{lo}}_n-q^{\mathrm{lo}} \bigr) |\to 0$, which is
Assumption~\ref{ass:full-consistency}.
\end{enumerate}
\end{proof}

\begin{proof}[of Lemma~\ref{lem:score-deficit}]
Fix any $x \in \mathcal{X}$. By definition of the score function,
\[
  \{s^{\mathrm{pe},\star}(x, \tilde Y)<0\}
  = \{\Pi q^{\mathrm{lo}}(x)<\tilde Y<\Pi q^{\mathrm{up}}(x)\}
  = \{\Pi q^{\mathrm{lo}}(x)<Y<\Pi q^{\mathrm{up}}(x)\},
\]
because $\Pi q^{\mathrm{lo}}(x),\Pi q^{\mathrm{up}}(x)\in[C^{\mathrm L},C^{\mathrm R}]$. 
Therefore, by continuity of $F_{Y\mid x}$ (Assumption~\ref{ass:regularity}),
\begin{align*}
  g(x) 
  & := \mathbb P\bigl(s^{\mathrm{pe},\star}(X,\tilde Y)<0 \mid X=x\bigr) 
   = F_{Y\mid x}\bigl(\Pi q^{\mathrm{up}}(x)\bigr)-F_{Y\mid x}\bigl(\Pi q^{\mathrm{lo}}(x)\bigr).
\end{align*}
If $\Pi q^{\mathrm{lo}}(x)=\Pi q^{\mathrm{up}}(x)$ the band is degenerate and $g(x)=0\le 1-\alpha$. Otherwise
$\Pi q^{\mathrm{lo}}(x)<\Pi q^{\mathrm{up}}(x)$ forces $q^{\mathrm{lo}}(x)<C^{\mathrm R}$ and $q^{\mathrm{up}}(x)>C^{\mathrm L}$, so
$\Pi q^{\mathrm{lo}}(x)=\max\{q^{\mathrm{lo}}(x),C^{\mathrm L}\}$ and $\Pi q^{\mathrm{up}}(x)=\min\{q^{\mathrm{up}}(x),C^{\mathrm R}\}$; therefore
\begin{align*}
  g(x)
  & = F_{Y\mid x}\bigl(\min\{q^{\mathrm{up}}(x),C^{\mathrm R}\}\bigr)-F_{Y\mid x}\bigl(\max\{q^{\mathrm{lo}}(x),C^{\mathrm L}\}\bigr) \\
  & \leq F_{Y\mid x}\bigl(q^{\mathrm{up}}(x)\bigr)-F_{Y\mid x}\bigl(q^{\mathrm{lo}}(x) \bigr)
   = 1 - \alpha.
\end{align*}

For strictness, suppose $q^{\mathrm{lo}}(x) < C^{\mathrm L}$, which implies $q^{\mathrm{lo}}(x) \leq C^{\mathrm L} - \epsilon$ for some $\epsilon > 0$. If $q^{\mathrm{up}}(x)\le C^{\mathrm L}$ then $g(x)=0<1-\alpha$.
Otherwise $q^{\mathrm{lo}}(x) + \epsilon \leq C^{\mathrm L} = \Pi q^{\mathrm{lo}}(x)<\Pi q^{\mathrm{up}}(x)$.
By Assumption~\ref{ass:regularity}, the interval $(q^{\mathrm{lo}}(x),q^{\mathrm{lo}}(x)+\epsilon)$ has strictly positive probability mass under $F_{Y\mid x}$, so $F_{Y\mid x}\bigl(C^{\mathrm L}\bigr) > F_{Y\mid x}\bigl(q^{\mathrm{lo}}(x)\bigr)$ and
\begin{align*}
  g(x)
  & = F_{Y\mid x}\bigl(\min\{q^{\mathrm{up}}(x),C^{\mathrm R}\}\bigr)-F_{Y\mid x}\bigl(\max\{q^{\mathrm{lo}}(x),C^{\mathrm L}\}\bigr) \\
  & \leq F_{Y\mid x}\bigl(q^{\mathrm{up}}(x)\bigr)-F_{Y\mid x}\bigl(C^{\mathrm L}\bigr) \\
  & < F_{Y\mid x}\bigl(q^{\mathrm{up}}(x)\bigr)-F_{Y\mid x}\bigl(q^{\mathrm{lo}}(x)\bigr) = 1 - \alpha.
\end{align*}
The case $q^{\mathrm{up}}(x)>C^{\mathrm R}$ is symmetric.

Finally, $\{q^{\mathrm{lo}}(X)<C^{\mathrm L}\}$ and $\{q^{\mathrm{up}}(X)>C^{\mathrm R}\}$ coincide a.s.\ with the events $\{q^{\mathrm{lo}}(X)\le C^{\mathrm L}\}$
and $\{q^{\mathrm{up}}(X)\ge C^{\mathrm R}\}$ (no atoms at $C^{\mathrm L},C^{\mathrm R}$, Assumption~\ref{ass:regularity}(i)), whose probabilities
are $\pi^{\mathrm L} = \mathbb{P}(q^{\mathrm{lo}}(X) \le C^{\mathrm L})$ and $\pi^{\mathrm R} = \mathbb{P}(q^{\mathrm{up}}(X) \ge C^{\mathrm R})$, respectively. Since $g\le 1-\alpha$ everywhere and $g<1-\alpha$ on a set of probability
$\pi^{\mathrm L}+\pi^{\mathrm R}$, integrating gives $\mathbb E[g(X)]\le 1-\alpha$, strict iff $\pi^{\mathrm L}+\pi^{\mathrm R}>0$.
\end{proof}

\begin{proof}[of Lemma~\ref{lem:tau-nonneg}]
Write $S_i:=s^{\mathrm{pe}}_n(X_i,\tilde Y_i)$ and $S_i^{\mathrm{m}}:=s^{\mathrm{pe},\star}(X_i,\tilde Y_i)$ for $i\in[n]$, and set
$k:=\lceil(1-\alpha)(n+1)\rceil$, $p_n:=k/n\to1-\alpha$. As $\tilde\tau^{\mathrm{pe}}_n$ is the $k$-th order statistic of
$S_1,\dots,S_n$,
\[
  \{\tilde\tau^{\mathrm{pe}}_n<0\}=\Bigl\{\tfrac1n\textstyle\sum_i\I\{S_i<0\}\ge p_n\Bigr\}.
\]
Define $F^-:=\mathbb P(S^{\mathrm{m}}<0)$ and $\hat F^-_n:=\tfrac1n\sum_i\I\{S_i^{\mathrm{m}}<0\}$. For any $\eta>0$,
\begin{align*}
  \{\tilde\tau^{\mathrm{pe}}_n<0\}
  &=\Bigl\{\hat F^-_n-(p_n-\eta)+\tfrac1n\textstyle\sum_i\bigl(\I\{S_i<0\}-\I\{S_i^{\mathrm{m}}<0\}\bigr)\ge\eta\Bigr\}\\
  &\subseteq \bigl\{\hat F^-_n>p_n-\eta\bigr\}
   \ \cup\ \Bigl\{\tfrac1n\textstyle\sum_i\bigl(\I\{S_i<0\}-\I\{S_i^{\mathrm{m}}<0\}\bigr)\ge\eta\Bigr\} \\
  &\subseteq\underbrace{\bigl\{\hat F^-_n>p_n-\eta\bigr\}}_{=:E_n}
   \ \cup\ \underbrace{\Bigl\{\tfrac1n\textstyle\sum_i\I\{\I\{S_i<0\}\ne\I\{S_i^{\mathrm{m}}<0\}\} \ge\eta\Bigr\}}_{=:G_n} \\
\end{align*}
Since $\pi^{\mathrm L} + \pi^{\mathrm R} > 0$, Lemma~\ref{lem:score-deficit} gives $F^{-} < 1-\alpha$, which means we can pick $\eta > 0$ satisfying also $\eta < 1-\alpha-F^-$. With this choice of $\eta$:
\begin{itemize}
\item \emph{Term $E_n$.} Set $\epsilon :=1-\alpha-F^--\eta>0$ (positive by the choice of $\eta$). Since $p_n \to 1-\alpha$, we also have $p_n - \eta - F^{-} \to \epsilon > 0$. Therefore, $p_n - \eta - F^{-} > \epsilon / 2$ for all sufficiently large $n$. This implies that, for $n$ large enough,
\begin{align*}
  \mathbb{P}(E_n)
  & = \mathbb{P} \left[ \hat{F}^{-}_n - F^{-} > p_n - \eta - F^{-} \right]
  \leq \mathbb{P} \left[ \hat{F}^{-}_n - F^{-} > \epsilon/2 \right] \to 0,
\end{align*}
where the last limit follows because $\hat F^-_n\xrightarrow{a.s.}F^-$ by the strong law of large numbers, since the indicators $\I\{S_i^{\mathrm{m}}<0\}$ are i.i.d.\ with mean $F^-$.

\item \emph{Term $G_n$.} A sign disagreement between $S_i$ $S_i^{\mathrm{m}}$ puts them on opposite sides of $0$, forcing
$|S_i^{\mathrm{m}}|\le|S_i-S_i^{\mathrm{m}}|\le\Delta^{\mathrm{pe}}_n(X_i) := |\Pi\hat f^{\mathrm{lo}}_n(X_i)-\Pi q^{\mathrm{lo}}(X_i)|+|\Pi\hat f^{\mathrm{up}}_n(X_i)-\Pi q^{\mathrm{up}}(X_i)|$, and it cannot occur at a censored point, since $\tilde Y_i=C^{\mathrm L}$
gives $S_i\ge\Pi\hat f^{\mathrm{lo}}_n(X_i)-C^{\mathrm L}\ge0$ and $S_i^{\mathrm{m}}\ge\Pi q^{\mathrm{lo}}(X_i)-C^{\mathrm L}\ge0$ (symmetrically at
$C^{\mathrm R}$), so both indicators vanish.  At the remaining interior points $S_i^{\mathrm{m}}\ne0$ a.s.\
(Assumption~\ref{ass:regularity}); hence a sign disagreement implies $0<|S_i^{\mathrm{m}}|\le\Delta^{\mathrm{pe}}_n(X_i)$. Therefore,
\begin{align*}
  G_n
  & = \Bigl\{\tfrac1n\textstyle\sum_i \I\{\I\{S_i<0\}\ne\I\{S_i^{\mathrm{m}}<0\}\} \ge\eta\Bigr\}
   \subseteq \Bigl\{ \underbrace{ \tfrac1n\textstyle\sum_i\I\{ 0 < |S_i^{\mathrm{m}}| \leq \Delta^{\mathrm{pe}}_n(X_i)\} }_{D_n}  \ge\eta\Bigr\},
\end{align*}
Moreover, for any $\zeta>0$
\[
  D_n
  \le\tfrac1n\textstyle\sum_i\I\{\Delta^{\mathrm{pe}}_n(X_i)>\zeta\}+\tfrac1n\textstyle\sum_i\I\{0<|S_i^{\mathrm{m}}|\le\zeta\}.
\]
Therefore, by a union bound and Markov's inequality,
\begin{align*}
  \mathbb{P}[D_n \geq \eta]
  & \leq \mathbb{P} \left[ \tfrac1n\textstyle\sum_i\I\{\Delta^{\mathrm{pe}}_n(X_i)>\zeta\} \geq \tfrac\eta2 \right] + 
    \mathbb{P} \left[ \tfrac1n\textstyle\sum_i\I\{0<|S_i^{\mathrm{m}}|\le\zeta\}  \geq \tfrac\eta2 \right]  \\
  & \leq \tfrac2\eta \mathbb{P} \left[ \Delta^{\mathrm{pe}}_n(X)>\zeta \right] + 
    \tfrac2\eta \mathbb{P} \left[ 0<|S^{\mathrm{m}}|\le\zeta \right]  \\
  & \leq \tfrac2{\eta\zeta} \mathbb{E} \left[ \Delta^{\mathrm{pe}}_n(X) \right] + 
    \tfrac2\eta \mathbb{P} \left[ 0<|S^{\mathrm{m}}|\le\zeta \right].
\end{align*}
The first term vanishes by Assumption~\ref{ass:consistency}, since $\mathbb E[\Delta^{\mathrm{pe}}_n(X)]\to0$
while $\eta,\zeta$ are fixed. Hence for every $\zeta>0$,
\[
  \limsup_{n\to\infty}\mathbb P[D_n\ge\eta]\ \le\ \tfrac2\eta\,\mathbb P\bigl[0<|S^{\mathrm{m}}|\le\zeta\bigr].
\]
The left-hand side does not depend on $\zeta$; therefore, we can let $\zeta\downarrow0$ and in that limit $P\bigl[0<|S^{\mathrm{m}}|\le\zeta\bigr] \to 0$ by Assumption~\ref{ass:regularity}.
Therefore,
\[
  \mathbb P(G_n)\ \le\ \mathbb P[D_n\ge\eta]\ \to\ 0 .
\]
\end{itemize}
Combining the two terms, $\mathbb P(\tilde\tau^{\mathrm{pe}}_n<0)\le\mathbb P(E_n)+\mathbb P(G_n)\to0$.
\end{proof}

\clearpage

\section{Further Details on Empirical Study} \label{app:empirical}

\subsection{Evaluation of Conditional Coverage} \label{app:worstcase}
We estimate feature-conditional coverage empirically using the \emph{worst-slab}
approach \citep{cauchois2020knowing}: the empirical coverage over the
$q$-fraction of the feature space in which coverage is lowest (we use $q = 0.1$).
Since the worst region is unknown, we estimate it from the test data by sample
splitting, similar to \citet{romano2020classification}.

Let $C_i = \mathbf{1}\{L(X_i) \le Y_i \le U(X_i)\}$ be the per-point coverage
indicator. Then, worst-slab conditional coverage is evaluated as follows.
\begin{enumerate}
  \item Split the test indices uniformly at random into two halves,
        $\mathcal{I}_{\mathrm{fit}}$ and $\mathcal{I}_{\mathrm{eval}}$ (a fixed
        seed makes the split reproducible).
  \item On $\mathcal{I}_{\mathrm{fit}}$, fit a gradient-boosted classifier
        $\hat c(x) \approx \Pr(C = 1 \mid X = x)$ of the coverage indicator on
        the features, using the same hyperparameters as the base predictor.
  \item Score the held-out half, $\hat c(X_i)$ for $i \in
        \mathcal{I}_{\mathrm{eval}}$, and let $\hat q_q$ be the $q$-quantile of
        these scores. The estimated worst region is the lowest-coverage level
        set $S = \{\, i \in \mathcal{I}_{\mathrm{eval}} : \hat c(X_i) \le \hat
        q_q \,\}$.
  \item Report the empirical coverage on that region,
        $\widehat{\mathrm{WC}} = \frac{1}{|S|}\sum_{i \in S} C_i$.
\end{enumerate}
The estimate is returned as \texttt{NA} when fewer than $20$ test points are available.

\FloatBarrier

\subsection{Numerical Experiments with Synthetic Data} \label{app:experiments-synthetic}

We present here additional empirical results from numerical experiments similar
to those described in Section~\ref{sec:experiments}. To make the plots more readable, we 
omit reporting the performance of LdPT, which tends to produce very large prediction intervals.

Figure~\ref{fig:fig2_synthetic_cr} fixes $n = 10{,}000$ and instead varies the censoring
thresholds $C^{\mathrm{R}} = -C^{\mathrm{L}}$. As censoring relaxes (larger
$C^{\mathrm{R}}$), the methods become increasingly similar to one another.
Figure~\ref{fig:fig3_synthetic_dim} fixes $n = 10{,}000$ and
$C^{\mathrm{R}} = 0.5 = -C^{\mathrm{L}}$ and varies the number of features $p$.
As $p$ grows the base predictor becomes less accurate: all intervals widen and
the base model's conditional coverage drops. ClipCQR+ keeps conditional coverage
consistently close to the nominal level, and ClipCQR and eCQR also remain above
the base model.

\begin{figure}[!htb]
  \centering
  \includegraphics[width=\linewidth]{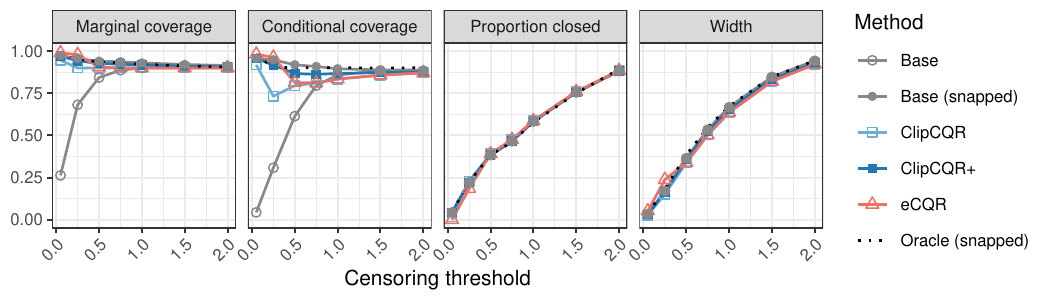}
  \caption{Performance of prediction intervals on synthetic data, as a function of the censoring threshold $C^{\mathrm{R}}=-C^{\mathrm{L}}$, with fixed $n=10{,}000$. Other details are as in Figure~\ref{fig:fig1_synthetic_ncal}.}
  \label{fig:fig2_synthetic_cr}
\end{figure}

\begin{figure}[!htb]
  \centering
  \includegraphics[width=\linewidth]{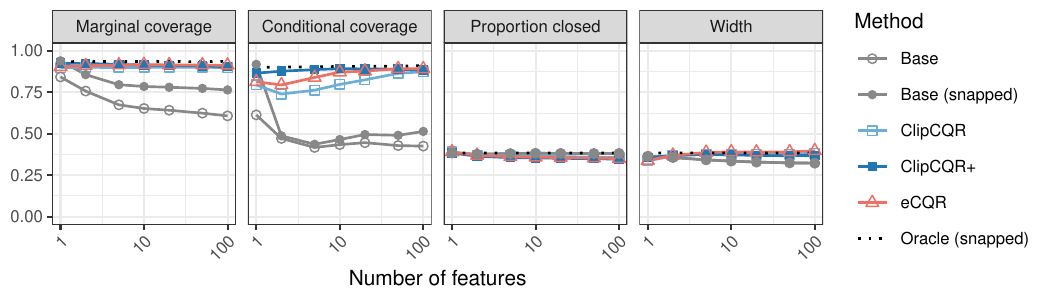}
  \caption{Performance of prediction intervals on synthetic data, as a function of the number of features, with fixed $C^{\mathrm{R}}=0.5=-C^{\mathrm{L}}$ and $n=10{,}000$. Other details are as in Figure~\ref{fig:fig1_synthetic_ncal}.}
  \label{fig:fig3_synthetic_dim}
\end{figure}

\FloatBarrier

Figures~\ref{fig:fig1_synthetic_ncal_large}--\ref{fig:figA3_synthetic_dim_large} report on the results of experiments similar to those of Figures~\ref{fig:fig1_synthetic_ncal}--\ref{fig:fig3_synthetic_dim}, respectively, using different base models. The models used are:
\begin{itemize}
\item \texttt{qrf}: quantile random forest, implemented using the \texttt{R} package \texttt{ranger}, trained to predict the observed outcomes while ignoring censoring;
\item \texttt{xgboost}: gradient-boosting homoscedastic regression model implemented using the \texttt{R} package \texttt{xgboost}, trained to predict the observed outcomes while ignoring censoring;
\item \texttt{xgboost\_tobit}: gradient-boosting homoscedastic regression model implemented using the \texttt{R} package \texttt{xgboost}, trained with a Tobit-like loss function to account for censoring;
\item \texttt{xgboost\_hsk}: gradient-boosting heteroscedastic regression model implemented using the \texttt{R} package \texttt{xgboost}, trained to predict the observed outcomes while ignoring censoring;
  \item \texttt{xgboost\_hsk\_tobit}: gradient-boosting heteroscedastic regression model implemented using the \texttt{R} package \texttt{xgboost}, trained with a Tobit-like loss function to account for censoring, as in Figure~\ref{fig:fig1_synthetic_ncal};
  \item \texttt{qmlp}: a two-layer quantile neural network implemented using the \texttt{python} package \texttt{PyTorch}, trained to predict the observed outcomes while ignoring censoring.
\end{itemize}
The results show that \texttt{xgboost\_hsk} and \texttt{xgboost\_hsk\_tobit} are the only two models with performance approaching that of the population oracle as the sample size grows. In all cases, however, ClipCQR+ tends to produce prediction intervals with higher conditional coverage than the other calibration methods.

Figures~\ref{fig:figA4_synthetic_ncal_peCQR}--\ref{fig:figA6_synthetic_dim}  report on the results of experiments similar to those of Figures~\ref{fig:fig1_synthetic_ncal}--\ref{fig:fig3_synthetic_dim}, respectively, including the peCQR approach in the comparison.
The results show peCQR performs similar to ClipCQR+ when the sample size is very large, and similar to eCQR in smaller samples.

\begin{figure}[!htb]
  \centering
  \includegraphics[width=\linewidth]{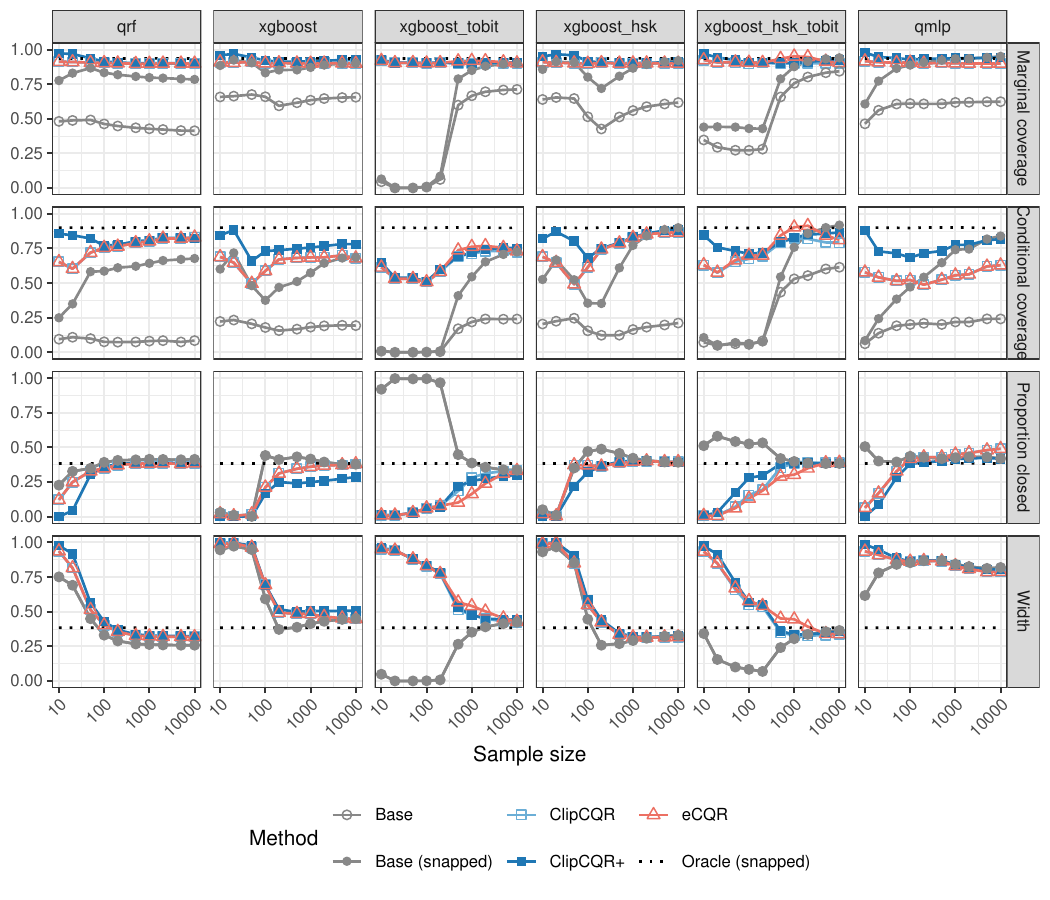}
  \caption{Performance of prediction intervals on doubly censored synthetic data with univariate features, as a function of the training and calibration sample size, for different base models. The censoring thresholds are $C^{\mathrm{R}}=0.5=-C^{\mathrm{L}}$.
The results for the heteroscedastic Tobit-like gradient boosting model (\texttt{xgboost\_hsk\_tobit}) correspond to Figure~\ref{fig:fig1_synthetic_ncal}.}
  \label{fig:fig1_synthetic_ncal_large}
\end{figure}

\begin{figure}[!htb]
  \centering
  \includegraphics[width=\linewidth]{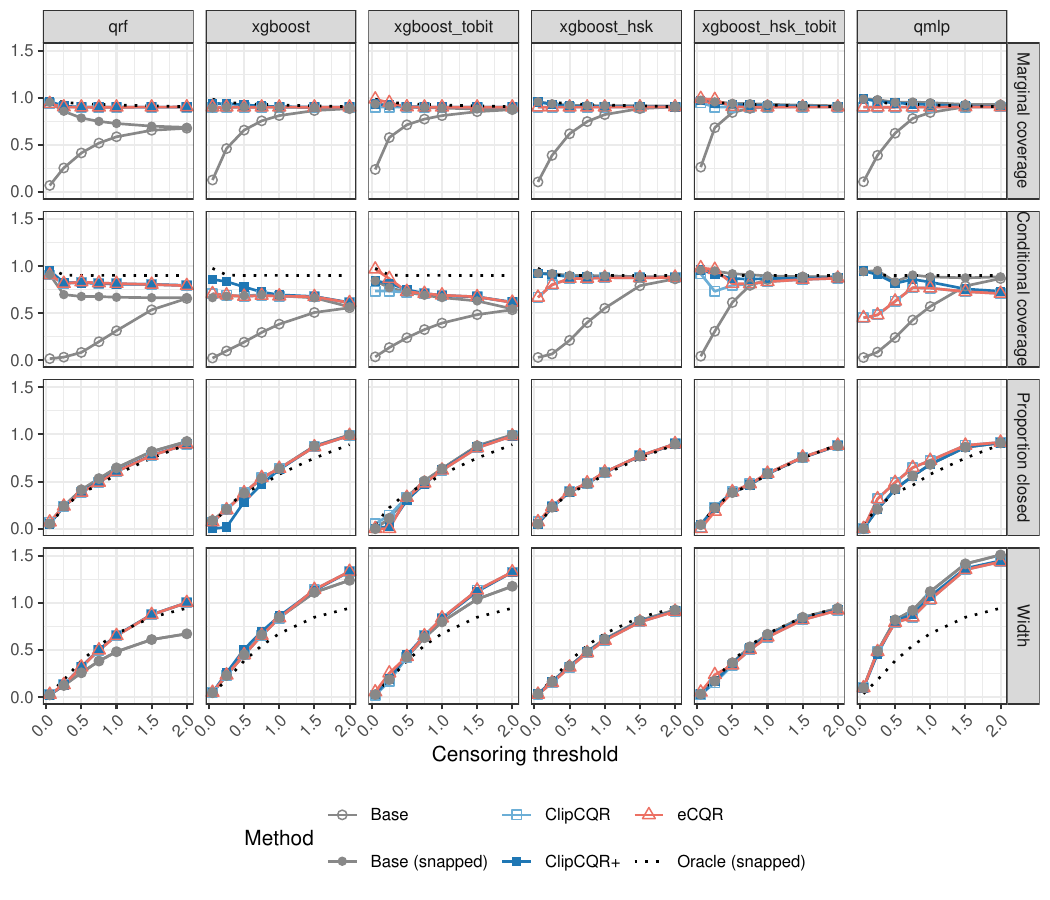}
  \caption{Performance of prediction intervals on doubly censored synthetic data, as a function of the censoring threshold $C^{\mathrm{R}}=-C^{\mathrm{L}}$, for different base models.
The training and calibration sample sizes are $n=10{,}000$. The results for the heteroscedastic Tobit-like gradient boosting model (\texttt{xgboost\_hsk\_tobit}) correspond to Figure~\ref{fig:fig2_synthetic_cr}.
}
  \label{fig:figA2_cr_large}
\end{figure}

\begin{figure}[!htb]
  \centering
  \includegraphics[width=\linewidth]{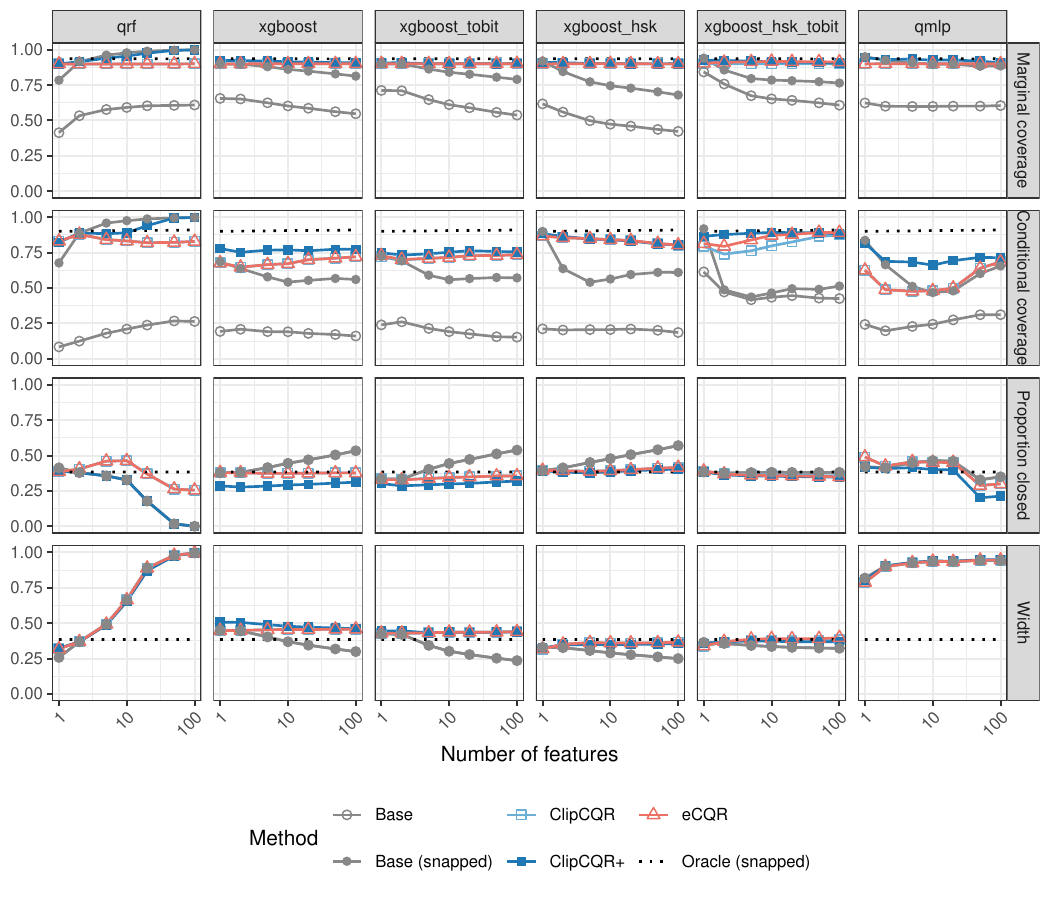}
  \caption{Performance of prediction intervals on doubly censored synthetic data, as a function of the number of features, for different base models.
The censoring thresholds are $C^{\mathrm{R}}=0.5=-C^{\mathrm{L}}$; the training and calibration sample sizes are $n=10{,}000$. The results for the heteroscedastic Tobit-like gradient boosting model (\texttt{xgboost\_hsk\_tobit}) correspond to Figure~\ref{fig:fig3_synthetic_dim}.}
  \label{fig:figA3_synthetic_dim_large}
\end{figure}

\begin{figure}[!htb]
  \centering
  \includegraphics[width=\linewidth]{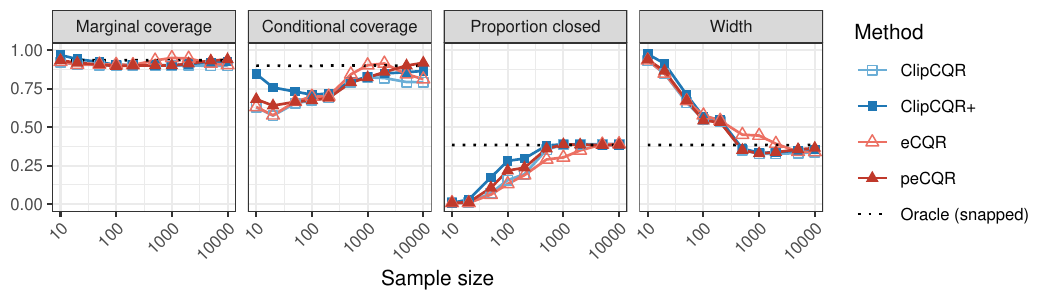}
  \caption{Performance of prediction intervals on doubly censored synthetic data with univariate features, as a function of the training and calibration sample size. The comparison includes peCQR prediction intervals. Other details are as in Figure~\ref{fig:fig1_synthetic_ncal}.}
  \label{fig:figA4_synthetic_ncal_peCQR}
\end{figure}

\begin{figure}[!htb]
  \centering
  \includegraphics[width=\linewidth]{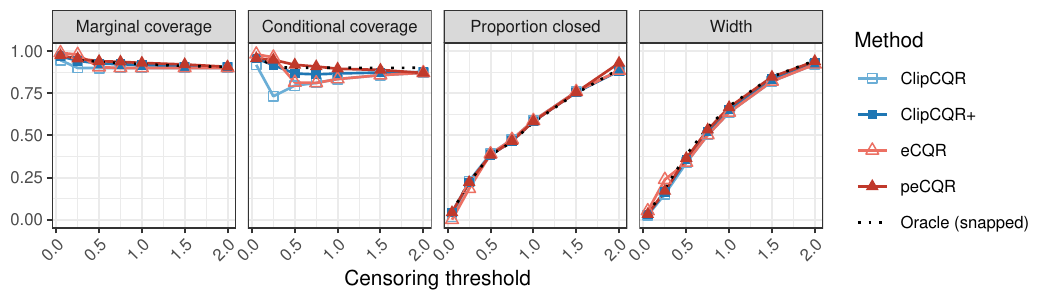}
  \caption{Performance of prediction intervals on doubly censored synthetic data, as a function of the censoring threshold $C^{\mathrm{R}}=-C^{\mathrm{L}}$. The training and calibration sample sizes are $n=10{,}000$. The comparison includes peCQR prediction intervals. Other details are as in Figure~\ref{fig:fig2_synthetic_cr}.}
  \label{fig:figA5_synthetic_cr_peCQR}
\end{figure}

\begin{figure}[!htb]
  \centering
  \includegraphics[width=\linewidth]{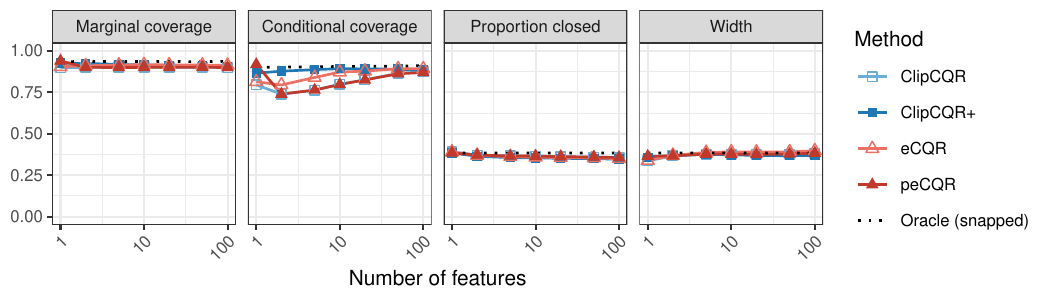}
  \caption{Performance of prediction intervals on doubly censored synthetic data, as a function of the number of features.
The censoring thresholds are $C^{\mathrm{R}}=0.5=-C^{\mathrm{L}}$; the training and calibration sample sizes are $n=10{,}000$. 
The comparison includes peCQR prediction intervals. Other details are as in Figure~\ref{fig:fig3_synthetic_dim}.}
  \label{fig:figA6_synthetic_dim}
\end{figure}

\FloatBarrier

\subsection{QSAR Data Analysis} \label{app:qsar}

We apply ClipCQR+ and the benchmark methods to a real quantitative
structure-activity relationship (QSAR) regression task using the
\emph{3A4} assay from the 2012 ``Merck Molecular Activity Challenge'' Kaggle competition
\citep{ma2015deep}, comprising $37{,}241$ molecules with $9{,}177$ features.
The outcome takes values in $[4.300, 8.134]$ and is left-censored:
about $55.9\%$ of the molecules report the minimum recorded value, which acts as the
left-censoring threshold ($C^{\mathrm{L}} = \min_i Y_i = 4.300$), and the remaining
$44.1\%$ are uncensored ($C^{\mathrm{R}} = \infty$). We chose this data set because it is
publicly available and substantially censored; censoring arises naturally in raw
internal QSAR data but is not always retained in public data sets
\citep{svensson2025enhancing}.

We randomly hold out $n_{\text{cal}} = 1000$ for
calibration and $n_{\text{test}} = 5000$ for evaluation, using one of five different base models
 (\texttt{xgboost}, \texttt{xgboost\_tobit}, \texttt{xgboost\_hsk}, \texttt{xgboost\_hsk\_tobit}, and \texttt{qmlp}, as in Appendix~\ref{app:experiments-synthetic}). 
The prediction intervals produced by each model are calibrated at nominal level $\alpha = 0.1$ using 
the following conformal methods: ClipCQR+, ClipCQR, eCQR, peCQR, and the LdPT method of
\citet{liu2025prediction}.
All methods are applied assuming $[Y^{\min}, Y^{\max}] = [-4, 4]$.
Although the latent
outcome is censored, the snapping construction makes its coverage identifiable: a
snapped band covers the latent $Y$ if and only if it covers the entire censoring
region in which $Y$ lies. We can therefore evaluate prediction intervals 
based on marginal coverage, worst-slab estimate of conditional coverage (computaed as detailed in
Appendix~\ref{app:worstcase}), the proportion of closed intervals (intervals not touching the censoring thresholds), the
average interval width, and the average width of closed intervals.

% \begin{table}[!htb]
% \centering
% \caption{Performance of conformal prediction intervals on the Kaggle 3A4 QSAR data, using a gradient boosting model and different calibration methods. Cells show mean $\pm$ 2\, standard errors(SE). Coverage shown in red when mean $+$ 2\,SE is below the target coverage $1-\alpha = 0.90$; bold marks the shortest interval among methods meeting the coverage threshold.} \label{tab:qsar-kaggle-results-small}
% \centering
% \fontsize{9}{11}\selectfont
% \input{tables/kaggle_3a4_small.tex}
% \end{table}

Table~\ref{tab:qsar-kaggle-results-full} summarizes the results obtained over 10 independent
experiments using different random splits of the data into training, calibration, and test sets.
ClipCQR always achieves marginal coverage very close to the nominal $90\%$ level, as predicted by the
theory, while the other conformal methods are marginally conservative for at least some base models.
ClipCQR+ typically comes closest to the nominal $90\%$ worst-slab conditional coverage, though it does not always
reach it exactly, whereas ClipCQR, eCQR, and peCQR tend to have lower conditional
coverage. The LdPT intervals tend to be wider yet still fall short on conditional coverage. Among the
five base models considered, heteroskedastic gradient boosting (\texttt{xgboost\_hsk}) performs
particularly well, yielding narrower intervals on average than its homoskedastic counterpart
(\texttt{xgboost}) while retaining valid conditional coverage in combination with ClipCQR+. Finally,
these results also show that using a censoring-aware model (\texttt{xgboost\_tobit} or \texttt{xgboost\_hsk\_tobit}) tends
to increase coverage of uncalibrated prediction intervals but does not seem to improve performance
of the conformal methods considered.

\begin{table}[!htb]
\centering
\caption{Performance of conformal prediction intervals on the Kaggle 3A4 QSAR data, for different base models and calibration methods. Cells show a 95\% confidence interval for the mean. The nominal coverage level is $1-\alpha = 0.9$.} \label{tab:qsar-kaggle-results-full}
\centering
\fontsize{9}{11}\selectfont
\begin{tabular}[t]{lrrrrr}
\toprule
Calibration & Coverage & Cond. cov. & Prop. closed & Width & Width $|$ closed\\
\midrule
\addlinespace[0.6em]
\multicolumn{6}{l}{\textbf{xgboost}}\\
\hspace{1em}Base & 0.343 $\pm$ 0.005 & 0.517 $\pm$ 0.039 & 0.225 $\pm$ 0.004 & 0.80 $\pm$ 0.00 & 1.08 $\pm$ 0.00\\
\hspace{1em}Base (snapped) & 0.880 $\pm$ 0.003 & 0.700 $\pm$ 0.024 & 0.225 $\pm$ 0.004 & 0.80 $\pm$ 0.00 & 1.08 $\pm$ 0.00\\
\hspace{1em}ClipCQR & 0.902 $\pm$ 0.006 & 0.743 $\pm$ 0.020 & 0.195 $\pm$ 0.006 & 0.87 $\pm$ 0.01 & 1.19 $\pm$ 0.02\\
\hspace{1em}ClipCQR+ & 0.936 $\pm$ 0.003 & 0.876 $\pm$ 0.009 & 0.103 $\pm$ 0.005 & 0.95 $\pm$ 0.01 & 1.84 $\pm$ 0.06\\
\hspace{1em}eCQR & 0.902 $\pm$ 0.006 & 0.743 $\pm$ 0.020 & 0.195 $\pm$ 0.006 & 0.87 $\pm$ 0.01 & 1.19 $\pm$ 0.02\\
\hspace{1em}peCQR & 0.902 $\pm$ 0.006 & 0.743 $\pm$ 0.020 & 0.195 $\pm$ 0.006 & 0.87 $\pm$ 0.01 & 1.19 $\pm$ 0.02\\
\hspace{1em}LdPT & 0.900 $\pm$ 0.005 & 0.831 $\pm$ 0.015 & 0.036 $\pm$ 0.002 & 1.97 $\pm$ 0.01 & 3.22 $\pm$ 0.02\\
\addlinespace[0.6em]
\multicolumn{6}{l}{\textbf{xgboost\_tobit}}\\
\hspace{1em}Base & 0.338 $\pm$ 0.005 & 0.459 $\pm$ 0.035 & 0.160 $\pm$ 0.003 & 0.68 $\pm$ 0.00 & 1.30 $\pm$ 0.00\\
\hspace{1em}Base (snapped) & 0.889 $\pm$ 0.004 & 0.805 $\pm$ 0.023 & 0.160 $\pm$ 0.003 & 0.68 $\pm$ 0.00 & 1.30 $\pm$ 0.00\\
\hspace{1em}ClipCQR & 0.901 $\pm$ 0.005 & 0.823 $\pm$ 0.020 & 0.149 $\pm$ 0.006 & 0.73 $\pm$ 0.01 & 1.38 $\pm$ 0.03\\
\hspace{1em}ClipCQR+ & 0.916 $\pm$ 0.004 & 0.878 $\pm$ 0.020 & 0.090 $\pm$ 0.004 & 0.78 $\pm$ 0.01 & 1.99 $\pm$ 0.05\\
\hspace{1em}eCQR & 0.919 $\pm$ 0.004 & 0.852 $\pm$ 0.021 & 0.133 $\pm$ 0.006 & 0.80 $\pm$ 0.02 & 1.50 $\pm$ 0.03\\
\hspace{1em}peCQR & 0.901 $\pm$ 0.005 & 0.823 $\pm$ 0.020 & 0.149 $\pm$ 0.006 & 0.73 $\pm$ 0.02 & 1.37 $\pm$ 0.03\\
\hspace{1em}LdPT & 0.900 $\pm$ 0.008 & 0.854 $\pm$ 0.016 & 0.045 $\pm$ 0.003 & 1.53 $\pm$ 0.02 & 2.87 $\pm$ 0.04\\
\addlinespace[0.6em]
\multicolumn{6}{l}{\textbf{xgboost\_hsk}}\\
\hspace{1em}Base & 0.356 $\pm$ 0.005 & 0.419 $\pm$ 0.034 & 0.194 $\pm$ 0.004 & 0.79 $\pm$ 0.01 & 1.18 $\pm$ 0.02\\
\hspace{1em}Base (snapped) & 0.886 $\pm$ 0.005 & 0.819 $\pm$ 0.013 & 0.194 $\pm$ 0.004 & 0.79 $\pm$ 0.01 & 1.18 $\pm$ 0.02\\
\hspace{1em}ClipCQR & 0.901 $\pm$ 0.004 & 0.843 $\pm$ 0.016 & 0.176 $\pm$ 0.006 & 0.81 $\pm$ 0.01 & 1.27 $\pm$ 0.02\\
\hspace{1em}ClipCQR+ & 0.927 $\pm$ 0.004 & 0.901 $\pm$ 0.012 & 0.112 $\pm$ 0.010 & 0.85 $\pm$ 0.01 & 1.70 $\pm$ 0.08\\
\hspace{1em}eCQR & 0.913 $\pm$ 0.003 & 0.861 $\pm$ 0.016 & 0.161 $\pm$ 0.004 & 0.84 $\pm$ 0.01 & 1.35 $\pm$ 0.02\\
\hspace{1em}peCQR & 0.901 $\pm$ 0.004 & 0.843 $\pm$ 0.017 & 0.176 $\pm$ 0.006 & 0.81 $\pm$ 0.01 & 1.27 $\pm$ 0.02\\
\hspace{1em}LdPT & 0.904 $\pm$ 0.008 & 0.855 $\pm$ 0.022 & 0.025 $\pm$ 0.001 & 2.06 $\pm$ 0.01 & 3.64 $\pm$ 0.04\\
\addlinespace[0.6em]
\multicolumn{6}{l}{\textbf{xgboost\_hsk\_tobit}}\\
\hspace{1em}Base & 0.642 $\pm$ 0.008 & 0.663 $\pm$ 0.013 & 0.110 $\pm$ 0.003 & 0.83 $\pm$ 0.01 & 1.43 $\pm$ 0.02\\
\hspace{1em}Base (snapped) & 0.929 $\pm$ 0.003 & 0.876 $\pm$ 0.009 & 0.110 $\pm$ 0.003 & 0.83 $\pm$ 0.01 & 1.43 $\pm$ 0.02\\
\hspace{1em}ClipCQR & 0.904 $\pm$ 0.005 & 0.841 $\pm$ 0.012 & 0.124 $\pm$ 0.005 & 0.74 $\pm$ 0.02 & 1.27 $\pm$ 0.03\\
\hspace{1em}ClipCQR+ & 0.938 $\pm$ 0.004 & 0.905 $\pm$ 0.009 & 0.089 $\pm$ 0.005 & 0.86 $\pm$ 0.01 & 1.72 $\pm$ 0.07\\
\hspace{1em}eCQR & 0.944 $\pm$ 0.004 & 0.897 $\pm$ 0.013 & 0.099 $\pm$ 0.006 & 0.91 $\pm$ 0.02 & 1.57 $\pm$ 0.04\\
\hspace{1em}peCQR & 0.929 $\pm$ 0.003 & 0.876 $\pm$ 0.009 & 0.110 $\pm$ 0.003 & 0.83 $\pm$ 0.01 & 1.43 $\pm$ 0.02\\
\hspace{1em}LdPT & 0.904 $\pm$ 0.007 & 0.893 $\pm$ 0.013 & 0.056 $\pm$ 0.003 & 1.31 $\pm$ 0.02 & 2.27 $\pm$ 0.05\\
\addlinespace[0.6em]
\multicolumn{6}{l}{\textbf{qmlp}}\\
\hspace{1em}Base & 0.356 $\pm$ 0.004 & 0.489 $\pm$ 0.033 & 0.267 $\pm$ 0.018 & 0.85 $\pm$ 0.01 & 1.57 $\pm$ 0.05\\
\hspace{1em}Base (snapped) & 0.880 $\pm$ 0.006 & 0.775 $\pm$ 0.017 & 0.267 $\pm$ 0.018 & 0.85 $\pm$ 0.01 & 1.57 $\pm$ 0.05\\
\hspace{1em}ClipCQR & 0.901 $\pm$ 0.005 & 0.815 $\pm$ 0.024 & 0.234 $\pm$ 0.017 & 0.90 $\pm$ 0.02 & 1.72 $\pm$ 0.04\\
\hspace{1em}ClipCQR+ & 0.918 $\pm$ 0.004 & 0.861 $\pm$ 0.014 & 0.160 $\pm$ 0.018 & 0.92 $\pm$ 0.02 & 2.11 $\pm$ 0.09\\
\hspace{1em}eCQR & 0.901 $\pm$ 0.005 & 0.815 $\pm$ 0.024 & 0.233 $\pm$ 0.018 & 0.90 $\pm$ 0.02 & 1.72 $\pm$ 0.04\\
\hspace{1em}peCQR & 0.901 $\pm$ 0.005 & 0.815 $\pm$ 0.024 & 0.234 $\pm$ 0.017 & 0.90 $\pm$ 0.02 & 1.72 $\pm$ 0.04\\
\hspace{1em}LdPT & 0.901 $\pm$ 0.009 & 0.875 $\pm$ 0.028 & 0.010 $\pm$ 0.001 & 2.19 $\pm$ 0.02 & 5.24 $\pm$ 0.04\\
\bottomrule
\end{tabular}

\end{table}

%%% Local Variables:
%%% mode: latex
%%% TeX-master: "supplement_biometrika"
%%% End:

\end{document}